%**********************************************************************************
%*                                                                                                                *
%*  Christopher Hampson and Agi Kurucz:                                                 *
%*                                                                                                                *
%*     Undecidable bimodal logics and                                                        *
%*            one-variable first-order linear temporal logics with counting       *                                                                                                                
%*                                                                                                                *
%*                           September 2013 - June 2014                                      *
%*                           revised: January-February 2015                                  *
%*                           mistake corrected: June 2015                                      *                               
%*                                                                                                                *
%**********************************************************************************

%\documentclass[prodmode,acmtocl]{acmsmall} % Aptara syntax

\documentclass[10pt]{article}
\usepackage{latexsym}
\usepackage{a4wide,url}
\usepackage{amssymb}
\usepackage{amsmath}
\usepackage{amsthm}
\usepackage{amsfonts}

% DEFINITIONS

\theoremstyle{plain}
\newtheorem{theorem}{Theorem}[section]
\newtheorem{corollary}[theorem]{Corollary}
\newtheorem{proposition}[theorem]{Proposition}
\newtheorem{lemma}[theorem]{Lemma}
\newtheorem{claim}[theorem]{{\sc Claim}}

% General

%\def\nb #1{${}^\bullet$\marginpar{\scriptsize #1}}

\newcommand{\auf}{\langle}
\newcommand{\zu}{\rangle}
\newcommand{\C}{\textit{Column}}
\newcommand{\cl}{\textit{cl}}
\newcommand{\mprod}{\!\times\!}
\newcommand{\eprod}{\!\times^{\textit{e}}\!}
\newcommand{\dprod}{\!\times^{\textit{d}}\!}

\newcommand{\subf}{\textit{sub}\,\phi}
\newcommand{\hr}{\textit{hr}}
\newcommand{\Ix}{I}
\newcommand{\preorder}{weak order\ }
\newcommand{\preorders}{weak orders\ }

% Logics
\newcommand{\logic}{{\sc FOLTL}${}^{\ne}$}
\newcommand{\model}{{\sc FOLTL}-model}

\newcommand{\Log}{{\sf Log}}
\newcommand{\fusion}{L_0\oplus L_1}
\newcommand{\commut}{[L_0,L_1]}

\newcommand{\Kfour}{\mathbf{K4}}
\newcommand{\Kfourt}{\mathbf{K4.3}}

\newcommand{\Dis}{\mathbf{DisK4.3}}

\newcommand{\Sfive}{\mathbf{S5}}
\newcommand{\Diff}{\mathbf{Diff}}

\newcommand{\lindiffcomm}{[\Kfourt,\Diff]}

% Frames, frame classes
\newcommand{\Fr}{{\sf Fr}\,}
\newcommand{\CC}{\mathcal{C}}
\newcommand{\cdiff}{\mathcal{C}_{\textit{diff}}}
\newcommand{\clin}{\mathcal{C}_{\textit{lin}}}
\newcommand{\cdifff}{\mathcal{C}_{\textit{diff}}^\textit{fin}}
\newcommand{\clinf}{\mathcal{C}_{\textit{lin}}^\textit{fin}}

\newcommand{\F}{\mathfrak{F}}
\newcommand{\G}{\mathfrak{G}}
\newcommand{\Hh}{\mathfrak{H}}

% Models
\newcommand{\M}{\mathfrak{M}}
\newcommand{\N}{\mathfrak{N}}
\newcommand{\Mrec}{\mathfrak{M}^{\textit{rec}}}
\newcommand{\Mfin}{\mathfrak{M}^{\,\textit{fin}}}
\newcommand{\Minf}{\mathfrak{M}^\infty}

\newcommand{\Ninf}{\mathfrak{N}^\infty}

% Connectives
\newcommand{\existsx}{\exists x\,}
\newcommand{\forallx}{\forall x\,}
\newcommand{\foralld}{\forall^{\ne}\! x\,}
\newcommand{\existsd}{\exists^{\ne}\! x\,}
\newcommand{\existso}{\exists^{=1}x\,}
\newcommand{\Dt}{\Diamond_{\!F}}
\newcommand{\Bt}{\Box_{\!F}}

\newcommand{\D}{\Diamond}
\newcommand{\B}{\Box}
\newcommand{\Bh}{\B_{0}}
\newcommand{\Dh}{\D_{0}}
\newcommand{\Bv}{\B_{1}}
\newcommand{\Dv}{\D_{1}}
\newcommand{\DhM}{\blacklozenge_0}
\newcommand{\BhM}{\blacksquare_0}

\newcommand{\E}{\D^{\bf\tiny =1}}
\newcommand{\Ev}{\Dv^{\bf\tiny =1}}

\newcommand{\nexttime}{X}

% Variables
\newcommand{\recx}{{\sf R}}
\newcommand{\recp}{{\sf S}^\ast}
\newcommand{\cou}{{\sf C}_i}
\newcommand{\pred}{{\sf P}}
\newcommand{\predq}{{\sf Q}}
\newcommand{\predp}{{\sf P}'}
\newcommand{\fodompred}{{\sf D}}
\newcommand{\pint}{{\sf Interval}_{{\sf P}}}
\newcommand{\iotaint}{{\sf Interval}_{{\sf I}_\iota}}
\newcommand{\sint}{{\sf Interval}_{\state}}
\newcommand{\dint}{{\sf Interval}_{\diag}}

\newcommand{\cint}{{\sf Interval}_{{\sf C}_i^-}}
\newcommand{\cinti}{{\sf Interval}_{{\sf C}_i}}
\newcommand{\plusi}{{\sf C}^+_i}
\newcommand{\minusi}{{\sf C}^-_i}
\newcommand{\minusip}{{\sf C}^{-'}_i}
\newcommand{\start}{{\sf start}}
\newcommand{\pend}{{\sf end}}
\newcommand{\psivar}{{\sf P}_{\!\psi}}

\newcommand{\insbw}{{\sf I}_{\iota}}

% Relations
\newcommand{\Rh}{R_0}
\newcommand{\Rv}{R_1}
\newcommand{\RhM}{R^{\M}\!}
\newcommand{\Rhp}{\tilde{\Rh}}
\newcommand{\Rvp}{\tilde{\Rv}}

% Formulas
\newcommand{\diag}{{\sf N}}
\newcommand{\state}{{\sf S}}
\newcommand{\diagbw}{{\sf grid}^{\textit{bw}}}
\newcommand{\diagbwd}{{\sf grid}^\ast}
\newcommand{\diagfw}{{\sf grid}}
\newcommand{\diagfwu}{{\sf grid\_unique}}
\newcommand{\diagfwfin}{{\sf grid}_{\textit{fin}}}
\newcommand{\diagfwfinu}{{\sf grid\_unique}_{\textit{fin}}}

\newcommand{\allc}{{\sf AllC}_i}
\newcommand{\allcl}{{\sf TillStartAllC}_i}
\newcommand{\allcd}{{\sf AllC}_i^\bullet}
\newcommand{\couf}{{\sf counter}^{\,\textit{bw}}}
\newcommand{\couffw}{{\sf counter}}
\newcommand{\coufd}{{\sf counter}^{\,\textit{bw}\bullet}}

\newcommand{\ffix}{{\sf Fix}_i}
\newcommand{\ffixbwd}{{\sf Fix}_i^{\textit{bw}\bullet}}
\newcommand{\ffixbw}{{\sf Fix}_i^{\textit{bw}}}
\newcommand{\ffixbwl}{{\sf Fix}_i^{\textit{lossy}}}
\newcommand{\ffixjbwl}{{\sf Fix}_j^{\textit{lossy}}}
\newcommand{\ffixjbw}{{\sf Fix}_j^{\textit{bw}}}
\newcommand{\ffixj}{{\sf Fix}_j}
\newcommand{\fdec}{{\sf Dec}_i}
\newcommand{\fdecbwd}{{\sf Dec}_i^{\textit{bw}\bullet}}
\newcommand{\fdecbw}{{\sf Dec}_i^{\textit{bw}}}
\newcommand{\fdecbwl}{{\sf Dec}_i^{\textit{lossy}}}

\newcommand{\finc}{{\sf Inc}_i}
\newcommand{\fincbwd}{{\sf Inc}_i^{\textit{bw}\bullet}}
\newcommand{\fincbw}{{\sf Inc}_i^{\textit{bw}}}
\newcommand{\fincbwl}{{\sf Inc}_i^{\textit{lossy}}}

\newcommand{\exei}{{\sf Do}_\iota}
\newcommand{\exeibw}{{\sf Do}_\iota^{\textit{bw}}}
\newcommand{\exeibwd}{{\sf Do}_\iota^{\textit{bw}\bullet}}
\newcommand{\exeil}{{\sf Do}_\iota^{\textit{lossy}}}
\newcommand{\fmm}{\varphi_M}
\newcommand{\fmmbwdec}{\varphi_M^{\textit{bw}}}
\newcommand{\fmmbwdecda}{\varphi_M^{\textit{bw}\bullet}}
\newcommand{\fmmbwlossy}{\varphi_M^{\textit{lossy}}}
\newcommand{\fmmbwdecd}{\varphi_M^\ast}

\newcommand{\tick}{{\sf Tick}}

\newcommand{\frecbw}{{\sf rec}^{\textit{bw}}}
\newcommand{\frece}{{\sf rec}}

\newcommand{\expform}{\chi_\phi}

% Counter machines
\newcommand{\finci}{c_i^{++}}
\newcommand{\fdeci}{c_ i^{--}}
\newcommand{\ftest}{c_i^{??}}
\newcommand{\stepi}{\mathop{\to}^{\iota}}
\newcommand{\stepin}{\mathop{\to}^{\iota_n}}
\newcommand{\step}{\mathop{\to}}
\newcommand{\lstepi}{\mathop{\to}_{\textit{\scriptsize lossy}}^{\iota}}
\newcommand{\lstep}{\mathop{\to}_{\textit{\scriptsize lossy}}}

% END OF DEFINITIONS

\begin{document}

% Page heads
%\markboth{C. Hampson and A.Kurucz}{Undecidable bimodal logics and one-variable first-order linear temporal logics with counting}

\title{Undecidable propositional bimodal logics and\\ one-variable first-order linear temporal logics with counting}

\author{C.~Hampson and A.~Kurucz\\
{\small Department of Informatics, King's College London}}

\maketitle

%**********************************************************************************************************

\begin{abstract}
First-order temporal logics are notorious for their bad
computational behaviour. It is known that even the two-variable
monadic fragment is highly undecidable over various linear timelines, and over branching time even one-variable fragments might be undecidable.
However, there have been several attempts on finding well-behaved fragments of 
first-order temporal logics and related temporal description logics, mostly either by restricting
the available quantifier patterns, or considering sub-Boolean languages.
Here we analyse seemingly `mild' extensions of decidable one-variable fragments with counting capabilities,
interpreted in models with constant, decreasing, and expanding first-order domains.
We show that over most classes of linear 
orders these logics are (sometimes highly) undecidable, even without constant and function symbols, and
with the sole temporal operator `eventually'.

We establish connections with bimodal logics over 2D product structures having linear and `difference' 
(inequality) component
relations,
and prove our results in this bimodal setting.  We show a
general result saying that satisfiability over many classes of bimodal models with commuting 
`unbounded' linear and difference relations is undecidable.
As a by-product, we also obtain 
new examples of finitely axiomatisable but Kripke incomplete bimodal logics.
Our results generalise similar lower bounds on bimodal logics over products of two linear relations, and
our proof methods are quite different from the known proofs of these results.
Unlike previous proofs that first `diagonally encode' an infinite grid, 
and then use reductions of tiling or Turing machine problems, 
here we make direct use of the grid-like structure of product frames and obtain lower complexity bounds
 by reductions of
counter (Minsky) machine problems. 
Representing counter machine runs apparently requires less control over 
neighbouring grid-points than tilings or Turing machine runs, and so this technique is possibly more versatile,
even if one component of the underlying product structures is `close to' being the universal relation.
\end{abstract}

%**********************************************************************************************************

%\category{F.4.1}{Mathematical Logic}{Temporal logic, modal logic, computational logic}
%\category{I.2.4}{Knowledge Representation Formalisms and Methods}{Temporal logic, predicate logic, modal logic}

%\terms{Languages, Theory}

%\keywords{First-order Linear Temporal Logic, Bimodal Logic, Satisfiability Problem}

%\acmformat{Christopher Hampson and Agi Kurucz, 2014. Undecidable propositional bimodal logics and one-variable first-order linear temporal logics with counting.}

%\begin{bottomstuff}
%Author's addresses:
%Department of Informatics, King's College London, London, Strand WC2R 2LS, U.K.
%\end{bottomstuff}

%**********************************************************************************************************

\section{Introduction}\label{intro}

%********

\subsection{First-order linear temporal logic with counting.}\label{introfoltl}

Though first-order temporal logics are natural and expressive languages for querying and constraining
temporal databases \cite{Chomicki94,Chomicki&Niwinski95} and reasoning about knowledge that changes in
time \cite{Hodkinsonetal01lpar},
their practical use has been discouraged by their high computational complexity.
It is well-known that even the two-variable monadic fragment is undecidable over various linear timelines,
and its satisfiability problem is $\Sigma_1^1$-hard over the natural
numbers \cite{Sz86,Szh88,Merz92,Gabbayetal94,gkwz03}. Also, even the one-variable fragment of first-order branching time 
logic $CTL^\ast$ is undecidable \cite{HodkinsonWZ02}.
Still, similarly to classical first-order logic where the decision problems of its fragments were 
studied and classified in
great detail \cite{bgg97}, there have been a number of attempts on finding the border between decidable and undecidable fragments of 
first-order temporal logics and related temporal description logics, mostly either by restricting
the available quantifier patterns  
\cite{Chomicki&Niwinski95,Hodkinsonetal00,Hodkinsonetal01lpar,Bauer&Hodkinson&W&Z04,Deg&Fisher&Konev06,Hodkinson02,Hodkinson06,Mamouras09}, 
or considering
sub-Boolean languages \cite{lwz08,akrz14}.

In this paper we contribute to this `classificational' research line by considering seemingly `mild' extensions of decidable one-variable fragments.
We study the satisfiability problem of the one-variable `future' fragment of linear temporal logic with counting to two,
interpreted in models over various timelines, and having constant, decreasing, or expanding first-order domains. 
Our language \logic\ keeps all Boolean connectives, it has no restriction on formula-generation, and it
is strong enough to express uniqueness of a property of domain
elements ($\existso$), and the `elsewhere' quantifier ($\foralld$). However,
\logic-formulas use only a single variable (and so contain only monadic predicate symbols), \logic\ has no equality, no constant or function symbols, and its only temporal
operators are `eventually' and  `always in the future'. 
\logic\ is weaker than the two-variable monadic \emph{monodic} fragment with equality,
where temporal operators can be applied only to subformulas with at most one free variable.
(This fragment with the `next time' operator
is known to be $\Sigma_1^1$-hard over the natural numbers \cite{Wolter&Z02apal,Degtyarevetal02}.)
\logic\ is connected to bimodal product logics \cite{Gabbay&Shehtman98,gkwz03} (see also below),
and to the temporalisation of the expressive description logic
$\mathcal{CQ}$ with one global universal role \cite{Wolter&Z99fund}.
Here are some examples of \logic-formulas:
\begin{itemize}
\item
``\emph{An order can only be submitted once:}''\quad
$
\forallx\Bt\bigl({\sf Subm}(x)\to\Bt\neg{\sf Subm}(x)\bigr).
$
\item
The Barcan formula:\quad
$
\existsx\Dt\pred(x)\leftrightarrow\Dt\existsx\pred(x).
$
\item
``\emph{Every day has its unique dog:}''\quad
$
\Bt\existso{\sf Dog}(x)\land\Bt\forallx\bigl({\sf Dog}(x)\to\Bt\neg{\sf Dog}(x)\bigr).
$
\item
``\emph{It's only me who is always unlucky:}''\quad
$
\Bt\neg{\sf Lucky}(x)\land \foralld\Dt{\sf Lucky}(x).
$
\end{itemize}
Note that \logic\ can also be considered as a fragment of three-variable classical first-order logic with only binary predicate symbols, but it is not within the guarded fragment.

\paragraph{Our contribution}
While the addition of `elsewhere' quantifiers to the two-variable fragment of classical first-order logic
does not increase the {\sc NExpTime} complexity of its satisfiability problem
\cite{Graedeletal97,gor97,PacholskiST00},
we show that adding the same feature to the (decidable) one-variable
fragment of first-order temporal logic results in (sometimes highly) undecidable logics over most linear timelines,
not only in models with constant domains, but even those with decreasing and expanding first-order domains.
Our main results on the \logic-satisfiability problem are summarised in Fig.~\ref{f:results}.

\begin{figure}[h]
\begin{center}
\begin{tabular}{l||c|c|c|c|}
& $\langle\omega,<\rangle$ & all finite  & all & $\langle\mathbb Q,<\rangle$ \\
& & linear orders & linear orders &  \mbox{or} $\langle\mathbb R,<\rangle$\\
\hline\hline
&&&&\\
constant & $\Sigma_1^1$-complete &  undecidable  & undecidable & undecidable \\
\ \ domains &  & r.e. & co-r.e. &\\[3pt]
\ \ & {\scriptsize Cor.~\ref{co:undecddisc}} & {\scriptsize Cor.~\ref{co:undecddisc}} & {\scriptsize Cor.~\ref{co:mainfo}} & {\scriptsize Cor.~\ref{co:densefo}}\\[3pt]
\hline
&&&&\\
decreasing  & $\Sigma_1^1$-complete &  undecidable  & undecidable & undecidable\\
\ \ domains &  & r.e. & co-r.e. &\\[3pt]
\ \ & {\scriptsize Cor.~\ref{co:undecddisc}} & {\scriptsize Cor.~\ref{co:undecddisc}} & {\scriptsize Cor.~\ref{co:mainfod}} & {\scriptsize Cor.~\ref{co:densefo}}\\[3pt]
\hline
&&&&\\
expanding &  undecidable & Ackermann-hard& \framebox{decidable?} & \framebox{decidable?}\\
\ \ domains &  co-r.e. &  \qquad decidable & co-r.e. &  \\[3pt]
&  {\scriptsize Cors.~\ref{co:omegaefol}, \ref{co:omegaefou}} & {\scriptsize Cors.~\ref{co:decefinfol}, \ref{co:decefinfou}} & {\scriptsize Cor.~\ref{co:linefou}}& \\[3pt]
\hline
\end{tabular}
\end{center}
\caption{\logic-satisfiability over various timelines and first-order domains.}\label{f:results}
\end{figure}

%********

\subsection{Bimodal logics and two-dimensional modal logics.}\label{intromod}

It is well-known that the first-order quantifier $\forall x$ can be considered as an `$\Sfive$-box': a
propositional modal necessity operator interpreted over relational structures 
$\auf W,R\zu$ where $R=W\mprod W$
 (\emph{universal frames\/}, in modal logic parlance).
Therefore, the two-variable fragment of classical first-order logic is related to propositional bimodal logic over
two-dimensional (2D) \emph{product frames} \cite{Marx&Venema97}.
Similarly, the `elsewhere' quantifier $\foralld$ can be regarded as a `$\Diff$-box': a
propositional modal necessity operator interpreted over \emph{difference} frames $\auf W,\ne\zu$ where $\ne$ is the
inequality relation on $W$. Looking at \logic\ this way, it turns out that it is
just a notational variant of 
the propositional bimodal logic over 2D products of linear orders and difference frames 
(Prop.~\ref{p:foprod}).

Propositional multimodal languages interpreted in various product-like structures show up in many other contexts, and connected to several other multi-dimensional logical formalisms, such as modal and temporal description logics, and spatio-temporal logics (see \cite{gkwz03,Kurucz07} for surveys and references).
The product construction as a general combination method on modal logics was introduced in 
\cite{Segerberg73,Shehtman78,Gabbay&Shehtman98}, and has been extensively studied ever since.

%***

\paragraph{Our contribution}
We study the \emph{satisfiability problem} of our logics in the propositional bimodal setting.
We show that satisfiability over many classes of bimodal frames with commuting 
linear and difference relations are undecidable (Theorems~\ref{t:finite}, \ref{t:mainirr}),
sometimes not even recursively enumerable (Theorems~\ref{t:omega}, \ref{t:disc}).
As a by-product, we also obtain 
new examples of finitely axiomatisable but \emph{Kripke incomplete} bimodal logics (Cor.~\ref{co:notcomplete}). 
It is easy to see (Prop.~\ref{p:cd}) 
that satisfiability over \emph{decreasing} or \emph{expanding} subframes of product frames is always reducible to `full rectangular' product frame-satisfiability. We show cases
when expanding frame-satisfiability is genuinely simpler than product-satisfiability 
(Theorems~\ref{t:omegaeu}, \ref{t:decefinu}), while it is still very complex 
(Theorems~\ref{t:decefin}, \ref{t:omegael}).

Our findings are in sharp contrast with the much lower complexity of bimodal logics over products of linear and universal frames:
Satisfiability over these is usually decidable with complexity between
{\sc ExpSpace} and {\sc 2ExpTime}  \cite{Hodkinsonetal03,Reynolds97}. 
In particular, we answer negatively a question of \cite{Reynolds97} by showing that the
addition of the `horizontal' difference operator to the decidable 
2D product of Priorian Temporal Logic over the class of all linear orders and $\Sfive$ 
results in an undecidable logic (Cor.~\ref{co:mainirr}).

Our lower bound results are also interesting because they seem to be proper generalisations
of similar results about modal products where both components are linear  \cite{Marx&Reynolds99,Reynolds&Z01,gkwz06,gkwz05a,KonevWZ05}. 
Satisfiability over linear and difference frames is of the same ({\sc NP}-complete) complexity, 
and so there are
reductions from `linear-satisfiability' to `difference-satisfiability' and vice versa. 
However, while we show (Section~\ref{expprodu}) 
how to `lift' some `difference to linear' reduction to the 2D level,
one cannot hope for such a lifting of a reverse `linear to difference' reduction:
Satisfiability over `difference$\times$difference' type products is decidable (being a fragment of two-variable classical first-order logic with counting), while 
`linear$\times$linear'-satisfiability is undecidable \cite{Reynolds&Z01}. 

Our undecidability proofs are quite different from most known undecidability proofs about 2D product logics with transitive components
\cite{Marx&Reynolds99,Reynolds&Z01,gkwz05a}.
Even if frames with two commuting relations (and so product frames) always have grid-like substructures,
there are two issues one needs to deal with
in order to encode grid-based complex problems into them:
\begin{itemize}
\item
to generate infinity, and
\item
somehow to `access' or `refer to' neighbouring-grid points, even when there might be further non-grid points around, there is no `next-time' operator in the language, and the
relations are transitive and/or dense and/or even `close to' universal.
\end{itemize}
Unlike previous proofs that first `diagonally encode' the $\omega\times\omega$-grid, 
and then use reductions of tiling or Turing machine problems, 
here we make direct use of the grid-like substructures in commutative frames, and obtain lower bounds by reductions of
counter (Minsky) machine problems. 
Representing counter machine runs apparently requires less control over 
neighbouring grid-points than tilings or Turing machine runs, and so this technique is possibly more versatile (see Section~\ref{proofidea} for more details).

%*************

\paragraph{Structure}
Section~\ref{defs} provides all the necessary definitions, and establishes connections between the two different formalisms. 
All results are then proved in the propositional bimodal setting. In particular, 
Section~\ref{chain} deals with the constant and decreasing domain cases over $\auf\omega,<\zu$
and finite linear orders. More general results on bimodal logics with `linear' and  `difference'
components are in Section~\ref{main}. The expanding domain cases
are treated in Section~\ref{domain}. 
Finally, in Section~\ref{disc} we discuss some related open problems.

Some of the results appeared in the extended abstract \cite{csl13}.

%**********************************************************************************************************

\section{Preliminaries}\label{defs}

%***********

\subsection{Propositional bimodal logics}\label{bipod}

Below we introduce all the necessary notions and notation. For more information on bimodal logics,
consult e.g.\
\cite{Blackburnetal01,gkwz03}.

We define \emph{bimodal formulas} by the following grammar:
\[
\phi::=\ \ \pred\mid\neg\phi\mid\phi\land\psi\mid\Dh\phi\mid\Dv\phi
\]
where $\pred$ ranges over an infinite set of propositional variables.
We use the usual abbreviations $\lor$, $\to$, $\leftrightarrow$, $\bot:=\pred\land\neg\pred$, 
$\top:=\neg\bot$, $\Box_i:=\neg\Diamond_i\neg$, and also
\[
\Diamond_i^+\phi:=\ \   \phi\lor\Diamond_i\phi,\hspace*{3cm} \Box_i^+\phi:=\ \ \phi\land\Box_i\phi,
\]
for $i=0,1$.
For any bimodal formula $\phi$, we denote by $\subf$ the set of its subformulas. 

A 2-\emph{frame} is a tuple $\F=\auf W,\Rh,\Rv\zu$ where $R_i$ are binary relations on the non-empty set $W$.
A \emph{model based on} $\F$ is a pair $\M=(\F,\nu)$, where $\nu$ is a
function mapping propositional variables to subsets of $W$. The \emph{truth relation}
$\M,w\models\phi$ is defined, for all $w\in W$, by induction on $\phi$ as follows:
\begin{itemize}
\item
$\M,w\models\pred$ iff $w\in\nu(\pred)$,
\item
$\M,w\models\neg\phi$ iff $\M,w\not\models\phi$,
$\M,w\models\phi\land\psi$ iff $\M,w\models\phi$ and $\M,w\models\psi$,
\item
$\M,w\models\Diamond_i\phi$ iff there exists  $v\in W$ such that $wR_iv$ and $\M,v\models\phi$ (for $i=0,1$).
\end{itemize}
We say that $\phi$ is \emph{satisfied in} $\M$, if there is $w\in W$ with $\M,w\models\phi$.
Given a set $\Sigma$ of bimodal formulas, we write $\M\models\Sigma$ if
we have $\M,w\models\phi$, for every $\phi\in\Sigma$ and every $w\in W$. 
We say that $\phi$ is \emph{valid in} $\F$, 
if $\M,w\models\phi$, for every model $\M$ based on $\F$ and
for every $w\in W$. 
If every formula
in a set $\Sigma$ is valid in $\F$, then we say that $\F$ is a \emph{frame for}  $\Sigma$.
We let $\Fr \Sigma$ denote the class of all frames for $\Sigma$.

A set $L$ of bimodal formulas is called a (normal) \emph{bimodal logic} (or \emph{logic}, for short)
if it contains all propositional tautologies and the formulas
$\Box_i(p\to q)\to(\Box_i p\to\Box_i q)$, for $i=0,1$,
and is closed  under the rules of Substitution, Modus Ponens and 
Necessitation $\varphi/\Box_i\varphi$, for $i=0,1$. Given a bimodal logic $L$, we will consider the following problem:

\medskip
\noindent
\underline{$L$-{\sc satisfiability:}}\ \ 
\parbox[t]{11.8cm}{Given a bimodal formula $\phi$, is there a model $\M$ such that $\M\models L$ and $\phi$ is satisfied in $\M$?}

\medskip
\noindent
For any class $\mathcal{C}$ of $2$-frames, we always
obtain a logic by taking
\[
\Log\,\mathcal{C}=\{\phi :\phi\mbox{ is a bimodal formula valid in every member of }\mathcal{C}\}.
\]
We say that $\Log\,\mathcal{C}$ is \emph{determined by} $\mathcal{C}$, and call such a logic
\emph{Kripke complete}. (We write just $\Log\,\F$ for $\Log\,\{\F\}$.)
Clearly, if $L=\Log\,\CC$, then there might exist frames for $L$ that are not in $\CC$, but
$L$-satisfiability is the same as the following problem:

\medskip
\noindent
\underline{$\CC$-{\sc satisfiability:}}\ \ 
\parbox[t]{11.85cm}{Given a bimodal formula $\phi$, is there a 2-frame $\F\in\CC$ such that $\phi$ is satisfied in a model based on $\F$?}

\paragraph{Commutators and products}
We might regard bimodal logics as `combinations' of their unimodal%
\footnote{Syntax and semantics of \emph{unimodal} logics are defined similarly to bimodal ones, using only one of the two modal operators. Throughout, 1-frames will be called simply \emph{frames\/}.}
 `components'.
Let $L_0$ and $L_1$ be two unimodal logics formulated using the same propositional variables and Booleans,
but having different modal operators ($\Dh$ for $L_0$ and $\Dv$ for $L_1$). Their \emph{fusion}
$\fusion$ is the smallest bimodal logic that contains both $L_0$ and $L_1$. The \emph{commutator} $\commut$
of $L_0$ and $L_1$ is the smallest bimodal logic that contains $\fusion$ and the formulas 
\begin{equation}\label{commut}
\Bv\Bh \pred\to \Bh\Bv \pred,
\qquad \Bh\Bv \pred\to \Bv\Bh \pred,
\qquad \Dh\Bv \pred\to \Bv\Dh \pred.
\end{equation}
Commutators are introduced in \cite{Gabbay&Shehtman98}, where it is also shown
that a 2-frame $\auf W,\Rh,\Rv\zu$ validates the formulas \eqref{commut} iff
\begin{itemize}
\item
$\Rh$ and $\Rv$ \emph{commute\/}:
$\forall x,y,z\,\bigl(x\Rh y\Rv z\to\exists u\,(x\Rv u\Rh z)\bigr)$, and
\item
$\Rh$ and $\Rv$ are \emph{confluent\/}:
$\forall x,y,z\,\bigl(x\Rh y\land x\Rv z\to\exists u\,(y\Rv u\land z\Rh u)\bigr)$.
\end{itemize}
Note that if at least one of $\Rh$ or $\Rv$ is symmetric, then confluence follows from commutativity.

Next, we introduce some special `two-dimensional' $2$-frames for commutators.
Given frames 
$\F_0=\auf W_0,R_0\zu$ and $\F_1=\auf W_1,R_1\zu$, their \emph{product} is defined to be
the $2$-frame
\[
\F_0\mprod\F_1= \auf W_0\mprod W_1,\Rhp,\Rvp\zu,
\]
where $W_0\mprod W_1$ is the Cartesian product of $W_0$ and $W_1$
and, for all $u,u'\in W_0$, $v,v'\in W_1$,
\begin{gather*}
\auf u,v\zu \Rhp \auf u',v'\zu\quad \text{ iff }\quad uR_0u'\mbox{ and }v=v',\\
\auf u,v\zu \Rvp \auf u',v'\zu\quad \text{ iff }\quad vR_1v' \mbox{ and }u=u'.
\end{gather*}
$2$-frames of this form will be called \emph{product frames} throughout.
For classes $\mathcal{C}_0$ and $\mathcal{C}_1$ of unimodal frames, we define
\[
\mathcal{C}_0\mprod\mathcal{C}_1=\{\F_0\mprod \F_1 : \F_i\in\mathcal{C}_i,\mbox{ for $i=0,1$}\}.
\]
Now, for $i=0,1$, let $L_i$ be a Kripke complete unimodal logic in the language with $\Diamond_i$.
The \emph{product} of $L_0$ and $L_1$ is defined as the (Kripke complete) bimodal logic
\[
L_0\times L_1 =\Log\,(\Fr L_0\mprod\Fr L_1).
\]
Product frames always validate the formulas in \eqref{commut},
and so it is not hard to see that $\commut\subseteq L_0\mprod L_1$ always holds. 
If both $L_0$ and $L_1$ are Horn axiomatisable, then $\commut=L_0\mprod L_1$ \cite{Gabbay&Shehtman98}. 
In general, $\commut$ can not only be properly contained in $L_0\mprod L_1$, but there might even be infinitely many logics in between  \cite{km12,hk14}. 

The following result of  Gabbay and Shehtman
 \cite{Gabbay&Shehtman98} is one of the few general `transfer' results on the satisfiability problem of 2D logics. It is an easy consequence of the recursive enumerability of the consequence
relation of classical (many-sorted) first-order logic:

\begin{theorem}\label{t:re}
If $\CC_0$ and $\CC_1$ are classes of frames such that both 
are recursively first-order definable in the language having a binary predicate symbol, then 
$\CC_0\mprod\CC_1$-satisfiability is co-r.e., that is, its
complement is recursively enumerable.
\end{theorem}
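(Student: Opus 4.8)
The plan is to reduce $\CC_0\mprod\CC_1$-satisfiability to the model-existence problem of a suitable \emph{two-sorted} classical first-order theory, and then exploit the fact that, by Gödel's completeness theorem, the consequence relation of (many-sorted) first-order logic is recursively enumerable. Since non-satisfiability of a bimodal formula over $\CC_0\mprod\CC_1$ amounts to a first-order \emph{validity}, this will immediately show that the complement of $\CC_0\mprod\CC_1$-satisfiability is r.e., as required.

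To set this up, suppose $\CC_0$ is the class of models of a recursive set $\Sigma_0$ of first-order sentences in the signature $\{\Rh\}$, and $\CC_1$ the class of models of a recursive $\Sigma_1$ in $\{\Rv\}$. I introduce a two-sorted first-order language with a `horizontal' sort carrying $\Rh$, a `vertical' sort carrying $\Rv$, and, for each propositional variable, a \emph{mixed} binary predicate $\pred(x_0,x_1)$ taking one argument of each sort (encoding a valuation on the product domain). Let $T$ consist of $\Sigma_0$ relativised to the horizontal sort, $\Sigma_1$ relativised to the vertical sort, together with the finitely many axioms forcing both sorts non-empty. A model of $T$ is then precisely a model based on some product frame $\F_0\mprod\F_1$ with $\F_0\in\CC_0$ and $\F_1\in\CC_1$: the two sort-domains give $\F_0$ and $\F_1$, and the interpretations of the mixed predicates give a valuation $\nu$ on $W_0\mprod W_1$. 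Crucially, $T$ is recursively axiomatised whenever $\Sigma_0,\Sigma_1$ are.

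Next I define a product-adapted \emph{standard translation} $ST_{x_0,x_1}$ sending each bimodal formula to a first-order formula with one free variable of each sort, by $ST_{x_0,x_1}(\pred)=\pred(x_0,x_1)$, commuting with the Booleans, and
\[
ST_{x_0,x_1}(\Dh\phi)=\exists y_0\,\bigl(\Rh(x_0,y_0)\land ST_{y_0,x_1}(\phi)\bigr),\qquad
ST_{x_0,x_1}(\Dv\phi)=\exists y_1\,\bigl(\Rv(x_1,y_1)\land ST_{x_0,y_1}(\phi)\bigr).
\]
These two clauses mirror the definitions of $\Rhp,\Rvp$: $\Dh$ moves the horizontal coordinate along $\Rh$ while keeping the vertical one fixed, and dually for $\Dv$. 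A routine induction on $\phi$ then shows that $\M,\auf u,v\zu\models\phi$ iff the corresponding first-order structure satisfies $ST_{x_0,x_1}(\phi)$ under $x_0\mapsto u$, $x_1\mapsto v$; finitely many variables of each sort suffice after the usual renaming.

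Combining the two steps, $\phi$ is satisfiable over $\CC_0\mprod\CC_1$ iff $T\cup\{\exists x_0\,\exists x_1\,ST_{x_0,x_1}(\phi)\}$ has a model, and hence $\phi$ is \emph{not} satisfiable over $\CC_0\mprod\CC_1$ iff $T\models\neg\exists x_0\,\exists x_1\,ST_{x_0,x_1}(\phi)$. Because $T$ is recursive and the consequence relation of many-sorted first-order logic is r.e. (enumerate finite subsets $T_0\subseteq T$ and search for a proof of $\bigwedge T_0\to\neg\exists x_0\,\exists x_1\,ST_{x_0,x_1}(\phi)$), the set of non-satisfiable $\phi$ is recursively enumerable, so $\CC_0\mprod\CC_1$-satisfiability is co-r.e. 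The one place demanding care is the faithfulness of the encoding: one must verify that models of $T$ correspond \emph{exactly} to models based on product frames in $\CC_0\mprod\CC_1$ (in particular that the mixed predicates capture arbitrary valuations on $W_0\mprod W_1$), and that the two modal clauses track $\Rhp,\Rvp$ correctly. Everything after that is a direct appeal to first-order completeness.
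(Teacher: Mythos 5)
Your proof is correct and takes essentially the same approach as the paper: the paper gives no detailed argument for this result, attributing it to Gabbay and Shehtman and noting only that it is ``an easy consequence of the recursive enumerability of the consequence relation of classical (many-sorted) first-order logic'', which is precisely the reduction you carry out. Your two-sorted theory $T$, the mixed predicates encoding valuations, the product-adapted standard translation, and the final appeal to completeness and compactness are exactly the details that remark leaves implicit, and they are all sound.
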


%*****

\paragraph{Expanding and decreasing 2-frames}
Product frames are special cases of the following construction for getting 2D frames.
Take a  (`horizontal') frame $\F=\auf W,R\zu$ and a sequence 
$\overline{\G}=\bigl\auf \G_u=\auf W_u,R_u\zu : u\in W\bigr\zu$ of 
(`vertical') frames. We can define a 2-frame by taking
\[
\Hh_{\F,\overline{\G}}=\bigl\auf\{\auf u,v\zu : u\in W,\,v\in W_u\},\Rhp,\Rvp\bigr\zu,
\]
where
\begin{align*}
\auf u,v\zu \Rhp \auf u',v'\zu\quad &\text{ iff }\quad uRu'\mbox{ and }v=v',\\
\auf u,v\zu \Rvp \auf u',v'\zu\quad &\text{ iff }\quad vR_u v' \mbox{ and }u=u'.
\end{align*}
Clearly, if $\G_x=\G_y=\G$ for all $x,y$ in $\F$, then $\Hh_{\F,\overline{\G}}=\F\mprod\G$.
However,
we can put slightly milder assumptions on the $\G_x$. We call a 2-frame of the form $\Hh_{\F,\overline{\G}}$
\begin{itemize}
\item
an \emph{expanding 2-frame} if $\G_x$ is a subframe%
\footnote{$\auf W,R\zu$ is called a \emph{subframe} of $\auf U,S\zu$, if $W\subseteq U$ and $R=S\cap (W\mprod W)$.}
of $\G_y$ whenever $xRy$, and
\item
a \emph{decreasing 2-frame} if $\G_y$ is a subframe of $\G_x$ whenever $xRy$.
\end{itemize}
So product frames are both expanding and decreasing 2-frames.
Expanding 2-frames always validate  $\Bh\Bv \pred\to \Bv\Bh \pred$ and  $\Dh\Bv \pred\to \Bv\Dh \pred$
(but not necessarily $\Bv\Bh \pred\to \Bh\Bv \pred$), and decreasing 2-frames validate $\Bv\Bh \pred\to \Bh\Bv \pred$ (but not
necessarily the other two formulas in \eqref{commut}).

For classes $\mathcal{C}_0$ and $\mathcal{C}_1$ of frames, we define
\begin{align*}
\CC_0\eprod\CC_1& =\{\mbox{expanding 2-frame }\Hh_{\F,\overline{\G}}: \F\in\CC_0,\ \G_x\in\CC_1
\mbox{ for all $x$ in $\F$}\},\\
\CC_0\dprod\CC_1& =\{\mbox{decreasing 2-frame }\Hh_{\F,\overline{\G}}: \F\in\CC_0,\ \G_x\in\CC_1
\mbox{ for all $x$ in $\F$}\}.
\end{align*}
It is not hard to see that for all classes $\mathcal{C}_0$, $\CC_1$ of frames, both
$\mathcal{C}_0\dprod\CC_1$-satisfiability and \mbox{$\mathcal{C}_0\eprod\CC_1$}-satisfiability
is reducible to $\mathcal{C}_0\mprod\CC_1$-satisfiability.
Indeed, take a fresh propositional variable $\fodompred$ (for \emph{domain}),
and for every bimodal formula $\phi$, define $\phi^{{\sf D}}$  by relativising each occurrence of $\Dh$ and $\Dv$
in $\phi$ to $\fodompred$.
Let $n$ be the nesting depth of the modal operators in $\phi$, any for any formula $\psi$ and $i=0,1$, let
\[
\Box_i^{\leq n}\psi:=\bigwedge_{k\leq n}\overbrace{\,\Box_i\dots\Box_i\phantom{I}\!\!}^k \psi.
\]
Then we have (cf.\ \cite[Thm.9.12]{gkwz03}):

\begin{proposition}\label{p:cd}\
\begin{itemize}
\item
$\phi$ is $\CC_0\dprod\CC_1$-satisfiable iff 
$\fodompred\land\Bh^{\leq n}\Bv^{\leq n}\bigl(\Dh\fodompred\to\fodompred\bigr)\land\phi^{{\sf D}}$ is $\mathcal{C}_0\mprod\CC_1$-satisfiable.

\item
$\phi$ is $\mathcal{C}_0\eprod\CC_1$-satisfiable  iff
$\fodompred\land\Bh^{\leq n}\Bv^{\leq n}\bigl(\fodompred\to\Bh\fodompred\bigr)\land\phi^{{\sf D}}$ is $\mathcal{C}_0\mprod\CC_1$-satisfiable.
\end{itemize}
\end{proposition}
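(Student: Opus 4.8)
The plan is to prove both equivalences by exhibiting, in each direction, an explicit transformation between a model over a decreasing (resp.\ expanding) $2$-frame and a model over a full product frame, with the fresh variable $\fodompred$ marking exactly those points that belong to the `genuine' decreasing/expanding domain. The key observation is that a decreasing or expanding $2$-frame $\Hh_{\F,\overline{\G}}$ can be embedded into the product frame $\F\mprod\G^\ast$, where $\G^\ast$ is a `union' frame whose domain is $\bigcup_{u}W_u$ with the induced relation; the points actually present at column $u$ are then singled out by letting $\nu(\fodompred)$ consist of exactly the pairs $\auf u,v\zu$ with $v\in W_u$. Conversely, given a product model satisfying the relativised conjunction, one reads off a decreasing (resp.\ expanding) $2$-frame by keeping at each column $u$ only those vertical points lying in $\nu(\fodompred)$.

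For the decreasing case I would argue as follows. First, for the left-to-right direction, start from a model $\M\models\Log(\CC_0\dprod\CC_1)$ (equivalently a model on some $\Hh_{\F,\overline{\G}}\in\CC_0\dprod\CC_1$) satisfying $\phi$ at a point $\auf u_0,v_0\zu$. Build the product frame $\F\mprod\G^\ast$, set $\nu(\fodompred)=\{\auf u,v\zu : v\in W_u\}$, and keep the valuation of every other variable unchanged on the existing points (choosing it arbitrarily elsewhere). The decreasing condition---$W_{u'}\subseteq W_u$ whenever $u\Rh u'$---is exactly what guarantees that $\Dh\fodompred\to\fodompred$ holds throughout the relevant region: if a horizontally reachable point is in the domain, so is its source. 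A routine induction on formula structure then shows $\M,\auf u,v\zu\models\psi$ iff $\M^\ast,\auf u,v\zu\models\psi^{\fodompred}$ for every $v\in W_u$ and every subformula $\psi$ of $\phi$; the relativisation ensures the modalities only ever quantify over $\fodompred$-points, which are precisely the points of the original decreasing frame. The bounded box prefix $\Bh^{\leq n}\Bv^{\leq n}$ with $n$ the modal depth of $\phi$ suffices because truth of $\phi$ at $\auf u_0,v_0\zu$ depends only on points reachable within $n$ steps. For right-to-left, given a product model satisfying the conjunction at $\auf u_0,v_0\zu$, define $W_u=\{v : \auf u,v\zu\in\nu(\fodompred)\}$ for each $u$ within the $n$-bounded cone; the conjunct $\Dh\fodompred\to\fodompred$ forces these to shrink along $\Rh$, yielding a genuine decreasing $2$-frame on which $\phi$ is satisfied by the same inductive correspondence.

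The expanding case is entirely symmetric, the only change being that the monotonicity of the domains runs in the opposite direction, which is captured by replacing $\Dh\fodompred\to\fodompred$ with $\fodompred\to\Bh\fodompred$: this says that if a point is in the domain then every horizontally reachable point is too, exactly encoding $W_u\subseteq W_{u'}$ whenever $u\Rh u'$. The same embedding into $\F\mprod\G^\ast$ and the same relativisation argument go through verbatim.

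The main obstacle, and the only point requiring genuine care, is making the bounded-depth bookkeeping precise: one must verify that restricting to the $\Bh^{\leq n}\Bv^{\leq n}$-cone around the satisfying point is both necessary and sufficient---necessary so that the relativised formula is not forced to make claims about domain membership at points irrelevant to $\phi$, and sufficient so that every subformula evaluated during the induction sees a correctly-populated domain. Concretely, the inductive step for $\Dh\psi$ and $\Dv\psi$ must confirm that a witness exists on the product side within $\fodompred$ exactly when it exists in the decreasing/expanding frame, and this hinges on the witnesses lying within the depth-$n$ cone where $\fodompred$ faithfully records $W_u$. Everything else is a direct unwinding of the definitions of $\phi^{\fodompred}$, $\Rhp$, and $\Rvp$, so I would state the correspondence as a single lemma proved by simultaneous induction and then read off both bullet points of Proposition~\ref{p:cd}.
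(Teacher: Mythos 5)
Your strategy is exactly the one the paper gestures at (the paper gives no proof beyond defining $\phi^{{\sf D}}$ and citing the literature): relativise both modalities to a domain marker and translate models back and forth. In the left-to-right directions your construction is essentially sound for the classes the paper actually uses, but note that your ``union'' frame $\G^\ast$ must itself belong to $\CC_1$; for arbitrary $\CC_1$ this fails, and even its definition is ambiguous (the union of the relations of the $\G_u$ is not the inequality relation on $\bigcup_u W_u$). For $\cdiff$ one should take the full difference frame on the union and, in the decreasing case, one can avoid the union altogether by taking $\G_{u_0}$, since decreasingness puts every column in the cone of the evaluation point inside $W_{u_0}$.

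The genuine gap is in the right-to-left direction, and for the first bullet it cannot be repaired as stated. A decreasing $2$-frame $\Hh_{\F,\overline{\G}}$ must assign to \emph{every} $u$ a frame $\G_u\in\CC_1$, hence a \emph{non-empty} $W_u$; but the columns $W_u=\{v:\auf u,v\zu\in\nu(\fodompred)\}$ you read off can be empty, and your induction cannot conjure points into them. This is not bookkeeping: the biconditional fails without a non-emptiness conjunct. Take $\CC_0=\{\auf\omega,<\zu\}$, $\CC_1=\cdiff$, and $\phi=\Dh\top\land\Bh\Bv\bot\land\Bh\Bh\bot$ (so $n=2$). In any decreasing $2$-frame over $\auf\omega,<\zu$, if $\auf m,v\zu\models\phi$ then $\Dh\top$ yields $k>m$ with $v\in W_k$, $\Bh\Bv\bot$ forces $W_k=\{v\}$, and then $\emptyset\ne W_{k+1}\subseteq W_k$ forces $v\in W_{k+1}$, contradicting $\Bh\Bh\bot$; so $\phi$ is not $\CC_0\dprod\CC_1$-satisfiable. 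Yet $\fodompred\land\Bh^{\leq 2}\Bv^{\leq 2}(\Dh\fodompred\to\fodompred)\land\phi^{{\sf D}}$ holds at $\auf 0,v\zu$ in $\auf\omega,<\zu\mprod\auf\{v,w\},\ne\zu$ with $\nu(\fodompred)=\{\auf 0,v\zu,\auf 0,w\zu,\auf 1,v\zu\}$: the decreasing axiom holds at every point since no column beyond the second contains a $\fodompred$-point. So your reading-off step must break exactly here; a correct statement needs either empty vertical components to be permitted or an extra conjunct such as $\Bh^{\leq n}(\fodompred\lor\Dv\fodompred)$. (The expanding bullet escapes this: $\fodompred\to\Bh\fodompred$ propagates the root's column forward, so all columns met in the cone contain $v_0$.)

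A second, lesser gap: you define $W_u$ only ``within the $n$-bounded cone'', but the decreasing/expanding condition must hold for \emph{every} $R_0$-edge of $\F_0$, while the axiom constrains $\fodompred$ only inside the $\Bh^{\leq n}\Bv^{\leq n}$-range, and restricting $\F_0$ to the cone may leave $\CC_0$. For the transitive horizontal frames the paper needs, this is repairable (the cone is successor-closed, and one can assign a fixed column, e.g.\ the full domain in the decreasing case or $\{v_0\}$ in the expanding case, to all points outside it), but the argument has to be made, and for non-transitive $\CC_0$ --- which your claimed generality covers --- it genuinely breaks down.
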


%***

\paragraph{`Linear' and `difference' logics}
Throughout, a frame $\auf W,R\zu$ is called \emph{rooted} with root $r\in W$ if every $w\in W$
can be reached from $r$ by taking finitely many $R$-steps.
By a \emph{linear order} we mean an irreflexive%
\footnote{This is just for simplifying the overall presentation. Reflexive cases are covered in Section~\ref{dense}.}%
, transitive and trichotomous relation. 
Let $\clin$ and $\clinf$ denote 
the classes of all linear orders and all finite linear orders, respectively. We let $\Kfourt:=\Log\,\clin$, that is,
the unimodal logic determined by all linear orders. 
$\Kfourt$ is well-studied as a temporal logic, and it is well-known that
frames for $\Kfourt$ are \emph{\preorders\!\!}.%
\footnote{A relation $R$ is called a \emph{\preorder\!} if it is transitive and
\emph{weakly connected\/}:
$\forall x,y,z\,\bigl(xRy\land xR z\to (\mbox{$y=z$}\lor yRz\lor zRy)\bigr)$. In other words, a rooted 
\preorder is a linear chain of clusters of universally connected points.}
A linear order $\auf W,R\zu$ is a called a \emph{well-order} if every non-empty subset of $W$ has an $R$-least element.

We denote by $\cdiff$ ($\cdifff$) the class of all (finite) \emph{difference frames}, that is, frames of the form
$\auf W,\ne\zu$ where $\ne$ is the inequality relation on $W$. We let
$\Diff:=\Log\,\cdiff$, that is, the unimodal logic determined by all difference frames. From the axiomatisation of $\Diff$ by 
Segerberg \cite{Segerberg80} it follows that frames for $\Diff$ are \emph{pseudo-equivalence}%
\footnote{A relation $R$ is called a \emph{pseudo-equivalence} if it is symmetric and \emph{pseudo-transitive\/}:
$\forall x,y,z\,\bigl(xRyRz\to (\mbox{$x=z$}\lor xRz)\bigr)$.
So a pseudo-equivalence is almost an equivalence relation, just it might have both reflexive and irreflexive points.}
relations.
If $\M$ is a model based on a rooted pseudo-equivalence frame, then we can express the uniqueness of a modally definable property in $\M$. For any formula $\phi$,
\[
\E\phi:= \ \ \D^+(\phi\land\B\neg\phi).
\]
Then, $\E\phi$ is satisfied in $\M$ iff there is a unique $w$ with $\M,w\models\phi$.

As all the axioms of $\Kfourt$ and $\Diff$, and the formulas in \eqref{commut} are Sahlqvist formulas,
the commutator $[\Kfourt,\Diff]$ is Sahlqvist axiomatisable, and so Kripke complete. Also,
\begin{multline}
\Fr[\Kfourt,\Diff]=\{\auf W,\Rh,\Rv\zu : \mbox{$\Rh$ is a \preorder\!\!,}\\
\label{commcomp}
\mbox{$\Rv$ is a pseudo-equivalence, $\Rh$ and $\Rv$ commute}\}
\end{multline}
(for more information on Sahlqvist formulas and canonicity, consult e.g.\
\cite{Blackburnetal01,cz}).

%*********

\subsection{One-variable first-order linear temporal logic with counting to two}\label{foltl}

We define \logic-\emph{formulas} by the following grammar:
\[
\phi::=\ \ \pred(x)\mid\neg\phi\mid\phi\land\psi\mid\Dt\phi\mid\existsd\,\phi
\]
where (with a slight abuse of notation) $\pred$ ranges over an infinite set $\mathcal{P}$ of \emph{monadic} predicate symbols.

A \emph{\model} is a tuple $\mathfrak M=\bigl\langle\langle T,<\rangle,D_t,I\bigr\rangle_{t\in T}$,
where $\langle T,<\rangle$ is a 
linear order, representing the \emph{timeline}, 
$D_t$ is a non-empty set, the \emph{domain at moment} $t$, for each $t\in T$, and $I$ is a function associating with every 
$t\in T$ a first-order structure $I(t)=\langle D_t,\pred^{I(t)}\rangle_{\pred\in\mathcal{P}}$.
We say that $\mathfrak M$ is \emph{based on} the linear order $\langle T,<\rangle$.
$\mathfrak M$ is a \emph{constant} (resp.\ \emph{decreasing}, \emph{expanding}) \emph{domain model\/}, if 
$D_t=D_{t'}$, (resp.\ $D_t\supseteq D_{t'}$, $D_t\subseteq D_{t'}$) 
whenever $t,t'\in T$ and $t<t'$.
A constant domain
model is clearly both a decreasing and expanding domain model as well, and can be represented as a triple $\bigl\langle\langle T,<\rangle,D,I\bigr\rangle$.

The \emph{truth-relation} $(\mathfrak M,t)\models^a\phi$ (or simply $t\models^a\phi$ if $\mathfrak M$ is understood) is
defined, for all $t\in T$ and $a\in D_t$, by induction on $\phi$ as follows:
\begin{itemize}
\item
$t\models^a\pred(x)$ iff $a\in\pred^{I(t)}$,
$t\models^a\neg\phi$ iff $t\not\models^a\phi$,
$t\models^a\phi\land\psi$ iff $t\models^a\phi$ and $t\models^a\psi$,
\item
$t\models^a\existsd\phi$ iff there exists  $b\in D_t$ such that $b\ne a$ and $t\models^b\phi$,
\item
$t\models^a\Dt\phi$ iff there is $t'\in T$ such that $t'>t$, $a\in D_{t'}$ and $t'\models^a\phi$.
\end{itemize}
We say that $\phi$ is \emph{satisfiable} \emph{in} $\mathfrak M$ if $\M,t\models^a\phi$ holds for 
some
$t\in T$ and $a\in D_t$. 
Given a class $\mathcal{C}$ of linear orders,  we say that $\phi$ is 
\logic-\emph{satisfiable in constant} (\emph{decreasing}, \emph{expanding}) \emph{domain models over} $\mathcal{C}$,
if $\phi$ is satisfiable in some constant (decreasing, expanding) domain \model\ based on some linear order from $\mathcal{C}$.

We introduce the following abbreviations:
\[
\existsx\phi:= \ \ \phi\lor\existsd\phi,\hspace*{3.5cm}
\exists^{\geq 2}x\,\phi:=\ \  \existsx(\phi\land\existsd\phi).
\]
It is straightforward to see that they have the intended semantics:
\begin{itemize}
\item
$t\models^a\existsx\phi$ iff there exists  $b\in D_t$ with $t\models^b\phi$,
\item
$t\models^a\exists^{\geq 2}x\,\phi$ iff there exist $b,b'\in D_t$ with $b\ne b'$, $t\models^b\phi$ and $t\models^{b'}\phi$.
\end{itemize}
Also, we could have chosen $\existsx$ and $\exists^{\geq 2}x$ as our primary connectives instead of $\existsd$, as 
\[
\existsd\phi\ \leftrightarrow\ (\neg\phi\land\existsx\phi)\lor\exists^{\geq 2}x\,\phi.
\]

%*************

\subsection{Connections between propositional bimodal logic and \logic}\label{conn}

Clearly, one can define a bijection
${}^\star$ from \logic-formulas to bimodal formulas, mapping each $\pred(x)$ to $\pred$,
$\Dt\phi$ to $\Dh\phi^\star$, $\existsd\phi$ to $\Dv\phi^\star$, and commuting with the Booleans.
Also, there is a bijection ${}^\dagger$ between constant domain
\model s $\M=\bigl\auf\auf T,<\zu,D,I\bigr\zu$ and modal models
$\M^\dagger=\auf\F,\nu\zu$ where $\F=\auf T,<\zu\mprod\auf D,\ne\zu$ and 
\mbox{$\nu(\pred)=\{\auf t,a\zu : \M,t\models^a\pred(x)\}$.}
Similarly, there is a one-to-one connection between expanding (decreasing) 2-frames with linear
`horizontal' and difference `vertical' components, and expanding (decreasing) domain \model s.
So it is straightforward to see the following:

\begin{proposition}\label{p:foprod}
For any class $\mathcal{C}$ of linear orders, and any \logic-formula $\phi$,
\begin{itemize}
\item
$\phi$ is \logic-satisfiable in constant domain models over $\mathcal{C}$ iff
$\phi^\star$ is $\mathcal{C}\mprod\cdiff$-satisfiable;
\item
$\phi$ is \logic-satisfiable in expanding domain models over $\mathcal{C}$ iff
$\phi^\star$ is $\mathcal{C}\eprod\cdiff$-satisfiable;
\item
$\phi$ is \logic-satisfiable in decreasing domain models over $\mathcal{C}$ iff
$\phi^\star$ is $\mathcal{C}\dprod\cdiff$-satisfiable.
\end{itemize}
\end{proposition}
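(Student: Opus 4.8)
The plan is to reduce all three equivalences to a single truth-preservation lemma and then read off satisfiability transfer from the fact that the map $\M\mapsto\auf\Hh_{\F,\overline{\G}},\nu\zu$ is a bijection between the relevant model classes. Given a \model\ $\M=\auf\auf T,<\zu,D_t,I\zu_{t\in T}$ based on a linear order in $\CC$, I associate the 2-frame $\Hh_{\F,\overline{\G}}$ with horizontal frame $\F=\auf T,<\zu$ and vertical frames $\G_t=\auf D_t,\ne\zu$, together with the valuation $\nu(\pred)=\{\auf t,a\zu : \M,t\models^a\pred(x)\}$; for constant domains this is exactly the model $\M^\dagger$ introduced above. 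The lemma to be proved by induction on $\phi$ is
\[
\M,t\models^a\phi \quad\Longleftrightarrow\quad \auf\Hh_{\F,\overline{\G}},\nu\zu,\auf t,a\zu\models\phi^\star,
\]
for every $t\in T$ and every $a\in D_t$.

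In the induction, the case $\phi=\pred(x)$ holds by the very definition of $\nu$, and the Boolean cases are immediate since ${}^\star$ commutes with the Booleans. The two modal cases are where the construction does its work. For $\existsd\phi$, I use that $\auf t,a\zu\Rvp\auf t',a'\zu$ holds in $\Hh_{\F,\overline{\G}}$ precisely when $t'=t$ and $a'\ne a$ with $a'\in D_t$; this matches verbatim the clause ``there is $b\in D_t$ with $b\ne a$ and $t\models^b\phi$'' defining $\existsd\phi$, so the inductive hypothesis closes the case. For $\Dt\phi$, I use that $\auf t,a\zu\Rhp\auf t',a'\zu$ holds precisely when $a'=a$, $t<t'$, and $\auf t',a\zu$ is a world of $\Hh_{\F,\overline{\G}}$, i.e.\ $a\in D_{t'}$; this matches the clause ``there is $t'>t$ with $a\in D_{t'}$ and $t'\models^a\phi$'' defining $\Dt\phi$, and again the hypothesis applies.

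With the lemma in hand, the three equivalences follow by checking that the above assignment restricts to a bijection onto the relevant 2-frame models. If $\M$ has constant domains then all $\G_t$ coincide with a single $\auf D,\ne\zu$, so $\Hh_{\F,\overline{\G}}=\F\mprod\auf D,\ne\zu\in\CC\mprod\cdiff$; if $\M$ has decreasing (expanding) domains, then $\G_{t'}$ is a subframe of $\G_t$ (resp.\ $\G_t$ of $\G_{t'}$) whenever $t<t'$, so $\Hh_{\F,\overline{\G}}$ is a decreasing (expanding) 2-frame in $\CC\dprod\cdiff$ (resp.\ $\CC\eprod\cdiff$). Conversely, any such 2-frame with a valuation is of this form: given $\Hh_{\F,\overline{\G}}$ with $\F\in\CC$ and each $\G_t$ a difference frame, setting $\pred^{I(t)}=\{a\in D_t : \auf t,a\zu\in\nu(\pred)\}$ recovers a \model\ of the matching type. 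Satisfiability of $\phi$ on one side then transfers, via the lemma evaluated at a witnessing world $\auf t,a\zu$, to satisfiability of $\phi^\star$ on the other.

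The only point needing genuine care --- rather than bookkeeping --- is the interaction between domain inclusions and frame relations in the non-constant cases. I must check that a subframe of a difference frame is again a difference frame, i.e.\ that $\ne\,\cap\,(W\mprod W)$ is inequality on $W$ for $W\subseteq D$, so that the $\G_t$ remain legitimate $\cdiff$-components; and, dually, that in $\Hh_{\F,\overline{\G}}$ a horizontal $\Rhp$-step from $\auf t,a\zu$ to a point at time $t'$ exists exactly when $a$ survives into $D_{t'}$. This last fact is precisely the side-condition $a\in D_{t'}$ built into the semantics of $\Dt$: in the expanding case surviving elements persist, so $\Dt$ may always look forward along $\Rhp$, while in the decreasing case a vanished element has no world at $t'$, which correctly blocks the $\Rhp$-step. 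Once this alignment is confirmed, both the induction and the bijection are routine.
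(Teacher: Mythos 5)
Your proof is correct and follows essentially the same route as the paper, which sets up the bijections ${}^\star$ and ${}^\dagger$ (and their expanding/decreasing analogues) and then declares the proposition straightforward; you merely spell out the truth-preservation induction and the subframe check that the paper leaves implicit.
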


%*******

\subsection{Counter machines}\label{cm}

A \emph{Minsky} or  \emph{counter machine} $M$ is described by a finite set $Q$ of states, 
a set $H\subseteq Q$ of terminal states, 
a finite set $C=\{c_0,\dots,c_{N-1}\}$ of counters with $N>1$,
a finite nonempty set  $I_q\subseteq \textit{Op}_C\times Q$ of instructions, for each $q\in Q-H$, where
each operation in $\textit{Op}_C$ is one of the following forms, for some $i<N$:
\begin{itemize}
\item
$\finci$ (\emph{increment counter} $c_i$ \emph{by one}),
\item
$\fdeci$ (\emph{decrement counter} $c_i$ \emph{by one}),
\item
$\ftest$ (\emph{test whether counter} $c_i$ \emph{is zero}).
\end{itemize}
A \emph{configuration} of $M$ is a tuple $\langle q,{\bf c}\rangle$ with $q\in Q$ representing the current state, and 
an $N$-tuple
${\bf c}=\langle c_0,\dots,c_{N-1}\rangle$ of natural numbers representing the current contents of the counters. 
For each $\iota\in \textit{Op}_C$,
we say that there is a (\emph{reliable}) $\iota$-\emph{step} between 
configurations $\sigma=\langle q,{\bf c}\rangle$ and $\sigma'=\langle q',{\bf c}'\rangle$ (written
$\sigma\stepi\sigma'$) iff there is $\langle\iota,q'\rangle\in I_q$ such that
\begin{itemize}
\item
either $\iota=\finci$ and $c_i'=c_i+1$, $c_j'=c_j$ for $j\ne i$, $j<N$,
\item
or $\iota=\fdeci$ and $c_i>0$, $c_i'=c_i-1$, $c_j'=c_j$ for $j\ne i$, $j<N$, 
\item
or $\iota=\ftest$ and $c_i'=c_i=0$, $c_j'=c_j$ for $j<N$.
\end{itemize}
We write $\sigma\step\sigma'$ iff $\sigma\stepi\sigma'$ for some $\iota\in \textit{Op}_C$.
For each $\iota\in \textit{Op}_C$,
we write $\sigma\lstepi\sigma'$ if there are configurations
$\sigma^1=\langle q,{\bf c}^1\rangle$ and $\sigma^2=\langle q',{\bf c}^2\rangle$ such that
$\sigma^1\stepi\sigma^2$, $c_i\geq c_i^1$ and $c_i^2\geq c_i'$ for every $i<N$. 
We write $\sigma\lstep\sigma'$ iff $\sigma\lstepi\sigma'$ for some $\iota\in \textit{Op}_C$.
A sequence $\langle\sigma_n : n<B\rangle$ of configurations, with $0<B\leq\omega$,
is called a \emph{run} (resp.\ \emph{lossy run}), if $\sigma_{n-1}\step\sigma_n$ (resp.\ $\sigma_{n-1}\lstep\sigma_n$)
holds for every $0<n<B$.

\medskip
Below we list the counter machine problems we will use in our lower bound proofs.

\smallskip
\noindent
\underline{{\sc CM non-termination:}}\ \ 
\parbox[t]{10.8cm}{($\Pi_1^0$-hard \cite{Minsky67})\\[3pt]
Given a counter machine $M$ and a state $q_0$, does $M$ have an infinite run starting with
$\auf q_0,{\bf 0}\zu$?}

\smallskip
\noindent
\underline{{\sc CM reachability:}}\ \ 
\parbox[t]{11.4cm}{($\Sigma_1^0$-hard \cite{Minsky67})\\[3pt]
Given a counter machine $M$, a configuration
$\sigma_0=\auf q_0,{\bf 0}\zu$ and a state $q_r$, does $M$ have a run starting with
$\sigma_0$ and reaching $q_r$?}

 \smallskip
\noindent
\underline{\sc CM recurrence:}\ \
\parbox[t]{11.6cm}{($\Sigma_1^1$-hard \cite{Alur&Henzinger})\\[3pt]
Given a counter machine $M$ and two states $q_0$, $q_r$, does $M$ have
a run starting with $\auf q_0,{\bf 0}\zu$ and visiting $q_r$ infinitely often?}

\smallskip
\noindent
\underline{{\sc LCM reachability:}}\ \ 
\parbox[t]{11cm}{(Ackermann-hard \cite{sch2})\\[3pt]
Given a counter machine $M$, a configuration
$\sigma_0=\auf q_0,{\bf 0}\zu$ and a state $q_r$, does $M$ have a lossy run starting with
$\sigma_0$ and reaching $q_r$?}

\medskip
\noindent
The Ackermann-hardness of this problem is shown by Schnoebelen \cite{sch2} without the restriction that $\sigma_0$ has all-0 counters.
It is not hard to see that this restriction does not matter: For every $M$ and $\sigma_0$
one can define a machine $M^{\sigma_0}$ that first performs incrementation steps filling the counters up to their
`$\sigma_0$-level', and then
performs $M$'s actions. Then $M$ has a lossy run starting with $\sigma_0$ and 
reaching $q_r$ iff
$M^{\sigma_0}$ has a lossy run starting with all-0 counters and 
reaching $q_r$.

 \smallskip
\noindent
\underline{{\sc LCM $\omega$-reachability:}}\ \
\parbox[t]{10.9cm}{($\Pi_1^0$-hard \cite{KonevWZ05,mayr00,sch1})\\[3pt]
Given a counter machine $M$, a configuration
$\sigma_0=\langle q_0,{\bf 0}\rangle$ and a state $q_r$,  is it the case that for every $n<\omega$
$M$ has a lossy run starting with $\sigma_0$ and visiting $q_r$ at least $n$ times?}

%****

\subsection{Representing counter machine runs in our logics}\label{proofidea}

Before stating and proving our results, here we give a short informal guide on how we intend to use counter machines
in the various lower bound proofs of the paper. To begin with, using two different propositional variables $\state$ (for \emph{state}) and $\diag$ (for \emph{next}), we force a `diagonal staircase'  with the following properties:
\begin{enumerate}
\item[{\rm (}i{\rm )}]
every $\state$-point `vertically' ($\Rv$) sees some
$\diag$-point, and
\item[{\rm (}ii{\rm )}]
every $\diag$-point has an $\state$-point as its `immediate horizontal ($\Rh$) successor'.
\end{enumerate}
This way we not only force infinity,
but also get a `horizontal' next-time operator:
\[
\nexttime\phi:=\ \ \Bv\bigl(\diag\to\Bh(\state\to \phi)\bigr)
\]
(see Fig.~\ref{f:forward}).
In the simplest case of product frames of the form $\auf \omega,<\zu\mprod\auf W,\ne\zu$,
a grid-like structure with subsequent columns comes by definition, so everything is ready
for encoding counter machine runs in them:
Subsequent states of a run will be represented
by subsequently generated $\state$-points, and the content of each counter $c_i$ at step $n$ of a 
run will be represented by the number of $\cou$-points at the $n$th column of the grid, 
for some formula $\cou$ (see Fig.~\ref{f:forward}). As in difference frames uniqueness of a property is modally expressible, we can faithfully express the subsequent changes of the counters (see Section~\ref{chain}).

\begin{figure}[ht]
\begin{center}
\setlength{\unitlength}{.075cm}
\begin{picture}(160,103)(-5,-5)
\thinlines
\multiput(0,0)(20,0){8}{\line(0,1){85}}
\multiput(0,0)(20,0){7}{\vector(1,0){20}}
\put(140,0){\vector(1,0){7}}

\put(125,35){\line(0,1){40}}
\multiput(123,35)(0,40){2}{\line(1,0){2}}

\put(125.5,55){${}^{c_i(n)=3}$}
\multiput(114,32)(20,0){2}{$\cou$}
\multiput(73.5,43)(20,0){4}{$\cou$}
\multiput(14,47)(20,0){3}{$\cou$}
\multiput(34,77)(20,0){5}{$\cou$}
\put(53,5){$\cou$}

\thicklines
\multiput(0,0)(20,10){8}{\line(0,1){10}}
\multiput(0.25,0)(20,10){8}{\line(0,1){10}}
\multiput(0,10)(20,10){7}{\vector(1,0){19}}
\multiput(0,0)(20,10){8}{\circle*{2}}
\multiput(0,10)(20,10){8}{\circle*{2}}
\put(140,80){\vector(1,0){7}}

\put(20,45){\circle*{3.5}}
\put(40,45){\circle*{3.5}}
\put(40,75){\circle*{3.5}}
\put(60,5){\circle*{3.5}}
\put(60,45){\circle*{3.5}}
\put(60,75){\circle*{3.5}}
\put(80,75){\circle*{3.5}}
\put(80,45){\circle*{3.5}}
\put(100,75){\circle*{3.5}}
\put(100,45){\circle*{3.5}}
\put(120,75){\circle*{3.5}}
\put(120,35){\circle*{3.5}}
\put(120,45){\circle*{3.5}}
\put(140,35){\circle*{3.5}}
\put(140,45){\circle*{3.5}}

\multiput(-5,10)(20,10){8}{$\diag$}
\put(-5,0){$\state$}
\multiput(17,5)(20,10){2}{$\state$}
\multiput(62,28)(20,10){3}{$\state$}
\put(142,68){$\state$}
\put(1,-4){$q_0$}
\put(22,9){$q_1$}
\put(42,19){$q_2$}
\put(115,56){$q_n$}

\multiput(150,0)(2,0){3}{\circle*{.5}}
\multiput(150,80)(2,0){3}{\circle*{.5}}
\put(150,-4){${}_{\auf \omega,<\zu}$}

\put(-5,96){${}_{\auf W,\ne\zu}$}
\multiput(0,87)(0,2){3}{\circle*{.5}}
\multiput(20,87)(0,2){3}{\circle*{.5}}
\multiput(40,87)(0,2){3}{\circle*{.5}}
\multiput(60,87)(0,2){3}{\circle*{.5}}
\multiput(80,87)(0,2){3}{\circle*{.5}}
\multiput(100,87)(0,2){3}{\circle*{.5}}
\multiput(120,87)(0,2){3}{\circle*{.5}}
\multiput(140,87)(0,2){3}{\circle*{.5}}
\end{picture}
\end{center}
\caption{Representing counter machine runs in product frames $\auf\omega,<\zu\mprod\auf D,\ne\zu$ `going forward'.}\label{f:forward}
\end{figure}

When generalising this technique to `timelines' other than $\auf\omega,<\zu$, there can be
additional difficulties. Say, (ii) above is clearly not doable over dense linear orders. Instead of working with $\Rh$-connected points, we work with
`$\Rh$-intervals' and have the `interval-analogue' of (ii): 
Every $\diag$-interval has an $\state$-interval as its `immediate $\Rh$-successor'
(see Section~\ref{dense}). 

We also
generalise our results not only to decreasing 2-frames but for more `abstract' 2-frames having
commuting \preorder and pseudo-equivance relations (see \eqref{commcomp}).
In the abstract case, we face an additional difficulty: While commutativity does force the presence of grid-points once a diagonal
staircase is present, there might be many other non-grid points in the corresponding `vertical columns', so
the control over runs becomes more complicated.
In these cases, both the diagonal staircase and counter machine runs are forced going
`backward' (see Fig.~\ref{f:backward}), 
as this way seemingly gives us greater control over the `intended' grid-points (see Section~\ref{product}).

\begin{figure}[ht]
\begin{center}
\setlength{\unitlength}{.075cm}
\begin{picture}(190,97)(-10,-10)
\thinlines
\put(0,0){\line(0,1){75}}
\multiput(30,0)(20,0){8}{\line(0,1){75}}
\multiput(30,0)(20,0){7}{\vector(1,0){20}}
\multiput(30,10)(20,0){7}{\vector(1,0){20}}
\multiput(30,20)(20,0){7}{\vector(1,0){20}}
\multiput(30,30)(20,0){7}{\vector(1,0){20}}
\multiput(30,40)(20,0){7}{\vector(1,0){20}}
\multiput(30,50)(20,0){7}{\vector(1,0){20}}
\multiput(30,60)(20,0){7}{\vector(1,0){20}}
\multiput(30,70)(20,0){7}{\vector(1,0){20}}

\multiput(23,0)(0,10){7}{\vector(1,0){6.5}}
\multiput(0,0)(0,10){8}{\vector(1,0){6.5}}
\multiput(13,0)(2,0){3}{\circle*{.5}}
\multiput(13,10)(2,0){3}{\circle*{.5}}
\multiput(13,20)(2,0){3}{\circle*{.5}}
\multiput(13,30)(2,0){3}{\circle*{.5}}
\multiput(13,40)(2,0){3}{\circle*{.5}}
\multiput(13,50)(2,0){3}{\circle*{.5}}
\multiput(13,60)(2,0){3}{\circle*{.5}}
\multiput(13,70)(2,0){3}{\circle*{.5}}

\put(45,15){\line(0,1){30}}
\put(50,10){\line(-1,1){5}}
\put(45,45){\line(1,1){5}}
\put(31.4,31){${}^{c_i(n)=3}$}

\thicklines
\multiput(150,0)(-20,10){7}{\line(0,1){10}}
\multiput(150.25,0)(-20,10){7}{\line(0,1){10}}
\multiput(150,0)(-20,10){7}{\vector(1,0){19}}
\multiput(170,0)(-20,10){8}{\circle*{2}}
\multiput(150,0)(-20,10){7}{\circle*{2}}
\put(23,70){\vector(1,0){6.5}}

\put(-5,86){${|\ R_1}$}
\multiput(0,77)(0,2){3}{\circle*{.5}}
\multiput(30,77)(0,2){3}{\circle*{.5}}
\multiput(50,77)(0,2){3}{\circle*{.5}}
\multiput(70,77)(0,2){3}{\circle*{.5}}
\multiput(90,77)(0,2){3}{\circle*{.5}}
\multiput(110,77)(0,2){3}{\circle*{.5}}
\multiput(130,77)(0,2){3}{\circle*{.5}}
\multiput(150,77)(0,2){3}{\circle*{.5}}
\multiput(170,77)(0,2){3}{\circle*{.5}}

\put(150,0){\circle*{3.5}}
\put(130,0){\circle*{3.5}}
\put(130,10){\circle*{3.5}}
\put(110,0){\circle*{3.5}}
\put(110,10){\circle*{3.5}}
\put(110,20){\circle*{3.5}}
\put(90,10){\circle*{3.5}}
\put(90,20){\circle*{3.5}}
\put(70,10){\circle*{3.5}}
\put(70,20){\circle*{3.5}}
\put(50,10){\circle*{3.5}}
\put(50,20){\circle*{3.5}}
\put(50,50){\circle*{3.5}}
\put(30,20){\circle*{3.5}}
\put(30,50){\circle*{3.5}}

\multiput(172,1)(-20,10){8}{$\state$}
\multiput(146,-5)(-20,10){5}{$\diag$}
\put(52,45){$\diag$}
\put(31,55){$\diag$}
\put(167,-4){$q_0$}
\put(145,6){$q_1$}
\put(125,16){$q_2$}
\put(45,56){$q_n$}

\put(161,-12){${\to\ R_0}$}

\end{picture}
\end{center}
\caption{Representing counter machine runs in commutative 2-frames `going backward'.}\label{f:backward}
\end{figure}

The backward technique also helps us to represent lossy counter machine runs in expanding 2-frames. When going backward horizontally in expanding 2-frames, the vertical columns might become
smaller and smaller, so some of the points carrying the information on the content of the counters might 
disappear as the runs  progress (see Section~\ref{expprodl}).

%**********************************************************************************************************

\section{$\auf\omega,<\zu$ or finite linear orders as `timelines'}\label{chain}

In this section we show the constant and decreasing domain results in the first two columns 
of Fig.~\ref{f:results}.

\begin{theorem}\label{t:omega}
$\{\auf\omega,<\zu\}\mprod\cdiff$-satisfiability is $\Sigma_1^1$-complete.
 \end{theorem}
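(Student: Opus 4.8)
The plan is to establish $\Sigma_1^1$-membership and $\Sigma_1^1$-hardness separately. For membership, take a bimodal formula $\phi$ satisfied at $\auf n_0,w_0\zu$ in a model $\M$ based on $\auf\omega,<\zu\mprod\auf W,\ne\zu$; I first reduce to a countable vertical component. Build $W'\subseteq W$ starting from $\{w_0\}$ by repeatedly adding, for each column $n<\omega$, each $w\in W'$, and each $\psi\in\subf$ with $\M,\auf n,w\zu\models\Dv\psi$, one witness $w'\ne w$ with $\M,\auf n,w'\zu\models\psi$; iterating $\omega$ times yields a countable $W'$. Since $\auf W',\ne\zu$ is a difference frame (a subframe of $\auf W,\ne\zu$), and since horizontal ($\Rh$) moves stay inside a fixed row while every needed vertical witness was retained, a routine induction on subformulas shows the submodel on $\auf\omega,<\zu\mprod\auf W',\ne\zu$ still satisfies $\phi$ at $\auf n_0,w_0\zu$. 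Thus we may assume $W\subseteq\omega$, so that a model is coded by the valuation of the finitely many variables of $\phi$, i.e.\ essentially one real $\nu$. The truth predicate $\M,\auf n,w\zu\models\psi$ for $\psi\in\subf$ is definable from $\nu$ by a first-order formula over $\auf\omega,<\zu$ (one unbounded quantifier per modal operator), hence arithmetic; so ``$\exists\nu\,\exists n_0,w_0\;\M,\auf n_0,w_0\zu\models\phi$'' is a $\Sigma_1^1$ statement.

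For hardness I reduce from {\sc CM recurrence}, which is $\Sigma_1^1$-hard. Given $M$ with states $Q$, counters $c_0,\dots,c_{N-1}$, and distinguished $q_0,q_r$, I build a bimodal formula over $\auf\omega,<\zu\mprod\cdiff$ following the ``forward'' scheme of Section~\ref{proofidea} (Fig.~\ref{f:forward}). Using $\state$ and $\diag$ I force a diagonal staircase with properties (i),(ii), so that column $n$ carries a unique $\state$-point $s_n=\auf n,w_n\zu$ (with the $w_n$ pairwise distinct) and $\nexttime\phi=\Bv\bigl(\diag\to\Bh(\state\to\phi)\bigr)$ acts as ``at $s_{n+1}$''. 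Each $s_n$ is marked by exactly one state-variable $q\in Q$, and the value of $c_i$ at step $n$ is encoded as the cardinality of $\{w:\M,\auf n,w\zu\models\cou\}$. The crucial structural fact is that $\Rh$ preserves the $W$-coordinate, so a $\cou$-point at $\auf n,w\zu$ $\Rh$-sees its continuation $\auf n+1,w\zu$ in the \emph{same row}; this is what lets me compare and track counts across adjacent columns. The initial formula puts $q_0$ and $\Bv\neg\cou$ at $s_0$; for each $q$ and each $\auf\iota,q'\zu\in I_q$ a disjunct asserts, via $\nexttime$, that $s_{n+1}$ is $q'$-marked and the $\cou$-column changes as $\iota$ prescribes; and recurrence is forced by asserting $\Bh\Bv\bigl(\state\to\Dh\Dv(\state\land q_r)\bigr)$ at the root, which demands a $q_r$-marked state point strictly beyond every state point, hence cofinally (= infinitely) often along $\omega$.

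The main obstacle is the faithful encoding of the counter operations using only $\Dv$ (the difference modality) together with counting to two. A zero-test is immediate ($\Bv\neg\cou$ at $s_n$), but increment and decrement must alter the $\cou$-column by exactly one point while leaving every other $\cou$-point, and every counter $\mathsf{C}_j$ with $j\ne i$, untouched. This is where the same-row identification via $\Bh/\Dh$ is combined with the uniqueness operator $\Ev\phi:=\Dv^{+}(\phi\land\Bv\neg\phi)$ in the formulas $\ffix$, $\finc$, $\fdec$: one pins down the single newly created (resp.\ deleted) point and forces all remaining $\cou$-points to persist along their rows. Correctness then follows by induction on the run: any satisfying model reads off a run of $M$ from $\auf q_0,\mathbf{0}\zu$ (finiteness of every column being automatic, since column $0$ is empty and each step changes a column by at most one point), with the recurrence clause guaranteeing that $q_r$ recurs; conversely any recurring run yields a valuation on $\auf\omega,<\zu\mprod\auf\omega,\ne\zu$ satisfying the formula, the nondeterministic choice of instruction at each step being mirrored by the choice of valuation. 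The genuinely $\Sigma_1^1$ strength of the reduction resides exactly in this existential quantification over an infinite run.
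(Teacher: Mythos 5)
Your $\Sigma_1^1$ upper bound is correct and is essentially the paper's own argument: the paper merely asserts that satisfiability is expressible by a $\Sigma_1^1$-formula over $\omega$, and your countable witness-closure plus coding-as-a-real is the routine justification of that assertion. Your hardness argument also has the same skeleton as the paper's: reduce {\sc CM recurrence}, force a forward diagonal staircase with $\state$ and $\diag$, and read counter values off cardinalities of column sets. However, the device you propose for updating the counters does not work, and this is exactly where the paper's proof has content your sketch is missing. You encode counter $c_i$ by a \emph{single} variable $\cou$ and claim that increment/decrement can be enforced by pinning down one created/deleted point with $\Ev$ while ``all remaining $\cou$-points persist along their rows''. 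But in this language there is no way, from an arbitrary row $w$ in column $n$, to refer to column $n+1$: your $\nexttime$ operator is anchored to the staircase rows, and $\Bh$ quantifies over \emph{all} later columns, so ``persist'' can only mean ``persist forever''. If $\cou$-points persist forever, the column cardinalities are non-decreasing and decrement is unrepresentable; if they need not persist, then $\Bh\cou$ says nothing about column $n+1$ and your adjacent-column comparison collapses. The paper's solution is to use, for each counter, \emph{two} variables $\plusi$ and $\minusi$, both forced monotone along rows ($\plusi\to\Bh\plusi$, $\minusi\to\Bh\minusi$, $\minusi\to\plusi$), the counter value being the number of points satisfying $\plusi\land\neg\minusi$. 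Monotonicity makes $\Bh\plusi$ equivalent to ``$\plusi$ holds in the next column'', so $\Ev(\neg\plusi\land\Bh\plusi)$ correctly expresses ``exactly one point gets incremented'', and yet the represented value can still decrease, namely by adding a $\minusi$-point. This two-variable monotone encoding is the key idea your proposal lacks, and without it (or an equivalent) the reduction cannot express decrement.

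There is a second, repairable, flaw: your recurrence clause $\Bh\Bv\bigl(\state\to\Dh\Dv(\state\land\state_{q_r})\bigr)$ is existential over state points, but nothing in the staircase properties (i) and (ii) makes $\state$ vertically unique, so a satisfying model may contain ``spurious'' $q_r$-marked state points off the staircase. Such points can be given instructions whose column-global counter effects and next-state constraints agree with those of the genuine staircase run (for instance when $I_{q_r}$ contains the very instruction the run uses), so your formula can be satisfiable even though $M$ has no run from $\auf q_0,{\bf 0}\zu$ visiting $q_r$ infinitely often, making the reduction unsound. You must either additionally force vertical uniqueness by $\Bh^+\Bv^+(\state\to\Bv\neg\state)$ (the counting language can do this; the paper does exactly this in Section~\ref{expprodl}), or, as the paper's proof of this theorem does, use the universally quantified clause $\Bh\Dh\Bv(\state\to\state_{q_r})$, which is immune to spurious state points.
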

 
\begin{theorem}\label{t:finite}
$\clinf\mprod\cdiff$-satisfiability is recursively enumerable, but undecidable.
\end{theorem}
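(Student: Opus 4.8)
The plan is to prove the two claims separately: recursive enumerability by a finite model property, and undecidability by a reduction from the (undecidable) {\sc CM reachability} problem.

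For the \textbf{recursive enumerability}, I would show that a satisfiable formula is always satisfiable in a product of two \emph{finite} frames. The timeline is already finite, so only the difference component must be shrunk. Suppose $\phi$ is satisfied in a model $\M$ based on $\auf T,<\zu\mprod\auf W,\ne\zu$ with $T$ finite. To each $a\in W$ associate its \emph{column type} $\tau(a)=\auf\{\psi\in\subf : \M,\auf t,a\zu\models\psi\} : t\in T\zu$; since $T$ and $\subf$ are finite, only finitely many column types occur. Select $W'\subseteq W$ by keeping, for each \emph{realised} column type, at most two representatives, and let $\M'$ be the restriction of $\M$ to $\auf T,<\zu\mprod\auf W',\ne\zu\in\clinf\mprod\cdiff$. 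A routine induction on $\psi\in\subf$ then shows $\M',\auf t,a\zu\models\psi$ iff $\M,\auf t,a\zu\models\psi$ for every $a\in W'$: the propositional, Boolean, and $\Dh$ cases are immediate (the column of $a$ and the whole timeline are retained), and for $\psi=\Dv\chi=\existsd\chi$ the two-representative rule guarantees a witness of the required type distinct from $a$. Hence $\phi$ is satisfied in $\M'$, and $|W'|\leq 2\cdot(2^{|\subf|})^{|T|}$. Recursive enumerability follows by dovetailing over all finite product frames in $\clinf\mprod\cdiff$ and checking satisfiability in each (decidable, as the frame is finite). Note that Theorem~\ref{t:re} cannot be invoked here, since $\clinf$ is not first-order definable; this is exactly why the problem can be r.e.\ without being co-r.e.

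For \textbf{undecidability}, I would reduce {\sc CM reachability}. Given a counter machine $M$, $\sigma_0=\auf q_0,{\bf 0}\zu$ and a target state $q_r$, I build a bimodal formula $\fmm$ following the ``going forward'' strategy of Section~\ref{proofidea} (Fig.~\ref{f:forward}). Since over product frames the grid of columns is present by construction, no diagonal staircase has to be forced from scratch: using $\state$ and $\diag$ I pin down successive columns and obtain the horizontal next-time operator $\nexttime\phi:=\Bv\bigl(\diag\to\Bh(\state\to\phi)\bigr)$. Successive $\state$-columns represent successive configurations of a run, and the value of counter $c_i$ in the $n$th column is encoded as the number of points satisfying $\cou$ there. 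The conjuncts of $\fmm$ assert that the first column carries state $q_0$ with all counters $0$, that each non-final column performs exactly one instruction of $M$ and passes to the next column via $\nexttime$, and that some column carries $q_r$ (which over a finite timeline may be taken as the last column). By Prop.~\ref{p:foprod} this is equivalent to the corresponding \logic-statement over constant domains, but I would argue directly in the bimodal setting.

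The \textbf{main obstacle}, and the technical heart of the reduction, is the faithful simulation of the three counter operations using only the difference modality together with the definable vertical uniqueness operator $\Ev\phi:=\Dv^+(\phi\land\Bv\neg\phi)$, which in the rooted pseudo-equivalence (difference) columns expresses that $\phi$ holds at a unique point. Incrementing $c_i$ must add \emph{exactly one} $\cou$-point to the next column while copying all existing ones, decrementing must delete exactly one, and the zero-test must certify the absence of $\cou$-points; the hard part will be verifying that no spurious $\cou$-point can appear or silently vanish between neighbouring columns, which is where the counting-to-two expressivity and $\Ev$ are used with care. Granting a correct $\fmm$, one checks both directions: a finite run of $M$ from $\sigma_0$ reaching $q_r$ yields a satisfying model over the finite linear order of its length, and conversely any model of $\fmm$ over a finite linear order can be read off column by column as a genuine run of $M$ from $\sigma_0$ reaching $q_r$. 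This makes $\clinf\mprod\cdiff$-satisfiability undecidable, and together with the r.e.\ part completes the proof. (The same machine encoding $\fmm$, with the finite acceptance condition replaced by ``$q_r$ occurs infinitely often,'' underlies the $\Sigma_1^1$-completeness of Theorem~\ref{t:omega}.)
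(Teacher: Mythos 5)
Your recursive-enumerability argument is correct and plays the same role as the paper's finite product model property (Claim~\ref{c:fmpfin}); only the extraction method differs. The paper builds a finite witness set by closing, column after column along the finite timeline, under $\Dv$-witnesses for subformulas (a bound linear in $|T|\cdot|\subf|$), whereas you thin $W$ to at most two representatives per realised column type (an exponential bound). Both preserve the truth of all subformulas: your two-representative rule does guarantee a $\Dv$-witness distinct from any given point, and satisfaction of $\phi$ survives because any representative of the satisfying point's column type also satisfies $\phi$. Your remark that Theorem~\ref{t:re} cannot be invoked because finiteness of the linear orders is not first-order definable is also the right explanation of why one gets r.e.\ rather than co-r.e.\ here.

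The undecidability half, however, has a genuine gap exactly where you flag it: you never construct the formulas simulating the counter operations, and with the representation you chose --- a single variable $\cou$ per counter, its extension in the $n$th column giving the counter value --- it is unclear that they can be constructed at all. The only horizontal device available is the transitive $\Dh$ (`eventually') and the derived $\nexttime\phi=\Bv\bigl(\diag\to\Bh(\state\to\phi)\bigr)$; the latter speaks only about the unique next $\state$-point on the staircase, so it cannot compare the $\cou$-sets of two adjacent columns \emph{pointwise}, yet ``copy all $\cou$-points and add (or delete) exactly one, changing nothing else'' is precisely such a pointwise constraint between neighbouring columns. The paper resolves this by changing the representation: each counter $c_i$ gets \emph{two} variables $\plusi,\minusi$, forced by $\couffw$ to be persistent along the timeline ($\plusi\to\Bh\plusi$, $\minusi\to\Bh\minusi$, $\minusi\to\plusi$), and the content of $c_i$ at moment $m$ is the number of points where $\plusi\land\neg\minusi$ holds. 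Persistence makes $\Bh\plusi$ at moment $m$ equivalent to ``$\plusi$ at moment $m+1$'', so the transitive box yields pointwise access to the next column after all; the formulas $\ffix$, $\finc$, $\fdec$ then control the transitions faithfully using $\Ev$ (e.g.\ ``exactly one point with $\neg\plusi\land\Bh\plusi$'' for incrementation), and the zero-test is $\Bv^+(\plusi\to\minusi)$. Without this device (or an equivalent one, such as the backward $\allc$-trick of Section~\ref{product}), the reduction does not go through, so the technical heart of the lower bound is missing. The rest of your skeleton --- reduction from {\sc CM reachability}, the finite staircase with $\pend$ marking its end, $q_r$ required at the final column, and reading a model off column by column as a run --- matches the paper's proof of Theorem~\ref{t:finite}.
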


By Prop.~\ref{p:cd}, $\CC\dprod\cdiff$-satisfiability is always reducible to $\CC\mprod\cdiff$-satisfiability.
It is not hard to see that,
whenever $\CC=\{\auf\omega,<\zu\}$ or $\CC=\clinf$, then we also have this the other way round:
$\CC\mprod\cdiff$-satisfiability is reducible to $\CC\dprod\cdiff$-satisfiability.

 \begin{proposition}\label{p:mdiscr}
 If $\CC=\{\auf\omega,<\zu\}$ or $\CC=\clinf$, then for any formula $\phi$,
 \[
 \mbox{$\phi$ is $\CC\mprod\cdiff$-satisfiable\quad iff\quad
 $\Bv^+\Bh^+(\Dh\top\to\Bv\Dh\top)\land \phi$
 is $\CC\dprod\cdiff$-satisfiable.}
\]
\end{proposition}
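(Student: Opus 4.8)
The plan is to prove the two directions separately; I expect the forward implication to be immediate and the backward one to carry all the content. For the forward direction, I would take a model based on a product frame $\F_0\mprod\auf D,\ne\zu$ with $\F_0\in\CC$ satisfying $\phi$. Such a product frame is in particular a decreasing $2$-frame (with constant vertical frames), so it already lies in $\CC\dprod\cdiff$; it then suffices to check that the extra conjunct $\Bv^+\Bh^+(\Dh\top\to\Bv\Dh\top)$ is valid in every product frame. This is clear: at any point $\auf u,v\zu$ the truth of $\Dh\top$ depends only on whether $u$ has a $<$-successor and is independent of $v$, so $\Dh\top\to\Bv\Dh\top$ holds everywhere, and hence so does its $\Bv^+\Bh^+$-closure. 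Thus the whole conjunction is satisfied in the same model.

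For the backward direction, I would start from a point $\auf u_0,v_0\zu$ of a model $\M$ on a decreasing $2$-frame $\Hh_{\F,\overline{\G}}$ with $\F=\auf W,<\zu\in\CC$ and $\G_u=\auf W_u,\ne\zu$, at which the conjunction holds. Writing $\psi:=\Dh\top\to\Bv\Dh\top$, the first step is to observe that $\Bv^+\Bh^+\psi$ forces $\psi$ at \emph{every} point of the forward region $Z:=\{\auf u,v\zu : v\in W_u,\ \text{$u=u_0$ or $u_0<u$}\}$: since $\Rhp$ is transitive, $\Bh^+$ reaches all forward columns, and since $\Rvp$ links all distinct points of one column, $\Bv^+$ reaches the whole of column $u_0$. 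This is exactly why, in contrast to Prop.~\ref{p:cd}, no bound depending on the modal depth of $\phi$ is needed: one application of each box covers all of $Z$.

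The second step is to use $\psi$ inside $Z$ to pin down the domains. As the $2$-frame is decreasing, the vertical domains are nested, $W_{u'}\subseteq W_u$ whenever $u<u'$; hence for a non-maximal column $u$ the set of $v\in W_u$ for which $\auf u,v\zu$ has a horizontal successor equals $W_{u^+}$, where $u^+$ is the immediate $<$-successor of $u$ (which exists because $\CC$ consists of $\auf\omega,<\zu$ or of finite linear orders). Since domains are non-empty, $W_{u^+}\ne\emptyset$, so some $\auf u,v\zu$ in column $u$ satisfies $\Dh\top$; applying $\psi$ there gives $\Bv\Dh\top$, i.e.\ \emph{every} element of $W_u$ survives to $u^+$, whence $W_u\subseteq W_{u^+}$ and so $W_u=W_{u^+}$. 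Iterating up the region shows that all columns of $Z$ share the single domain $W_{u_0}$.

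Finally, $Z$ is closed under $\Rhp$ and $\Rvp$, so it is a generated subframe of $\Hh_{\F,\overline{\G}}$ and restricting $\M$ to it preserves the truth of $\phi$ at $\auf u_0,v_0\zu$. With its now-constant domain $W_{u_0}$, this restriction is isomorphic to the product frame $\auf\{u\in W : u=u_0\text{ or }u_0<u\},<\zu\mprod\auf W_{u_0},\ne\zu$, whose horizontal factor lies in $\CC$ (a copy of $\auf\omega,<\zu$ in the first case, a finite linear order in the second), giving $\CC\mprod\cdiff$-satisfiability of $\phi$. I expect the main obstacle to be precisely this domain-constancy step: discreteness of $\CC$ is what makes ``$\auf u,v\zu$ has a horizontal successor'' coincide with ``$v$ lies in the next column'', converting the dichotomy forced by $\psi$ into genuine equality of consecutive domains --- a step that would break down over dense orders, matching the restriction of the proposition to $\auf\omega,<\zu$ and finite linear orders.
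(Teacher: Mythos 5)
Your proof is correct, and it fills in exactly the argument the paper leaves implicit: the paper states this proposition without proof (prefacing it only with ``it is not hard to see''), and the intended reasoning is precisely yours --- the conjunct $\Bv^+\Bh^+(\Dh\top\to\Bv\Dh\top)$, being valid in product frames, costs nothing in the forward direction, while in the backward direction it propagates to the whole generated region and, over a \emph{discrete} order, forces each column's domain to coincide with that of its immediate successor, so the generated subframe is literally a product frame in $\CC\mprod\cdiff$. Your closing observation about where discreteness enters (and why the argument breaks over dense orders) correctly identifies the reason the proposition is restricted to $\{\auf\omega,<\zu\}$ and $\clinf$.
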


So by Theorems~\ref{t:omega}, \ref{t:finite} and Props.~\ref{p:foprod}, \ref{p:mdiscr} we obtain:

\begin{corollary}\label{co:undecddisc}
\logic-satisfiability recursively enumerable but undecidable in both constant decreasing domain models over the class of all finite linear orders, and $\Sigma_1^1$-complete in both constant and decreasing domain models over $\auf\mathbb \omega,<\zu$.
\end{corollary}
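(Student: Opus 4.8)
The plan is to derive the statement purely by chaining together the translation of Proposition~\ref{p:foprod} with the complexity bounds of Theorems~\ref{t:omega} and~\ref{t:finite}, using Propositions~\ref{p:cd} and~\ref{p:mdiscr} to carry those bounds from product-frame satisfiability over to decreasing-frame satisfiability. First I would record the elementary observation that the map ${}^\star$ of Section~\ref{conn} is a \emph{computable} bijection between \logic-formulas and bimodal formulas, so it preserves both membership in, and hardness for, any complexity class closed under many-one reductions --- in particular $\Sigma_1^1$ and the class of recursively enumerable sets. By Proposition~\ref{p:foprod}, \logic-satisfiability in constant domain models over $\CC$ coincides (along ${}^\star$) with $\CC\mprod\cdiff$-satisfiability, and in decreasing domain models with $\CC\dprod\cdiff$-satisfiability.

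For the \emph{constant domain} claims this is then immediate. Taking $\CC=\clinf$, Theorem~\ref{t:finite} says that $\clinf\mprod\cdiff$-satisfiability is r.e.\ but undecidable; taking $\CC=\{\auf\omega,<\zu\}$, Theorem~\ref{t:omega} gives $\Sigma_1^1$-completeness. Transporting these along ${}^\star$ yields exactly the two constant-domain assertions of the corollary.

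For the \emph{decreasing domain} claims I would show that, for each of $\CC=\clinf$ and $\CC=\{\auf\omega,<\zu\}$, the problems $\CC\dprod\cdiff$-satisfiability and $\CC\mprod\cdiff$-satisfiability are many-one equivalent, and then transfer the bounds componentwise. One direction is Proposition~\ref{p:cd}: the computable map $\phi\mapsto \fodompred\land\Bh^{\leq n}\Bv^{\leq n}(\Dh\fodompred\to\fodompred)\land\phi^{{\sf D}}$ reduces $\CC\dprod\cdiff$-satisfiability to $\CC\mprod\cdiff$-satisfiability. The converse is Proposition~\ref{p:mdiscr}: the computable map $\phi\mapsto \Bv^+\Bh^+(\Dh\top\to\Bv\Dh\top)\land\phi$ reduces $\CC\mprod\cdiff$-satisfiability to $\CC\dprod\cdiff$-satisfiability. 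Concretely, membership of $\CC\dprod\cdiff$ in $\Sigma_1^1$ (resp.\ recursive enumerability) follows from Proposition~\ref{p:cd} together with the upper bound for $\CC\mprod\cdiff$ supplied by Theorem~\ref{t:omega} (resp.\ Theorem~\ref{t:finite}), while $\Sigma_1^1$-hardness (resp.\ undecidability) of $\CC\dprod\cdiff$ follows from Proposition~\ref{p:mdiscr} together with the matching lower bound for $\CC\mprod\cdiff$. Pulling back along ${}^\star$ gives the decreasing-domain rows.

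There is no genuinely hard step: all the substantive work is already encapsulated in Theorems~\ref{t:omega},~\ref{t:finite} and in Propositions~\ref{p:cd},~\ref{p:mdiscr}. The only point that warrants care is verifying that the two auxiliary reductions for the decreasing case are both available over the \emph{same} restricted class $\CC$ (so that the many-one equivalence is established separately over $\clinf$ and over $\{\auf\omega,<\zu\}$), and that $\Sigma_1^1$ and r.e.\ are indeed closed under the reductions used --- membership transferring along $\dprod\leq_m\mprod$ and hardness along $\mprod\leq_m\dprod$. Once this mutual reducibility is in place, the $\Sigma_1^1$-completeness and the \emph{r.e.-but-undecidable} classifications drop out uniformly for the decreasing as well as the constant domain case.
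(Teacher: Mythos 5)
Your proposal is correct and follows essentially the same route as the paper: the paper likewise obtains the corollary by combining Theorems~\ref{t:omega} and~\ref{t:finite} with Proposition~\ref{p:foprod} for the constant-domain cases, and uses Proposition~\ref{p:cd} (decreasing-to-product) for the upper bounds together with Proposition~\ref{p:mdiscr} (product-to-decreasing) for the lower bounds in the decreasing-domain cases. Your explicit remarks about computability of the translation ${}^\star$ and closure of $\Sigma_1^1$ and r.e.\ under many-one reductions are details the paper leaves implicit, but they are exactly the right bookkeeping.
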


We prove the lower bound of Theorem~\ref{t:omega}
by reducing the `CM recurrence' problem to \mbox{$\{\auf\omega,<\zu\}$}$\mprod\cdiff$-satisfiability.
Let $\M$ be a model based on the  product of
$\auf\omega,<\zu$ and some difference frame $\auf W,\ne\zu$. First, we generate a forward going infinite 
diagonal staircase in $\M$.
Let $\diagfw$ be the conjunction of the formulas
\begin{align}
\label{initfw}
& \state\land\Bh\neg\state,\\
\label{dgenfw}
& \Bh^+\Bv^+(\state\to\Dv\diag),\\
\label{sgenfw}
& \Bh^+\Bv\bigl(\diag\to (\Dh\state\land\Bh\Bh\neg\state)\bigr).
\end{align}
%
%The following claim can be proved by a straightforward induction on $m$:

\begin{claim}\label{c:gridfw}
Suppose that $\M,\auf 0,r\zu\models\diagfw$. Then there exists an infinite sequence
\mbox{$\auf y_m\in W : m<\omega\zu$} of points such that, for all $m<\omega$,
\begin{enumerate}
\item[{\rm (}i{\rm )}]
$y_0=r$ and for all $n<m$, $y_m\ne y_{n}$, 
\item[{\rm (}ii{\rm )}]
$\M,\auf m,y_m\zu\models\state$,
\item[{\rm (}iii{\rm )}]
if $m>0$ then $\M,\auf m-1,y_{m}\zu\models\diag$.
\end{enumerate}
\end{claim}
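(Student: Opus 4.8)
The plan is to build the sequence $\auf y_m : m<\omega\zu$ by induction on $m$, reading off conditions (ii) and (iii) directly from the two `generating' formulas \eqref{dgenfw} and \eqref{sgenfw}, and securing the distinctness condition (i) through an auxiliary invariant that controls where $\state$ holds in each row that has been produced so far. For the base case I set $y_0=r$: then (ii) for $m=0$ is the left conjunct of \eqref{initfw}, while (i) and (iii) hold vacuously.

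For the inductive step, suppose $y_0,\dots,y_m$ have been found. By (ii) we have $\M,\auf m,y_m\zu\models\state$, so \eqref{dgenfw}, whose $\Bh^+\Bv^+$-prefix reaches every point of the product frame from $\auf 0,r\zu$, gives $\M,\auf m,y_m\zu\models\Dv\diag$; I pick a witness $y_{m+1}$, i.e.\ a point with $y_{m+1}\ne y_m$ and $\M,\auf m,y_{m+1}\zu\models\diag$. This is exactly (iii) for $m+1$. To obtain (ii) I apply \eqref{sgenfw} at the $\diag$-point $\auf m,y_{m+1}\zu$, extracting $\Dh\state\land\Bh\Bh\neg\state$: since the horizontal relation is $<$ on $\omega$, the first conjunct puts a $\state$-point at some column $>m$ while the second forbids $\state$ at every column $\ge m+2$, so $\state$ is forced at column $m+1$, giving $\M,\auf m+1,y_{m+1}\zu\models\state$.

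The main obstacle is condition (i): I must show $y_{m+1}\ne y_n$ for every $n\le m$, i.e.\ that the staircase never wraps back onto an already-used row. I therefore strengthen the induction with the invariant that in each row $y_k$ the formula $\state$ holds at column $k$ but fails at every column $>k$ --- this comes from the conjunct $\Bh\neg\state$ of \eqref{initfw} when $k=0$, and from the $\Bh\Bh\neg\state$ conjunct of \eqref{sgenfw} (used at the $\diag$-point $\auf k-1,y_k\zu$) when $k\ge 1$. If now $y_{m+1}=y_n$ for some $n\le m$, the $\state$-point just produced at column $m+1$ would sit in row $y_n$ at a column $m+1>n$, contradicting the invariant. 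I expect the genuinely delicate sub-case to be $n=0$, i.e.\ $y_{m+1}=r$: here one has to rule out $\diag$ occurring anywhere on the root row, and the natural route is to combine \eqref{initfw} with \eqref{sgenfw}, since a $\diag$-point on the $y_0$-row would force (via $\Dh\state$) a later $\state$-point there, clashing with $\Bh\neg\state$. Checking that the box-prefixes of \eqref{sgenfw} actually reach the root row --- rather than only its proper vertical successors --- is the step that will need the most care, and it is where the precise shape of the prefixes does the essential work. Once distinctness is in place, iterating the step yields the full infinite sequence, because at every stage the witness demanded by \eqref{dgenfw} exists and, by the argument above, can be taken outside $\{y_0,\dots,y_m\}$.
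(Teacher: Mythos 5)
Your proposal follows the paper's own proof step for step: the same induction, with the base case read off \eqref{initfw}, the witness $y_{m+1}$ supplied by \eqref{dgenfw}, condition (ii) extracted from the $\Dh\state\land\Bh\Bh\neg\state$ part of \eqref{sgenfw}, and distinctness obtained from exactly the invariant (no $\state$ in row $y_k$ at any column $>k$) that the paper compresses into the words ``by \eqref{initfw}, \eqref{sgenfw} and the IH''. For the sub-case $y_{m+1}=y_n$ with $n\geq 1$ your argument is complete as it stands: there $y_n\ne r$ by the induction hypothesis, so \eqref{sgenfw} genuinely applies at $\auf m,y_{m+1}\zu$ and at $\auf n-1,y_n\zu$, and the two consequences clash.

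The sub-case you flagged as delicate, $y_{m+1}=r$, is not merely delicate: with the formulas exactly as printed it cannot be closed, because the prefix of \eqref{sgenfw} is $\Bh^+\Bv$ with an \emph{irreflexive} second box, so evaluated at $\auf 0,r\zu$ it only constrains points $\auf n,w\zu$ with $w\ne r$ --- the root row is never reached, and nothing forbids $\diag$ there. In fact the claim as literally stated is refuted by a two-row model: take $W=\{r,a\}$ and set $\nu(\state)=\{\auf 0,r\zu,\auf 1,a\zu\}$, $\nu(\diag)=\{\auf 0,a\zu,\auf 1,r\zu\}$; one checks directly that $\M,\auf 0,r\zu\models\diagfw$, yet any sequence satisfying (i)--(iii) would need $y_1=a$ and then $y_2=r=y_0$. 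So what you identified is a bug in the printed formula rather than a step requiring more care: \eqref{sgenfw} must be read with $\Bv^+$ in place of $\Bv$ (matching the prefix of \eqref{dgenfw}). Under that reading, the argument you sketch for $n=0$ --- a $\diag$-point anywhere on the root row forces, via $\Dh\state$, a $\state$-point at a column $>0$ of row $r$, contradicting the conjunct $\Bh\neg\state$ of \eqref{initfw} --- closes the case, and your proof is complete. Note that the paper's own one-line justification of distinctness silently suffers from the same defect, so on this point your write-up is more careful than the original.
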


\begin{proof}
By induction on $m$. To begin with, $\M,\auf 0,y_0\zu\models\state$ by
\eqref{initfw}. Now suppose that for some $m<\omega$ we have
$\auf y_k : k\leq m\zu$ as required. As $\M,\auf m,y_m\zu\models\state$
by the IH, by \eqref{dgenfw} there is $y_{m+1}$ such that
$\M,\auf m,y_{m+1}\zu\models\diag$. We have $y_{m+1}\ne y_n$ for $n\leq m$
by \eqref{initfw}, \eqref{sgenfw} and the IH. Finally, 
$\M,\auf m+1,y_{m+1}\zu\models\state$ follows by \eqref{sgenfw}.
\end{proof}

Given a counter machine $M$,
we will encode runs that start with all-0 counters by going forward along the created 
diagonal staircase.
For each counter $i<N$,
we take two fresh propositional variables $\plusi$ and $\minusi$.
At each moment $n$ of time, these will be used to mark those pairs $\auf n,\dots\zu$
in $\M$ 
where $M$ increments and decrements counter $c_i$ at step $n$.
The actual content of counter $c_i$ is represented by those pairs $\auf n,\dots\zu$ where $\plusi\land\neg\minusi$ holds.
The following formula ensures that each `vertical coordinate' in $\M$ is used only once, and only previously incremented points can be decremented:
\[
\couffw:=\ \
\bigwedge_{i<N}\Bh^+\Bv^+\bigl((\plusi\to\Bh\plusi)\land
(\minusi\to\Bh\minusi)\land
(\minusi\to\plusi)\bigr).
\]
For each $i<N$, the following formulas simulate the possible changes in the counters:
\begin{align*}
\ffix & :=\ \ \Bv^+ (\Bh\plusi\to\plusi)\land\Bv^+(\Bh\minusi\to\minusi),\\
\finc & :=\ \ \Ev (\neg\plusi\land\Bh\plusi)\land\Bv^+(\Bh\minusi\to\minusi),\\
\fdec & :=\ \ \Ev (\neg\minusi\land\Bh\minusi)\land\Bv(\Bh\plusi\to\plusi).
\end{align*}
It is straightforward to prove the following:

\begin{claim}\label{c:counterfw}
Suppose that $\M,\auf 0,r\zu\models\diagfw\land\couffw$ and let, for all $m<\omega$, $i<n$,
$c_i(m) := |\{w\in W :\M,\auf m,w\zu\models\plusi\land\neg\minusi\}|$.
 Then
\[
c_i(m+1)=\left\{
\begin{array}{ll}
c_i(m), & \mbox{ if $\M,\auf m,y_{m}\zu\models\ffix$},\\[3pt]
c_i(m)+1,\ &  \mbox{ if $\M,\auf m,y_{m}\zu\models\finc$},\\[3pt]
c_i(m)-1, & \mbox{ if $\M,\auf m,y_{m}\zu\models\fdec$}.
\end{array}
\right.
\]
\end{claim}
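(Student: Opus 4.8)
The plan is to track, for each counter $i<N$ and each time $m<\omega$, the two ``column sets''
\[
P_i(m)=\{w\in W : \M,\auf m,w\zu\models\plusi\}\quad\text{and}\quad M_i(m)=\{w\in W : \M,\auf m,w\zu\models\minusi\},
\]
and read $c_i(m)$ off from them. First I would unpack $\couffw$: since every grid point $\auf m,w\zu$ is reached from $\auf 0,r\zu$ by at most one $\Rh$-step followed by at most one $\Rv$-step, the prefix $\Bh^+\Bv^+$ makes the matrix of $\couffw$ true at every $\auf m,w\zu$. Hence for all $i,m,w$ we obtain forward persistence of $\plusi$ and of $\minusi$ (so $P_i(m)\subseteq P_i(m+1)$ and $M_i(m)\subseteq M_i(m+1)$) together with $M_i(m)\subseteq P_i(m)$ (from $\minusi\to\plusi$). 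Consequently $c_i(m)=|P_i(m)\setminus M_i(m)|=|P_i(m)|-|M_i(m)|$, and by persistence $\Bh\plusi$ holds at $\auf m,w\zu$ iff $w\in P_i(m+1)$, and likewise for $\minusi$; so a change from $m$ to $m+1$ can only come from \emph{fresh} points, those in $P_i(m+1)\setminus P_i(m)$ or $M_i(m+1)\setminus M_i(m)$.

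Next I would dispatch the two easy cases. If $\ffix$ holds at $\auf m,y_m\zu$, its two $\Bv^+$-conjuncts forbid a fresh $\plusi$- and a fresh $\minusi$-point at \emph{every} $w\in W$, so $P_i(m+1)=P_i(m)$ and $M_i(m+1)=M_i(m)$, giving $c_i(m+1)=c_i(m)$. If $\finc$ holds, then $\Ev(\neg\plusi\land\Bh\plusi)$ says there is exactly one fresh $\plusi$-point, so $|P_i(m+1)|=|P_i(m)|+1$, while the $\Bv^+$-conjunct forbids fresh $\minusi$-points, so $M_i(m+1)=M_i(m)$; hence $c_i(m+1)=c_i(m)+1$. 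Both arguments crucially use $\Bv^+$, which ranges over the whole $m$-th column, including the diagonal point $y_m$.

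The decrement case is where I expect the real work to lie, precisely because its ``no fresh increment'' conjunct is written with $\Bv$ rather than $\Bv^+$. Here $\Ev(\neg\minusi\land\Bh\minusi)$ supplies a unique fresh $\minusi$-point $w_0$, and $\minusi\to\plusi$ puts $w_0\in P_i(m+1)$; I would like to conclude that $w_0$ was already active at $m$ (i.e.\ $w_0\in P_i(m)\setminus M_i(m)$) and that no other fresh $\plusi$-point appears, which together yield $c_i(m+1)=c_i(m)-1$. The conjunct $\Bv(\Bh\plusi\to\plusi)$ delivers exactly this for every $w\ne y_m$: it rules out fresh $\plusi$-points off the diagonal and, applied to $w_0$ when $w_0\ne y_m$, forces $w_0\in P_i(m)$. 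The one uncontrolled point is $y_m$ itself, and this is the crux: a priori $\plusi$ could turn freshly true at $\auf m+1,y_m\zu$, cancelling the decrement and leaving $c_i(m+1)=c_i(m)$. The main obstacle is therefore to rule out a fresh $\plusi$-point on the diagonal at a decrement step. I do not expect this to follow from $\diagfw\land\couffw$ in isolation; rather, I expect it to rest on the separation between the $\state$/$\diag$-points of the staircase and the counter-witnesses that the full encoding enforces (so that in effect $w_0\ne y_m$ and $y_m$ never carries $\plusi$). Once that separation is available, the $\Bv$/$\Bv^+$ mismatch becomes harmless and all three cases close uniformly by the cardinality bookkeeping of the first paragraph.
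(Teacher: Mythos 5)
Your bookkeeping for $\ffix$ and $\finc$ is correct, and it is precisely the argument the paper omits: Claim~\ref{c:counterfw} is introduced with ``it is straightforward to prove'' and no proof is given. Your diagnosis of the decrement case is also correct, and it is sharper than you make it sound: the claim as printed is \emph{false}, not merely hard to derive from $\diagfw\land\couffw$. Concretely, take $W=\omega$, $r=0$, let $\state$ hold exactly at the points $\auf m,m\zu$ and $\diag$ exactly at the points $\auf m,m+1\zu$ (so $y_m=m$ is the staircase of Claim~\ref{c:gridfw}), and interpret one counter by your column sets as $P_i(0)=\{5\}$, $M_i(0)=\emptyset$, and $P_i(k)=\{0,5\}$, $M_i(k)=\{0\}$ for all $k\geq 1$. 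Then $\diagfw\land\couffw$ holds at $\auf 0,0\zu$, and $\fdec$ holds at $\auf 0,y_0\zu$: the unique witness of $\Ev(\neg\minusi\land\Bh\minusi)$ is $y_0$ itself, while $\Bv(\Bh\plusi\to\plusi)$ inspects only $w\neq y_0$, where the sole point satisfying $\Bh\plusi$ is $5\in P_i(0)$. Nevertheless $c_i(1)=|\{0,5\}\setminus\{0\}|=1=c_i(0)$, contradicting the claimed value $c_i(0)-1$.

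Where your proposal has a genuine gap is the hoped-for rescue: the ``separation'' you appeal to is not enforced by the full encoding, since no conjunct of $\fmm$ (or of $\diagfw$) links $\plusi$ or $\minusi$ to $\state$ or $\diag$. Worse, the cancellation iterates: interpreting $P_i(m)=M_i(m)=\{0,\dots,m-1\}$, so that each diagonal point becomes, at its own step, both the unique fresh $\minusi$-witness and a fresh $\plusi$-point, satisfies the full formula $\diagfw\land\fmm\land\Bh\Dh\Bv(\state\to\state_{q_r})$ for the two-state machine with $\auf\fdeci,q_r\zu\in I_{q_0}$ and $\auf\fdeci,q_0\zu\in I_{q_r}$, a machine that has no infinite run from $\auf q_0,{\bf 0}\zu$ at all; so the defect propagates to Lemma~\ref{l:runfwd} and to the soundness of the reduction, rather than being cured by it. The repair is not extra machinery but a one-symbol correction to the paper's formula: $\fdec$ should read $\Ev(\neg\minusi\land\Bh\minusi)\land\Bv^+(\Bh\plusi\to\plusi)$, matching the $\Bv^+$ in $\ffix$ and $\finc$. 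This costs nothing in the converse direction, since in $\Mrec$ a decrement step creates no fresh $\plusi$-point anywhere in the column, and with it your own accounting closes the remaining case: $P_i(m+1)=P_i(m)$, the unique fresh $\minusi$-witness $w_0$ lies in $P_i(m)\setminus M_i(m)$, hence $c_i(m+1)=c_i(m)-1$. In short, your two easy cases stand and your suspicion about the third is vindicated, but deferring it to ``the full encoding'' leaves a real hole --- one that sits in the paper itself.
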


Using the above machinery, we can encode the various counter machine instructions.
For each $\iota\in\textit{Op}_C$, we define the formula $\exei$ by taking
\[
\exei:=\ \ \left\{
\begin{array}{ll}
\displaystyle\finc\land\bigwedge_{i\ne j<N}\ffixj, & \mbox{ if $\iota=\finci$},\\
\displaystyle\fdec\land\bigwedge_{i\ne j<N}\ffixj, & \mbox{ if $\iota=\fdeci$},\\
\displaystyle\Bv^+(\plusi\to\minusi)\land\bigwedge_{j<N}\ffixj, & \mbox{ if $\iota=\ftest$}.\\
\end{array}
\right.
\]
Now we can encode runs that start with all-0 counters. 
For each $q\in Q$, we take a fresh predicate symbol $\state_q$, and 
define $\fmm$ to be the conjunction of $\couffw$ and the following formulas:
\begin{align}
\label{allzero}
& \bigwedge_{i<N}\Bv^+(\neg\plusi\land\neg\minusi),\\
 \label{griduniquetwo}
& \Bv^+\Bh^+\bigl(\state\leftrightarrow\bigvee_{q\in Q-H} (\state_q\land\bigwedge_{q\ne q'\in Q}\neg\state_{q'})\bigr),\\
\label{fwstep}
& \Bv^+\Bh^+\bigwedge_{q\in Q-H}
\Bigl[\state_q\to\bigvee_{\langle\iota,q'\rangle\in I_q}\Bigl(\exei\land\Bv\bigl(\diag\to\Bh(\state\to\state_{q'})\bigr)\Bigr)\Bigr].
%& \Bh\Dh\Dv\state_{q_r}.
\end{align}
The following lemma says that going forward along the diagonal staircase generated 
in Claim~\ref{c:gridfw}, we can force infinite recurrent runs of $M$:

\begin{lemma}\label{l:runfwd}
Suppose that $\M,\auf 0,r\zu\models\diagfw\land\fmm\land\Bh\Dh\Bv(\state\to\state_{q_r})$.
For all  $m<\omega$ and $i<N$, let 
\[
q_m:=q,\ \mbox{ if }\ \M,\auf m,y_m\zu\models\state_q,\hspace*{1cm}
c_i(m):=|\{w\in W : \M,\auf m,w\zu\models\plusi\land\neg\minusi\}|.
\]
Then $\bigl\langle\langle q_m,{\bf c}(m)\rangle : m< \omega\bigr\rangle$ is a well-defined infinite run
of $M$ starting with all-0 counters and visiting $q_r$ infinitely often.
\end{lemma}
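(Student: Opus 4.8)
The plan is to check, one after another, the assertions packed into the statement: that every $q_m$ and every ${\bf c}(m)$ is well defined, that $\langle q_0,{\bf c}(0)\rangle=\langle q_0,{\bf 0}\rangle$, that $\langle q_m,{\bf c}(m)\rangle\step\langle q_{m+1},{\bf c}(m+1)\rangle$ for every $m<\omega$, and that $q_m=q_r$ for infinitely many $m$. Everything rests on the diagonal staircase $\langle y_m : m<\omega\rangle$ delivered by Claim~\ref{c:gridfw} and on the counter bookkeeping of Claim~\ref{c:counterfw}; the workhorse is that, along this staircase, the formula $\nexttime\psi=\Bv(\diag\to\Bh(\state\to\psi))$ acts as a faithful `next grid point' operator. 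Note first that, since the conjuncts \eqref{griduniquetwo} and \eqref{fwstep} are prefixed by $\Bv^{+}\Bh^{+}$ and $\M,\auf 0,r\zu$ satisfies them, they in fact hold at every point $\auf n,w\zu$, in particular at every staircase point $\auf m,y_m\zu$.

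First I would settle well-definedness together with the initial configuration. By Claim~\ref{c:gridfw}(ii), $\M,\auf m,y_m\zu\models\state$, so \eqref{griduniquetwo} forces a unique $q\in Q-H$ with $\M,\auf m,y_m\zu\models\state_q$; this $q$ is $q_m$, and in particular $q_m\notin H$. The conjunct \eqref{allzero}, read at $\auf 0,r\zu$, puts $\neg\plusi\land\neg\minusi$ everywhere in the $0$th column, so $c_i(0)=0$ for all $i<N$; with $y_0=r$ this gives $\langle q_0,{\bf c}(0)\rangle=\langle q_0,{\bf 0}\rangle$. Finiteness of each $c_i(m)$ then follows by an easy induction on $m$: $c_i(0)=0$, and by Claim~\ref{c:counterfw} each step alters the count by at most one.

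The heart of the argument, and where I expect the real work, is the step relation. Fix $m$. As $\M,\auf m,y_m\zu\models\state_{q_m}$ with $q_m\in Q-H$, the conjunct \eqref{fwstep} supplies an instruction $\langle\iota,q'\rangle\in I_{q_m}$ with $\M,\auf m,y_m\zu\models\exei\land\nexttime\state_{q'}$. Feeding $\exei$ into Claim~\ref{c:counterfw} and distinguishing cases on $\iota$ shows that passing from ${\bf c}(m)$ to ${\bf c}(m+1)$ is precisely an $\iota$-update: for $\iota=\finci$ counter $i$ rises by one and the others are fixed; for $\iota=\fdeci$ counter $i$ drops by one and the others are fixed, and since $c_i(m+1)=c_i(m)-1$ is a cardinality it is $\ge 0$, whence automatically $c_i(m)>0$ as a reliable decrement demands; for $\iota=\ftest$ the conjunct $\Bv^{+}(\plusi\to\minusi)$ rules out any point carrying $\plusi\land\neg\minusi$ in column $m$, forcing $c_i(m)=0$, while all counters stay put. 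It remains to read $\nexttime\state_{q'}$ correctly, i.e.\ to see that it speaks about the staircase point of time $m+1$: by Claim~\ref{c:gridfw}(iii),(i) we have $\M,\auf m,y_{m+1}\zu\models\diag$ and $y_{m+1}\ne y_m$, so $\auf m,y_m\zu\Rv\auf m,y_{m+1}\zu$; since also $\auf m,y_{m+1}\zu\Rh\auf m+1,y_{m+1}\zu$ and $\M,\auf m+1,y_{m+1}\zu\models\state$ by Claim~\ref{c:gridfw}(ii), unwinding $\nexttime\state_{q'}$ yields $\M,\auf m+1,y_{m+1}\zu\models\state_{q'}$, and so $q_{m+1}=q'$ by the uniqueness in \eqref{griduniquetwo}. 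Hence $\langle q_m,{\bf c}(m)\rangle\stepi\langle q_{m+1},{\bf c}(m+1)\rangle$, and the sequence is an infinite run of $M$.

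Finally, recurrence comes from the extra conjunct $\Bh\Dh\Bv(\state\to\state_{q_r})$ at $\auf 0,r\zu$, which says that past every column $\auf n,r\zu$ ($n\ge 1$) there is a later column $n'>n$ with $\M,\auf n',r\zu\models\Bv(\state\to\state_{q_r})$. Since $n'\ge 1$, Claim~\ref{c:gridfw}(i) gives $y_{n'}\ne y_0=r$, so this $\Bv$ reaches the staircase point $\auf n',y_{n'}\zu$, where $\state$ holds by Claim~\ref{c:gridfw}(ii); thus $\state_{q_r}$ holds there and $q_{n'}=q_r$. As $n$ was arbitrary, $q_r$ is visited infinitely often, completing the run. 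The two places that need genuine care are exactly those flagged above: making the reading of $\nexttime$ coincide with the staircase, and extracting the precise decrement and zero-test side conditions from $\exei$ via Claim~\ref{c:counterfw}.
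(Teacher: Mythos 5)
Your proposal is correct and follows essentially the same route as the paper's proof: Claim~\ref{c:gridfw} for the staircase, \eqref{griduniquetwo} for well-definedness, \eqref{fwstep} plus Claim~\ref{c:counterfw} with a case split on $\iota$ for the step relation, and the conjunct $\Bh\Dh\Bv(\state\to\state_{q_r})$ for recurrence. You merely spell out some details the paper leaves implicit (unwinding $\nexttime$, finiteness of the counters, the positivity needed for $\fdeci$, and the recurrence argument), which is fine.
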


\begin{proof}
The sequence $\langle q_m:m<\omega\rangle$ is well-defined and contains $q_r$ infinitely often by Claim~\ref{c:gridfw}(ii), \eqref{griduniquetwo} and $\Bh\Dh\Bv(\state\to\state_{q_r})$.
We show by induction on $m$ that for all $m<\omega$, 
\[
\bigl\auf \langle q_0,{\bf c}(0)\rangle,\dots,\langle q_m,{\bf c}(m)\rangle\bigr\zu
\]
 is a run of $M$ starting with all-0 counters. Indeed, $c_i(0)=0$ for $i<N$ by \eqref{allzero}.
Now suppose the statement holds for some $m<\omega$. By the IH,
$\M,\auf m,y_{m}\zu\models\state_{q_m}$. We have $q_m\in Q-H$ by Claim~\ref{c:gridfw}(ii) and \eqref{griduniquetwo}, and
so by \eqref{fwstep} there is $\langle\iota,q'\rangle\in I_{q_m}$ such that 
 $\M,\auf m,y_{m}\zu\models\exei\land\Bv\bigl(\diag\to\Bh(\state\to\state_{q'})\bigr)$.
Then $\M,\auf m+1,y_{m+1}\zu\models\state_{q'}$ by Claim~\ref{c:gridfw}.
Now there are three cases, depending on the form of $\iota$.
If $\iota=\finci$ for some $i<N$, then $c_i(m+1)=c_i(m)+1$ and $c_j(m+1)=c_j(m)$, for $j\ne i$, $j<N$,
by Claim~\ref{c:counterfw}. The case of $\iota=\fdeci$ is similar.
If $\iota=\ftest$ for some $i<N$, then  
$\M,\auf m,y_{m}\zu\models\Bv^+(\plusi\to\minusi)$, and so $c_i(m)=0$. Also, $c_j(m+1)=c_j(m)$ for
all $j<N$ by Claim~\ref{c:counterfw}. Therefore, in all cases we have
$\langle q_m,{\bf c}(m)\rangle\stepi\langle q',{\bf c}(m+1)\rangle$,  as required.
\end{proof}

On the other hand, suppose $M$ has an infinite run $\bigl\langle\langle q_m,{\bf c}(m)\rangle : m<\omega\bigl\rangle$
 starting with all-0 counters and visiting $q_r$ infinitely often.
We define a model 
$\Mrec=\mbox{$\bigl\auf\auf\omega,<\zu\mprod\auf\omega,\ne\zu,\rho\bigr\zu$}$ as follows.
For all $q\in Q$, we let
\begin{align*}
\rho(\state)& :=\{\auf n,n\zu : n<\omega\},\\
\rho(\state_q)& :=\{\auf n,n\zu : n<\omega,\ q_n=q\},\\
\rho(\diag)& :=\{\auf n,n+1\zu : n<\omega\}.
\end{align*}
Further, for all $i<N$, $n<\omega$, we define inductively the sets $\rho_n(\plusi)$ and $\rho_n(\minusi)$.
We let $\rho_0(\plusi)=\rho_0(\minusi):=\emptyset$, and
\begin{align*}
& \rho_{n+1}(\plusi):=\left\{
\begin{array}{ll}
\rho_n(\plusi)\cup\{n\}, & \mbox{ if $\iota_n=\finci$},\\
\rho_n(\plusi), & \mbox{ otherwise}.
\end{array}
\right.\\[3pt]
& \rho_{n+1}(\minusi):=\left\{
\begin{array}{ll}
\rho_n(\minusi)\cup\bigl\{\textit{min}\bigl(\rho_n(\plusi)-\rho_n(\minusi)\bigr)\bigr\}, & \mbox{ if $\iota_n=\fdeci$},\\
\rho_n(\minusi), & \mbox{ otherwise}.
\end{array}
\right.
\end{align*}
Finally, for each $i<N$, we let
\[
\label{mreclast}
\rho(\plusi) :=\{\auf m,n\zu : n\in\rho_m(\plusi)\},\hspace*{1.5cm}
\rho(\minusi) :=\{\auf m,n\zu : n\in\rho_m(\minusi)\}.
\]
It is straightforward to check that
$
\Mrec,\auf 0,0\zu\models \diagfw\land\fmm\land\Bh\Dh\Bv(\state\to\state_{q_r})
$,
showing that CM recurrence is reducible to $\auf\omega,<\zu\mprod\cdiff$-satisfiability.

As concerns the $\Sigma_1^1$ upper bound, it is not hard to see that $\auf\omega,<\zu\mprod\cdiff$-satisfiability of a bimodal formula $\phi$ is expressible by a $\Sigma_1^1$-formula over $\omega$ in the first-order language having 
binary predicate symbols $<$ and $\pred^+$, for 
each propositional variable $\pred$ in $\phi$.
This completes the proof of Theorem~\ref{t:omega}.

%*********

\bigskip
Next, we prove the lower bound of Theorem~\ref{t:finite}
by reducing the `CM reachability' problem to $\clinf\mprod\cdiff$-satisfiability.
Let $\M$ be a model based on the product of some finite linear order $\auf T,<\zu$ and some difference frame
$\auf W,\ne\zu$. We may assume that $T=|T|<\omega$.
We encode counter machine runs in $\M$ like we did in the proof of Theorem~\ref{t:omega}, but
of course this time only finite runs are possible.
We introduce a fresh propositional variable $\pend$, and 
let $\diagfwfin$ be the conjunction of \eqref{initfw}, \eqref{dgenfw} and the following version of \eqref{sgenfw}:
\begin{equation}
\label{sgenfin}
\Bh^+\Bv\bigl(\diag\land\neg\pend\to (\Dh\state\land\Bh\Bh\neg\state)\bigr).
\end{equation}
The following finitary version of Claim~\ref{c:gridfw} can be proved 
by a straightforward induction on $m$:

\begin{claim}\label{c:gridfwfin}
Suppose $\M,\auf 0,r\zu\models\diagfwfin$. Then there exist some $0<E\leq T$ and a sequence
$\auf y_m\in W: m\leq E\zu$ of points such that for all $m\leq E$,
\begin{enumerate}
\item[{\rm (}i{\rm )}]
$y_0=r$ and for all $n<m$,  $y_m\ne y_{n}$, 
\item[{\rm (}ii{\rm )}]
if $m<E$ then $\M,\auf m,y_m\zu\models\state$,
\item[{\rm (}iii{\rm )}]
if $0<m<E$ then $\M,\auf m-1,y_{m}\zu\models\diag$,
\item[{\rm (}iv{\rm )}]
$\M,\auf E-1,y_E\zu\models\pend$, and if $0<m<E-1$ then
$\M,\auf m-1,y_{m}\zu\models\neg\pend$.
\end{enumerate}
\end{claim}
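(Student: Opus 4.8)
The plan is to reproduce the induction of Claim~\ref{c:gridfw} almost verbatim, now feeding in the extra guard $\neg\pend$ occurring in \eqref{sgenfin}, and to exploit the finiteness of $\auf T,<\zu$ to cut the staircase off. Recall that in $\auf T,<\zu\mprod\auf W,\ne\zu$ a point is a pair $\auf t,w\zu$, that $\Rh$ moves forward in time inside a fixed column, and that $\Rv$ moves to a different column at the same moment; we may take $T=\{0,\dots,T-1\}$. I would grow the staircase forward, starting from $y_0:=r$, at each stage using \eqref{dgenfw} to produce the next $\diag$-point one column across and then \eqref{sgenfin} to convert it into the next $\state$-point one moment later, continuing exactly as long as the freshly produced $\diag$-point satisfies $\neg\pend$.

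For the inductive step I would argue as in Claim~\ref{c:gridfw}. Assume $y_0,\dots,y_m$ have been built with $\M,\auf m,y_m\zu\models\state$ and with $\neg\pend$ holding at the intermediate $\diag$-points. Since $\M,\auf m,y_m\zu\models\state$, \eqref{dgenfw} yields some $y_{m+1}$ with $y_{m+1}\ne y_m$ and $\M,\auf m,y_{m+1}\zu\models\diag$. If this point also satisfies $\neg\pend$, then \eqref{sgenfin} fires, giving $\M,\auf m,y_{m+1}\zu\models\Dh\state\land\Bh\Bh\neg\state$; the first conjunct supplies a later $\state$ in column $y_{m+1}$ and the second forbids $\state$ from moment $m+2$ on, pinning it to moment $m+1$ and so establishing (ii) and (iii) for $m+1$. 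Distinctness (i) then follows exactly as in the infinite case: $\M,\auf m+1,y_{m+1}\zu\models\state$, whereas column $y_n$ ($n\le m$) carries $\neg\state$ at every moment $\ge n+1$ --- by the $\Bh\neg\state$ of \eqref{initfw} when $n=0$, and by the $\Bh\Bh\neg\state$ conjunct of \eqref{sgenfin} (which fired at stage $n$ because $\neg\pend$ held there) when $0<n\le m$ --- so $y_{m+1}=y_n$ would place $\state$ at $\auf m+1,y_n\zu$ with $m+1\ge n+1$, a contradiction.

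The genuinely new ingredient, and the step I expect to be the crux, is termination together with the bound $E\le T$. Because the $\state$-points sit at the strictly increasing moments $0,1,2,\dots$ while $\auf T,<\zu$ has only $T$ elements, the construction cannot keep firing \eqref{sgenfin} indefinitely. Concretely, once a $\state$-point has been produced at the $<$-greatest moment $T-1$, the $\diag$-point obtained from it by \eqref{dgenfw} can have no strictly later $\state$ in its column, so $\Dh\state$ fails there; by \eqref{sgenfin} this is possible only if the antecedent $\diag\land\neg\pend$ is already false, i.e.\ if $\pend$ holds. Hence there is a least $E$ with $0<E\le T$ at which the produced $\diag$-point $\auf E-1,y_E\zu$ satisfies $\pend$, and this is where the construction halts; minimality of $E$ gives $\neg\pend$ at all earlier $\diag$-points, which is (iv).

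It remains to place the terminal point $y_E$, which is the $\Dv$-witness of $\diag$ coming from the $\state$-point $\auf E-1,y_{E-1}\zu$, so that $y_E\ne y_{E-1}$ is immediate. Its distinctness from the earlier columns is the single spot where the $\pend$-guard blocks the clean $\neg\state$ argument above, since no $\state$ at $\auf E,y_E\zu$ is forced; this is a minor point, obtained by choosing a $\diag$-witness outside the finite used set $\{y_n:n<E\}$, and in any case immaterial to the subsequent encoding, where only the $y_m$ with $m<E$ carry configuration data. Everything else is the routine bookkeeping already present in Claim~\ref{c:gridfw}.
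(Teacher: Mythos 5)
Your proof is, in outline, exactly what the paper intends: the paper disposes of Claim~\ref{c:gridfwfin} as ``a straightforward induction on $m$'', i.e.\ the induction of Claim~\ref{c:gridfw} with the $\neg\pend$ guard threaded through, and your inductive step together with the termination argument (at the last moment $T-1$ the conjunct $\Dh\state$ cannot hold, so by \eqref{sgenfin} the $\diag$-witness produced by \eqref{dgenfw} must satisfy $\pend$, whence $E\leq T$) is precisely that argument, correctly carried out.

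The one genuine gap is the point you yourself flag and then wave away: clause (i) for the terminal index $m=E$. Your repair --- ``choosing a $\diag$-witness outside the finite used set $\{y_n:n<E\}$'' --- is not available: \eqref{dgenfw} guarantees only \emph{some} witness different from $y_{E-1}$, and nothing in $\diagfwfin$ rules out that this witness is unique and equal to an earlier $y_n$; since $\pend$ holds at it, \eqref{sgenfin} puts no constraint on it, so the $\Bh\Bh\neg\state$ information recorded at stage $n$ yields no contradiction. In fact clause (i) at $m=E$ is not provable at all: take $T=\{0,1,2,3\}$ and $W=\{r,a,b\}$, with $\state$ true exactly at $\auf 0,r\zu,\auf 1,a\zu,\auf 2,b\zu$, $\diag$ exactly at $\auf 0,a\zu,\auf 1,b\zu,\auf 2,a\zu$, and $\pend$ exactly at $\auf 2,a\zu$; then $\diagfwfin$ holds at $\auf 0,r\zu$, but clauses (ii) and (iv) force $E=3$ and $(y_0,y_1,y_2,y_3)=(r,a,b,a)$, so $y_E=y_1$. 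So this is a defect of the claim as stated rather than of your induction; the statement has to be weakened so that at $m=E$ only $y_E\ne y_{E-1}$ is required (which the $\Dv$-step gives for free), and that weaker form is all that Lemma~\ref{l:runfwdfin} actually uses, since there $y_E$ serves only to trigger $\Bv(\state\to\state_{q_r})$ at the $\state$-point $\auf E-1,y_{E-1}\zu$. A smaller remark of the same kind: \eqref{sgenfin} has inner $\Bv$, not $\Bv^+$, so as written it constrains only columns other than that of $r$; your step ``\eqref{sgenfin} fires at $\auf m,y_{m+1}\zu$'' therefore presupposes $y_{m+1}\ne y_0$, which at that stage of the argument is not yet established. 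This circularity is inherited from the paper's own Claim~\ref{c:gridfw}, whose proof has the same shape, and is evidently an oversight in \eqref{sgenfw}/\eqref{sgenfin} ($\Bv$ in place of $\Bv^+$); but a careful write-up should either correct the formula or break the circle explicitly.
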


The proof of the following lemma is similar to that of Lemma~\ref{l:runfwd}:

\begin{lemma}\label{l:runfwdfin}
Suppose that $\M,\auf 0,r\zu\models\diagfwfin\land\fmm\land  
\Bh^+\Bv\bigl(\diag\land\pend\to\Bv(\state\to\state_{q_r})\bigr)$.
For all  $m< E$ and $i<N$, let 
\[
q_m:=q,\ \mbox{ if }\ \M,\auf m,y_m\zu\models\state_q,\hspace*{1cm}
c_i(m):=|\{w\in W : \M,\auf m,w\zu\models\plusi\land\neg\minusi\}|.
\]
Then $\bigl\langle\langle q_m,{\bf c}(m)\rangle : m< E\bigr\rangle$ is a well-defined run
of $M$ starting with all-0 counters and reaching $q_r$.
\end{lemma}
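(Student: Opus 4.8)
The plan is to prove Lemma~\ref{l:runfwdfin} by essentially reusing the machinery developed for Lemma~\ref{l:runfwd}, adapting it from the infinite $\auf\omega,<\zu$ setting to the finite-chain setting where Claim~\ref{c:gridfwfin} replaces Claim~\ref{c:gridfw}. The key structural difference is that the diagonal staircase is now finite, of some length $E$ with $0<E\leq T$, and terminates at an $\pend$-marked point. So the first thing I would establish is that the sequence $\bigl\langle\langle q_m,{\bf c}(m)\rangle : m< E\bigr\rangle$ is well-defined: by Claim~\ref{c:gridfwfin}(ii), for each $m<E$ we have $\M,\auf m,y_m\zu\models\state$, and then by \eqref{griduniquetwo} there is a unique $q\in Q$ with $\M,\auf m,y_m\zu\models\state_q$, so $q_m$ is well-defined for all $m<E$; the counter values $c_i(m)$ are well-defined by their counting definition.

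Next I would run the induction on $m$ exactly as in Lemma~\ref{l:runfwd}, showing that for each $m$ with $0<m<E$ the initial segment $\bigl\auf\langle q_0,{\bf c}(0)\rangle,\dots,\langle q_m,{\bf c}(m)\rangle\bigr\zu$ is a run of $M$ starting with all-$0$ counters. The base case $c_i(0)=0$ follows from \eqref{allzero}. For the inductive step, assuming we have a run up to $m$ where $m+1<E$, we use Claim~\ref{c:gridfwfin}(ii) to get $\state_{q_m}$ at $\auf m,y_m\zu$, deduce $q_m\in Q-H$ via \eqref{griduniquetwo}, and then invoke \eqref{fwstep} to find some $\langle\iota,q'\rangle\in I_{q_m}$ with $\M,\auf m,y_m\zu\models\exei\land\Bv\bigl(\diag\to\Bh(\state\to\state_{q'})\bigr)$. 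Since $m<E-1$, Claim~\ref{c:gridfwfin}(iii) gives $\M,\auf m,y_{m+1}\zu\models\diag$ and Claim~\ref{c:gridfwfin}(ii) gives $\M,\auf m+1,y_{m+1}\zu\models\state$, so the next-time transfer yields $\M,\auf m+1,y_{m+1}\zu\models\state_{q'}$, i.e.\ $q_{m+1}=q'$. The counter updates then follow from Claim~\ref{c:counterfw} by the same three-way case analysis on whether $\iota$ is $\finci$, $\fdeci$, or $\ftest$, giving $\langle q_m,{\bf c}(m)\rangle\stepi\langle q_{m+1},{\bf c}(m+1)\rangle$.

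The genuinely new part, and the main obstacle, is handling the \emph{endpoint} correctly so as to establish that the run actually \emph{reaches} $q_r$. Here I must use the extra conjunct $\Bh^+\Bv\bigl(\diag\land\pend\to\Bv(\state\to\state_{q_r})\bigr)$ together with Claim~\ref{c:gridfwfin}(iv), which tells us that $\M,\auf E-1,y_E\zu\models\pend$. The subtle point is pinning down exactly which state-point is forced to be $\state_{q_r}$: the $\pend$-marked $\diag$-point sits at $\auf E-1,y_E\zu$, and I need to argue that the final state $q_{E-1}$ of the run equals $q_r$. This requires carefully tracing how the $\pend$-formula interacts with the vertical $\Bv$-reachability of the relevant $\state$-point and confirming that Claim~\ref{c:gridfwfin}(ii),(iii) correctly identify $y_{E-1}$ as carrying $\state$ while $\auf E-1,y_E\zu$ carries $\diag\land\pend$. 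I would also need to confirm that the run does not prematurely enter a terminal state before step $E-1$, which follows because \eqref{fwstep} only fires on $\state_q$ with $q\in Q-H$ and the staircase is forced to continue until the $\pend$-point via \eqref{sgenfin}. Once the endpoint argument is cleanly stated, the rest is a routine transcription of the proof of Lemma~\ref{l:runfwd}.
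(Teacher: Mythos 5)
Your proposal is correct and takes essentially the same approach as the paper: the paper gives no separate proof of this lemma, stating only that it is ``similar to that of Lemma~\ref{l:runfwd}'', and your transcription of that induction (with Claim~\ref{c:gridfwfin} in place of Claim~\ref{c:gridfw}) together with the endpoint argument --- the $\diag\land\pend$-point $\auf E-1,y_E\zu$ forces, via the extra conjunct, $\state\to\state_{q_r}$ at all other points of column $E-1$, in particular at $\auf E-1,y_{E-1}\zu$, so $q_{E-1}=q_r$ --- is exactly the intended adaptation. The one detail you rightly flag as needing care is real but harmless: Claim~\ref{c:gridfwfin}(iii) as stated only covers $0<m<E$, so the fact that $\auf E-1,y_E\zu$ carries $\diag$ (needed to fire the extra conjunct) must be read off from the staircase construction itself, where $y_E$ arises as a $\Dv\diag$-witness via \eqref{dgenfw}.
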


On the other hand, suppose $M$ has a run $\bigl\langle\langle q_n,{\bf c}(n)\rangle : n< T\bigl\rangle$ for some
$T<\omega$ such that it
 starts with all-0 counters and $q_{T-1}=q_r$. 
Take the model $\Mrec$ defined in 
 the proof of Theorem~\ref{t:omega} above.
 Let $\Mfin$ be its restriction to 
 $\auf T,<\zu\mprod\auf T+1,\ne\zu$, and let
 \[
 \rho(\pend)=\{\auf T-1,T\zu\}.
 \]
Then it is straightforward to check that
\[
\Mfin,\auf 0,0\zu\models\diagfwfin\land\fmm\land
\Bh^+\Bv\bigl(\diag\land\pend\to\Bv(\state\to\state_{q_r})\bigr),
\]
completing the proof of the lower bound in Theorem~\ref{t:finite}.

As concerns the upper bound, recursively enumerability follows from the fact that $\clinf\mprod\cdiff$-satisfiability
has the `finite product model' property: 

\begin{claim}\label{c:fmpfin}
For any formula $\phi$,
if $\phi$ is $\clinf\mprod\cdiff$-satisfiable, then $\phi$ is \mbox{$\clinf\mprod\cdifff$}-satisfiable.
\end{claim}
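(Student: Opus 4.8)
The plan is to establish the finite product model property (Claim~\ref{c:fmpfin}) by taking a satisfying model over a finite linear order and an \emph{arbitrary} difference frame, and replacing the (possibly infinite) difference frame by a finite one while preserving satisfaction of $\phi$. The key observation is that over a finite linear order $\auf T,<\zu$ with $|T|=T<\omega$, only finitely many `vertical' witnesses are ever needed: at each time point the truth of $\phi$ at a given world depends, through the $\Dv$-operator, on the existence of \emph{distinct} vertical witnesses, and these demands are bounded by the size of $\subf$ and by $T$.

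First I would take a model $\M=(\auf T,<\zu\mprod\auf W,\ne\zu,\nu)$ with $\M,\auf t_0,w_0\zu\models\phi$. The strategy is a standard \emph{selective filtration} / witness-selection argument carried out in the vertical dimension. I would fix the finite set $\subf$ of subformulas. The crucial point about the difference relation $\ne$ is that from any point $\auf t,w\zu$, the formula $\Dv\psi$ is witnessed by \emph{some} $\auf t,w'\zu$ with $w'\ne w$; but whether such a witness exists is essentially a global property of the column $\{t\}\mprod W$, namely whether $\psi$ holds somewhere in that column other than possibly at $w$ itself. So for each time point $t$ and each formula $\Dv\psi\in\subf$, I only need to retain one or two representative worlds realising $\psi$ in column $t$ (one suffices generically, a second is needed to handle the case where the only $\psi$-witness is the evaluation point itself, matching the role of $\E$ and the abbreviation $\exists^{\geq 2}x$ discussed earlier). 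This bounds the number of worlds that must be kept per column by roughly $2\cdot|\subf|$.

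The main step is then to choose a finite subset $W'\subseteq W$ that simultaneously serves all $T$ columns. I would go through each $t\in T$ and each $\Dv\psi\in\subf$, and throw into $W'$ enough witnessing worlds (at most two per such pair) to preserve all existing $\Dv$-demands, together with $w_0$ itself. Since $T$ is finite and $\subf$ is finite, $W'$ is finite. I then set $\auf W',\ne\zu$ to be the induced difference frame (the inequality relation restricted to $W'$ is again genuine inequality, so this is a legitimate member of $\cdifff$), form the product $\auf T,<\zu\mprod\auf W',\ne\zu$, and restrict the valuation accordingly. A routine induction on the structure of subformulas shows that truth of every $\chi\in\subf$ is preserved at every surviving point $\auf t,w\zu$ with $w\in W'$: the Boolean cases are immediate, the $\Dh$-case is unaffected because the horizontal relation and the set $T$ are untouched, and the $\Dv$-case holds precisely because $W'$ was chosen to retain the needed distinct vertical witnesses.

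The step I expect to be the main obstacle is the $\Dv$-clause of the truth-preservation induction, specifically handling the interaction between the irreflexive difference relation and the evaluation point: when verifying $\auf t,w\zu\models\Dv\psi$ I must ensure the retained witness $w'$ is genuinely distinct from $w$, and conversely that discarding worlds does not spuriously create a $\Dv\psi$ where none should hold (the latter direction is automatic, since restricting to a subframe can only remove witnesses). The delicate case is when $\psi$ held in the original column only at $w$ and at worlds that got discarded; this is exactly why two representatives per $\Dv\psi$-demand must be kept, so that some retained witness differs from any given evaluation point in $W'$. Once this bookkeeping is set up correctly the induction goes through, and combined with Theorem~\ref{t:re} (both $\clinf$ and $\cdifff$ being recursively first-order definable) the finite product model property yields the recursive enumerability asserted in Theorem~\ref{t:finite}.
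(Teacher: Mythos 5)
Your proposal is correct and takes essentially the same approach as the paper: both arguments shrink the vertical component to a finite set by keeping at most two realizers of $\psi$ in each column for every $\Dv\psi\in\subf$ (exploiting that truth of $\Dv\psi$ at a point only depends on whether $\psi$ has a realizer in that column distinct from the point), and then run the same truth-preservation induction over $\subf$ on the restricted product model. The only difference is organizational rather than substantive: the paper assembles the finite set as an iterated column-by-column closure $\cl_n$ with the bound $|\cl_n(X)|\leq|X|+2|\subf|$, whereas you select the (at most two) witnesses per column and per $\Dv$-subformula in one global pass.
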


\begin{proof}
Suppose $\M,\auf r_0,r_1\zu\models\phi$ for some model $\M$ based on the product of a finite linear
order $\auf T,<\zu$ and a (possibly infinite) difference frame $\auf W,\ne\zu$. We may assume
that $T=|T|<\omega$ and $r_0=0$.
For all $n<T$, $X\subseteq W$, we define $\cl_n(X)$ as the smallest set $Y$ such that $X\subseteq Y\subseteq W$ and having the following property:
If $x\in Y$ and $\M,\auf n,x\zu\models\Dv\psi$ for some $\psi\in\subf$, then there is $y\in Y$ such
that $y\ne x$ and
$\M,\auf n,y\zu\models\psi$.
It is not hard to see that if $X$ is finite then $\cl_n(X)$ is finite as well. In fact,
$|\cl_n(X)|\leq |X|+ 2|\subf|$.
Now let $W_0:=\cl_0(\{0\})$ and for $0<n<T$ let $W_{n}:=\cl_{n}(W_{n-1})$.
Let $\M'$ be the restriction of $\M$ to the product frame $\auf T,<\zu\mprod\auf W_{T-1},\ne\zu$.
An easy induction shows that for all $\psi\in\subf$, $n<T$, $w\in W_{T-1}$, we have
$\M,\auf n,w\zu\models\psi$ iff \mbox{$\M',\auf n,w\zu\models\psi$.}
\end{proof}

%**********************************************************************************************************

\section{Undecidable bimodal logics with a `linear' component}\label{main}

In this section we show that further combinations of \preorder and pseudo-equivalence relations 
are undecidable. First, in Subsections~\ref{product}  and \ref{discrete} we show how to represent counter machine runs in `abstract', not necessarily
product frames for commutators. Then in Subsection~\ref{dense} we extend our techniques to cover dense linear timelines. In order to obtain tighter control over the grid-structure, in all these cases we generate
both the diagonal staircase and counter machine runs going \emph{backward}, so the used formulas
force infinite \emph{rooted descending} chains in linear orders.

It is not clear, however, whether this change is always necessary, in other words, where exactly the limits of the `forward going' technique are. In particular, it would be interesting to know whether  the 
 `infinite ascending chain' analogues of the general Theorems~\ref{t:mainirr} and \ref{t:main} below hold.
 
%****

\subsection{Between commutators and products}\label{product}

In the following theorem we do not require the bimodal logic $L$ to be Kripke complete:

\begin{theorem}\label{t:mainirr}
Let $L$ be any bimodal logic such that
\begin{itemize}
\item
$L$ contains $[\Kfourt,\Diff]$, and
\item
$\auf\omega + 1, >\zu\mprod\auf\omega,\ne\zu$ is a frame for $L$.
\end{itemize}
Then $L$-satisfiability is undecidable.
\end{theorem}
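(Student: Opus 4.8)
The plan is to reduce the \textsc{CM reachability} problem (which is $\Sigma_1^0$-hard) to $L$-satisfiability, adapting the forward-going encoding of Section~\ref{chain} to work \emph{backward} along a descending chain, so that it succeeds over any $L$ satisfying the two hypotheses. The key point to exploit is that $L$ contains $[\Kfourt,\Diff]$, so every model $\M$ with $\M\models L$ is based on a 2-frame from $\Fr[\Kfourt,\Diff]$, i.e.\ one whose $\Rh$ is a \preorder, whose $\Rv$ is a pseudo-equivalence, and in which $\Rh$ and $\Rv$ commute (see \eqref{commcomp}). The other hypothesis, that $\auf\omega+1,>\zu\mprod\auf\omega,\ne\zu$ is a frame for $L$, guarantees that whatever reduction formula $\chi_M$ I write down for a machine $M$ with a reachable target state will actually be satisfiable in \emph{some} model of $L$: I simply build the intended `backward staircase' product model on $\auf\omega+1,>\zu\mprod\auf\omega,\ne\zu$ exactly as in the $\Mrec$/$\Mfin$ constructions of Section~\ref{chain} (but reading columns right-to-left, with the root at the top of the descending chain), and this model validates $L$ by assumption.

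The two directions of the reduction must be set up asymmetrically. For the `completeness' direction (machine reaches $q_r$ $\Rightarrow$ formula is $L$-satisfiable) I exhibit the concrete product model over $\auf\omega+1,>\zu\mprod\auf\omega,\ne\zu$, which satisfies $L$ by the second hypothesis, so essentially no new work beyond transcribing the $\Mrec$ construction going backward. For the `soundness' direction (formula is $L$-satisfiable $\Rightarrow$ machine reaches $q_r$), I take \emph{any} model $\M\models L$ satisfying $\chi_M$ and extract a genuine run. Here I mimic Claims~\ref{c:gridfw}, \ref{c:counterfw} and Lemma~\ref{l:runfwd}, but now I generate a descending diagonal staircase: starting from a root where some $\start$-like marker holds, I use the commutativity and pseudo-transitivity guaranteed by \eqref{commcomp} to force, for each staircase point, a $\Rh$-predecessor carrying the next $\state$-point, together with a $\diag$-point linking them vertically. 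Because $\Rh$ need only be a \preorder (transitive, weakly connected) rather than a strict successor relation, the `backward $\nexttime$' operator $\nexttime\phi:=\Bv(\diag\to\Bh(\state\to\phi))$ and the uniqueness operator $\Ev$ (available since $\Rv$ is a pseudo-equivalence) do the job of pinning down the intended grid-points and counting $\cou$-points column by column.

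The main obstacle — and the reason the construction is phrased \emph{backward} rather than forward — is precisely the gap between products and abstract commutator frames flagged in Section~\ref{proofidea}: commutativity forces the \emph{existence} of grid-points once a staircase is present, but there may be many extraneous non-grid points sharing the same vertical column, and $\Rh$ may be dense or near-universal, so I cannot naively `refer to the next column'. The delicate work is therefore to write the staircase-forcing formulas ($\diagbw$-analogues of \eqref{initfw}–\eqref{sgenfw}) and the counter-bookkeeping formulas ($\couffw$, $\ffix$, $\finc$, $\fdec$) so that (a) the vertical coordinates $y_m$ are forced pairwise distinct via $\Ev$ and the weak connectedness of $\Rh$, and (b) the counter values $c_i(m)$, read off as $|\{w:\M,\auf\cdot,w\zu\models\plusi\land\neg\minusi\}|$, change by exactly $0$, $+1$, or $-1$ between adjacent staircase points even in the presence of junk points. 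Going backward gives tighter control because each step refers to an already-constructed point (the $\Rh$-successor) rather than committing to a fresh one, so the `immediate successor' demand of (ii) in Section~\ref{proofidea} becomes enforceable with the $\Bh\Bh\neg\state$-style second-neighbour guards. Once the backward staircase and counter dynamics are verified, the descending chain is necessarily infinite or terminates at the root, and a run reaching $q_r$ is read off exactly as in Lemma~\ref{l:runfwd}, completing the reduction and hence the undecidability of $L$-satisfiability.
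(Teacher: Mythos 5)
Your overall architecture (backward staircase, soundness via \eqref{commcomp}, completeness via a hand-built model on $\auf\omega+1,>\zu\mprod\auf\omega,\ne\zu$) matches the paper's, but your choice of master problem is fatally wrong: you reduce \textsc{CM reachability}, which is $\Sigma_1^0$-complete (r.e.-complete), whereas the paper reduces \textsc{CM non-termination}, which is $\Pi_1^0$-hard. This is not a cosmetic difference. The theorem covers, in particular, $L=[\Kfourt,\Diff]$ and $L=\Kfourt\mprod\Diff$: the first is finitely axiomatisable, so its set of theorems is r.e.\ and hence its satisfiability problem is co-r.e.\ (the paper notes exactly this before Corollary~\ref{co:notcomplete}), and the second has co-r.e.\ satisfiability by Theorem~\ref{t:re} (cf.\ Corollary~\ref{co:mainfo}). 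A many-one reduction from an r.e.-complete problem to a co-r.e.\ problem would give $\Sigma_1^0\subseteq\Pi_1^0$, which is false. So no formula family $\chi_M$ of the kind you propose can exist for these $L$, no matter how cleverly the staircase and counter bookkeeping are engineered.

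The impossibility shows up concretely at the step you label `soundness'. To make reachability work you must force the backward staircase to be \emph{finite} and to \emph{end} in a $\state_{q_r}$-point. If staircase generation is made conditional (``extend while $q_r$ has not been reached''), then for a machine with an infinite run avoiding $q_r$ the product model on $\auf\omega+1,>\zu\mprod\auf\omega,\ne\zu$ encoding that run satisfies all your constraints, so satisfiability no longer implies reachability. If instead you add an ``eventually $q_r$'' conjunct, there is no way to tie its witness to the staircase: in an abstract frame for $[\Kfourt,\Diff]$ the root need not $\Rh$-see the staircase points $u_m$ at all (confluence only yields points in their vertical columns), and any $\Dh$/$\Dv$-eventuality can be discharged by junk points lying off the staircase. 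This is precisely why the paper forces the staircase \emph{unconditionally} (conjunct \eqref{dgenbw}), so that satisfiability of $\diagbw\land\fmmbwdec$ becomes equivalent to the existence of an \emph{infinite} run, i.e., it reduces non-termination. A smaller point: your claim that every model of $L$ ``is based on'' a 2-frame in $\Fr[\Kfourt,\Diff]$ is false as stated (truth of $L$ in a model under one valuation does not imply frame validity); what the paper actually uses is that $L$-satisfiability implies $[\Kfourt,\Diff]$-satisfiability together with the Sahlqvist/Kripke completeness of the commutator, which is what licenses the assumption that the underlying frame is as in \eqref{commcomp}. With \textsc{CM non-termination} in place of reachability, and the termination/end-marker requirements deleted, your outline essentially becomes the paper's proof.
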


\begin{corollary}\label{co:mainirr}
Both $[\Kfourt,\Diff]$ and $\Kfourt\mprod\Diff$ are undecidable.
\end{corollary}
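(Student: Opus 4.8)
The plan is to prove Theorem~\ref{t:mainirr} by reducing the $\Pi_1^0$-hard {\sc CM non-termination} problem (or, with a bit more work, the recurrence/reachability variants) to $L$-satisfiability, working entirely inside the frame $\auf\omega+1,>\zu\mprod\auf\omega,\ne\zu$ guaranteed to be a frame for $L$. Since $L\supseteq[\Kfourt,\Diff]$, every formula valid in all commutative \preorder/pseudo-equivalence frames (in particular the $\E$-machinery for counting and the commutation/confluence consequences) is available as a validity, and every $L$-frame is such a commutative 2-frame by \eqref{commcomp}. The key new feature, flagged in Section~\ref{proofidea} and Fig.~\ref{f:backward}, is that I will force the diagonal staircase and the counter machine run to run \emph{backward} along a rooted descending $\Rh$-chain: the root of $\auf\omega+1,>\zu$ is the limit point $\omega$, and the formula will demand an infinite descending sequence of $\state$-points representing successive configurations $\auf q_0,{\bf 0}\zu, \auf q_1,\dots\zu,\ldots$ of a \emph{non-terminating} run.

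First I would adapt the forward formulas $\diagfw$ of Section~\ref{chain} into backward analogues $\diagbw$: a backward staircase in which every $\state$-point vertically sees a $\diag$-point, and every $\diag$-point has an $\state$-point as its immediate \emph{$\Rh$-predecessor} (the reverse of condition~(ii) in Section~\ref{proofidea}). The genuine difficulty here is the one the authors themselves emphasise: in an abstract commutative frame (as opposed to a product), commutativity forces grid-points to exist once the staircase is present, but the vertical columns may contain many additional non-grid points, so one loses the clean `next-time' control that product frames give. I would handle this by using the backward direction to pin down the \emph{intended} grid-points more tightly, proving a `backward grid' claim analogous to Claim~\ref{c:gridfw}: there is an infinite descending sequence of column-indices carrying distinct vertical coordinates $y_m$ with $\state$ true at the grid-points and $\diag$ true one step $\Rh$-forward, so that the counter-content formula $\cou$ can again be read off as the cardinality of the set of $\cou$-points in each column, computed via the uniqueness operator $\E$ available over pseudo-equivalence frames.

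Next I would port the counter-simulation formulas $\couffw$, $\ffix$, $\finc$, $\fdec$, $\exei$ and $\fmm$ to their backward versions (the $\frecbw$, $\ffixbw$, $\fincbw$, $\exeibw$ etc.\ already hinted at in the macro list), and prove a backward analogue of Lemma~\ref{l:runfwd}: any model satisfying $\diagbw\wedge\fmmbwdec$ at the root yields a genuine infinite run of $M$ starting from $\auf q_0,{\bf 0}\zu$. For the converse (completeness of the encoding) I would take a nonterminating run of $M$ and build an explicit model on $\auf\omega+1,>\zu\mprod\auf\omega,\ne\zu$ exactly as $\Mrec$ was built in Section~\ref{chain}, but indexed by the descending chain, and check that it satisfies the encoding formula at the root $\auf\omega,0\zu$; since this frame is a frame for $L$, the formula is then $L$-satisfiable. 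Combining the two directions, $M$ has an infinite run iff the encoding formula is $L$-satisfiable, so the $\Pi_1^0$-hardness of {\sc CM non-termination} transfers, establishing undecidability. The main obstacle throughout is controlling the extra non-grid points in the columns of an abstract $L$-frame; the backward orientation together with careful use of $\Bh$/$\Bv$-relativised formulas and the $\E$ counting operator is what makes the argument go through.

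Finally, Corollary~\ref{co:mainirr} is immediate: both $[\Kfourt,\Diff]$ and $\Kfourt\mprod\Diff$ contain $[\Kfourt,\Diff]$, and the product frame $\auf\omega+1,>\zu\mprod\auf\omega,\ne\zu$ is a frame for each of them (for the product logic by definition, since $\auf\omega+1,>\zu$ is a linear order and $\auf\omega,\ne\zu$ a difference frame; for the commutator because every product frame validates \eqref{commut} and the components satisfy the \preorder and pseudo-equivalence conditions). Hence both logics satisfy the hypotheses of Theorem~\ref{t:mainirr}, and their satisfiability problems are undecidable.
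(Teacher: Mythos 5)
Your proposal is correct and takes essentially the same route as the paper: the paper proves Theorem~\ref{t:mainirr} by exactly the backward-staircase reduction of {\sc CM non-termination} you describe (Claims~\ref{c:grid}--\ref{c:seesd}, Lemma~\ref{l:runbw}, and the converse model $\Minf$ built on $\auf\omega+1,>\zu\mprod\auf\omega,\ne\zu$), and Corollary~\ref{co:mainirr} is then obtained, just as in your final paragraph, by checking that both $[\Kfourt,\Diff]$ and $\Kfourt\mprod\Diff$ contain $[\Kfourt,\Diff]$ and have $\auf\omega+1,>\zu\mprod\auf\omega,\ne\zu$ as a frame.
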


Note that 
Theorem~\ref{t:mainirr} is much more general than Corollary~\ref{co:mainirr}, 
as not only $[\Kfourt,\Diff]\subsetneq\mbox{$\Kfourt\mprod\Diff$}$,
but there are infinitely many different logics between them \cite{hk14}.

As a consequence of Theorems~\ref{t:re}, \ref{t:mainirr} and Prop.~\ref{p:foprod} we also obtain:

\begin{corollary}\label{co:mainfo}
\logic-satisfiability is undecidable but co-r.e.\ in constant domain models over the class of all linear orders.
\end{corollary}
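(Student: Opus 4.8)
The plan is to package this corollary as an instance of the undecidability Theorem~\ref{t:mainirr} together with the co-r.e.\ transfer result Theorem~\ref{t:re}, both mediated by the translation of Prop.~\ref{p:foprod}. By Prop.~\ref{p:foprod}, for any \logic-formula $\phi$ we have that $\phi$ is \logic-satisfiable in constant domain models over $\clin$ iff $\phi^\star$ is $\clin\mprod\cdiff$-satisfiable, so it suffices to settle the complexity of $\clin\mprod\cdiff$-satisfiability. To this end I would set $L:=\Log\,(\clin\mprod\cdiff)$; by the observation in Section~\ref{bipod} that $\Log\,\CC$-satisfiability coincides with $\CC$-satisfiability, $L$-satisfiability and $\clin\mprod\cdiff$-satisfiability are literally the same problem, and I would apply Theorem~\ref{t:mainirr} to this $L$.

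The only real work is verifying the two hypotheses of Theorem~\ref{t:mainirr} for this $L$, and both reduce to definition-chasing. For the first, note that $\clin\subseteq\Fr\Kfourt$ and $\cdiff\subseteq\Fr\Diff$, so $\clin\mprod\cdiff\subseteq\Fr\Kfourt\mprod\Fr\Diff$ and hence $\Kfourt\mprod\Diff=\Log\,(\Fr\Kfourt\mprod\Fr\Diff)\subseteq\Log\,(\clin\mprod\cdiff)=L$; since $[\Kfourt,\Diff]\subseteq\Kfourt\mprod\Diff$ always holds, we get $[\Kfourt,\Diff]\subseteq L$. For the second, $\auf\omega+1,>\zu$ is irreflexive, transitive and trichotomous, hence a linear order, so $\auf\omega+1,>\zu\in\clin$, and trivially $\auf\omega,\ne\zu\in\cdiff$; therefore $\auf\omega+1,>\zu\mprod\auf\omega,\ne\zu\in\clin\mprod\cdiff$, and every member of $\clin\mprod\cdiff$ is by definition a frame for $L=\Log\,(\clin\mprod\cdiff)$. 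With both conditions in place, Theorem~\ref{t:mainirr} yields that $L$-satisfiability, i.e.\ \logic-satisfiability in constant domain models over $\clin$, is undecidable.

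For the matching upper bound I would invoke Theorem~\ref{t:re} with $\CC_0=\clin$ and $\CC_1=\cdiff$. Both classes are recursively (in fact finitely) first-order definable over the language with a single binary predicate $R$: the class $\clin$ is axiomatised by the conjunction of $\forallx\neg(xRx)$, transitivity, and trichotomy $\forallx\forall y\,(xRy\lor x=y\lor yRx)$, while $\cdiff$ is axiomatised by the single sentence $\forallx\forall y\,(xRy\leftrightarrow x\ne y)$. Hence Theorem~\ref{t:re} gives that $\clin\mprod\cdiff$-satisfiability is co-r.e., and a final appeal to Prop.~\ref{p:foprod} transfers this to \logic-satisfiability in constant domain models over $\clin$.

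I do not expect any genuine obstacle here: all the mathematical content lives in the counter-machine reduction underlying Theorem~\ref{t:mainirr}, and this corollary is a routine repackaging. The one point demanding care is the identification of the correct logic $L$ and the verification that $\auf\omega+1,>\zu\mprod\auf\omega,\ne\zu$ is a frame for it; both are immediate once one unwinds the definition of $\Log$ and the inclusion $\clin\mprod\cdiff\subseteq\Fr\Kfourt\mprod\Fr\Diff$.
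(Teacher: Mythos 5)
Your proposal is correct and follows exactly the paper's route: the paper derives this corollary precisely as a consequence of Theorem~\ref{t:re} (for the co-r.e.\ upper bound), Theorem~\ref{t:mainirr} (for undecidability), and Prop.~\ref{p:foprod} (to translate between \logic\ and $\clin\mprod\cdiff$-satisfiability). Your verification of the two hypotheses of Theorem~\ref{t:mainirr} for $L=\Log\,(\clin\mprod\cdiff)$ and of the recursive first-order definability of $\clin$ and $\cdiff$ just spells out the definition-chasing the paper leaves implicit.
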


We prove Theorem~\ref{t:mainirr} by
reducing `CM non-termination' to $L$-satisfiability.
 To this end, fix some model $\M$ such that $\M\models L$ and $\M$ is based on some 
 $2$-frame $\F=\auf W,\Rh,\Rv\zu$. As by our assumption $L$-satisfiability of a formula implies its
 $[\Kfourt,\Diff]$-satisfiability, by \eqref{commcomp} we may assume that $\Rh$ is transitive and weakly connected, $\Rv$ is symmetric and pseudo-transitive, and $\Rh$, $\Rv$ commute.
 We begin with forcing a \emph{unique} infinite diagonal staircase \emph{backward\/}.
Let $\diagbw$ be the conjunction of the following formulas:
\begin{align}
\label{initfbw}
& \Dh(\state\land\Bh\bot),\\
\label{dgenbw}
& \Bv^+\Dh\diag,\\
\label{sgenbw}
& \Bv^+\Bh\bigl(\diag\to(\Bv\neg\diag\land\Dv\state)\bigr),\\
\label{sgen}
& \Bv^+\Bh\bigl(\diag\to (\Dh\state\land\Bh\Bh\neg\state)\bigr),\\
\label{suni}
& \Bv^+\Bh\bigl(\state\to(\Bh\neg\state\land\Bv\neg\state)\bigr).
\end{align}
We will  show, via a series of claims,
 that $\diagbw$ forces not only a unique diagonal staircase, but also 
a unique `half-grid' in $\M$. To this end, 
for all $x\in W$, we define the \emph{horizontal rank} of $x$ by taking
\[
\hr(x):=\left\{
\begin{array}{ll}
m, & \mbox{if the length of the longest $\Rh$-path starting at $x$ is $m<\omega$},\\
\omega, & \mbox{otherwise}.
\end{array}
\right.
\]

\begin{claim}\label{c:grid}
Suppose $\M,r\models\diagbw$. Then there exist infinite sequences 
$\auf y_m : m<\omega\zu$, $\auf u_m: m<\omega\zu$, and $\auf v_m: m<\omega\zu$ of points in $W$ such that, for every $m<\omega$,
\begin{enumerate}
\item[{\rm (}i{\rm )}]
$y_m=r$ or $r\Rv y_m$, and $y_m\Rh v_m\Rh u_m$,
\item[{\rm (}ii{\rm )}]
if $m>0$ then $v_{m-1}\Rv u_m$,
\item[{\rm (}iii{\rm )}]
$\M,u_m\models\state$ and $\hr(u_m)=m$,
 \item[{\rm (}iv{\rm )}]
$\M,v_m\models\diag$ and $\hr(v_m)=m+1$.
\end{enumerate}
\end{claim}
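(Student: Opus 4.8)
The plan is to prove the claim by induction on $m$, constructing the three points $u_m$ (a $\state$-point of rank $m$), $v_m$ (a $\diag$-point of rank $m+1$) and $y_m$ (a point with $y_m=r$ or $r\Rv y_m$) simultaneously. Before starting the induction I would record one auxiliary fact used repeatedly: \emph{the horizontal rank is invariant under $\Rv$}, i.e.\ $a\Rv b$ implies $\hr(a)=\hr(b)$. This follows purely from the frame conditions: since $\Rv$ is symmetric (so confluence is available), every $\Rh$-path issuing from $b$ can be transferred step-by-step, via confluence, to an $\Rh$-path of the same length issuing from $a$, and conversely, so the two longest-path lengths coincide (the $\omega$ case included).

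For the base case ($m=0$) I would set $y_0:=r$, obtain $u_0$ from \eqref{initfbw} as an $\Rh$-successor of $r$ satisfying $\state\land\Bh\bot$ (so $\hr(u_0)=0$), and obtain a $\diag$-point $v_0$ from \eqref{dgenbw}. By \eqref{sgen} the point $v_0$ has some $\state$-successor $s$; comparing $s$ with $u_0$ via weak connectedness at $r$ and using \eqref{suni} (which forbids a $\state$-point from $\Rh$-seeing another $\state$-point) forces $s=u_0$, hence $v_0\Rh u_0$. In the inductive step, given $y_m,v_m,u_m$, I would first use the $\Dv\state$-conjunct of \eqref{sgenbw} at the $\diag$-point $v_m$ to produce a $\state$-point $u_{m+1}$ with $v_m\Rv u_{m+1}$ (this is item~(ii)); rank-invariance then gives $\hr(u_{m+1})=\hr(v_m)=m+1$. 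Applying commutativity to $y_m\Rh v_m\Rv u_{m+1}$ yields $y_{m+1}$ with $y_m\Rv y_{m+1}\Rh u_{m+1}$, and pseudo-transitivity of $\Rv$ upgrades $r\Rv y_m\Rv y_{m+1}$ to $r\Rv y_{m+1}$ (or $y_{m+1}=r$), giving the first half of~(i). Feeding $y_{m+1}$ into \eqref{dgenbw} produces a $\diag$-successor $v_{m+1}$ of $y_{m+1}$, and exactly the base-case argument (take the $\state$-successor $s$ of $v_{m+1}$ guaranteed by \eqref{sgen}, then compare $s$ with $u_{m+1}$ among the $\Rh$-successors of $y_{m+1}$ using weak connectedness and \eqref{suni}) forces $s=u_{m+1}$, whence $v_{m+1}\Rh u_{m+1}$ and the rest of~(i).

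The delicate part — and the main obstacle — is pinning the ranks down \emph{exactly}, in particular the upper bounds $\hr(u_m)\le m$ and $\hr(v_m)\le m+1$; the lower bounds are immediate from the chain $v_m\Rh u_m$ and $\hr(u_m)=m$. Here the two-step conjunct $\Bh\Bh\neg\state$ of \eqref{sgen} does the decisive work. For any $\Rh$-successor $z$ of the $\diag$-point $v_{m+1}$, weak connectedness places $z$ relative to $u_{m+1}$ in one of three ways: $z=u_{m+1}$ (rank $m+1$), $u_{m+1}\Rh z$ (rank $<m+1$), or $z\Rh u_{m+1}$; the last case would make $u_{m+1}$ a $\state$-point lying two $\Rh$-steps below $v_{m+1}$, contradicting $\Bh\Bh\neg\state$ at $v_{m+1}$. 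Hence every successor of $v_{m+1}$ has rank $\le m+1$, so $\hr(v_{m+1})\le m+2$, and with the lower bound $\hr(v_{m+1})=m+2$; the rank of $u_{m+1}$ was already fixed to $m+1$ by rank-invariance. I expect the remaining bookkeeping to be mostly in checking that each point to which a formula is applied genuinely lies in the $\Bv^+\Bh$-scope of $r$ (i.e.\ is $\Rh$-reachable from an $\Rv$-successor of $r$), which holds because every constructed point is $\Rh$-reachable from some $y_k$ with $r\Rv y_k$ or $y_k=r$.
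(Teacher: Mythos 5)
Your proposal is correct and follows essentially the same route as the paper's proof: the same induction, with $u_{m+1}$ obtained from the $\Dv\state$-conjunct of \eqref{sgenbw} at $v_m$, then $y_{m+1}$ from commutativity, then $v_{m+1}$ from \eqref{dgenbw}, and $v_{m+1}\Rh u_{m+1}$ forced by \eqref{sgen}, \eqref{suni} and weak connectedness. The only difference is presentational: you spell out the rank-invariance of $\hr$ under $\Rv$ (via confluence) and the $\Bh\Bh\neg\state$ successor-trichotomy argument for the rank upper bounds, which the paper compresses into brief appeals to commutativity and to \eqref{sgen}, \eqref{suni}, and weak connectedness.
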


\begin{proof}
By induction on $m$. To begin with, let $y_0=r$. By \eqref{initfbw}, there is $u_0$ such that
$y_0\Rh u_0$, $\M,u_0\models\state$ and $\hr(u_0)=0$. By \eqref{dgenbw}, there is $v_0$
such that $y_0\Rh v_0$ and $\M,v_0\models\diag$.  By \eqref{sgen}, \eqref{suni} and the weak connectedness of $\Rh$, we have that $v_0\Rh u_0$, there is no $x$ with $v_0\Rh x \Rh u_0$,
and $\hr(v_0)=1$. 

Now suppose inductively that for some $m<\omega$ we have $y_k$, $u_k$, $v_k$, for all $k\leq m$
as required.  By the IH and \eqref{sgenbw}, there is $u_{m+1}$ such that $v_m\Rv u_{m+1}$ and 
$\M,u_{m+1}\models\state$. As  $\hr(v_m)=m+1$
by the IH, we have 
$\hr(u_{m+1})=m+1$ by the commutativity of $\Rh$ and $\Rv$. 
As $y_m\Rh v_m$ by the IH, again by commutativity there is $y_{m+1}$ such that
$y_m\Rv y_{m+1}\Rh u_{m+1}$. As either $r=y_m$ or $r\Rv y_m$ by the IH and $\Rv$ is
pseudo-transitive, we have that either $r=y_{m+1}$ or $r\Rv y_{m+1}$.
So by \eqref{dgenbw}, there is $v_{m+1}$
such that $y_{m+1}\Rh v_{m+1}$ and $\M,v_{m+1}\models\diag$. 
By \eqref{sgen}, \eqref{suni} and the weak connectedness of $\Rh$, we have that $v_{m+1}\Rh u_{m+1}$, there is no $x$ with $v_{m+1}\Rh x \Rh u_{m+1}$, and $\hr(v_{m+1})=\hr(u_{m+1})+1=m+2$
as required.
\end{proof}

For each $m<\omega$, let $\C_m:=\{u_{m}\}\cup\{x\in W : x\Rv u_{m}\}$. The following claim
is a straightforward consequence of Claim~\ref{c:grid}(iii), and the commutativity of $\Rh$ and $\Rv$:

\begin{claim}\label{c:rank}
For all $m<\omega$ and all $x\in\C_m$, $\hr(x)=m$.
\end{claim}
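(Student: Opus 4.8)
The plan is to reduce everything to a single transfer property: \emph{$\Rv$-related points carry the same horizontal rank}. Granting this, the claim is immediate. By definition $\C_m=\{u_m\}\cup\{x\in W : x\Rv u_m\}$, and $\hr(u_m)=m$ by Claim~\ref{c:grid}(iii); every other $x\in\C_m$ satisfies $x\Rv u_m$, so $\hr(x)=\hr(u_m)=m$ by the transfer property. Thus the entire content of the claim lies in proving that $\Rv$-edges preserve $\hr$.

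So the real work is to show: if $x\Rv y$ then $\hr(x)=\hr(y)$. The tool is confluence of $\Rh,\Rv$, i.e. $\forall a,b,c\,(a\Rh b\land a\Rv c\to\exists u\,(b\Rv u\land c\Rh u))$, which holds here because $\Rh,\Rv$ commute and $\Rv$ is symmetric (as noted after \eqref{commut}). I would establish $\hr(x)\ge\hr(y)$; the reverse then follows by exchanging $x$ and $y$, which is legitimate since $\Rv$ is symmetric. Given any $\Rh$-path $y=z_0\Rh z_1\Rh\cdots\Rh z_k$ starting at $y$, I would transport it alongside the $\Rv$-edge to build an $\Rh$-path of the same length $k$ starting at $x$. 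Since $x\Rv y$ gives $y\Rv x$, and $y\Rh z_1$, confluence (with $a=y$, $b=z_1$, $c=x$) yields $w_1$ with $z_1\Rv w_1$ and $x\Rh w_1$. Inductively, having $z_j\Rv w_j$ and the edge $z_j\Rh z_{j+1}$, confluence (with $a=z_j$, $b=z_{j+1}$, $c=w_j$) gives $w_{j+1}$ with $z_{j+1}\Rv w_{j+1}$ and $w_j\Rh w_{j+1}$. This builds $x\Rh w_1\Rh\cdots\Rh w_k$, a path of length $k$ from $x$. Taking the supremum over all such $k$ gives $\hr(x)\ge\hr(y)$; and if $\hr(y)=\omega$ the same construction yields arbitrarily long $\Rh$-paths from $x$, so $\hr(x)=\omega$ too.

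The one point needing care is the inductive transport step: confluence must be re-applied at each edge of the path, maintaining a running $\Rv$-companion $w_j$ of $z_j$, and one must verify that the produced $w_j$ indeed form an $\Rh$-path — which is exactly the second conjunct $w_j\Rh w_{j+1}$ delivered by confluence at each stage. No transitivity or weak connectedness of $\Rh$ is needed beyond what is packaged into the confluence step, which is precisely why the claim is ``straightforward'' once the rank-transfer lemma is isolated; the only genuine obstacle is bookkeeping the companions correctly and treating the $\hr=\omega$ case, both of which the uniform inductive construction above handles.
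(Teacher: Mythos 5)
Your proof is correct and is essentially the paper's own argument: the paper dismisses this claim as a straightforward consequence of Claim~\ref{c:grid}(iii) and the commutativity of $\Rh$ and $\Rv$, and your rank-transfer along $\Rv$-edges (via confluence, which as you note follows from commutativity plus the symmetry of $\Rv$) is exactly that argument written out in full. The path-transport induction and the swap-by-symmetry step are the right bookkeeping, so nothing is missing.
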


Next, we define the half-grid points and prove some of their properties:

\begin{claim}\label{c:halfgrid}
Suppose that $\M,r\models\diagbw$. Then
for every pair $\auf m,n\zu$ with $n<m<\omega$, there exists $x_{m,n}\in\C_m$ such that 
\begin{enumerate}
\item[{\rm (}i{\rm )}]
$x_{m,m-1}=v_{m-1}$, and if $n<m-1$ then $x_{m,n}\Rh x_{m-1,n}$,
\item[{\rm (}ii{\rm )}]
if $n<m-1$ then there is no $x$ with $x_{m,n}\Rh x\Rh x_{m-1,n}$.
\end{enumerate}
Moreover, the $x_{m,n}$ are such that
\begin{enumerate}
\item[{\rm (}iii{\rm )}]
for all $x\in\C_m$, $x\Rh u_n$ iff $x=x_{m,n}$,
\item[{\rm (}iv{\rm )}]
$x_{m,n}\ne x_{m,n'}$ whever $n\ne n'$.
\end{enumerate}
\end{claim}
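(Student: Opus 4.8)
The plan is to prove Claim~\ref{c:halfgrid} by induction on $m$, building up the half-grid column by column using the commutativity and weak connectedness properties already established, together with the uniqueness guarantees of \eqref{sgen} and \eqref{suni}. The base case $m=1$ forces a single point $x_{1,0}$, which must be $v_0$ by Claim~\ref{c:grid}(iv) and part~(i); since $\hr(v_0)=1$ and $v_0\Rh u_0$ with no intermediate point (shown in the proof of Claim~\ref{c:grid}), parts (i)--(iii) hold trivially and (iv) is vacuous.

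For the inductive step, suppose the points $x_{m,n}$ exist for a given $m$ and all $n<m$, satisfying (i)--(iv). I would first set $x_{m+1,m}:=v_m$, which lies in $\C_{m+1}$ since $v_m\Rv u_{m+1}$ by Claim~\ref{c:grid}(ii); this handles the `diagonal' entry of the new column and the first clause of~(i). For each $n<m$, I would then define $x_{m+1,n}$ by applying commutativity of $\Rh$ and $\Rv$: from $x_{m+1,m}=v_m \Rv u_{m+1}$ and the $\Rh$-chain connecting the previous column down to $u_n$, I would pull back along the existing relation $v_m \Rh x_{m,n}$ (via~(i) of the IH) to obtain a point in $\C_{m+1}$ that $\Rh$-precedes $x_{m,n}$. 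The key mechanism is that once a point $z\in\C_{m+1}$ is $\Rv$-related to the relevant point of column $m$, commutativity forces the existence of an $\Rh$-predecessor sitting in the correct cluster $\C_{m+1}$, so that $x_{m+1,n}\Rh x_{m,n}$ as required by~(i). Weak connectedness of $\Rh$ then yields~(ii): any point strictly between $x_{m+1,n}$ and $x_{m-1,n}$ would contradict the rank computation of Claim~\ref{c:rank}, since all of $\C_{m+1}$ has horizontal rank $m+1$.

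The properties (iii) and (iv) are where the uniqueness formulas do the real work. For (iii), given $x\in\C_{m+1}$ with $x\Rh u_n$, I would argue that $x$ must coincide with the constructed $x_{m+1,n}$: the formula \eqref{suni} guarantees $\Bh\neg\state$ and $\Bv\neg\state$ from any $\state$-point, so distinct $\Rh$-routes from $\C_{m+1}$ into the same $\state$-point $u_n$ would force two $\state$-points in a forbidden configuration, collapsing to a single witness. Property (iv), that $x_{m+1,n}\ne x_{m+1,n'}$ for $n\ne n'$, follows because each $x_{m+1,n}$ carries a distinct horizontal rank along its $\Rh$-descent to the uniquely-ranked points $u_n$, so two coinciding half-grid points would violate Claim~\ref{c:rank} applied to the column-$n$ and column-$n'$ endpoints.

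The main obstacle I anticipate is part~(iii)---pinning down that the half-grid point is genuinely \emph{unique} within $\C_{m+1}$, rather than merely existent. Commutativity freely produces grid-like witnesses, but in the abstract (non-product) setting there may be many extra points in each vertical cluster, so one cannot simply read off uniqueness from the frame structure. The delicate step is to leverage \eqref{sgen} and \eqref{suni}, which isolate the $\state$-points $u_n$ as the unique $\Rh$-immediate successors of $\diag$-points and forbid $\state$-points from seeing one another, to rule out spurious $\Rh$-connections from a cluster into a $\state$-point. I expect this to require carefully combining the ``no intermediate point'' facts from Claim~\ref{c:grid} with weak connectedness, so that the backward-generated structure remains rigid enough to carry the counter-machine encoding that follows.
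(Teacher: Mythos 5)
Your overall strategy matches the paper's: construct the $x_{m,n}$ by induction on $m$ via commutativity, obtain the ``no intermediate point'' property from the rank function of Claim~\ref{c:rank}, and use \eqref{suni} for the uniqueness parts. However, three steps have genuine gaps. First, your construction pulls back ``along the existing relation $v_m\Rh x_{m,n}$'', but this relation is not given by the inductive hypothesis and is in fact false in the intended frame $\auf\omega+1,>\zu\mprod\auf\omega,\ne\zu$: there $v_m$ and $x_{m,n}$ have different second coordinates, so they are $\Rh$-incomparable. The correct move is to apply commutativity to the path $v_m\Rh u_m\Rv x_{m,n}$ (available from Claim~\ref{c:grid}(i) and $x_{m,n}\in\C_m$), which yields $x_{m+1,n}$ with $v_m\Rv x_{m+1,n}\Rh x_{m,n}$; membership $x_{m+1,n}\in\C_{m+1}$ then follows from $u_{m+1}\Rv v_m$ and pseudo-transitivity of $\Rv$.

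Second, your justification of (iv) --- that coinciding points ``would violate Claim~\ref{c:rank}'' --- does not work: $x_{m,n}$ and $x_{m,n'}$ both lie in $\C_m$ and hence have the \emph{same} rank $m$, and nothing about ranks prevents a single point of rank $m$ from $\Rh$-seeing points of many different smaller ranks. What is needed instead: from $x_{m,n}\Rh u_n$, $x_{m,n}\Rh u_{n'}$ and $u_n\ne u_{n'}$, weak connectedness makes $u_n$ and $u_{n'}$ $\Rh$-comparable, and since both satisfy $\state$ this contradicts the conjunct forcing $\Bh\neg\state$ at $\state$-points in \eqref{suni}. Third, part (iii) --- which you rightly identify as the crux --- is left unproven, and your sketch (``two $\state$-points in a forbidden configuration'') points at the wrong contradiction. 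The paper's argument runs: if $x\in\C_m$, $x\Rh u_n$ and $x\ne x_{m,n}$, then $x_{m,n}\Rv x$ (same cluster), so commutativity gives $z$ with $x_{m,n}\Rh z\Rv u_n$; then $z\in\C_n$, so $\hr(z)=n=\hr(u_n)$ by Claim~\ref{c:rank}, and weak connectedness forces $z=u_n$; hence $u_n\Rv u_n$, contradicting the conjunct forcing $\Bv\neg\state$ at $u_n$ in \eqref{suni}, since $\M,u_n\models\state$. The contradiction is thus a reflexive \emph{vertical} loop at a single $\state$-point, not a pair of distinct $\state$-points, and without this argument the rigidity needed for the counter-machine encoding is not established.
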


\begin{proof}
First, by using Claim~\ref{c:grid} throughout, we define some $x_{m,n}\in\C_m$ by induction on $m$ satisfying (i) and (ii). To begin with, let $x_{1,0}=v_1$. Now suppose that $x_{m,n}$ satisfying (i)
and (ii) have been defined 
for all $n<m$ for some $0<m<\omega$. Take any $n<m+1$. If $n=m$, then let $x_{m+1,m}=v_m$.
If $n<m$ then $v_m\Rh u_m\Rv x_{m,n}$ by the IH. So by commutativity,
there is $x_{m+1,n}$ such that $v_m\Rv x_{m+1,n}\Rh x_{m,n}$.
As $u_{m+1}\Rv v_m$, we have $x_{m+1,n}\in\C_{m+1}$ by the 
pseudo-transitivity of $\Rv$. Further, it follows from Claim~\ref{c:rank} that there is no $x$
with $x_{m+1,n}\Rh x\Rh x_{m,n}$. 

Next, we show that the $x_{m,n}$ defined above satisfy (iii) and (iv). 
As $v_n\Rh u_n$ by Claim~\ref{c:grid}(i), and $x_{m,n}\Rh v_n$ by (i),
we have $x_{m,n}\Rh u_n$ by the transitivity of $\Rh$. 
For (iii): Let $x\in \C_m$ be such that $x\Rh u_n$, and suppose that $x\ne x_{m,n}$.
Then $x\Rv x_{m,n}$, and so by commutativity, there is $z$ with 
$x_{m,n}\Rh z\Rv u_n$. As $\Rh$ is weakly connected and $\hr(u_n)=\hr(z)$ by Claim~\ref{c:rank},
we have $u_n=z$, and so $u_n\Rv u_n$ follows. As $\M,u_n\models\state$ by Claim~\ref{c:grid}(iii),
this contradicts \eqref{suni}, proving $x=x_{m,n}$.
For (iv): Suppose, for contradiction, that $x_{m,n}=x_{m,n'}$ for some $n\ne n'$. By Claim~\ref{c:grid}(iii), $hr(u_n)=n\ne n'= \hr(u_{n'})$, and so $u_n\ne u_{n'}$. As $x_{m,n}\Rh u_n$ and $x_{m,n}\Rh u_{n'}$, by the weak connectedness of $\Rh$, either $u_n\Rh u_{n'}$ or $u_{n'}\Rh u_{n}$.
As $\M,u_n\models S$ and $\M,u_{n'}\models S$ by Claim~\ref{c:grid}(iii), this contradicts
\eqref{suni}.
\end{proof}

The following claim shows that we can in fact `single out' the half-grid points in the columns by
formulas:

\begin{claim}\label{c:seesd}
Suppose that $\M,r\models\diagbw$. Then
 for all $m<\omega$ and all $x\in\C_m$,
 \begin{enumerate}
 \item[{\rm (}i{\rm )}]
 if $\M,x\models\diag$ then $m>0$ and $x=v_{m-1}=x_{m,m-1}$,
 \item[{\rm (}ii{\rm )}]
 if $\M,x\models\Dh\diag$ then $m>1$ and $x=x_{m,n}$ for some $0<n< m-1$.
 \end{enumerate}
\end{claim}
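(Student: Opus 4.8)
The whole argument rests on being able to apply the ``uniqueness'' conjuncts \eqref{sgenbw}, \eqref{sgen}, \eqref{suni} at the points of a column $\C_m$, which is legitimate because every $x\in\C_m$ is an $\Rh$-successor of a point that is $r$ or $\Rv$-reachable from $r$: for $x=u_m$ this is Claim~\ref{c:grid}(i), and for $x\Rv u_m$ one applies commutativity to $y_m\Rh u_m\Rv x$ (using $y_m\Rh v_m\Rh u_m$) to get $w$ with $y_m\Rv w\Rh x$, where $w$ is $r$ or $\Rv$-reachable from $r$ by pseudo-transitivity. The same then holds for any $\Rh$-successor of such an $x$. For part~(i), suppose $x\in\C_m$ and $\M,x\models\diag$; by Claim~\ref{c:rank}, $\hr(x)=m$. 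By \eqref{sgen} we have $x\models\Dh\state$, so $x$ has an $\Rh$-successor and hence $m=\hr(x)>0$. Moreover $x\ne u_m$: otherwise $u_m\models\state\land\diag$, so $u_m\models\Bh\neg\state$ by \eqref{suni} while $u_m\models\Dh\state$ by \eqref{sgen}, a contradiction. Thus $x\Rv u_m$, and since $v_{m-1}\Rv u_m$ by Claim~\ref{c:grid}(ii), symmetry and pseudo-transitivity of $\Rv$ give $x=v_{m-1}$ or $x\Rv v_{m-1}$. The latter is impossible, since $v_{m-1}\models\diag$ (Claim~\ref{c:grid}(iv)) forces $v_{m-1}\models\Bv\neg\diag$ by \eqref{sgenbw}, contradicting $x\models\diag$. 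Hence $x=v_{m-1}=x_{m,m-1}$ (Claim~\ref{c:halfgrid}(i)).

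For part~(ii) the engine is an auxiliary observation: \emph{if $w\in\C_m$ and $w\Rh z$ with $\M,z\models\state$, then $z=u_n$ and $w=x_{m,n}$, where $n=\hr(z)$.} Indeed $n<m$, and $x_{m,n}\in\C_m$ with $x_{m,n}\Rh u_n$; since $w$ and $x_{m,n}$ are both $\Rv$-related to $u_m$, pseudo-transitivity gives $w=x_{m,n}$ or $w\Rv x_{m,n}$. If $w=x_{m,n}$, then $w\Rh z$ and $w\Rh u_n$ share the source $w$, so weak connectedness together with $\hr(z)=\hr(u_n)=n$ forces $z=u_n$. If $w\Rv x_{m,n}$, confluence applied to $w\Rh z$ and $w\Rv x_{m,n}$ yields $d$ with $z\Rv d$ and $x_{m,n}\Rh d$; since $\Rv$-related points have equal horizontal rank (by commutativity, as in Claim~\ref{c:rank}), $\hr(d)=n$, and comparing $d$ with $u_n$ under $x_{m,n}$ via weak connectedness gives $d=u_n$, so $z\Rv u_n$; as $z,u_n\models\state$, \eqref{suni} forces $z=u_n$. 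In both cases $z=u_n$, whence $w\Rh u_n$ gives $w=x_{m,n}$ by Claim~\ref{c:halfgrid}(iii).

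Now suppose $x\in\C_m$ and $\M,x\models\Dh\diag$, say $x\Rh x'$ with $\M,x'\models\diag$. By \eqref{sgen}, $x'\models\Dh\state\land\Bh\Bh\neg\state$, so $x'$ has an \emph{immediate} $\state$-successor $z$ (no $\state$-point lies two $\Rh$-steps out), and a short rank computation using weak connectedness gives $\hr(z)=\hr(x')-1=:n$. By transitivity of $\Rh$ we have $x\Rh z$, so the auxiliary observation applies and yields $x=x_{m,n}$. The upper bound is immediate from the ranks: $\hr(x')<\hr(x)=m$ gives $n=\hr(x')-1<m-1$, and in particular $m>1$. I expect the lower bound $0<n$ to be the delicate endpoint, since it amounts to excluding $\hr(x')=1$ (i.e.\ $x'=v_0$), and this has to be settled by a boundary check at the foot of the staircase rather than by the generic column argument.

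The step I expect to be the main obstacle is the auxiliary observation itself: forcing an ``off-spine'' $\state$-point $z$ to coincide with the spine point $u_n$ requires combining confluence/commutativity with the rank function of Claim~\ref{c:rank} and, crucially, the column-uniqueness of $\state$ encoded in \eqref{suni}; without \eqref{suni} one could not rule out several distinct same-rank $\state$-points sharing a column, and the identification $z=u_n$ (and hence $w=x_{m,n}$) would break down.
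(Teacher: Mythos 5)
Your proof is sound, and the one point you left unsettled---the lower bound $0<n$ in part~(ii)---is not a gap in your argument but a typo in the claim: that bound cannot be proved, because it is false as stated. The paper's own proof establishes only $0\le n<m-1$: writing $n=\hr(y)$ for the $\diag$-witness $y$ of $x$, it first shows $x\ne x_{m,n}$, hence $x\Rv x_{m,n}$, then uses confluence, ranks and weak connectedness to conclude $y\in\C_n$, applies part~(i) to get $n>0$ and $y=v_{n-1}$, and ends with $x=x_{m,n-1}$ by Claim~\ref{c:halfgrid}(iii)---so the final index $n-1$ may perfectly well be $0$. This weaker form is also exactly what is used later: in the proof of Claim~\ref{c:counting} the $\Dh\diag$-points of $\C_{m+1}$ range over all $x_{m+1,n}$ with $n<m$, including $n=0$. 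And indeed in the model $\Minf$ one has $x_{m,0}=\auf m,0\zu$, which sees $\auf 1,0\zu\in\mu(\diag)$, so $x_{m,0}\models\Dh\diag$ for every $m>1$; no ``boundary check at the foot of the staircase'' can exclude this. So simply prove, and use, the claim with $0\le n<m-1$, which is precisely what your argument delivers.

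Modulo that, your route genuinely differs from the paper's in part~(ii). The paper stays at the $\diag$-level: it locates the witness $y$ itself inside a column and re-uses part~(i) on it. You instead descend one further $\Rh$-step to a $\state$-point $z$ via \eqref{sgen} and prove a self-contained auxiliary lemma (a column point $\Rh$-seeing a $\state$-point must be a half-grid point, and the $\state$-point must be the corresponding $u_n$), resting on confluence, the rank argument of Claim~\ref{c:rank} and \eqref{suni} rather than on part~(i). Both work and are of comparable length; your lemma is a nice reusable statement, whereas the paper's version avoids the rank equality $\hr(z)=\hr(x')-1$, which you assert with only a hint (it does follow from $\Bh\Bh\neg\state$ and weak connectedness, but spell it out---or bypass it, since $\hr(z)<\hr(x')<m$ already gives the upper bound). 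Two further remarks. First, your opening paragraph justifying that the $\Bv^+\Bh$-prefixed formulas apply at every point of $\C_m$ and at its $\Rh$-successors fills a step the paper leaves implicit, and it is needed; likewise your explicit derivation of $m>0$ and $x\ne u_m$ in part~(i), which the paper compresses into ``follows from Claim~\ref{c:grid}(iv) and \eqref{sgenbw}''. Second, a slip of wording: in the case $w\Rv x_{m,n}$ of your lemma, \eqref{suni} does not ``force $z=u_n$''; since $z\Rv u_n$ and both satisfy $\state$, it yields an outright contradiction, so that case is simply impossible and $z=u_n$ comes from the other case alone. Neither point affects correctness.
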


\begin{proof}
Item (i) follows from Claim~\ref{c:grid}(iv) and \eqref{sgenbw}.
For (ii):
Suppose that $\M,x\models\Dh\diag$ for some $x\in\C_m$. 
Then there is $y$ such that $x\Rh y$ and $\M,y\models\diag$. 
By Claim~\ref{c:rank}, $\hr(x)=m$, and so $\hr(y)=n$ for some $n<m$. First,
we claim that $x\ne x_{m,n}$. Indeed, suppose that $x=x_{m,n}$, Then by Claim~\ref{c:halfgrid}, either $x=v_n$ or $x\Rh v_n$. If $x=v_n$ then $\M,x\models\diag$ by Claim~\ref{c:grid}, contradicting
\eqref{sgen}. As $\hr(v_n)=n+1>n=\hr(y)$, $v_n\ne y$, the weak connectedness of$\Rh$ and $x\Rh v_n$ imply that $v_n\Rh y$, contradicting \eqref{sgen} again, and proving that $x\ne x_{m,n}$.

So we have $x\Rv x_{m,n}$. By Claim~\ref{c:halfgrid}, $x_{m,n}\Rh u_n$. So
 by commutativity there is $z$ such that $x\Rh z\Rv u_n$.
Thus, $z\in\C_n$ and so $\hr(z)=n$ by Claim~\ref{c:rank}. Then $y=z$ follows by the weak
connectedness of $\Rh$, and so $y\in \C_n$. Thus,
we have $n>0$ and $y=v_{n-1}$ by (i). Therefore, $m>1$, and $x\Rh v_{n-1}\Rh u_{n-1}$
by Claim~\ref{c:grid}. So $x=x_{m,n-1}$ follows by Claim~\ref{c:halfgrid}(iii). 
\end{proof}

Given a counter machine $M$,
we now encode runs that start with all-0 counters by going backward along the created 
diagonal staircase.
For each counter $i<N$,
we take a fresh propositional variable $\cou$. At each moment $n$ of time,
the content of counter $c_i$ at step $n$ of a run is represented by those points in $\C_n$ where $\cou$ holds.
We also force these points only to be among the half-grid points $x_{m,n}$.
We can achieve these by the following formula:
\begin{align}
\label{allf}
\couf:= &\ \ \Bv^+\Bh\bigwedge_{i<N}\bigl(\cou\to(\diag\lor\allc)\bigr),\quad\mbox{where}\\
\nonumber
\allc  := &\ \ \Dh\diag\land\Bh(\diag\lor\Dh\diag\to\cou).
\end{align}
\begin{claim}\label{c:counting}
Suppose that $\M,r\models\diagbw\land\couf$. Then for all $m<\omega$, $i<N$,
\[
|\{x\in\C_{m+1} : \M,x\models\allc\}|=|\{x\in\C_m :\M,x\models\cou\}|.
\]
\end{claim}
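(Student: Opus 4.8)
The plan is to prove the equality of cardinalities by exhibiting, for fixed $m$ and $i$, an explicit \emph{row-preserving} bijection between $A:=\{x\in\C_{m+1}:\M,x\models\allc\}$ and $B:=\{x\in\C_m:\M,x\models\cou\}$, namely the map $x_{m+1,n}\mapsto x_{m,n}$. Since by Claim~\ref{c:halfgrid}(iv) distinct row indices give distinct half-grid points, this map is automatically injective on half-grid points, so once I show that $A$ and $B$ consist only of half-grid points it will suffice to prove, for each $n<m$, the single equivalence $\M,x_{m+1,n}\models\allc$ iff $\M,x_{m,n}\models\cou$.

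As a preliminary I would record that the guarded conjunct $\cou\to(\diag\lor\allc)$ of $\couf$ holds at \emph{every} point of \emph{every} column $\C_k$. For $u_k$ this is immediate, since $y_k\Rh u_k$ with $y_k=r$ or $r\Rv y_k$ (Claim~\ref{c:grid}); for $x$ with $x\Rv u_k$ it follows by one application of commutativity to $y_k\Rh u_k\Rv x$, yielding $z'$ with $y_k\Rv z'\Rh x$, together with the pseudo-transitivity of $\Rv$ to relate $z'$ back to $r$. Consequently every $\cou$-point of $\C_m$ satisfies $\diag\lor\allc$; since moreover $\allc$ entails $\Dh\diag$, Claim~\ref{c:seesd} places every element of $A$ among the $x_{m+1,n}$ and every element of $B$ among the $x_{m,n}$ (with $n<m$ in both cases).

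The technical heart is the identification of the relevant horizontal successors, and \emph{this is the step I expect to be the main obstacle}: I claim that the $\Rh$-successors of $x_{m+1,n}$ satisfying $\diag\lor\Dh\diag$ are exactly the row-$n$ points $x_{m',n}$ with $n<m'\le m$. That each such point qualifies follows from Claim~\ref{c:halfgrid}(i) (giving $x_{m+1,n}\Rh x_{m',n}$) together with $x_{m',n}\Rh v_n$ and $\M,v_n\models\diag$ (Claim~\ref{c:grid}(iv)). For the converse — that no stray successor qualifies — I would argue that any $\Rh$-successor $z$ of $x_{m+1,n}$ with $\M,z\models\diag\lor\Dh\diag$ is, by Claim~\ref{c:seesd}, a half-grid point $x_{m'',n''}$ whose rank satisfies $m''\le m$ by Claim~\ref{c:rank}; since $x_{m'',n''}\Rh u_{n''}$ (Claim~\ref{c:halfgrid}) and $x_{m+1,n}\Rh u_n$ with both $u_{n''},u_n$ carrying $\state$, the weak connectedness of $\Rh$ and \eqref{suni} force $u_{n''}=u_n$, whence $n''=n$ by Claim~\ref{c:grid}(iii). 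In particular the state point $u_n$ itself is excluded, as $\M,u_n\models\state$ rules out $u_n\models\diag$ via \eqref{sgenbw} and \eqref{suni}, and $u_n\models\Dh\diag$ would again contradict \eqref{suni} through weak connectedness. This is exactly the place where the possible presence of many non-grid points in the abstract columns must be brought under control.

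Granting this characterization, the equivalence follows in two moves. For the forward direction, $x_{m,n}$ is one of the listed successors, so $\M,x_{m+1,n}\models\allc$ forces $\M,x_{m,n}\models\cou$ outright. For the backward direction I would propagate $\cou$ downward along row $n$: if $\M,x_{m,n}\models\cou$, then descending the chain $x_{m,n}\Rh x_{m-1,n}\Rh\dots\Rh v_n$, each non-$\diag$ point carrying $\cou$ must (by the column-wide validity of $\couf$) satisfy $\allc$, and hence pass $\cou$ to its immediate successor $x_{m'-1,n}$; thus every $x_{m',n}$ with $n<m'\le m$ carries $\cou$, which by the characterization is precisely the condition $\M,x_{m+1,n}\models\allc$ (the conjunct $\Dh\diag$ holding via $x_{m+1,n}\Rh v_n$). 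Hence $x_{m+1,n}\mapsto x_{m,n}$ is a bijection of $A$ onto $B$, giving the desired equality of cardinalities.
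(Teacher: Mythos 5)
Your reduction to the index sets $\{n<m:\M,x_{m+1,n}\models\allc\}$ and $\{n<m:\M,x_{m,n}\models\cou\}$ (via the column-wide validity of $\couf$, Claim~\ref{c:seesd} and Claim~\ref{c:halfgrid}(iv)) is exactly the paper's first step, and your forward direction coincides with the paper's. But there is a genuine gap at the step you yourself call the technical heart. Your backward direction depends entirely on the characterization that \emph{every} $\Rh$-successor $z$ of $x_{m+1,n}$ satisfying $\diag\lor\Dh\diag$ is a row-$n$ half-grid point, and you justify the converse half of this ``by Claim~\ref{c:seesd}''. Claim~\ref{c:seesd}, however, is a statement about points lying in some column: its hypothesis is $x\in\C_m$. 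An arbitrary $\Rh$-successor of $x_{m+1,n}$ need not lie in any $\C_k$ --- in the abstract, non-product frames treated here this is precisely the danger the whole section warns about --- and nothing in your argument places $z$ in a column. So the citation does not apply and the characterization is left unproved. It is in fact true, but proving it needs different tools: by Claim~\ref{c:rank} any successor $z$ of $x_{m+1,n}$ has $\hr(z)\leq m$; if $\hr(z)=k$ with $n<k\leq m$, then weak connectedness applied to the two successors $z$ and $x_{k,n}$ of $x_{m+1,n}$, together with equality of their finite ranks, forces $z=x_{k,n}$; and if $\hr(z)\leq n$, then weak connectedness and ranks give $z=u_n$ or $u_n\Rh z$, after which \eqref{sgenbw}, \eqref{sgen}, \eqref{suni} and transitivity exclude $\diag\lor\Dh\diag$ at $z$, since any witnessing $\diag$-point would produce an $\state$-point strictly $\Rh$-below $u_n$. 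None of this is ``by Claim~\ref{c:seesd}''.

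Note that the paper's own proof sidesteps the need for the full characterization. For an arbitrary successor $x$ of $x_{m+1,n}$ with $\M,x\models\diag\lor\Dh\diag$, it splits by weak connectedness into $x=x_{m,n}$ or $x_{m,n}\Rh x$ (the third alternative $x\Rh x_{m,n}$ being impossible, e.g., by the rank argument above); in the first case $\cou$ holds by assumption, and in the second it shows $\M,x_{m,n}\models\neg\diag$ using \eqref{sgen}, hence $\M,x_{m,n}\models\allc$ by \eqref{allf}, so that the conjunct $\Bh(\diag\lor\Dh\diag\to\cou)$ of $\allc$ at $x_{m,n}$ delivers $\cou$ at $x$ in one step, with no need to identify $x$ at all. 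To repair your proof you must either carry out the rank-based argument sketched above to establish your characterization, or replace your row-by-row propagation with this one-step use of $\allc$ at $x_{m,n}$; the latter is the smaller change.
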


\begin{proof}
As $\Dh\diag$ is a conjunct of $\allc$, 
by Claims~\ref{c:halfgrid}(iv), \ref{c:seesd} and $\couf$, we have 
\begin{align*}
|\{x\in\C_{m+1} : \M,x\models\allc\}| & = |\{ n : n< m\mbox{ and }\M,x_{m+1,n}\models\allc\}|,\mbox{ and } \\
|\{x\in\C_m :\M,x\models\cou\}| & = |\{ n :n< m\mbox{ and }\M,x_{m,n}\models\cou\}|.
\end{align*}
So it is enough to show that the two sets on the right hand sides are equal. To this end,
suppose first that $ n<m$ is such that  $\M,x_{m+1,n}\models\allc$.
As $x_{m+1,n}\Rh x_{m,n}$ by Claim~\ref{c:halfgrid}(i), and $\M,x_{m,n}\models\diag\lor\Dh\diag$
by Claims~\ref{c:grid}(iv) and \ref{c:halfgrid}(i), we obtain that $\M,x_{m,n}\models\cou$.

Conversely, suppose that $\M,x_{m,n}\models\cou$ for some $n<m$.
As $n<m$,
by Claims~\ref{c:grid}(iv) and \ref{c:halfgrid}(i), we have  $\M,x_{m+1,n}\models\Dh\diag$.
Now let $x$ be such that $x_{m+1,n}\Rh x$ and $\M,x\models\diag\lor\Dh\diag$. By 
Claim~\ref{c:seesd} and the weak connectedness of $\Rh$, either $x=x_{m,n}$ or $x_{m,n}\Rh x$.
In the former case, $\M,x\models\cou$ by assumption. If $x_{m,n}\Rh x$ then 
$\M,x_{m,n}\models\neg\diag$ by \eqref{sgen}.
Therefore, $\M,x_{m,n}\models\allc$ by \eqref{allf}, and so 
$\M,x_{m,n}\models\Bh(\diag\lor\Dh\diag\to\cou)$.
 Thus, we have $\M,x\models\cou$ in this case as well, and so
$\M,x_{m+1,n}\models\Bh(\diag\lor\Dh\diag\to\cou)$ as required.
\end{proof}

Now, for each $i<N$, the following formulas simulate the possible changes that may happen in the counters when stepping backward, and also ensure
that each `vertical coordinate'  is used only once in the counting:
\begin{align}
\label{fixbw}
\ffixbw & :=\ \ \Bv^+(\cou\leftrightarrow\allc),\\
\label{incbw}
\fincbw & :=\ \ \Bv^+\bigl(\cou\leftrightarrow (\diag\lor\allc)\bigr),\\
\label{decbw}
\fdecbw & :=\ \ \Bv^+(\cou\to\allc)\land\Ev (\neg\cou\land\allc).
\end{align}
The following analogue of Claim~\ref{c:counterfw} is a straightforward consequence of Claim~\ref{c:counting}:

\begin{claim}\label{c:counter}
Suppose that $\M,r\models\diagbw\land\couf$ and let, for all $m<\omega$, $i<N$,
$c_i(m) :=|\{x\in\C_m: \M,x\models\cou\}|$. Then
\[
c_i(m+1)=\left\{
\begin{array}{ll}
c_i(m), & \mbox{ if $\M,u_{m+1}\models\ffixbw$},\\[3pt]
c_i(m)+1,\ & \mbox{ if $\M,u_{m+1}\models\fincbw$},\\[3pt]
c_i(m)-1, & \mbox{ if $\M,u_{m+1}\models\fdecbw$}.
\end{array}
\right.
\]
\end{claim}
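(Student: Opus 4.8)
The plan is to reduce the entire claim to the counting identity of Claim~\ref{c:counting}, which already supplies $|\{x\in\C_{m+1} : \M,x\models\allc\}| = c_i(m)$. Each of the three formulas $\ffixbw$, $\fincbw$, $\fdecbw$ prescribes, inside the single column $\C_{m+1}$, exactly how the set of $\cou$-points relates to the set of $\allc$-points (and, in the increment case, to the unique $\diag$-point). Since $c_i(m+1)$ is by definition the number of $\cou$-points in $\C_{m+1}$, in each case it will suffice to read this number off the $\allc$-count $c_i(m)$.

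First I would record the \emph{column-localisation} of the modal operators. Because $\Rv$ is symmetric, the points $x$ with $u_{m+1}\Rv x$ coincide with the points $x$ with $x\Rv u_{m+1}$, so $\M,u_{m+1}\models\Bv^+\psi$ holds iff $\psi$ is true at every point of $\C_{m+1}=\{u_{m+1}\}\cup\{x : x\Rv u_{m+1}\}$; likewise $\M,u_{m+1}\models\Ev\psi$ expresses that there is exactly one point of $\C_{m+1}$ satisfying $\psi$, as in the definition of $\E$. Next I would isolate the $\diag$-points of $\C_{m+1}$: by Claim~\ref{c:grid}(ii),(iv) we have $v_m\in\C_{m+1}$ and $\M,v_m\models\diag$, and Claim~\ref{c:seesd}(i) shows that $v_m=x_{m+1,m}$ is the \emph{only} $\diag$-point in $\C_{m+1}$. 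Crucially, $v_m$ is not an $\allc$-point: since $\Dh\diag$ is a conjunct of $\allc$, were $\M,v_m\models\allc$, then Claim~\ref{c:seesd}(ii) would force $v_m=x_{m+1,n}$ for some $n<m$, contradicting the injectivity of $n\mapsto x_{m+1,n}$ (Claim~\ref{c:halfgrid}(iv)).

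With these facts the three cases are immediate. If $\M,u_{m+1}\models\ffixbw$, then by \eqref{fixbw} the predicates $\cou$ and $\allc$ define the same subset of $\C_{m+1}$, so $c_i(m+1)=|\{x\in\C_{m+1}:\M,x\models\allc\}|=c_i(m)$. If $\M,u_{m+1}\models\fincbw$, then by \eqref{incbw} the $\cou$-points of $\C_{m+1}$ are exactly the $\diag$-points together with the $\allc$-points; these two sets are disjoint (the only $\diag$-point $v_m$ is not an $\allc$-point) and the $\diag$-set is the singleton $\{v_m\}$, so $c_i(m+1)=c_i(m)+1$. If $\M,u_{m+1}\models\fdecbw$, then by \eqref{decbw} every $\cou$-point of $\C_{m+1}$ is an $\allc$-point while \emph{exactly one} $\allc$-point satisfies $\neg\cou$; hence the $\cou$-points are the $\allc$-points with one removed, giving $c_i(m+1)=c_i(m)-1$. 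In each case the final equation uses Claim~\ref{c:counting} to replace the $\allc$-count by $c_i(m)$.

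The main obstacle is the bookkeeping in the increment case: one must verify that the extra $\diag$-point genuinely lies outside the $\allc$-set, so that the two counts add rather than overlap. This is exactly where Claim~\ref{c:seesd}(ii) and the injectivity of the half-grid enumeration are indispensable. The only other point requiring care is the faithful translation of $\Bv^+$ and $\Ev$ into statements about the whole cluster $\C_{m+1}$, which rests on the symmetry of $\Rv$ and on $u_{m+1}$ being $\Rv$-related to every other point of its column.
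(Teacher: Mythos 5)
Your proposal is correct and takes essentially the same route as the paper: the paper offers no separate argument, stating only that the claim is ``a straightforward consequence of Claim~\ref{c:counting}'', and your write-up is precisely that reduction --- reading $\Bv^+$ and $\Ev$ at $u_{m+1}$ as statements about the whole column $\C_{m+1}$ (using symmetry and pseudo-transitivity of $\Rv$), isolating $v_m$ via Claims~\ref{c:grid}, \ref{c:seesd} and \ref{c:halfgrid}(iv) as the unique $\diag$-point of $\C_{m+1}$ and showing it lies outside the $\allc$-set, and then converting the resulting set identities into the three counting equations by Claim~\ref{c:counting}. No gaps worth noting.
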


Next, we encode the various counter machine instructions, acting backward. For each $\iota\in\textit{Op}_C$, we define the formula $\exeibw$ by taking
\[
\exeibw:=\ \ \left\{
\begin{array}{ll}
\displaystyle\fincbw\land\bigwedge_{i\ne j<N}\ffixjbw, & \mbox{ if $\iota=\finci$},\\
\displaystyle\fdecbw\land\bigwedge_{i\ne j<N}\ffixjbw, & \mbox{ if $\iota=\fdeci$},\\
\displaystyle\Bv^+\neg\cou\land\bigwedge_{j<N}\ffixjbw, & \mbox{ if $\iota=\ftest$}.\\
\end{array}
\right.
\]
Finally, we encode runs that start with all-0 counters. For each $\iota\in\textit{Op}_C$, we introduce a propositional variable $\insbw$, and
define $\fmmbwdec$ to be the conjunction of $\couf$ and
 the following formulas:
\begin{align}
 \label{gridunique}
& \Bv^+\Bh\bigl(\state\leftrightarrow\bigvee_{q\in Q-H} \bigl(\state_q\land\bigwedge_{q\ne q'\in Q}\neg\state_{q'})\bigr),\\
\label{executebwdec}
& \Bv\Bh\bigwedge_{q\in Q-H}
\bigl[\bigl(\state\land \Dv( \diag\land \Dh\state_q)\bigr) \to
\bigvee_{\langle\iota,q'\rangle\in I_q}(\insbw\land\state_{q'})\bigr],\\
\label{instbw}
& \Bv\Bh\bigwedge_{\iota\in\textit{Op}_C}(\insbw\to\exeibw).
\end{align}

The following analogue of Lemma~\ref{l:runfwd} says that going backward along the diagonal staircase generated in Claim~\ref{c:grid}, we can force infinite runs of $M$:

\begin{lemma}\label{l:runbw}
Suppose that $\M,r\models\diagbw\land\fmmbwdec$, and
for all $m<\omega$ and $i<N$ , let 
\[
 q_m  :=q, \mbox{ if }\M,u_m\models\state_{q},\quad
 c_i(m)  :=|\{x\in\C_m: \M,x\models\cou\}|,\quad
 \sigma_m  :=\langle q_m,{\bf c}(m)\rangle.
\]
Then $\langle\sigma_m:m<\omega\rangle$ is a well-defined infinite run
of $M$ starting with all-0 counters.
\end{lemma}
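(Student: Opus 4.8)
The plan is to follow the forward argument of Lemma~\ref{l:runfwd} almost verbatim, but now reading the (unique) diagonal staircase produced by Claim~\ref{c:grid} from its rank-$0$ end $u_0$ outwards. Concretely, I would prove by induction on $m$ that $\bigl\langle\langle q_0,{\bf c}(0)\rangle,\dots,\langle q_m,{\bf c}(m)\rangle\bigr\rangle$ is a run of $M$ starting with all-$0$ counters; since Claim~\ref{c:grid} supplies the points $u_m,v_m,y_m$ for every $m<\omega$, this yields an infinite run. Before the induction I would settle \emph{well-definedness}: the $u_m$ are pairwise distinct because $\hr(u_m)=m$ (Claim~\ref{c:grid}(iii)), each $u_m$ satisfies $\state$, and \eqref{gridunique} then pins down a unique $q_m\in Q-H$ with $\M,u_m\models\state_{q_m}$; finiteness of each $c_i(m)$ comes out of the induction itself, since Claim~\ref{c:counter} forces $c_i(m+1)\in\{c_i(m)-1,c_i(m),c_i(m)+1\}$.

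For the base case I must show $c_i(0)=0$, i.e.\ no $x\in\C_0$ satisfies $\cou$. Every such $x$ has $\hr(x)=0$ by Claim~\ref{c:rank}, hence $\M,x\models\Bh\bot$; thus $\M,x\not\models\Dh\diag$, so $\M,x\not\models\allc$, while $\M,x\not\models\diag$ by Claim~\ref{c:seesd}(i). Since $\couf$ forces $\cou\to(\diag\lor\allc)$ at $x$, we get $\M,x\not\models\cou$. For the inductive step I would verify that $u_{m+1}$ satisfies the antecedent of \eqref{executebwdec} with $q=q_m$: it is an $\state$-point, it sees $v_m$ vertically with $\M,v_m\models\diag$ (Claim~\ref{c:grid}(ii) and symmetry of $\Rv$), and $v_m\Rh u_m$ with $\M,u_m\models\state_{q_m}$ (Claim~\ref{c:grid}(i)), so $\M,u_{m+1}\models\state\land\Dv(\diag\land\Dh\state_{q_m})$. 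Reading $u_{m+1}$ through \eqref{executebwdec} then yields some $\langle\iota,q'\rangle\in I_{q_m}$ with $\M,u_{m+1}\models\insbw\land\state_{q'}$, fixing $q_{m+1}=q'$, and \eqref{instbw} gives $\M,u_{m+1}\models\exeibw$. Inspecting the three shapes of $\exeibw$ via Claim~\ref{c:counter} delivers exactly the counter updates demanded by $\stepi$: increment/decrement of $c_i$ with all other counters fixed (and, in the $\fdeci$ case, $c_i(m)\geq 1$ because the witness of $\Ev(\neg\cou\land\allc)$ in $\fdecbw$ forces the $\allc$-count to be positive), or, for $\ftest$, $c_i(m)=c_i(m+1)=0$ from $\Bv^+\neg\cou$ together with $\ffixbw$. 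Hence $\sigma_m\stepi\sigma_{m+1}$ in every case.

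The step I expect to be the main obstacle is not this local bookkeeping but certifying that the universally prefixed formulas of $\fmmbwdec$ actually \emph{reach} the points they must constrain. In an abstract frame for $[\Kfourt,\Diff]$ the columns $\C_m$ may contain many non-grid points, and a prefix $\Bv\Bh$ (resp.\ $\Bv^+\Bh$) evaluated at $r$ only governs a point $x$ that has an $\Rh$-predecessor lying one $\Rv$-step from $r$. The remedy is to invoke commutativity (and pseudo-transitivity of $\Rv$) to produce such witnesses: for $x\in\C_0$ one applies commutativity to $r\Rh u_0\Rv x$, and for $u_{m+1}$ one uses $r\Rv y_{m+1}\Rh u_{m+1}$ together with the same trick along $y_m\Rh v_m\Rv u_{m+1}$ to place $u_{m+1}$ in the scope of $\Bv\Bh$. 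Once this reachability is secured, Claims~\ref{c:halfgrid} and \ref{c:seesd} guarantee that the counting underlying Claim~\ref{c:counting} is insensitive to the surplus column points, and the remaining verifications are routine consequences of Claims~\ref{c:grid}--\ref{c:counter}.
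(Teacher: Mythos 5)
Your proof follows the paper's argument for Lemma~\ref{l:runbw} essentially verbatim: the same induction on $m$, with $c_i(0)=0$ obtained from $\couf$ and Claim~\ref{c:seesd}, the antecedent of \eqref{executebwdec} verified at $u_{m+1}$ via Claim~\ref{c:grid}, the instruction extracted through \eqref{gridunique}, \eqref{executebwdec} and \eqref{instbw}, and the counter bookkeeping delegated to Claim~\ref{c:counter}. Your additional observations (positivity of $c_i(m)$ in the decrement case, the zero-test via $\Bv^+\neg\cou$, and the commutativity/pseudo-transitivity argument that the $\Bv\Bh$- and $\Bv^+\Bh$-prefixed conjuncts of $\fmmbwdec$ actually reach $u_{m+1}$ and the points of $\C_m$) are details the paper leaves implicit rather than a different route, and they make the argument more, not less, careful.
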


\begin{proof}
The sequence $\langle q_m:m<\omega\rangle$ is well-defined by Claim~\ref{c:grid}(iii) and \eqref{gridunique}.
We show by induction on $m$ that for all $m<\omega$, $\auf \sigma_0,\dots,\sigma_m\zu$ is a run of $M$ starting with all-0 counters. Indeed, $c_i(0)=0$ for $i<N$ by \eqref{allf} and Claim~\ref{c:seesd}.
Now suppose the statement holds for some $m<\omega$. 
By Claim~\ref{c:grid}, 
$\M,u_{m+1}\models\state\land \Dv( \diag\land \Dh\state_{q_m})$. 
By \eqref{gridunique} we have $q_m\in Q-H$, and
so by \eqref{executebwdec} there is 
$\langle\iota,q_{m+1}\rangle\in I_{q_m}$ such that $\M,u_{m+1}\models\insbw\land\state_{q_{m+1}}$. Therefore,
so $\M,u_{m+1}\models\exeibw$ by \eqref{instbw}. It follows from Claim~\ref{c:counter} that 
$\sigma_m\stepi\sigma_{m+1}$ as required.
\end{proof}

On the other hand,  suppose that $M$ has an infinite run $\auf \sigma_n: n<\omega\zu$ starting with all-0 counters
such that $\sigma_n=\langle q_n,{\bf c}_n\rangle$ and $\sigma_n\stepin\sigma_{n+1}$, for $n<\omega$. We define a model 
\[
\Minf=\bigl\auf\auf\omega + 1, >\zu\mprod\auf\omega,\ne\zu,\mu\bigr\zu
\]
 as follows.
For all $q\in Q$ and $\iota\in\textit{Op}_C$, we let
\begin{align}
\label{minftyfirst}
\mu(\state)& :=\{\auf n,n\zu : n<\omega\},\\
\mu(\state_q)& :=\{\auf n,n\zu : n<\omega,\ q_n=q\},\\
\mu(\diag)& :=\{\auf n+1,n\zu : n<\omega\},\\
\mu(\insbw) & :=\{\auf n,n\zu : n<\omega,\ \iota=\iota_n\}.
\end{align}
Further, for all $i<N$, $n<\omega$, we define inductively the sets $\mu_n(\cou)$.
We let $\mu_0(\cou):=\emptyset$, and
\begin{equation}\label{minfbw2}
\mu_{n+1}(\cou):=\left\{
\begin{array}{ll}
\mu_n(\cou)\cup\{n\}, & \mbox{ if $\iota_n=\finci$},\\
\mu_n(\cou)-\{\textit{min}\bigl(\mu_n(\cou)\bigr)\}, & \mbox{ if $\iota_n=\fdeci$},\\
\mu_n(\cou), & \mbox{ otherwise}.
\end{array}
\right.
\end{equation}
Finally, for each $i<N$, we let
\begin{equation}\label{minftylast}
\mu(\cou):=\{\auf m,n\zu : m<\omega,\ n\in\mu_m(\cou)\}.
\end{equation}
It is straightforward to check that
$
\Minf,\auf\omega,0\zu\models \diagbw\land\fmmbwdec
$,
showing that CM non-termination can be reduced to $L$-satisfiability. This completes the proof of Theorem~\ref{t:mainirr}.

%**********

\subsection{Modally discrete \preorders with infinite descending chains}\label{discrete}

In some cases, we can have stronger lower bounds than in Theorem~\ref{t:mainirr}.
We call a frame $\auf W,R\zu$ \emph{modally discrete} if it satisfies the following aspect of discreteness:
there are no points $x_0,x_1,\dots,x_n,\dots,x_\infty$ in $W$ such that
$x_0 R x_1 R x_2 R \dots Rx_nR\dots $, 
$x_i \ne x_{i+1}$, $x_iRx_\infty$ and $x_\infty \neg R x_i$,  for
all $i < \omega$.
We denote by $\Dis$ the logic  of all modally discrete \preorders\!\!. 
Several well-known `linear' modal logics are extensions of
$\Dis$, for example, $\Log\auf\omega,<\zu$ 
and ${\bf GL.3}$ (the logic of all Noetherian%
\footnote{$\auf W,R\zu$ is
\emph{Noetherian} if it contains no infinite ascending chains $x_0
R x_1 R x_2 R \dots$ where $x_i \ne x_{i+1}$.}
irreflexive linear orders).
Unlike `real' discreteness, modal discreteness can be captured by modal formulas, and each
of these logics is finitely axiomatisable \cite{Segerberg70,Fine85}. Also,
note that for $L\in\{\Dis,\mbox{$\Log\auf\omega,<\zu$,}{\bf GL.3}\}$, either $\auf\omega+1,>\zu$ or 
$\auf\{\infty\} \cup \mathbb Z,>\zu$ is
a frame for $L$ (here $\mathbb Z$ denotes the set of all integers).

\begin{theorem}\label{t:disc}
Let $\mathcal{C}$ be any class of frames for $[\Dis,\Diff]$ such that 
either $\auf\omega + 1, >\zu\mprod\auf\omega,\ne\zu$ or
$\auf\{\infty\} \cup \mathbb Z,>\zu\mprod\auf\omega,\ne\zu$
belongs to $\mathcal{C}$.
Then $\CC$-satisfiability is $\Sigma_1^1$-hard.
\end{theorem}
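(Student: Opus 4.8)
The plan is to reduce the $\Sigma_1^1$-hard {\sc CM recurrence} problem to $\CC$-satisfiability, re-using almost all of the backward encoding developed for Theorem~\ref{t:mainirr}. Given a counter machine $M$ with distinguished states $q_0,q_r$, I would keep the diagonal-staircase formula $\diagbw$, the counting apparatus $\couf$, and the run-encoding formula $\fmmbwdec$ (started at $q_0$), so that, by the analogue of Lemma~\ref{l:runbw}, every model forces an infinite staircase $u_0,u_1,\dots$ together with a well-defined infinite run $\langle\sigma_m:m<\omega\rangle$ of $M$, with $\sigma_m$ read off at $u_m$. The only genuinely new ingredient is a formula $\psi_{\mathrm{rec}}$ forcing this run to visit $q_r$ infinitely often. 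Conversely, from a recurrent run I would build the model $\Minf$ (or, over $\auf\{\infty\}\cup\mathbb Z,>\zu$, its evident integer analogue) exactly as in the proof of Theorem~\ref{t:mainirr} and extend the valuation so that $\psi_{\mathrm{rec}}$ holds.

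Two points need adaptation. First, the staircase and half-grid claims (Claims~\ref{c:grid}--\ref{c:counter}) were proved through the horizontal-rank function $\hr$, which is available only when ranks are finite, as over $\auf\omega+1,>\zu$. For a class $\CC$ containing instead $\auf\{\infty\}\cup\mathbb Z,>\zu\mprod\auf\omega,\ne\zu$, every point of the integer part has infinite rank, so I would reprove these claims replacing each `$\hr(x)=m$' step by an appeal to modal discreteness of $\Rh$ --- legitimate since every member of $\CC$ is a frame for $\Dis$. Discreteness supplies the immediate-successor structure and the pairwise distinctness of the generated points $u_m,v_m,x_{m,n}$ that $\hr$ previously provided. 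Second, and more importantly, I would exploit that in an $\omega$-indexed run `visiting $q_r$ infinitely often' is the same as `visiting $q_r$ at infinitely many distinct ranks'. Since the whole purpose of $\diagbw$ is that its infinity-forcing shape generates \emph{distinct} staircase points at strictly increasing ranks, I would force a \emph{second} staircase, built by the very same mechanism, whose points are constrained to lie on the main state points carrying $\state_{q_r}$; any model then contains infinitely many distinct $q_r$-ranks, i.e.\ the encoded run is recurrent.

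Concretely, I would introduce fresh variables $\state_r,\diag_r$ and impose for them backward-staircase axioms of the same shape as \eqref{initfbw}--\eqref{suni}, together with linking axioms forcing $\state_r\to\state\land\state_{q_r}$ and forcing the $\diag_r$-staircase to climb strictly through $q_r$-ranks of the already forced run. In the soundness direction I would run the generation argument of Claim~\ref{c:grid} for this secondary staircase: modal discreteness yields infinitely many pairwise distinct $\state_r$-points at strictly increasing ranks, and, since distinct main state points occupy distinct columns by \eqref{suni}, this gives infinitely many distinct ranks $m$ with $\M,u_m\models\state_{q_r}$. For completeness I would simply place the secondary staircase on the (infinitely many) $q_r$-visits of the given recurrent run inside $\Minf$, where the linkage axioms are immediate.

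The main obstacle I expect is precisely the secondary-staircase linkage together with the soundness use of modal discreteness. One must force the auxiliary $q_r$-staircase to climb through \emph{genuine}, pairwise distinct $q_r$-ranks of the \emph{already forced} run, while only lower-rank points are $\Rh$-visible from higher ones and never conversely, so that the natural `$\Bh\Dh$ eventually $q_r$' idiom cannot be pointed in the needed upward direction. And one must rule out the degenerate valuation in which the auxiliary markers are satisfied at a spurious limit point rather than at cofinally many real ranks --- this exclusion is exactly what modal discreteness buys, and verifying that it still does so over the abstract, not-necessarily-product frames of $\CC$ (where commutativity only guarantees the half-grid of Claim~\ref{c:halfgrid}) is the delicate step.
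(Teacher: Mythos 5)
Your high-level plan --- reduce {\sc CM recurrence}, keep $\diagbw$, $\couf$ and $\fmmbwdec$, and read the run off the backward staircase via the analogue of Lemma~\ref{l:runbw} --- coincides with the paper's. But the single genuinely new ingredient this theorem requires, a formula forcing the encoded run to visit $q_r$ infinitely often, is exactly where your proposal has a gap, and the second-staircase mechanism you sketch fails in the form stated. If $\state_r,\diag_r$ obey axioms ``of the same shape as \eqref{initfbw}--\eqref{suni}'', then the analogue of Claim~\ref{c:grid} produces secondary points $u^r_m$ of horizontal rank $m$ for \emph{every} $m<\omega$: conjunct \eqref{sgen} is an immediate-successor constraint, so a staircase of this shape necessarily occupies all ranks consecutively and cannot ``climb strictly through the $q_r$-ranks'' while skipping the others. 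Worse, your linking axiom $\state_r\to\state\land\state_{q_r}$ collapses the two staircases: $u_m$ and $u^r_m$ are $\state$-points of equal rank generated over the same root, so the commutativity-and-rank arguments of Claims~\ref{c:grid}--\ref{c:halfgrid} place $u^r_m$ in the column $\C_m$, and then \eqref{suni} forces $u^r_m=u_m$. Hence your formula asserts that \emph{every} state of the encoded run is $q_r$, and the reduction breaks in the completeness direction: a machine whose recurrent runs must pass through states other than $q_r$ (the typical case, e.g.\ when every instruction from $q_r$ leads to a different state) is a positive instance of {\sc CM recurrence} mapped to an unsatisfiable formula. You yourself flag the ``secondary-staircase linkage'' as the main obstacle, but that obstacle is precisely the mathematical content of the theorem, and no mechanism for overcoming it is proposed.

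The paper's device is different and involves no second staircase. It adds a \emph{marker} variable $\recx$: by \eqref{erec} every column $\C_m$ contains an $\recx$-point $w^\ast$; by \eqref{upd} no $\state$-point lies $\Rh$-after an $\recx$-point; and by \eqref{rtos} any $\state$-point that $\Rh$-sees an $\recx$-point must carry $\state_{q_r}$. The crux (Claim~\ref{c:recbw}) is that $w^\ast\in\C_m$ \emph{is} $\Rh$-seen by some staircase point $u_k$ with $k>m$, and this is the only place modal discreteness is used: assuming no $u_k$ sees $w^\ast$, axioms \eqref{dgenr} and \eqref{dgenrdiff} (a parity trick with $\predq$) yield an infinite $\Rh$-ascending chain of pairwise distinct points, all of which $\Rh$-see $u_0$ while $u_0$ sees none of them, contradicting modal discreteness of $\Rh$. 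This is how ``the run will visit $q_r$ again'' gets expressed with only backward-looking modalities --- the very idiom you correctly observe is otherwise unavailable. Two further corrections to your adaptation remarks. First, modal discreteness plays no role in generating the staircase or half-grid: Claims~\ref{c:grid}--\ref{c:halfgrid} hold in arbitrary frames for $[\Kfourt,\Diff]$ and need no reproving for the $\auf\{\infty\}\cup\mathbb Z,>\zu$ case, since they are used only in the soundness direction, where $\hr$ is perfectly well defined and the staircase points receive finite ranks from the $\Bh\bot$-anchor of \eqref{initfbw}. Second, the concrete product frames enter only in the completeness direction, and there the genuine $\mathbb Z$-case subtlety is one your ``evident integer analogue'' of $\Minf$ inherits rather than addresses: in $\auf\{\infty\}\cup\mathbb Z,>\zu\mprod\auf\omega,\ne\zu$ every point has an $\Rh$-successor, so $\Bh\bot$, and with it conjunct \eqref{initfbw} of $\diagbw$, is unsatisfiable there as written; what needs adjusting in that case is the anchor of the staircase, not the rank-based claims.
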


\begin{corollary}\label{co:pioneone}
Let $L_1$ be any logic from the list
\[
\Log\auf\omega,<\zu,\ {\bf GL.3},\ \Dis.
\]
Then, for any Kripke complete bimodal logic $L$ in the interval
\[
[L_1,\Diff]\ \subseteq\  L\ \subseteq\  L_1\mprod \Diff,
\]
$L$-satisfiability is $\Sigma_1^1$-hard. 
\end{corollary}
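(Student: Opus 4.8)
The plan is to derive the corollary directly from Theorem~\ref{t:disc} by taking $\CC:=\Fr L$ and verifying that this class meets the two hypotheses of that theorem. Since each of the three candidate logics $\Log\auf\omega,<\zu$, $\GLt$, $\Dis$ is an extension of $\Dis$, we have $\Dis\subseteq L_1$ in every case, and I would first record this together with the monotonicity of the commutator construction in its first argument, which gives $[\Dis,\Diff]\subseteq[L_1,\Diff]$. Combined with the left inclusion $[L_1,\Diff]\subseteq L$ of the hypothesised interval, this yields $[\Dis,\Diff]\subseteq L$ as sets of formulas, whence every frame validating $L$ also validates $[\Dis,\Diff]$; that is, $\Fr L\subseteq\Fr[\Dis,\Diff]$. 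This establishes the first hypothesis of Theorem~\ref{t:disc}: $\CC$ is a class of frames for $[\Dis,\Diff]$.

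For the second hypothesis I would use the right inclusion $L\subseteq L_1\mprod\Diff$ together with the observation (recorded just before Theorem~\ref{t:disc}) that, depending on which logic $L_1$ is, at least one of $\auf\omega+1,>\zu$ and $\auf\{\infty\}\cup\mathbb Z,>\zu$ is a frame for $L_1$. Fix such a frame, call it $\F_0$, and note that $\auf\omega,\ne\zu\in\Fr\Diff$. By definition $L_1\mprod\Diff=\Log(\Fr L_1\mprod\Fr\Diff)$, so the product frame $\F_0\mprod\auf\omega,\ne\zu$, being a member of $\Fr L_1\mprod\Fr\Diff$, validates $L_1\mprod\Diff$ and hence, by $L\subseteq L_1\mprod\Diff$, validates $L$. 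Thus this product frame lies in $\CC=\Fr L$, and it is exactly one of the two frames named in Theorem~\ref{t:disc}; so the second hypothesis holds as well.

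With both hypotheses verified, Theorem~\ref{t:disc} gives that $\CC$-satisfiability is $\Sigma_1^1$-hard. Finally, since $L$ is assumed Kripke complete we have $L=\Log\,\Fr L=\Log\,\CC$, and so, as noted in the discussion of $\CC$-satisfiability in Section~\ref{bipod}, $L$-satisfiability coincides with $\CC$-satisfiability; hence $L$-satisfiability is $\Sigma_1^1$-hard.

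The derivation is short, and the one point to handle with care is that the two bounds of the sandwich $[L_1,\Diff]\subseteq L\subseteq L_1\mprod\Diff$ are used in genuinely opposite directions: the lower bound shrinks $\Fr L$ enough to keep all its members frames for $[\Dis,\Diff]$, while the upper bound is what forces a concrete `linear$\mprod$difference' product frame into $\Fr L$. Keeping straight that a larger logic has a smaller class of frames --- so that $[\Dis,\Diff]\subseteq L$ gives $\Fr L\subseteq\Fr[\Dis,\Diff]$ and \emph{not} the reverse --- is the only place where an inclusion could accidentally be reversed.
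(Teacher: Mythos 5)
Your proposal is correct and matches the paper's (implicit) derivation: the corollary is stated as an immediate consequence of Theorem~\ref{t:disc}, and the intended instantiation is exactly yours, namely $\CC:=\Fr L$, with $\Dis\subseteq L_1$ and monotonicity of the commutator giving $\Fr L\subseteq\Fr[\Dis,\Diff]$, the upper bound $L\subseteq L_1\mprod\Diff$ forcing one of the two named product frames into $\Fr L$, and Kripke completeness identifying $L$-satisfiability with $\CC$-satisfiability. Your closing remark about the two bounds of the interval working in opposite directions is precisely the right point to be careful about, and you handled it correctly.
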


We also obtain the following interesting corollary. As $\commut$-satisfiability is clearly co-r.e whenever
 both $L_0$ and $L_1$ are finitely axiomatisable,
Corollary~\ref{co:pioneone} 
yields new examples of \emph{Kripke incomplete} commutators of Kripke complete and
finitely axiomatisable logics:

\begin{corollary}\label{co:notcomplete}
Let $L_1$ be like in Corollary~\ref{co:pioneone}. Then
the commutator $[L_1,\Diff]$ is Kripke incomplete.
\end{corollary}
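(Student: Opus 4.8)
The plan is to derive a contradiction between two incompatible complexity bounds for the same logic: a co-r.e.\ upper bound coming from finite axiomatisability, and the $\Sigma_1^1$ lower bound supplied by Corollary~\ref{co:pioneone} under the assumption of Kripke completeness.

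First I would observe that $[L_1,\Diff]$ is finitely axiomatisable for each $L_1$ in the list. Each of $\Log\auf\omega,<\zu$, $\GLt$, and $\Dis$ is finitely axiomatisable \cite{Segerberg70,Fine85}, and $\Diff$ is finitely axiomatisable by Segerberg \cite{Segerberg80}; the fusion of two finitely axiomatisable logics is finitely axiomatisable (take the union of the axiom sets), and the commutator adds only the three further formulas in \eqref{commut}. Hence $[L_1,\Diff]$ has a finite axiomatisation, so its set of theorems is recursively enumerable.

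Next I would record the fact already noted before the statement: for any recursively axiomatised normal bimodal logic $L$, the problem of $L$-satisfiability is co-r.e. Indeed, by the canonical-model construction a formula $\phi$ is satisfiable in some model $\M\models L$ iff $\phi$ is $L$-consistent, i.e.\ iff $\neg\phi\notin L$; since $L$ is r.e.\ and $\phi\mapsto\neg\phi$ is computable, the set of $L$-satisfiable formulas is co-r.e. Applying this to $L=[L_1,\Diff]$ yields that $[L_1,\Diff]$-satisfiability is co-r.e.

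Finally I would argue by contradiction. Suppose $[L_1,\Diff]$ were Kripke complete. Since $[L_1,\Diff]\subseteq[L_1,\Diff]\subseteq L_1\mprod\Diff$, where the right inclusion holds for every commutator, the logic $L=[L_1,\Diff]$ lies in the interval of Corollary~\ref{co:pioneone} and is, by assumption, Kripke complete; the corollary then gives that $[L_1,\Diff]$-satisfiability is $\Sigma_1^1$-hard. But a $\Sigma_1^1$-hard problem cannot be co-r.e.: co-r.e.\ sets are exactly $\Pi_1^0$, and $\Pi_1^0$ is closed under computable many-one reductions, so $\Sigma_1^1$-hardness would force $\Sigma_1^1\subseteq\Pi_1^0$, which is false. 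This contradicts the preceding paragraph, so $[L_1,\Diff]$ is Kripke incomplete. The argument is routine once Corollary~\ref{co:pioneone} is available; the only points needing care are the equivalence between $L$-satisfiability and $L$-consistency (so that finite axiomatisability genuinely yields a co-r.e.\ satisfiability problem) and the observation that Kripke completeness is precisely the hypothesis required to invoke Corollary~\ref{co:pioneone} for the logic $[L_1,\Diff]$ itself.
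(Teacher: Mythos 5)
Your proposal is correct and is essentially the paper's own argument: the paper notes that $\commut$-satisfiability is co-r.e.\ whenever both components are finitely axiomatisable (via $L$-consistency, exactly as you spell out), and then observes that if $[L_1,\Diff]$ were Kripke complete, Corollary~\ref{co:pioneone} would make its satisfiability problem $\Sigma_1^1$-hard, contradicting the co-r.e.\ bound. You have merely filled in the routine details that the paper leaves implicit.
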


Note that it is not known whether any of the commutators $[L_1,\Sfive]$ is decidable or Kripke complete, whenever $L_1$
is one of the logics in Corollary~\ref{co:pioneone}.

\bigskip
We prove Theorem~\ref{t:disc}
by reducing the `CM recurrence' problem to $\CC$-satisfiability.
Let $\M$ be a model over some $2$-frame $\F=\auf W,\Rh,\Rv\zu$ in $\CC$. 
As $\Dis\supseteq\Kfourt$, $\F$ is a frame for $\lindiffcomm$. So by \eqref{commcomp} we may assume that $\Rh$ is a modally discrete \preorder\!\!, $\Rv$ is symmetric and pseudo-transitive, and $\Rh$, $\Rv$ commute. 
We will encode counter machine runs in $\M$ `going backward', like we did in the proof of 
Theorem~\ref{t:mainirr},
with the help of the formulas $\diagbw$ and $\fmmbwdec$. This time we use
some additional machinery ensuring recurrence. 
To this end, we introduce two fresh propositional variables $\recx$ and $\predq$, and 
define the formula $\frecbw$ as the conjunction of 
the following formulas:
\begin{align}
\label{erec}
& \Bv^+\Bh(\state\to\Dv\recx),\\
\label{upd}
& \Bv^+\Bh(\recx\to\Bh\neg\state),\\
\label{dgenr}
& \Bh(\Dv\state\to\Dv\diag),\\
\label{dgenrdiff}
& \Bv\Bh\Bigl[\state\to\Bigl(\predq\leftrightarrow\Bv\bigl(\diag\to\Bh(\state\to\neg\predq)\bigr)\Bigr)\Bigr],\\
\label{rtos}
& \Bv^+\Bh(\state\land\Dh\recx\to\state_{q_r}),
\end{align}
where $q_r$ is the state of counter machine $M$ we will force to recur.
In the following claim and its proof we use the notation introduced in Claims~\ref{c:grid}--\ref{c:halfgrid}:

\begin{claim}\label{c:recbw}
Suppose that  $\M,r\models\diagbw\land\frecbw$. 
Then there are infinitely many $m$ such that $\M,u_m\models\state_{q_r}$.
\end{claim}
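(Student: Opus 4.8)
The plan is to run the whole argument on the staircase and half-grid already extracted from $\diagbw$ in Claims~\ref{c:grid}--\ref{c:seesd}, and to show that the extra conjuncts of $\frecbw$ force a $\state_{q_r}$-state at arbitrarily high horizontal rank. First I would record two facts about the diagonal states $u_m$. By \eqref{suni} each $u_m$ (which lies in the scope of $\Bv^+\Bh$ via $r=y_m$ or $r\Rv y_m$ and $y_m\Rh v_m\Rh u_m$) satisfies $\Bv\neg\state$, so no $u_m$ can be $\Rv$-reflexive, for otherwise $\M,u_m\models\neg\state$, contradicting Claim~\ref{c:grid}(iii). Hence, applying \eqref{erec} at $u_m$, the witness $w_m$ with $u_m\Rv w_m$ and $\M,w_m\models\recx$ is genuinely distinct from $u_m$, and $w_m\in\C_m$. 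Next I would locate $w_m$ inside its column: by \eqref{upd} we have $\M,w_m\models\Bh\neg\state$, while Claims~\ref{c:halfgrid}(iii) and~\ref{c:grid}(iii) say that the only points of $\C_m$ seeing a diagonal state $u_n$ horizontally are the proper grid points $x_{m,n}$ with $n<m$. Thus $w_m$ is neither $u_m$ nor any $x_{m,n}$, so it is a `strictly higher' point of the column.

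The second step is to convert each such $w_m$ into a $q_r$-state. Since $\hr(x)=m$ for every $x\in\C_m$ by Claim~\ref{c:rank}, any state $u_b$ with $u_b\Rh w_m$ must have rank $b>m$, because horizontal rank strictly drops along an $\Rh$-step ($\hr(u_b)\geq 1+\hr(w_m)=m+1$). Once I know that some state does see $w_m$ horizontally, \eqref{rtos} immediately gives $\M,u_b\models\state_{q_r}$ with $b>m$. Carrying this out for every $m$ produces, for each bound $K$, a $\state_{q_r}$-state of rank $b>K$ (take $m=K$), so there are infinitely many $m$ with $\M,u_m\models\state_{q_r}$, which is exactly the claim. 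Note that I do not need the witnessing rows to be injective in $m$: unboundedness of the ranks already yields infinitely many distinct $q_r$-states.

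The hard part will be the pinning step: showing that the off-grid witness $w_m$ is actually seen horizontally by some diagonal state. In a product frame $\auf\omega+1,>\zu\mprod\auf\omega,\ne\zu$ this is automatic, since there $w_m=\auf m,c\zu$ with $c>m$ and then $u_c\Rh w_m$; but in an abstract frame for $[\Dis,\Diff]$ the column $\C_m$ may contain `junk' points lying on no genuine grid line, and there is no next-time operator available to reach them. This is precisely the control-over-neighbouring-points difficulty flagged in the introduction. Here I would lean on the remaining conjuncts of $\frecbw$ together with the modal discreteness of $\Rh$: \eqref{dgenr} forces a $\diag$-point (hence, via Claim~\ref{c:seesd}, a genuine grid point) above $w_m$, and commutativity of $\Rh$ and $\Rv$ then transports it to a diagonal state of the required higher rank, while the $\predq$-toggle \eqref{dgenrdiff} (which forces $\predq$ to alternate along the diagonal, as one checks from \eqref{sgenbw} and \eqref{sgen}) prevents this generation from collapsing onto an already-used state. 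Modal discreteness rules out the witnesses drifting off to an $\Rh$-limit, guaranteeing that the chain of forced $q_r$-states is genuinely cofinal in the rank. Verifying this pinning rigorously in the abstract setting is where the bulk of the work lies.
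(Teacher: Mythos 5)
Your first two paragraphs are correct and coincide with the paper's own argument: the witness $w_m$ (called $w^\ast$ in the paper) obtained from \eqref{erec}, the observation via \eqref{upd} and Claim~\ref{c:halfgrid}(iii) that it is an off-grid point of $\C_m$, and the concluding step combining Claim~\ref{c:rank} with \eqref{rtos} to extract a $\state_{q_r}$-state of rank $>m$ are exactly as in the paper. However, the entire substance of the paper's proof is the statement you call the ``pinning step'' --- that some diagonal state $u_k$ satisfies $u_k\Rh w_m$ (statement \eqref{recok} in the paper) --- and you do not prove it: you explicitly defer it as ``where the bulk of the work lies.'' So what you have is a correct reduction of the claim to its only hard sub-statement, not a proof of the claim.

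Moreover, the route you sketch for that step is misdirected. The paper proves \eqref{recok} by contradiction: assuming no $u_k$ sees $w^\ast$, it constructs an infinite sequence $x_0\Rh x_1\Rh\cdots$ with $x_n\ne x_{n+1}$, where each $x_n$ is an $\Rh$-successor of $r$, lies in no column $\C_k$, and satisfies $\Dv\state$. The point $x_{n+1}$ is generated from $x_n$ by applying \eqref{dgenr} \emph{at the row point} $x_n$ (which is a genuine $\Rh$-successor of $r$, as demanded by the bare $\Bh$-prefix of \eqref{dgenr}), then \eqref{sgen} and commutativity; the strictness $x_{n+1}\ne x_n$ is precisely what the $\predq$-alternation \eqref{dgenrdiff} delivers; and since each $x_n$ satisfies $\neg\state$ (by \eqref{suni}) while $\hr(u_0)=0$, weak connectedness forces $x_n\Rh u_0$ and $u_0\,\neg\Rh\, x_n$ for all $n$, which is exactly the configuration forbidden by modal discreteness. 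Your sketch instead applies \eqref{dgenr} ``above $w_m$'': but \eqref{dgenr} is prefixed only by $\Bh$, so it constrains only $\Rh$-successors of $r$, and column points such as $w_m$ are in general reached by an $\Rv$-step followed by an $\Rh$-step, not by an $\Rh$-step alone; and your appeal to Claim~\ref{c:seesd} to identify the resulting $\diag$-point as a ``genuine grid point'' is also off, since that claim only concerns $\diag$-points lying \emph{inside} the columns, whereas the $\diag$-points arising in this argument are (under the contradiction hypothesis) precisely off-column. So the missing step is not a routine verification along the lines you indicate; it requires the by-contradiction construction with modal discreteness described above.
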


\begin{proof}
We show that for every $m<\omega$ there  is $k_m> m$ with $\M,u_{k_m}\models\state_{q_r}$.
Fix any $m<\omega$. By Claim~\ref{c:grid}(iii) and \eqref{erec}, there is
$w^\ast$ such that $u_m\Rv w^\ast$ and $\M,w^\ast\models\recx$. We claim that
\begin{equation}\label{recok}
\mbox{there is $k<\omega$ such that $u_{k}\Rh w^\ast$.}
\end{equation}
Indeed, suppose for contradiction that \eqref{recok} does not hold. 
We define by induction a sequence $\auf x_n : n<\omega\zu$ of points such that, for all $n<\omega$,
\begin{align}
\label{rootsees}
& r\Rh x_n,\\
\label{sqone}
& x_n\notin\C_k\mbox{ for any $k<\omega$},\\
\label{sqmid}
& \M,x_n\models\Dv\state,\\
\label{sqlast}
& \mbox{if $n>0$ then $x_{n-1}\Rh x_n$ and $x_{n-1}\ne x_n$}.
\end{align}
To begin with, by commutativity of $\Rh$ and $\Rv$, we have some $y$ with $r\Rv y\Rh w^\ast$.
So by \eqref{dgenbw}, there is $b_0$ such that $y\Rh b_0$ and $\M,b_0\models\diag$.
By \eqref{sgen}, there is $a_0$ such that $b_0\Rh a_0$, there is no $b$ with $b_0\Rh b\Rh a_0$ and
$\M,a_0\models\state$. By commutativity, there is $x_0$ such that $r\Rh x_0\Rv a_0$, and so
 $\M,x_0\models\Dv\state$.
By Claim~\ref{c:grid}(iii), \eqref{suni}, \eqref{upd} and the weak connectedness 
of $\Rh$, we have $a_0\Rh w^\ast$. Therefore, $a_0\ne u_k$ for any $k<\omega$
by our indirect assumption, and so $a_0\notin\C_k$ for any $k<\omega$ by \eqref{suni}.
As $x_0\Rv a_0$, it follows that $x_0\notin\C_k$ for any $k<\omega$.

Now suppose inductively that we have $\auf x_i : i\leq n\zu$ satisfying \eqref{rootsees}--\eqref{sqlast}
for some $n<\omega$. 
By \eqref{sqmid} of the IH and \eqref{dgenr}, there is $b_{n+1}$ such that $x_n\Rv b_{n+1}$ and
$\M,b_{n+1}\models\diag$. 
By \eqref{sgen}, there is $a_{n+1}$ such that $b_{n+1}\Rh a_{n+1}$, there is no $b$ with $b_{n+1}\Rh b\Rh a_{n+1}$ and $\M,a_{n+1}\models\state$. By commutativity, there is $x_{n+1}$ such that 
$x_n\Rh x_{n+1}\Rv a_{n+1}$, and so $r\Rh x_{n+1}$ and $\M,x_{n+1}\models\Dv\state$.
We claim that
\begin{equation}\label{xdiff}
x_{n+1}\ne x_n.
\end{equation}
Suppose for contradiction that $x_{n+1}= x_n$. Let $a_n$ be such that $x_n\Rv a_n$ and 
$\M,a_n\models\state$. Then $a_{n}=a_{n+1}$ follows by \eqref{suni}. However, by
\eqref{sgen}, \eqref{suni} and \eqref{dgenrdiff} we obtain that $a_{n}\ne a_{n+1}$. So we
have a contradiction, proving \eqref{xdiff}.
Finally, we claim that
\begin{equation}\label{aok}
x_{n+1}\notin\C_k\mbox{ for any $k<\omega$}.
\end{equation}
Suppose not, that is, $x_{n+1}\in\C_k$ for some
$k<\omega$. 
As $x_{n+1}\Rv a_{n+1}$, we also have that $a_{n+1}\in\C_k$. 
Then $\hr(b_{n+1})=k+1$, by the weak connectedness of $\Rh$ and Claim~\ref{c:rank},
and so $\hr(x_{n})=k+1$ by $x_n\Rv b_{n+1}$ and commutativity.
Take the grid-point $x_{k+1,0}\in\C_{k+1}$ defined in Claim~\ref{c:halfgrid}. As $\hr(x_{k+1,0})=k+1$ by Claim~\ref{c:rank},
we have $x_{k+1,0}=x_n$ by the weak connectedness of $\Rh$. But this contradicts \eqref{sqone} of the IH,
proving \eqref{aok}.

So we have defined $\auf x_n : n<\omega\zu$ satisfying \eqref{rootsees}--\eqref{sqlast}.
As $\hr(u_0)=0$ by Claim~\ref{c:grid}(iii), and $\M,x_n\models\neg\state$ by \eqref{suni} and \eqref{sqmid},
 by the weak connectedness of $\Rh$ we obtain that $x_n\Rh u_0$ for every $n<\omega$. This contradicts the modal discreteness of $\Rh$, and so proves \eqref{recok}.

Now let $k_m$ be such that $u_{k_m}\Rh w^\ast$. As $w^\ast\in\C_m$,
$k_m> m$ follows from Claim~\ref{c:rank}.
By Claim~\ref{c:grid}(iii) and \eqref{rtos}, we have $\M,u_{k_m}\models\state_{q_r}$ as required.
\end{proof}

Now the following lemma is a straightforward consequence of Lemma~\ref{l:runbw} and Claim~\ref{c:recbw}:

\begin{lemma}\label{l:runbwrec}
Suppose that $\M,r\models\diagbw\land\fmmbwdec\land\frecbw$, and
for all $m<\omega$, $i<N$, let 
\[
 q_m  :=q, \mbox{ if }\M,u_m\models\state_{q},\quad
 c_i(m)  :=|\{x\in\C_m: \M,x\models\cou\}|,\quad
 \sigma_m  :=\langle q_m,{\bf c}(m)\rangle.
 \]
Then $\langle\sigma_m:m<\omega\rangle$ is a well-defined run
of $M$ starting with all-0 counters and visiting $q_r$ infinitely often.
\end{lemma}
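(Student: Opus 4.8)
The plan is to read off the two halves of the conclusion from the two earlier results, observing that the hypothesis $\M,r\models\diagbw\land\fmmbwdec\land\frecbw$ contains, as conjuncts, exactly the hypotheses of Lemma~\ref{l:runbw} and of Claim~\ref{c:recbw}. Since both of those results are stated relative to the \emph{same} diagonal staircase $\auf u_m : m<\omega\zu$ forced by $\diagbw$ (via Claim~\ref{c:grid}), the grid points $u_m$ and the induced data $q_m$, $c_i(m)$ are the same object in both applications, so no reconciliation between the two is needed and the statement follows by simply putting them together.

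First I would invoke Lemma~\ref{l:runbw}. Its hypothesis $\M,r\models\diagbw\land\fmmbwdec$ is implied by ours (the extra conjunct $\frecbw$ plays no role here), so it yields directly that $\auf\sigma_m : m<\omega\zu$ is a well-defined infinite run of $M$ starting with all-0 counters. In particular, the sequence $\auf q_m : m<\omega\zu$ is well-defined: by Claim~\ref{c:grid}(iii) each $u_m$ satisfies $\state$, and by \eqref{gridunique} there is a \emph{unique} state $q$ with $\M,u_m\models\state_q$, namely $q_m$. This already establishes the first part of the conclusion.

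It then remains only to establish recurrence of $q_r$. The hypothesis also gives $\M,r\models\diagbw\land\frecbw$, so Claim~\ref{c:recbw} applies and produces infinitely many $m<\omega$ with $\M,u_m\models\state_{q_r}$. By the well-definedness of $\auf q_m\zu$ noted above, for each such $m$ we have $q_m=q_r$, whence $\sigma_m=\auf q_r,{\bf c}(m)\zu$; thus the run $\auf\sigma_m : m<\omega\zu$ visits $q_r$ infinitely often, as required. There is essentially no obstacle here, as the text advertises: the only point worth checking is that $\fmmbwdec$ and $\frecbw$ do not interfere, i.e.\ that the state labelling $q_m$ used to read off the run in Lemma~\ref{l:runbw} is the \emph{same} labelling whose recurrence Claim~\ref{c:recbw} guarantees. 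This holds because both results are phrased in terms of the single staircase $\auf u_m\zu$ and the single valuation of $\M$, so consistency is automatic and the combination is immediate.
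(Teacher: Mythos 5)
Your proposal is correct and matches the paper's intent exactly: the paper gives no separate proof, stating only that the lemma is a straightforward consequence of Lemma~\ref{l:runbw} and Claim~\ref{c:recbw}, which is precisely the combination you carry out. Your additional check that both results refer to the same staircase $\auf u_m : m<\omega\zu$ from Claim~\ref{c:grid}, so that the state labelling whose recurrence is guaranteed is the same one defining the run, is the right (and only) point needing verification.
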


On the other hand, suppose that $M$ has run $\bigl\langle\langle q_n,{\bf c}(n)\rangle : n<\omega\bigl\rangle$
such that ${\bf c}(0)=0$ and $q_{k_n}=q_r$ for an infinite sequence $\langle k_n: n<\omega\rangle$. Clearly, we may assume that $k_n> n$, for $n<\omega$.
By assumption, $\F\in\CC$ for either $\F=\auf\omega + 1, >\zu\mprod\auf\omega,\ne\zu$ or
$\F=\mbox{$\auf\{\infty\} \cup \mathbb Z,>\zu$}\mprod\auf\omega,\ne\zu$.
Then the model $\Minf$ defined in  \eqref{minftyfirst}--\eqref{minftylast} can be regarded as a model based on
$\F$, and we may add
\[
\mu(\predq):=\{\auf n,n\zu : n<\omega,\ \mbox{$n$ is odd}\},\qquad\qquad
\mu(\recx):=\{\auf n,k_n\zu : n<\omega\}.
\]
It is straightforward to check that
$\Minf,\auf\omega,0\zu\models \diagbw\land\fmmbwdec\land\frecbw$.
So by Lemma~\ref{l:runbwrec}, CM recurrence can be reduced to 
\mbox{$\CC$-satisfiability,} proving Theorem~\ref{t:disc}.

%****

\subsection{Decreasing 2-frames based on dense \preorders}\label{dense}

A \preorder $\auf W,R\zu$ is called \emph{dense} if 
$\forall x,y\,(xRy\to\exists z\,xRzRy)$. Well-known examples of dense linear orders are
$\auf\mathbb Q,<\zu$ and $\auf\mathbb R,<\zu$ of the \emph{rationals} and the \emph{reals}, respectively. Neither Theorem~\ref{t:omega} nor Theorem~\ref{t:mainirr} apply if the `horizontal component' of a bimodal logic has only dense frames.
In this section we cover some of these cases.

We say that a frame $\F = \auf W, R\zu$ \emph{contains an\/} $\auf\omega + 1, >\zu$-\emph{type chain\/},
if there are distinct points $x_n$, for $n\leq\omega$, in $W$ such that  $x_nRx_m$ iff $n>m$, 
for all $n,m\leq\omega$, $n\ne m$. 
Observe that this is less than saying that $\F$ has a subframe isomorphic to $\auf\omega + 1, >\zu$,
as for each $n$, $x_n R x_n$ might or might not hold. So $\F$ can be reflexive and/or dense, and still have this property.
We have the following generalisation of Theorem~\ref{t:mainirr} for classes of 
decreasing \mbox{2-frames:}

\begin{theorem}\label{t:main}
Let $\CC$ be any class of \preorders such that $\F\in\CC$ for  some $\F$ containing an $\auf\omega + 1, >\zu$-type chain. Then $\CC\dprod\cdiff$-satisfiability is undecidable.
\end{theorem}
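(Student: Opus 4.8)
The plan is to reduce the ($\Pi_1^0$-hard) CM non-termination problem to $\CC\dprod\cdiff$-satisfiability, re-using the \emph{backward} encoding behind Theorem~\ref{t:mainirr} but adapting it to cope with two new phenomena. First, since the horizontal component of a frame in $\CC$ may be dense (and even reflexive), there is no `immediate $\Rh$-successor point', so the diagonal staircase cannot be forced point-by-point as in \eqref{sgen}, \eqref{suni}, and the horizontal rank $\hr$ that indexed columns in Section~\ref{product} becomes useless (density forces $\hr=\omega$ everywhere). Second, a decreasing $2$-frame validates commutativity ($\Bv\Bh\pred\to\Bh\Bv\pred$) but \emph{not} confluence ($\Dh\Bv\pred\to\Bv\Dh\pred$), because the vertical difference-domains may shrink as one moves along $\Rh$; consequently the confluence step used in the proof of Claim~\ref{c:seesd} is no longer available, and the `singling out' of grid-points must be redone using commutativity alone.

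To deal with density I would replace every grid-\emph{point} of the Section~\ref{product} construction by a maximal, $\Rh$-convex horizontal \emph{interval}, marked by the fresh interval predicates $\sint$, $\dint$ (and later $\cinti$, $\cint$). An $\state$-interval (resp.\ $\diag$-interval) is a maximal block of points all carrying $\state$ (resp.\ $\diag$), and the `immediate successor point' requirement becomes its interval analogue: every $\diag$-interval has an $\state$-interval as its immediate $\Rh$-successor interval, with nothing of the relevant kind strictly in between. Columns are then indexed by their staircase interval rather than by rank, and the horizontal next-time modality is taken to be an interval variant of the operator of Section~\ref{proofidea}, along the lines of
\[
\nexttime\phi:=\ \ \Bv\bigl(\diag\to\Bh(\state\to\phi)\bigr),
\]
now reading $\diag$, $\state$ as interval-markers so that $\nexttime$ still steps from one column-interval to the next. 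With these formulas in place I would re-establish the interval analogues of Claims~\ref{c:grid}--\ref{c:halfgrid}: a descending sequence of $\state$- and $\diag$-intervals exists, the induced `half-grid' of intervals is uniquely determined, and distinct columns are kept apart.

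The counting layer then proceeds per column-interval. At each step the content of counter $c_i$ is the number of vertical difference-witnesses of $\cou$ in the current column, counted with the uniqueness operator $\E$, and the formulas $\couf$, $\ffixbw$, $\fincbw$, $\fdecbw$, $\fmmbwdec$ are rewritten using $\cinti$/$\cint$ so that increments, decrements and zero-tests are transported across consecutive intervals. From these I would prove the interval analogue of Lemma~\ref{l:runbw}, extracting from any satisfying model an infinite run of $M$ started with all-$0$ counters. For the converse, given an infinite run I would build a decreasing $2$-frame over the fixed $\F\in\CC$ that contains an $\auf\omega+1,>\zu$-type chain: the chain points anchor the successive column-intervals (each supplying one representative horizontal coordinate), and I would attach decreasing difference-domains chosen just large enough to host the $\cou$-witnesses needed at each step. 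This is possible precisely because, along the descending chain, the domains grow towards the root, where the counters are largest; verifying $\diagbw\land\fmmbwdec$ (in their interval forms) in this model then closes the reduction and yields undecidability.

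The hard part will be re-proving the grid claims with only commutativity at hand. Wherever the Section~\ref{product} argument completed a `square' by confluence --- sending a point of one column horizontally to a witness in an earlier column whose vertical coordinate need no longer lie in the smaller domain --- I must instead complete squares only in the confluence-free direction (towards larger domains, i.e.\ towards the root), and compensate by using the interval `immediate successor' control both to pin down the half-grid intervals and to force the relevant $\cou$-witnesses to be \emph{persistent}, i.e.\ to survive the shrinking of the domains as one moves forward along $\Rh$. This is exactly the point where decreasing domains interact with density, and obtaining faithful counting despite the arbitrarily many intermediate points created by density and the disappearance of vertical points under shrinking domains is the crux of the whole argument.
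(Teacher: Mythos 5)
Your high-level plan is the same as the paper's: reduce CM non-termination, rerun the backward encoding of Theorem~\ref{t:mainirr} with grid-points replaced by horizontal intervals, prove interval analogues of the grid and counting claims, and build the converse model around the $\auf\omega+1,>\zu$-type chain (the paper in fact just takes the constant-domain product $\F\mprod\auf\omega,\ne\zu$, which is already a decreasing 2-frame, rather than minimal shrinking domains). But there is a genuine gap at the crux: you never say how the intervals, their synchronisation across different vertical coordinates, and above all the ``immediate $\Rh$-successor interval'' relation are to be \emph{expressed} in the modal language over a dense (and possibly reflexive) weak order. The paper's proof hinges on the \emph{tick trick}: a fresh variable $\tick$, forced by \eqref{tick} to be constant along each vertical fiber and alternating between consecutive column-intervals; from it one defines the derived operators $\DhM,\BhM$, the transitive relation $\RhM$ and the interval equivalence $\sim$, and only then does ``next interval'' become definable --- the analogue of \eqref{sgen} reads $\Bv^+\BhM\bigl(\diag\to(\DhM\state\land\BhM\BhM\neg\state)\bigr)$, which is satisfiable over dense orders precisely because $\BhM$ jumps over whole tick-blocks. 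Your substitute $\nexttime\phi:=\Bv\bigl(\diag\to\Bh(\state\to\phi)\bigr)$ uses the plain transitive $\Bh$, and with it ``nothing of the relevant kind strictly in between'' cannot be enforced: as the paper notes, any \eqref{sgen}-style conjunct $\Dh\state\land\Bh\Bh\neg\state$ is outright unsatisfiable over a dense order, while weaker variants cannot distinguish the immediately next column-interval from arbitrarily later ones; moreover nothing in your formulas ties together the interval decompositions seen at different vertical coordinates, which is exactly what the counting claims require. Without the alternating marker (or an equivalent two-colouring device) the ``interval analogues of Claims~\ref{c:grid}--\ref{c:halfgrid}'' you appeal to cannot even be formulated, so the forward direction of the reduction does not go through as proposed.

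Two further mismatches are worth noting. First, your concern that confluence fails and Claim~\ref{c:seesd} must be redone ``using commutativity alone'' is moot in the paper's treatment: in a decreasing 2-frame the columns are the literal fibers $W_x$, so the half-grid exists automatically and no rank or square-completion argument is needed; what genuinely needs work is making variables behave uniformly on intervals, which the paper does with \eqref{puniq}--\eqref{int4} and Claim~\ref{c:int}. Relatedly, your ``persistence'' worry largely dissolves: the run proceeds backward along $\Rh$ into \emph{larger} domains, and the coordinates counted at column $m$ are staircase coordinates introduced at earlier machine times, which automatically survive into all later columns --- this is precisely why decreasing frames support faithful rather than lossy counting. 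Second, your counting layer will hit a subtlety you do not anticipate: the interval-uniformity formula $\pint$ \emph{cannot} be imposed on $\cou$, since its conjunct \eqref{puniq} contradicts the counting formula $\coufd$; the paper instead forces only the \emph{changes} of the counters to be interval-uniform, via the fresh variable $\minusi$ and \eqref{decuniq}, and this is what makes the decrement case of Claim~\ref{c:counterd} go through.
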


As a consequence of Theorem~\ref{t:main} and Props.~\ref{p:cd}, \ref{p:foprod} we obtain:

 \begin{corollary}\label{co:densefo}
\logic-satisfiability is undecidable both in decreasing and in constant domain models 
over $\auf\mathbb Q,<\zu$ and over $\auf\mathbb R,<\zu$.
\end{corollary}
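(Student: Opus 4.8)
The plan is to instantiate Theorem~\ref{t:main} at the singleton classes $\CC=\{\auf\mathbb Q,<\zu\}$ and $\CC=\{\auf\mathbb R,<\zu\}$ and then carry the resulting bimodal undecidability back into \logic\ through Prop.~\ref{p:foprod}, using Prop.~\ref{p:cd} to pass from the decreasing to the constant domain case. Since $\auf\mathbb Q,<\zu$ and $\auf\mathbb R,<\zu$ are linear orders, and hence \preorders as well, the only hypothesis of Theorem~\ref{t:main} I need to verify is that each contains an $\auf\omega+1,>\zu$-type chain. First I would exhibit such a chain explicitly: take $x_n=\frac{1}{n+1}$ for $n<\omega$ and $x_\omega=0$. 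These are $\omega+1$ distinct rationals, hence lie in both orders, and for $n,m\le\omega$ with $n\ne m$ one checks that $x_n<x_m$ holds exactly when $n>m$ --- among the finite indices because $\frac{1}{n+1}<\frac{1}{m+1}$ iff $n>m$, and for the limit because $x_\omega=0$ sits strictly below every $x_m$ with $m<\omega$ while no finite $x_n$ sits below it. Thus the defining condition of an $\auf\omega+1,>\zu$-type chain is met, and Theorem~\ref{t:main} gives that $\CC\dprod\cdiff$-satisfiability is undecidable for both choices of $\CC$.

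The decreasing-domain half of the corollary then follows at once: the third clause of Prop.~\ref{p:foprod} states that, for either class $\CC$ and every \logic-formula $\phi$, the formula $\phi$ is \logic-satisfiable in decreasing domain models over $\CC$ iff $\phi^\star$ is $\CC\dprod\cdiff$-satisfiable, and since ${}^\star$ is a computable bijection this is an effective reduction in both directions. Undecidability of $\CC\dprod\cdiff$-satisfiability therefore transfers to \logic-satisfiability in decreasing domain models over $\auf\mathbb Q,<\zu$ and over $\auf\mathbb R,<\zu$.

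For the constant-domain half I would exploit that Prop.~\ref{p:cd} gives a computable many-one reduction \emph{of} $\CC\dprod\cdiff$-satisfiability \emph{to} $\CC\mprod\cdiff$-satisfiability, namely that $\phi$ is $\CC\dprod\cdiff$-satisfiable iff $\fodompred\land\Bh^{\leq n}\Bv^{\leq n}(\Dh\fodompred\to\fodompred)\land\phi^{{\sf D}}$ (with $n$ the modal nesting depth of $\phi$) is $\CC\mprod\cdiff$-satisfiable, the right-hand formula being effectively computable from $\phi$. As the source problem is undecidable by the previous step, the target problem $\CC\mprod\cdiff$-satisfiability is undecidable as well; the first clause of Prop.~\ref{p:foprod} (again a two-way reduction through ${}^\star$) then yields undecidability of \logic-satisfiability in constant domain models over both $\auf\mathbb Q,<\zu$ and $\auf\mathbb R,<\zu$, finishing the proof.

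I do not anticipate a substantive obstacle, this being a corollary whose mathematical content sits entirely in Theorem~\ref{t:main}. The two points that need a little care are purely structural: producing a concrete $\auf\omega+1,>\zu$-type chain inside the \emph{dense} orders $\auf\mathbb Q,<\zu$ and $\auf\mathbb R,<\zu$ (so that the hypothesis of Theorem~\ref{t:main} genuinely applies despite density), and keeping the direction of the reduction in Prop.~\ref{p:cd} straight, so that undecidability propagates from the decreasing ($\dprod$) problem to the constant-domain ($\mprod$) problem and not the other way around.
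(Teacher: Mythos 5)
Your proof is correct and follows exactly the paper's intended route: the paper derives this corollary precisely as a consequence of Theorem~\ref{t:main} together with Props.~\ref{p:cd} and \ref{p:foprod}, and your instantiation (the explicit $\auf\omega+1,>\zu$-type chain $x_n=\frac{1}{n+1}$, $x_\omega=0$, the third clause of Prop.~\ref{p:foprod} for decreasing domains, and the $\dprod$-to-$\mprod$ direction of Prop.~\ref{p:cd} followed by the first clause of Prop.~\ref{p:foprod} for constant domains) fills in the routine details the paper leaves implicit. No gaps.
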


Also,
as a consequence of Theorems~\ref{t:re}, \ref{t:main} and Props.~\ref{p:cd}, \ref{p:foprod} we have:

\begin{corollary}\label{co:mainfod}
\logic-satisfiability is undecidable but co-r.e.\ in decreasing domain models over the class of all linear orders.
\end{corollary}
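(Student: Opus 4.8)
The plan is to obtain both halves of the statement by specialising the cited results to the class $\clin$ of all linear orders. By the decreasing-domain clause of Prop.~\ref{p:foprod}, a \logic-formula $\phi$ is \logic-satisfiable in decreasing domain models over $\clin$ if and only if its bimodal translation $\phi^\star$ is $\clin\dprod\cdiff$-satisfiable. Hence it suffices to show that $\clin\dprod\cdiff$-satisfiability is undecidable but co-r.e., and the corollary transfers back along the (evidently computable) bijection ${}^\star$.

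For undecidability I would invoke Theorem~\ref{t:main} with $\CC:=\clin$. Each linear order is irreflexive, transitive and trichotomous, and trichotomy yields weak connectedness, so every member of $\clin$ is a \preorder and $\clin$ is a class of \preorders as the theorem demands. It then remains only to exhibit one $\F\in\clin$ containing an $\auf\omega+1,>\zu$-type chain, and $\auf\omega+1,>\zu$ itself serves: it is an irreflexive linear order, and setting $x_n:=n$ for $n\leq\omega$ gives $x_nRx_m$ iff $n>m$, which is exactly the required configuration. Theorem~\ref{t:main} then delivers undecidability of $\clin\dprod\cdiff$-satisfiability.

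For the co-r.e.\ upper bound I would pass from decreasing 2-frames to full product frames via Prop.~\ref{p:cd}: the effective map sending $\phi$ to $\fodompred\land\Bh^{\leq n}\Bv^{\leq n}\bigl(\Dh\fodompred\to\fodompred\bigr)\land\phi^{{\sf D}}$, with $n$ the modal nesting depth of $\phi$, is a computable many-one reduction of $\clin\dprod\cdiff$-satisfiability to $\clin\mprod\cdiff$-satisfiability. The latter is co-r.e.\ by Theorem~\ref{t:re}, since both component classes are recursively first-order definable: $\clin$ is axiomatised by irreflexivity, transitivity and trichotomy, while $\cdiff$ is axiomatised by irreflexivity together with $\forall x\forall y\,(x\ne y\to xRy)$, which forces $R$ to coincide with the inequality relation. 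As co-r.e.\ sets are closed under computable many-one reductions, $\clin\dprod\cdiff$-satisfiability is co-r.e.\ as well; combined with the undecidability above (and with the observation that a set which is both r.e.\ and co-r.e.\ would be recursive) this gives ``undecidable but co-r.e.''.

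The substantive mathematical content sits entirely in Theorem~\ref{t:main}, already established; what remains is bookkeeping, and the point deserving most care is the first-order definability of $\cdiff$. It is essential that the defining language carries equality, so that the clause $\forall x\forall y\,(x\ne y\to xRy)$ can isolate the genuine difference frames among all irreflexive symmetric frames; without equality one could not separate $\ne$ from a proper subrelation. I would therefore verify that Theorem~\ref{t:re} is indeed stated for recursive first-order definability \emph{with equality}, which it is, being derived from the recursive enumerability of many-sorted classical first-order consequence, so that this application is legitimate.
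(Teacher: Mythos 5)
Your proposal is correct and takes essentially the same route as the paper, which derives this corollary exactly from Theorems~\ref{t:re}, \ref{t:main} and Props.~\ref{p:cd}, \ref{p:foprod}: Prop.~\ref{p:foprod} for the translation to $\clin\dprod\cdiff$-satisfiability, Theorem~\ref{t:main} with $\CC=\clin$ (witnessed by $\auf\omega+1,>\zu$) for undecidability, and Prop.~\ref{p:cd} composed with Theorem~\ref{t:re} for the co-r.e.\ upper bound. Your extra care about the recursive first-order definability of $\cdiff$ requiring equality in the language is a sound filling-in of a detail the paper leaves implicit.
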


We prove Theorem~\ref{t:main} by
reducing the `CM non-termination' problem to $\CC\dprod\cdiff$-satisfiability.
We intend to use something like the formula $\diagbw\land\fmmbwdec$ 
defined in the proof of Theorem~\ref{t:mainirr}. 
The problem is that if $\auf W,R\zu$ is reflexive and/or dense, 
then a formula of the form $\Dh\state\land\Bh\Bh\neg\state$ in conjunct \eqref{sgen} of $\diagbw$ is clearly not satisfiable.
In order to overcome this, we will apply a version of the well-known `tick trick'
(see e.g.\ \cite{Spaan93,Reynolds&Z01,gkwz05a}).

So let $\M$ be a model based on a decreasing 2-frame
$\Hh_{\F,\overline{\G}}$ where $\F=\auf W,R\zu$ is a \preorder\!\!,
 and for every $x\in W$, $\G_x=\auf W_x,\ne\zu$.
We may assume that $\F$ is rooted with some $r_0$ as its root.
We take a fresh propositional variable $\tick$, and define a new modal operator by setting, for every formula $\psi$,
\begin{align*}
& \DhM\psi:=\bigl [\tick\land\Dh\bigl(\neg \tick\land (\psi\lor\Dh\psi)\bigr)\bigr]\lor
\bigl [\neg \tick\land\Dh\bigl(\tick\land (\psi\lor\Dh\psi)\bigr)\bigr],\mbox{ and }\\
& \BhM\phi:=\neg\DhM\neg\psi.
\end{align*}
Now suppose that $\M,\auf r_0,r_1\zu\models\eqref{tick}$, where
\begin{equation}\label{tick}
\Bv^+\Bh^+\bigl(\tick\lor\Dv \tick\to (\tick\land\Bv \tick)\bigr).
\end{equation}
We define a new binary relation
$\RhM$ on $W$ by taking, for all $x,y\in W$,
\begin{multline*}
x\RhM y\quad\mbox{iff}\quad \exists\, 
z\in W\ \bigl(xR z\mbox{ and }(z=y\mbox{ or }zR y)\mbox{ and }\\
\forall u\in W_z\,(\M,\auf x,u\zu\models \tick\ \leftrightarrow\ \M,\auf z,u\zu\models\neg \tick)\bigr).
\end{multline*}
Then it is not hard to check that $\RhM$ is transitive, and $\DhM$ behaves like
a `horizontal' modal diamond w.r.t. $\RhM$ in $\M$, that is, for all $x\in W$, $u\in W_x$,
\[
\M,\auf x,u\zu\models\DhM\psi\quad\mbox{iff}\quad
\exists y\in W\ \bigl(x\RhM y,\ u\in W_y\mbox{ and }\M,\auf y,u\zu\models\psi\bigr).
\]
However, $\RhM$ is not necessarily weakly connected. We only have:
\begin{equation}\label{wconM}
\forall x,y,z\,\bigl(x\RhM y\land x\RhM z\to (y\sim z\lor y\RhM z\lor z\RhM y)\bigr),
\end{equation}
where 
\[
y\sim z\qquad\mbox{iff}\qquad 
\mbox{either }y=z\ 
\mbox{ or }\ \bigl(y R z\mbox{ and }y\neg\RhM z\bigr)
\mbox{ or }\  \bigl(z R y \mbox{ and }z\neg\RhM y\bigr).
\]
The relation $\sim$ can be genuinely larger than equality. It is not hard to check (using that
$\auf W,R\zu$ is rooted) that $\sim$ is an equivalence relation, and
$\sim$-related points have the following properties:
\begin{align}
\label{sameroot}
& \forall x,y,z\ (y\sim z\land x\RhM y\to x\RhM z),\\
\label{wcon}
& \forall x,y,z\ (y\sim z\land y\RhM x\to z\RhM x).
\end{align}
We would like our propositional variables to behave `uniformly' when interpreted at pairs with $\sim$-related first components (that is, along `horizontal intervals').
To achieve this, 
for a propositional variable $\pred$, let $\pint$ denote conjunction of the following formulas:
\begin{align}
\label{puniq}
& \Bv^+\Bh^+\bigl(\pred\to\BhM\neg\pred\bigr),\\
\label{int1}
& \Bv^+\Bh^+\bigl(\Dh\pred\land\BhM\neg\pred\to\pred\bigr),\\
\label{int5}
& \Bv^+\Bh^+\bigl(\pred\land\neg\DhM\top\to\Bh\pred \bigr),\\
\label{int2}
& \Bv^+\Bh^+\bigl(\pred\land\DhM\top \to\DhM\predp \bigr),\\
\label{int4}
& \Bv^+\Bh^+\bigl(\pred\to\Bh(\DhM\predp\to\pred) \bigr),
\end{align}
where $\predp$ is a fresh propositional variable. We also introduce the following notation, for all 
$x\in W$, $y\in W_x$ and all formulas $\phi$:
\[
\M,\auf \Ix(x),y\zu\models\phi\qquad\mbox{iff}\qquad
\M,\auf z,y\zu\models\phi\ \mbox{ for all $z$ such that $z\sim x$ and $y\in W_z$}.
\]

\begin{claim}\label{c:int}
Suppose that $\M,\auf r_0,r_1\zu\models\eqref{tick}\land\pint$.
For all $x\in W$, $y\in W_x$, if $\M,\auf x,y\zu\models\pred$ then $\M,\auf \Ix(x),y\zu\models\pred$.
\end{claim}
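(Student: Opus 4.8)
The plan is to fix $x\in W$ and $y\in W_x$ with $\M,\auf x,y\zu\models\pred$, take an arbitrary $z$ with $z\sim x$ and $y\in W_z$, and show $\M,\auf z,y\zu\models\pred$; this is exactly the conclusion $\M,\auf\Ix(x),y\zu\models\pred$. Since the vertical coordinate $y$ never changes, only the horizontal structure ``at height $y$'' matters, and throughout I would use the established fact that under \eqref{tick} the operators $\DhM$ and $\BhM$ act as a genuine diamond and box for $\RhM$ at a fixed vertical coordinate. By the definition of $\sim$ the case $z=x$ is trivial; otherwise either (A) $xRz$ and $x\neg\RhM z$, or (B) $zRx$ and $z\neg\RhM x$. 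Note that in case (B) we have $W_x\subseteq W_z$ because the $2$-frame is decreasing, so $y\in W_z$ is automatic, whereas in case (A) it is part of the hypothesis.

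In case (B), where $z$ lies $R$-before $x$ inside the same interval, I would invoke \eqref{int1}. First, $zRx$ together with $y\in W_x$ and $\M,\auf x,y\zu\models\pred$ gives $\M,\auf z,y\zu\models\Dh\pred$. To obtain $\BhM\neg\pred$ at $\auf z,y\zu$, suppose for contradiction that $z\RhM w'$ with $\M,\auf w',y\zu\models\pred$; then \eqref{wcon}, applied with $z\sim x$, yields $x\RhM w'$, contradicting the fact that $\M,\auf x,y\zu\models\BhM\neg\pred$ holds by \eqref{puniq}. Hence $\M,\auf z,y\zu\models\Dh\pred\land\BhM\neg\pred$, and \eqref{int1} delivers $\M,\auf z,y\zu\models\pred$.

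In case (A), where $z$ lies $R$-after $x$, I would split on whether $x$ has an $\RhM$-successor at height $y$. If $\M,\auf x,y\zu\models\neg\DhM\top$, then \eqref{int5} gives $\M,\auf x,y\zu\models\Bh\pred$, and since $xRz$ with $y\in W_z$ the pair $\auf z,y\zu$ is an $\Rhp$-successor, so $\M,\auf z,y\zu\models\pred$ at once. Otherwise $\M,\auf x,y\zu\models\DhM\top$, and \eqref{int2} produces a marker point $w$ with $x\RhM w$, $y\in W_w$ and $\M,\auf w,y\zu\models\predp$; then \eqref{wcon} with $x\sim z$ gives $z\RhM w$, so $\M,\auf z,y\zu\models\DhM\predp$. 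Finally \eqref{int4} read at $\auf x,y\zu$, together with $xRz$ and $y\in W_z$, forces $\M,\auf z,y\zu\models\pred$. The fresh variable $\predp$ is precisely the device that lets $z$ detect that it still lies inside the interval of $x$.

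The hard part will be the forward case (A): one must tie together the original relation $R$, the coarser tick-relation $\RhM$, and the marker $\predp$, and the whole argument hinges on $\predp$ being pinned at the start of the next $\RhM$-interval, so that the conjuncts \eqref{wcon} (shared $\RhM$-successors within an interval) and \eqref{puniq} (uniqueness of $\pred$ along $\RhM$) make both directions go through. By contrast, the decreasing-domain bookkeeping of tracking which coordinates satisfy $y\in W_{(\cdot)}$ is routine once one observes that $R$-predecessors carry the larger domains.
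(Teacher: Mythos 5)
Your proof is correct and takes essentially the same route as the paper's: the same case split ($z=x$ trivially; $zRx$ handled via \eqref{puniq}, \eqref{wcon} and \eqref{int1}; $xRz$ handled via \eqref{int5} when $\neg\DhM\top$ holds, and via \eqref{int2}, \eqref{wcon} and \eqref{int4} otherwise). The only difference is presentational: you make explicit the $\Dh\pred$ step and the decreasing-domain bookkeeping ($zRx$ implies $W_x\subseteq W_z$), which the paper leaves implicit.
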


\begin{proof}
Suppose that $\M,\auf x,y\zu\models\pred$. Take some
$z\sim x$ with $z\ne x$ and $y\in W_z$. Suppose first that $zRx$.
As $\M,\auf x,y\zu\models\BhM\neg\pred$ by \eqref{puniq}, we have $\M,\auf z,y\zu\models\BhM\neg\pred$ by \eqref{wcon}.
Therefore, $\M,\auf z,y\zu\models\pred$ by \eqref{int1}. 

Now suppose that $xRz$. There are two cases:
If $\M,\auf x,y\zu\models\neg\DhM\top$ then $\M,\auf z,y\zu\models\pred$ follows by \eqref{int5}.
If $\M,\auf x,y\zu\models\DhM\top$ then $\M,\auf x,y\zu\models\DhM\predp$ by \eqref{int2}. 
Thus, $\M,\auf z,y\zu\models\DhM\predp$ by \eqref{wcon}.
So $\M,\auf z,y\zu\models\pred $ follows by \eqref{int4}.
\end{proof}

Throughout, for any formula $\phi$, we denote by $\phi^\bullet$ the formula obtained from
$\phi$ by replacing each occurrence of $\Dh$  with $\DhM$.
Now all the necessary tools are ready for forcing a unique infinite diagonal staircase of intervals, going backward. In decreasing 2-frames this will also automatically give us an infinite half-grid.
To this end, take the formula $\diagbw$ defined in \eqref{initfbw}--\eqref{suni}.
We define a new formula $\diagbwd$ by modifying $\diagbw$ as follows. First, replace 
the conjunct \eqref{initfbw} by the slightly stronger
\begin{equation}\label{initbwd}
\DhM(\state\land\Bv^+\BhM\bot),
\end{equation}
then replace each remaining conjunct $\phi$ in $\diagbw$ by $\phi^\bullet$.
Finally, add the conjuncts \eqref{tick} and $\pint$, for $\pred\in\{\diag,\state\}$.
We then have the following analogue of Claims~\ref{c:grid}--\ref{c:halfgrid}:

\begin{claim}\label{c:gridd}
Suppose that $\M,\auf r_0,r_1\zu\models\diagbwd$. Then there exist infinite sequences
$\auf x_m\in W:m<\omega\zu$ and $\auf y_m\in W_{x_m} : m<\omega\zu$ such that for all $m<\omega$,
\begin{enumerate}
\item[{\rm (}i{\rm )}]
$y_m\ne y_{n}$, for all $n<m$,
\item[{\rm (}ii{\rm )}]
there is no $x$ with $x_0\RhM x$, and 
if $m>0$ then $x_m \RhM x_{m-1}$, and there is no $x$ such that $x_m \RhM x \RhM x_{m-1}$,
\item[{\rm (}iii{\rm )}]
$\M,\auf \Ix(x_m),y_m\zu\models\state$,
\item[{\rm (}iv{\rm )}]
if $m>0$ then $\M,\auf \Ix(x_m),y_{m-1}\zu\models\diag$.
\end{enumerate}
\end{claim}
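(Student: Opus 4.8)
The plan is to prove Claim~\ref{c:gridd} by induction on $m$, re-running the skeleton of the proof of Claim~\ref{c:grid} but systematically replacing $\Rh,\Dh,\Bh$ by the tick-induced $\RhM,\DhM,\BhM$, and—crucially—replacing every appeal to weak connectedness of $\Rh$ by the weaker \eqref{wconM} together with the interval facts \eqref{sameroot}, \eqref{wcon} and Claim~\ref{c:int}. Throughout I exploit that $\Hh_{\F,\overline{\G}}$ is a \emph{decreasing} $2$-frame and that $\F$ is rooted at $r_0$: whenever $x\RhM y$ we have $xRy$, hence $W_y\subseteq W_x$, so coordinates never disappear as one moves \emph{backward} up the staircase, and every coordinate occurring anywhere lies in $W_{r_0}$. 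This is exactly what makes the half-grid of Claim~\ref{c:halfgrid} come for free here, and it is why the staircase is forced backward.

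For the base case, \eqref{initbwd} supplies at $\auf r_0,r_1\zu$ an $\RhM$-successor $x_0$ (with $r_1\in W_{x_0}$) satisfying $\state\land\Bv^+\BhM\bot$, and I set $y_0:=r_1$. The conjunct $\Bv^+\BhM\bot$ makes $\BhM\bot$ hold at $\auf x_0,u\zu$ for \emph{every} $u\in W_{x_0}$; since any $\RhM$-successor $z$ of $x_0$ would have $\emptyset\ne W_z\subseteq W_{x_0}$ and so produce a witness for $\DhM$, this forces $x_0$ to be an $\RhM$-endpoint, giving (ii), while Claim~\ref{c:int} (for $\pred=\state$) lifts $\state$ at $\auf x_0,y_0\zu$ to $\auf\Ix(x_0),y_0\zu$, giving (iii). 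For the inductive step I generate the next column backward from the \emph{root column}: since $y_m\in W_{x_m}\subseteq W_{r_0}$, the $\bullet$-version $\Bv^+\DhM\diag$ of \eqref{dgenbw} yields at $\auf r_0,y_m\zu$ a column $z$ with $r_0\RhM z$, $y_m\in W_z$ and $\M,\auf z,y_m\zu\models\diag$. I then identify $z$ up to $\sim$ with the intended $x_{m+1}$: by the $\bullet$-version of \eqref{sgen} the $\diag$-point at $\auf z,y_m\zu$ satisfies $\DhM\state$, producing a $\state$-point at some $\auf z',y_m\zu$ with $z\RhM z'$; weak connectedness modulo $\sim$ \eqref{wconM} together with the $\state$-uniqueness conjunct $\BhM\neg\state$ from the $\bullet$-version of \eqref{suni} forces $z'\sim x_m$, whence $z\RhM x_m$ by \eqref{sameroot}; and the conjunct $\BhM\BhM\neg\state$ of the $\bullet$-version of \eqref{sgen}, using that $y_m$ survives downward into every intermediate column, rules out any $x$ with $z\RhM x\RhM x_m$. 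Thus $x_{m+1}:=z$ is the immediate $\RhM$-predecessor of $x_m$, giving (ii) and, via Claim~\ref{c:int}, (iv). Finally the $\bullet$-version of \eqref{sgenbw} gives $\Dv\state$ at $\auf x_{m+1},y_m\zu$, i.e.\ some $y_{m+1}\ne y_m$ in $W_{x_{m+1}}$ with $\state$ at $\auf x_{m+1},y_{m+1}\zu$; Claim~\ref{c:int} yields (iii), and distinctness (i) follows from $\BhM\neg\state$, since if $y_{m+1}=y_k$ for some $k\le m$ then $x_{m+1}\RhM x_k$ (transitivity of $\RhM$) and $\state$ at $\auf x_k,y_k\zu$ would contradict $\BhM\neg\state$ at $\auf x_{m+1},y_{m+1}\zu$.

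The main obstacle is the identification of the freshly generated points with the intended grid points \emph{only up to} $\sim$. Because $\RhM$ is no longer weakly connected, every uniqueness and immediacy argument that in Claim~\ref{c:grid} rested on trichotomy of $\Rh$ must be rerun through \eqref{wconM}, and the markers $\state,\diag$ must be shown to behave uniformly along entire $\sim$-intervals rather than at isolated points—this is precisely the purpose of the interval formulas $\pint$ and of Claim~\ref{c:int}, and the reason the statement is phrased with $\Ix(x_m)$ in place of $x_m$. A secondary source of care is the domain bookkeeping: at each step one must verify that the relevant coordinate genuinely lies in the domain of every column the argument touches (notably in $W_{r_0}$ and in $W_z$ for the newly generated $z$), which is where rootedness and decreasingness are invoked repeatedly. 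Once these interval-level analogues of weak connectedness are established, the counting machinery—the analogues of Claims~\ref{c:counting}--\ref{c:counter} and of Lemma~\ref{l:runbw}—transfers with only cosmetic changes.
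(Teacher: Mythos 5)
Your proof is correct and follows essentially the same route as the paper's: the same induction on $m$, with \eqref{initbwd} and the $\bullet$-versions of \eqref{dgenbw}, \eqref{sgen}, \eqref{sgenbw}, \eqref{suni} driving each step, \eqref{wconM} together with \eqref{sameroot} and \eqref{wcon} standing in for weak connectedness, and Claim~\ref{c:int} lifting $\state$ and $\diag$ to $\sim$-intervals. You in fact spell out several points the paper leaves implicit, namely the $\RhM$-maximality of $x_0$ via non-emptiness and decreasingness of the column domains, the case analysis (against the common predecessor $r_0$) identifying the fresh $\state$-witness with $x_m$ up to $\sim$, and the use of decreasingness to keep $y_m$ inside any intermediate column in the immediacy argument.
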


\begin{proof}
By induction on $m$. To begin with, let $y_0=r_1$. By \eqref{initbwd}, there is $x_0$ such that
$r_0\RhM x_0$, $y_0\in W_{x_0}$, $\M,\auf x_0,y_0\zu\models\state$ and
\begin{equation}\label{deadend}
\M,\auf x_0,y_0\zu\models\Bv^+\BhM\bot.
\end{equation}
By $\sint$, we have $\M,\auf \Ix(x_0),y_0\zu\models\state$. 

Now suppose inductively that for some $m<\omega$ we have $x_k$, $y_k$, for all $k\leq m$
as required.  By the IH, $y_m\in W_{x_m}\subseteq W_{r_0}$, so by \eqref{dgenbw}$^\bullet$,
there is $x_{m+1}$ such that $r_0\RhM x_{m+1}$, $y_m\in W_{x_{m+1}}$ and 
$\M,\auf x_{m+1},y_{m}\zu\models\diag$. 
By \eqref{sgen}$^\bullet$, \eqref{suni}$^\bullet$, \eqref{wconM} and \eqref{wcon},
we have that $x_{m+1}\RhM x_{m}$, and there is no $x$ with $x_{m+1}\RhM x \RhM x_{m+1}$.
By $\dint$, we have $\M,\auf \Ix(x_{m+1}),y_m\zu\models\diag$. 
By \eqref{sgenbw}$^\bullet$, there is $y_{m+1}$ such that $y_{m+1}\ne y_m$,
$y_{m+1}\in W_{x_m}$ and 
\mbox{$\M,\auf x_{m+1},y_{m+1}\zu\models\state$.}
By $\sint$, we have $\M,\auf \Ix(x_{m+1}),y_{m+1}\zu\models\state$. 
Finally, we have $y_{m+1}\ne y_{n}$ for $n< m$ by \eqref{suni}$^\bullet$.
\end{proof}

We have the following analogue of Claim~\ref{c:seesd}:

\begin{claim}\label{c:seesdd}
Suppose that $\M,\auf r_0,r_1\zu\models\diagbwd$.
 For all $m<\omega$ and all $y\in W_{x_m}$,
 \begin{enumerate}
 \item[{\rm (}i{\rm )}]
 if there is $z$ such that $z\sim x_m$, $y\in W_z$ and 
 $\M,\auf z,y\zu\models\diag$, then $m>0$ and $y=y_{m-1}$,
 \item[{\rm (}ii{\rm )}]
 if there is $z$ such that $z\sim x_m$, $y\in W_z$ and 
 $\M,\auf z,y\zu\models\DhM\diag$, 
 then $m>1$ and $y=y_{n}$ for some $0<n< m-1$.
 \end{enumerate}
\end{claim}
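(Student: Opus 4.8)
The plan is to transcribe the proof of Claim~\ref{c:seesd} into the interval setting, replacing $\Rh,\Dh,\Bh$ by $\RhM,\DhM,\BhM$ and the pointwise facts of $\diagbw$ by their $\bullet$-versions in $\diagbwd$, with uniformity along intervals supplied by $\sint$, $\dint$ and Claim~\ref{c:int}. A preliminary observation makes the universal conjuncts of $\diagbwd$ available where needed: each conjunct of shape $\Bv^+\BhM(\dots)$ valid at $\auf r_0,r_1\zu$ forces $(\dots)$ at every pair $\auf c,b\zu$ with $r_0\RhM c$ and $b\in W_c$. Since $r_0\RhM x_m$ (from the construction in Claim~\ref{c:gridd}) and $z\sim x_m$, \eqref{sameroot} gives $r_0\RhM z$, so \eqref{sgen}$^\bullet$, \eqref{sgenbw}$^\bullet$ and \eqref{suni}$^\bullet$ all hold at $\auf z,y\zu$, at $\auf x_m,y\zu$ and at $\auf x_m,y_{m-1}\zu$. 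Moreover, as $\diag$ is among the variables governed by $\pint$, Claim~\ref{c:int} propagates $\M,\auf z,y\zu\models\diag$ to $\M,\auf\Ix(x_m),y\zu\models\diag$, and hence to $\M,\auf x_m,y\zu\models\diag$ (using $x_m\sim x_m$ and $y\in W_{x_m}$).

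First I would settle (i). If $m=0$, then \eqref{sgen}$^\bullet$ at $\auf z,y\zu$ forces $\M,\auf z,y\zu\models\DhM\state$, giving some $w$ with $z\RhM w$; but $z\sim x_0$ and \eqref{wcon} then yield $x_0\RhM w$, contradicting the dead-end clause of Claim~\ref{c:gridd}(ii). So $m>0$. For the location of $y$, recall $\M,\auf\Ix(x_m),y_{m-1}\zu\models\diag$ by Claim~\ref{c:gridd}(iv), so $\M,\auf x_m,y_{m-1}\zu\models\diag$; applying \eqref{sgenbw}$^\bullet$ there yields $\M,\auf x_m,y_{m-1}\zu\models\Bv\neg\diag$, i.e.\ $\neg\diag$ holds at $\auf x_m,y'\zu$ for every $y'\in W_{x_m}$ with $y'\ne y_{m-1}$. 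Since $y\in W_{x_m}$ and $\M,\auf x_m,y\zu\models\diag$, this is impossible unless $y=y_{m-1}$, which is the assertion.

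The hard part is (ii), and I expect the rank-replacement step to be the only real obstacle; everything else is a mechanical transcription. From $\M,\auf z,y\zu\models\DhM\diag$ I take a witness $w$ with $z\RhM w$, $y\in W_w$ and $\M,\auf w,y\zu\models\diag$, and the crux is to show that $w$ lies in the $\sim$-class of a staircase column $x_k$ with $0<k<m$. In the proof of Claim~\ref{c:seesd}(ii) this alignment was read off from the integer horizontal rank via Claim~\ref{c:rank}; here there is no rank function, so I would instead argue from the weak-connectedness-up-to-$\sim$ property \eqref{wconM} of $\RhM$. Concretely, $z\RhM x_{m-1}$ by \eqref{wcon} (as $z\sim x_m$ and $x_m\RhM x_{m-1}$), so comparing $w$ with the descending chain $x_{m-1}\RhM\dots\RhM x_0$ through repeated applications of \eqref{wconM}, and using the uniqueness conjuncts \eqref{sgen}$^\bullet$ and \eqref{suni}$^\bullet$ to discard the intermediate non-grid options exactly as $\state$-uniqueness did in the abstract argument, forces $w\sim x_k$ for a unique $k<m$; the nesting of the fibres $W_{x_j}$ in the \emph{decreasing} $2$-frame guarantees that the value $y$ genuinely sits in that column. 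Once $w\sim x_k$ is in hand, part (i) applied in column $x_k$ gives $k>0$ and $y=y_{k-1}$, so that $m>1$ and $y=y_n$ with $n=k-1$, which is of the required form.
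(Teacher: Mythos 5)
Your overall strategy is the paper's: part (i) by pinning $\diag$ against the staircase, part (ii) by taking a $\DhM$-witness $w$ and aligning it with a column $x_k$ through a \eqref{wconM}-descent along $x_{m-1}\RhM\dots\RhM x_0$ (this descent is exactly what the paper compresses into its one-line appeal to Claim~\ref{c:gridd}(ii)), then feeding the result back into (i). Your part (i) is correct as it stands. The gap is in the last step of (ii). Your proof of (i) transfers $\diag$ from $\auf z,y\zu$ into the column $x_m$ itself via Claim~\ref{c:int}, and therefore genuinely uses the ambient hypothesis $y\in W_{x_m}$. Consequently, applying your (i) ``in column $x_k$'' requires $y\in W_{x_k}$, and your justification --- ``the nesting of the fibres in the decreasing $2$-frame'' --- points in the wrong direction. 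In a decreasing $2$-frame fibres shrink \emph{forward} along $R$, and the staircase runs backward, so $W_{x_0}\subseteq W_{x_1}\subseteq W_{x_2}\subseteq\dots$; if the witness $w$ happens to lie $R$-before $x_k$ inside its interval (i.e.\ $wRx_k$ and $w\neg\RhM x_k$, which $w\sim x_k$ permits), then $W_{x_k}\subseteq W_w$, so $y\in W_w$ need not belong to $W_{x_k}$. Nor does the ambient $y\in W_{x_m}$ help, since also $W_{x_k}\subseteq W_{x_m}$. So ``$y$ genuinely sits in that column'' can fail, and with it your appeal to (i) at index $k$.

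The paper's proof of (i) is arranged precisely so that this issue never arises: it never uses $y\in W_{x_m}$, but stays entirely inside the witness column $z$. From $z\sim x_m$ and $x_m\RhM x_{m-1}$, \eqref{wcon} gives $z\RhM x_{m-1}$, hence $y_{m-1}\in W_z$ by decreasingness; then $\M,\auf z,y_{m-1}\zu\models\diag$ by Claim~\ref{c:gridd}(iv), and \eqref{sgenbw}$^\bullet$ yields $\Bv\neg\diag$ at $\auf z,y_{m-1}\zu$, forcing $y=y_{m-1}$. Proved this way, (i) applies in (ii) to the pair $(w,x_k)$ with no fibre-membership check at all. Equivalently, you can keep your (i) and patch (ii) by rerunning that argument inside the column $w$: $w\RhM x_{k-1}$ by \eqref{wcon}, so $y_{k-1}\in W_w$ by decreasingness, both $\auf w,y\zu$ and $\auf w,y_{k-1}\zu$ then satisfy $\diag$, and \eqref{sgenbw}$^\bullet$ forces $y=y_{k-1}$. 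Either repair closes the gap; as written, the step would fail.
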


\begin{proof}
For (i): 
Take some $z$ such that  $z\sim x_m$, $y\in W_z$ and  $\M,\auf z,y\zu\models\diag$.
If $m=0$, then $\M,\auf z,y\zu\models\BhM\bot$ by \eqref{wcon} and \eqref{deadend},
and so $\M,\auf z,y\zu\models\neg\diag$ by \eqref{sgen}$^\bullet$. So we may assume that $m>0$.
Then by \eqref{wcon} and Claim~\ref{c:gridd}(ii), we have $z\RhM x_{m-1}$, and so $y_{m-1}\in W_z$.
Now (i) follows from Claim~\ref{c:gridd}(iv) and \eqref{sgenbw}$^\bullet$.

For (ii): 
Take some $z$ such that  $z\sim x_m$, $y\in W_z$ and  $\M,\auf z,y\zu\models\DhM\diag$.
Then by \eqref{wcon}, there is $u$ such that $x_m\RhM u$, $y\in W_u$ and 
$\M,\auf u,y\zu\models\diag$. By Claim~\ref{c:gridd}(ii), $u\sim x_n$ for some $n<m$, and so
by (i), $y=y_{n-1}$ as required.
\end{proof}

Given a counter machine $M$,  we intend to encode its runs going backward along the
diagonal staircase of intervals, using again a propositional variable $\cou$ for each $i<N$ to represent the changing content of each counter. 
To this end, recall the formula $\couf$ defined in \eqref{allf}, and consider
\begin{align*}
\coufd:= &\ \ \Bv^+\BhM\bigwedge_{i<N}\bigl(\cou\to(\diag\lor\allcd)\bigr),\quad\mbox{where}\\
\allcd  := &\ \ \DhM\diag\land\BhM(\diag\lor\DhM\diag\to\cou).
\end{align*}

Then we have the following analogue of Claim~\ref{c:counting}:

\begin{claim}\label{c:countingd}
Suppose $\M,\auf r_0,r_1\zu\models\diagbwd\land\coufd$. Then for all $m<\omega$, $i<N$,
\[
|\{y\in W_{x_{m+1}} : \M,\auf \Ix(x_{m+1}),y\zu\models\allcd\}| =
|\{y\in W_{x_m} :\M,\auf \Ix(x_m),y\zu\models\cou\}|.
\]
\end{claim}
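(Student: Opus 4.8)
The plan is to follow the proof of Claim~\ref{c:counting} line by line, with the half-grid points $x_{m,n}$ of the product case replaced by the diagonal points $y_n$ supplied by Claim~\ref{c:gridd}, the relations $\Rh,\Bh,\Dh$ replaced by their interval-versions $\RhM,\BhM,\DhM$, and Claim~\ref{c:seesd} replaced by Claim~\ref{c:seesdd}. Throughout I would use that, since $\Hh_{\F,\overline{\G}}$ is decreasing and $x_{m+1}\RhM x_m$ for all $m$ (Claim~\ref{c:gridd}(ii)), the vertical domains are nested, so $y_n\in W_{x_m}$ whenever $n\le m$; and that by \eqref{sameroot}, \eqref{wcon} and \eqref{wconM} any two $\sim$-equivalent first coordinates have the same $\RhM$-successors and $\RhM$-predecessors, which is exactly what makes the interval-evaluations $\M,\auf\Ix(\cdot),\cdot\zu$ coherent. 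In particular one checks that $\M,\auf\Ix(x_m),y_n\zu\models\diag\lor\DhM\diag$ for every $n<m$: it is $\diag$ when $n=m-1$ (Claim~\ref{c:gridd}(iv)), and $\DhM\diag$ when $n<m-1$, since column $x_m$ then sees the $\diag$-point $y_n$ sitting in column $x_{n+1}$.

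First I would reduce both sides to a count of indices. As $\DhM\diag$ is a conjunct of $\allcd$, Claim~\ref{c:seesdd}(ii) confines every $y\in W_{x_{m+1}}$ with $\M,\auf\Ix(x_{m+1}),y\zu\models\allcd$ to the set $\{y_n:n<m\}$; and as $\coufd$ yields $\cou\to(\diag\lor\allcd)$ while $\allcd$ implies $\DhM\diag$, Claim~\ref{c:seesdd}(i),(ii) confine every $y\in W_{x_m}$ with $\M,\auf\Ix(x_m),y\zu\models\cou$ to $\{y_n:n<m\}$ as well. Since the $y_n$ are pairwise distinct (Claim~\ref{c:gridd}(i)), it then suffices to prove, for each fixed $n<m$, the equivalence
\[
\M,\auf\Ix(x_{m+1}),y_n\zu\models\allcd\qquad\Longleftrightarrow\qquad\M,\auf\Ix(x_m),y_n\zu\models\cou.
\]

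For the forward implication I would use $x_{m+1}\RhM x_m$ together with $\M,\auf\Ix(x_m),y_n\zu\models\diag\lor\DhM\diag$: applying the conjunct $\BhM(\diag\lor\DhM\diag\to\cou)$ of $\allcd$ at the successor $x_m$ gives $\cou$ at $\auf x_m,y_n\zu$, and repeating this at every $z\sim x_{m+1}$ — noting that each $z'\sim x_m$ with $y_n\in W_{z'}$ is an $\RhM$-successor of every such $z$, by \eqref{wcon} and \eqref{sameroot} — yields $\cou$ throughout $\Ix(x_m)$. For the converse I would first obtain $\M,\auf\Ix(x_{m+1}),y_n\zu\models\DhM\diag$ from $n<m$ and Claim~\ref{c:gridd}(iv), and then verify the $\BhM$-conjunct: given any $u$ with $z'\RhM u$ for some $z'\sim x_{m+1}$, with $y_n\in W_u$ and $\M,\auf u,y_n\zu\models\diag\lor\DhM\diag$, I compare $u$ with $x_m$ through \eqref{wconM}. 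The case $u\RhM x_m$ is impossible, as $u$ would then lie strictly $\RhM$-between $x_{m+1}$ and $x_m$, contradicting the adjacency clause of Claim~\ref{c:gridd}(ii); if $u\sim x_m$ then $\cou$ holds at $\auf u,y_n\zu$ by hypothesis; and if $x_m\RhM u$ then necessarily $n<m-1$, so $\M,\auf\Ix(x_m),y_n\zu\models\neg\diag$ by Claim~\ref{c:seesdd}(i) and the distinctness of the $y_n$, whence $\coufd$ upgrades $\cou$ at $\auf x_m,y_n\zu$ to $\allcd$ there, and the $\BhM$-conjunct of that instance delivers $\cou$ at $\auf u,y_n\zu$.

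The main obstacle is, as in Claim~\ref{c:counting}, the backward implication, and within it the bookkeeping peculiar to the dense, non-product setting: ruling out the ``strictly between'' successor via the weakened weak-connectedness \eqref{wconM} together with the adjacency of Claim~\ref{c:gridd}(ii), and carrying out every step modulo the equivalence $\sim$ so that the conclusions really concern the intervals $\Ix(\cdot)$ rather than single points. One must also keep the boundary index $n=m-1$ separate, where $\diag$ (not $\DhM\diag$) holds at $\auf x_m,y_n\zu$ so that $\allcd$ cannot be re-derived at $x_m$; but there no successor $u$ with $x_m\RhM u$ arises, so the re-derivation is not needed. Once the two index sets are shown to coincide, the claimed equality of cardinalities follows at once.
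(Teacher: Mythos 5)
Your proposal is correct and follows essentially the same route as the paper's own proof: the same reduction of both sides to index sets via Claim~\ref{c:seesdd} and the distinctness of the $y_n$, the same forward argument via $x_{m+1}\RhM x_m$, \eqref{sameroot} and the case split $n=m-1$ versus $n<m-1$, and the same backward argument via \eqref{wconM}, the adjacency clause of Claim~\ref{c:gridd}(ii), and the re-derivation of $\allcd$ at $x_m$ from $\coufd$. The only local difference is how $\M,\auf x_m,y_n\zu\models\neg\diag$ is obtained in the sub-case $x_m\RhM u$: the paper derives it directly from \eqref{sgen}$^\bullet$, whereas you derive it from the observation that this sub-case forces $n<m-1$ --- an assertion you leave unproved but which does follow in one line from the transitivity of $\RhM$ (giving $\DhM\diag$ at $\auf x_m,y_n\zu$) together with Claim~\ref{c:seesdd}(ii) and distinctness, so both variants are sound.
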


\begin{proof}
As $\DhM\diag$ is a conjunct of $\allcd$, 
by Claim~\ref{c:seesdd} and $\coufd$, we have 
\begin{align*}
|\{y\in W_{x_{m+1}} : \M,\auf \Ix(x_{m+1}),y\zu\models\allcd\}| &
 = |\{n: n< m\mbox{ and }\M,\auf\Ix(x_{m+1}),y_n\zu\models\allcd\}|, \\
 |\{y\in W_{x_m} :\M,\auf \Ix(x_m),y\zu\models\cou\}| &
 = |\{n  :n< m\mbox{ and }\M,\auf \Ix(x_{m}),y_n\zu\models\cou\}|.
\end{align*}
So it is enough to show that the two sets on the right hand sides are equal. 
Suppose first that $n<m$ is such that $\M,\auf\Ix(x_{m+1}),y_n\zu\models\allcd$,
and so 
\[
\M,\auf x_{m+1},y_n\zu\models\BhM(\diag\lor\DhM\diag\to\cou).
\]
Thus, in order to prove that $\M,\auf \Ix(x_{m}),y_n\zu\models\cou$,
it is enough to show that for all $z$ such that $z\sim x_m$ and $y_n\in W_z$, we have
\begin{equation}\label{needed}
x_{m+1}\RhM z\ \mbox{ and }\M,\auf z,y_n\zu\models\diag\lor\DhM\diag.
\end{equation}
To this end, we have $x_{m+1}\RhM x_m$ by Claim~\ref{c:gridd}(ii), and so $x_{m+1}\RhM z$
follows by \eqref{sameroot}.
If $n=m-1$ then $\M,\auf z,y_n\zu\models\diag$ by Claim~\ref{c:gridd}(iv).
If $n<m-1$ then $x_{m}\RhM x_{n+1}$ by Claim~\ref{c:gridd}(ii) and the transitivity of $\RhM$,
and so $z\RhM x_{n+1}$  by \eqref{sameroot}. As 
$\M,\auf x_{n+1},y_n\zu\models\diag$ by Claim~\ref{c:gridd}(iv), we obtain
$\M,\auf z,y_n\zu\models\DhM\diag$, as required in \eqref{needed}.

Conversely, suppose that $\M,\auf \Ix(x_{m}),y_n\zu\models\cou$ for some $n<m$.
As $n<m$,
by Claims~\ref{c:gridd}(ii),(iv) and \eqref{sameroot}, we have  $\M,\auf\Ix(x_{m+1}),y_n\zu\models\DhM\diag$. 
In order to prove $\M,\auf\Ix(x_{m+1}),y_n\zu\models\allcd$,
it remains to show that 
\begin{equation}\label{neededtwo}
\M,\auf\Ix(x_{m+1}),y_n\zu\models\BhM(\diag\lor\DhM\diag\to\cou).
\end{equation}
To this end, let $u,z$ be such that $u\sim x_{m+1}$, $u\RhM z$, $y_n\in W_z$ and
$\M,\auf z,y_n\zu\models\diag\lor\DhM\diag$. By \eqref{sameroot}, we have $x_{m+1}\RhM z$, and so
by \eqref{wconM} and Claim~\ref{c:gridd}(ii), either $z\sim x_{m}$ or $x_{m}\RhM z$.
In the former case, $\M,\auf z,y_n\zu\models\cou$ by assumption. If $x_{m}\RhM z$ then 
$\M,\auf x_m,y_n\zu\models\neg\diag$ by \eqref{sgen}$^\bullet$, and so 
$\M,\auf x_{m},y_n\zu\models\allcd$ by $\coufd$.
Thus, $\M,\auf x_{m},y_n\zu\models\BhM(\diag\lor\DhM\diag\to\cou)$, and so
$\M,\auf z,y_n\zu\models\cou$ follows in this case as well, proving \eqref{neededtwo}.
\end{proof}

Now recall the formulas $\ffixbw$, $\fincbw$ and $\fdecbw$ from \eqref{fixbw}--\eqref{decbw}, simulating the possible changes in the counters stepping backward, and ensuring that each `vertical coordinate'  is used only once in the counting.
Observe that $\Bv^+\Bh^+\bigl(\cou\to\BhM\neg\cou\bigr)$ (conjunct \eqref{puniq} of $\cinti$)
and $\coufd$ cannot hold simultaneously, so 
we cannot use the formula $\cinti$ for forcing $\cou$ to behave uniformly in intervals.
However, as each vertical coordinate is used at most once in the counting,
we can force that the \emph{changes} happen uniformly in the intervals (even
when the counter is decremented). To this end,
for each $i<N$ we introduce a fresh propositional variable $\minusi$, and then postulate
\begin{equation}\label{decuniq}
\bigwedge_{i<N}\Bigl(\cint\land \Bv^+\Bh\bigl(\minusi\leftrightarrow (\neg\cou\land\allcd)\bigr)\Bigr).
\end{equation}
Now we have the following analogue of Claim~\ref{c:counter}:
%is a straightforward consequence of Claims~\ref{c:int} and \ref{c:countingd}:

\begin{claim}\label{c:counterd}
Suppose that $\M,\auf r_0,r_1\zu\models\diagbwd\land\coufd\land\eqref{decuniq}$ and, for all $m<\omega$, $i<n$, let
$c_i(m) := |\{y\in W_{x_m} :\M,\auf\Ix(x_m),y\zu\models\cou\}|$.
 Then
\[
c_i(m+1)=\left\{
\begin{array}{ll}
c_i(m), & \mbox{ if $\M,\auf\Ix(x_{m+1}),y_{m+1}\zu\models\ffixbwd$},\\[3pt]
c_i(m)+1,\ &  \mbox{ if $\M,\auf\Ix(x_{m+1}),y_{m+1}\zu\models\fincbwd$},\\[3pt]
c_i(m)-1, & \mbox{ if $\M,\auf\Ix(x_{m+1}),y_{m+1}\zu\models\fdecbwd$}.
\end{array}
\right.
\]
\end{claim}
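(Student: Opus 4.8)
The plan is to read off Claim~\ref{c:counterd} from the counting identity of Claim~\ref{c:countingd}, mirroring the way Claim~\ref{c:counter} followed from Claim~\ref{c:counting}, but paying attention to the fact that $\cou$ is \emph{not} interval-uniform. Abbreviate, inside the single column $W_{x_{m+1}}$, the sets $A:=\{y\in W_{x_{m+1}} : \M,\auf\Ix(x_{m+1}),y\zu\models\allcd\}$ and $P:=\{y\in W_{x_{m+1}} : \M,\auf\Ix(x_{m+1}),y\zu\models\cou\}$, so that $|P|=c_i(m+1)$ and, by Claim~\ref{c:countingd}, $|A|=c_i(m)$. It then suffices to compare $P$ with $A$. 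Since each relevant formula is evaluated at $\auf\Ix(x_{m+1}),y_{m+1}\zu$ and guarded by $\Bv^+$, its matrix is available at the pairs $\auf z,y'\zu$ ranging over the $\sim$-class of $x_{m+1}$ and the whole vertical column; this is exactly the pointwise information one uses, as in the proof of Claim~\ref{c:countingd}, to pass to interval-counts.

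First I would dispatch the two easy cases. For $\ffixbwd$ the conjunct $\Bv^+(\cou\leftrightarrow\allcd)$ gives $\cou\leftrightarrow\allcd$ at every $\auf z,y'\zu$ with $z\sim x_{m+1}$, so $\M,\auf\Ix(x_{m+1}),y\zu\models\cou$ and $\M,\auf\Ix(x_{m+1}),y\zu\models\allcd$ hold for exactly the same coordinates; hence $P=A$ and $c_i(m+1)=c_i(m)$. For $\fincbwd$ the conjunct $\Bv^+\bigl(\cou\leftrightarrow(\diag\lor\allcd)\bigr)$ yields $P=\{y : \M,\auf\Ix(x_{m+1}),y\zu\models\diag\lor\allcd\}$. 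Because $\diag$ is interval-uniform (by the conjunct $\dint$ of $\diagbwd$ and Claim~\ref{c:int}), this set splits as $\{y : \M,\auf\Ix(x_{m+1}),y\zu\models\diag\}\cup A$; by Claim~\ref{c:seesdd}(i) and Claim~\ref{c:gridd}(iv) the first set is the singleton $\{y_m\}$, and $y_m\notin A$, since $\allcd$ contains $\DhM\diag$ as a conjunct while $y_m$ satisfies $\DhM\diag$ at no $z\sim x_{m+1}$ (Claim~\ref{c:seesdd}(ii)). The union is therefore disjoint and $c_i(m+1)=c_i(m)+1$.

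The decrementing case carries the real difficulty, and it is precisely the non-uniformity of $\cou$ anticipated just before \eqref{decuniq}. From $\fdecbwd$ the conjunct $\Bv^+(\cou\to\allcd)$ gives $P\subseteq A$, while $\Ev(\neg\cou\land\allcd)$ gives, at each individual $z\sim x_{m+1}$, a \emph{unique} vertical coordinate carrying $\neg\cou\land\allcd$. The obstacle is that, as $\coufd$ is incompatible with conjunct \eqref{puniq} of $\cinti$, one cannot force $\cou$ itself to be interval-uniform, so \emph{a priori} this unique coordinate could vary with $z$ and $A\setminus P$ might contain several elements. This is exactly the role of \eqref{decuniq}: the fresh variable $\minusi$ is pinned to $\neg\cou\land\allcd$ and, carrying the interval constraint $\cint$, is forced interval-uniform via Claim~\ref{c:int}. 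Hence $\neg\cou\land\allcd$ holds at one and the same coordinate $y^\ast$ throughout the interval; that $y^\ast$ lies in $A$ but not in $P$, and any member of $A\setminus P$ must, by the same uniformity argument together with the uniqueness supplied by $\Ev$, coincide with $y^\ast$. Therefore $A\setminus P=\{y^\ast\}$, giving $|P|=|A|-1$ and $c_i(m+1)=c_i(m)-1$, which completes the three cases.

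The only genuinely load-bearing step is this last one: verifying that the single decremented coordinate is common to the whole horizontal interval. Everything else is bookkeeping on the column $W_{x_{m+1}}$ once Claim~\ref{c:countingd} is in hand, whereas the decrement case is where the auxiliary variable $\minusi$ of \eqref{decuniq} is indispensable, precisely because $\cou$ cannot be made interval-uniform.
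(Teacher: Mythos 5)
Your proof is correct and takes essentially the same approach as the paper's: both reduce the claim to comparing, via Claim~\ref{c:countingd}, the interval-level $\cou$-set and $\allcd$-set in the column of $x_{m+1}$, and both settle the critical decrement case by using \eqref{decuniq} together with $\cint$ and Claim~\ref{c:int} to force the unique coordinate supplied by $\Ev$ to be common to the whole interval. The paper in fact writes out only the decrement case (leaving fix and increment to the reader as `similar'), so your explicit treatment of those two cases and your contrapositive phrasing of the decrement step are the only, inessential, differences.
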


\begin{proof}
We show only the hardest case, when $\M,\auf\Ix(x_{m+1}),y_{m+1}\zu\models\fdecbwd$.
The other cases are similar and left to the reader.
As $\M,\auf x_{m+1},y_{m+1}\zu\models\Ev (\neg\cou\land\allcd)$, there is an $y^\ast\in W_{x_{m+1}}$
such that
\begin{align}
\label{echange}
& \M,\auf x_{m+1},y^\ast\zu\models\neg\cou\land\allcd,\\
\label{uchange}
& \M,\auf x_{m+1},y\zu\not\models\neg\cou\land\allcd, \mbox{for all $y\ne y^\ast$, $y\in W_{x_{m+1}}$}.
\end{align}
We claim that
%
%\begin{equation}
\begin{multline}
\{y\in W_{x_{m+1}} :\M,\auf\Ix(x_{m+1}),y\zu\models\cou\}\cup\{y^\ast\}=\\
\label{decok}
\{y\in W_{x_{m+1}} :\M,\auf\Ix(x_{m+1}),y\zu\models\allcd\}.
\end{multline}
%\end{equation}
%
Indeed, in order to show the $\subseteq$ direction, suppose first that 
$\M,\auf\Ix(x_{m+1}),y\zu\models\cou$ for some $y\in W_{x_{m+1}}$.
Then by the first conjunct of $\fdecbwd$, we have $\M,\auf\Ix(x_{m+1}),y\zu\models\allcd$. 
Further, we have $\M,\auf\Ix(x_{m+1}),y^\ast\zu\models\allcd$ 
by \eqref{echange}, \eqref{decuniq} and Claim~\ref{c:int}.
For $\supseteq$, suppose that $\M,\auf\Ix(x_{m+1}),y\zu\models\allcd$
for some $y\in W_{x_{m+1}}$, $y\ne y^\ast$.
Then by \eqref{uchange}, \eqref{decuniq} and Claim~\ref{c:int}, we have 
$\M,\auf\Ix(x_{m+1}),y\zu\models\neg\minusi$, and so $\M,\auf\Ix(x_{m+1}),y\zu\models\cou$,
proving \eqref{decok}.

Now $c_i(m+1)+1=c_i(m)$ follows from \eqref{decok} and Claim~\ref{c:countingd}.
\end{proof}

Given a counter machine $M$,  recall the formula $\fmmbwdec$ defined in the proof of
Theorem~\ref{t:mainirr} (as the conjunction of \eqref{allf} and \eqref{gridunique}--\eqref{instbw}).
Let $\fmmbwdecd$ be the conjunction of $\fmmbwdecda$,
\eqref{decuniq} and  
$\pint$, for $\pred\in\{\state_q,\insbw\}_{q\in Q,\,\iota\in\textit{Op}_C}$.
Then we have the following analogue of Lemma~\ref{l:runbw}:

\begin{lemma}\label{l:runbwd}
Suppose that $\M,\auf r_0,r_1\zu\models\diagbwd\land\fmmbwdecd$, and
for all $m<\omega$, $i<N$, let 
\[
 q_m  :=q, \mbox{ if }\M,\auf \Ix(x_m),y_m\zu\models\state_{q},\qquad
 c_i(m)  :=|\{y\in W_{x_m}: \M,\auf \Ix(x_m),y\zu\models\cou\}|.
\]
Then $\bigl\langle\langle q_m,{\bf c}(m)\rangle:m<\omega\bigr\rangle$ is a well-defined infinite run
of $M$ starting with all-0 counters.
\end{lemma}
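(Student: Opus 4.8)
The plan is to re-run the inductive argument of Lemma~\ref{l:runbw}, replacing each ingredient by its interval analogue and reading every modal assertion through the uniform-over-an-interval semantics $\M,\auf\Ix(\cdot),\cdot\zu\models$. First I would record that $\langle q_m:m<\omega\rangle$ is well-defined: by Claim~\ref{c:gridd}(iii) we have $\M,\auf\Ix(x_m),y_m\zu\models\state$, and since $\pint$ for each $\state_q$ is a conjunct of $\fmmbwdecd$, Claim~\ref{c:int} lets the $\bullet$-version of \eqref{gridunique} single out exactly one $q\in Q-H$ with $\M,\auf\Ix(x_m),y_m\zu\models\state_q$. It then remains to prove by induction on $m$ that $\bigl\langle\langle q_0,{\bf c}(0)\rangle,\dots,\langle q_m,{\bf c}(m)\rangle\bigr\rangle$ is a run of $M$ starting with all-$0$ counters.

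For the base case I would show $c_i(0)=0$: by Claim~\ref{c:seesdd} no $y\in W_{x_0}$ can satisfy $\diag$ or $\DhM\diag$ at $\Ix(x_0)$ (both conclusions require a nonzero index), so $\allcd$ fails throughout the column $x_0$, whence $\coufd$ forces $\cou$ to fail there as well. The inductive step is the crux, and also where I expect the main difficulty. Assuming the claim for $m$, so in particular $\M,\auf\Ix(x_m),y_m\zu\models\state_{q_m}$ with $q_m\in Q-H$, I would first establish the trigger $\M,\auf\Ix(x_{m+1}),y_{m+1}\zu\models\state\land\Dv\bigl(\diag\land\DhM\state_{q_m}\bigr)$. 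The intended vertical witness is $y_m$: it is distinct from $y_{m+1}$ by Claim~\ref{c:gridd}(i), it lies in $W_{x_{m+1}}$ because $x_{m+1}\RhM x_m$ and the frame is decreasing, it satisfies $\diag$ at $\Ix(x_{m+1})$ by Claim~\ref{c:gridd}(iv), and it satisfies $\DhM\state_{q_m}$ since $x_{m+1}\RhM x_m$ (Claim~\ref{c:gridd}(ii)) together with $\M,\auf x_m,y_m\zu\models\state_{q_m}$. Transporting this across the whole $\sim$-class of $x_{m+1}$ (so as to get the interval-level statement) uses \eqref{sameroot} and \eqref{wcon} plus the decreasing-domain inclusion, so that $z\RhM x_m$ and $y_m\in W_z$ for every $z\sim x_{m+1}$.

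Feeding the trigger into the $\bullet$-version of \eqref{executebwdec} yields some $\langle\iota,q_{m+1}\rangle\in I_{q_m}$ with $\M,\auf x_{m+1},y_{m+1}\zu\models\insbw\land\state_{q_{m+1}}$; since $\pint$ for $\insbw$ and $\state_{q_{m+1}}$ are conjuncts of $\fmmbwdecd$, Claim~\ref{c:int} upgrades this to $\M,\auf\Ix(x_{m+1}),y_{m+1}\zu\models\insbw\land\state_{q_{m+1}}$, so that the next label is indeed $q_{m+1}$, and \eqref{instbw}$^\bullet$ then gives $\M,\auf\Ix(x_{m+1}),y_{m+1}\zu\models\exeibwd$. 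Finally I would read the counter update off Claim~\ref{c:counterd}: in the increment (resp.\ decrement) case the $\fincbwd$ (resp.\ $\fdecbwd$) conjunct of $\exeibwd$ gives $c_i(m+1)=c_i(m)+1$ (resp.\ $c_i(m)-1$) while the fix-conjuncts keep $c_j$ constant for $j\ne i$; in the test case the $\Bv^+\neg\cou$ conjunct forces $c_i(m+1)=0$ and the fix-conjunct for $i$ forces $c_i(m+1)=c_i(m)$, yielding $c_i(m)=c_i(m+1)=0$. In every case $\langle q_m,{\bf c}(m)\rangle\stepi\langle q_{m+1},{\bf c}(m+1)\rangle$, which closes the induction.

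The main obstacle throughout is not the counter arithmetic—already packaged into Claim~\ref{c:counterd}—but the disciplined passage between pointwise truth and interval-uniform truth. Every modal fact invoked in the step (the state label, the instruction marker, the $\diag$/$\DhM\diag$ pattern used for the trigger) must be legitimately transported across the $\sim$-classes, and this is licensed only because the relevant $\pint$ formulas were bundled into $\fmmbwdecd$; the one variable that is deliberately \emph{not} forced uniform is $\cou$, which is exactly why the counting is read through $\Ix(\cdot)$ and why the decrement case must route through the uniform auxiliary variable $\minusi$ of \eqref{decuniq} rather than through $\cou$ directly.
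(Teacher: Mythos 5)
Your proof is correct and follows essentially the same route as the paper's own argument: well-definedness via Claims~\ref{c:gridd}(iii), \ref{c:int} and \eqref{gridunique}$^\bullet$, the base case via $\coufd$ and Claim~\ref{c:seesdd}, and the inductive step via the trigger fed into \eqref{executebwdec}$^\bullet$, the transfer to interval-level truth by the $\pint$ conjuncts and Claim~\ref{c:int}, then \eqref{instbw}$^\bullet$ and Claim~\ref{c:counterd}. The only difference is that you spell out details the paper compresses (the vertical witness $y_m$ for the trigger, its transport across the $\sim$-class via \eqref{sameroot}, \eqref{wcon} and the decreasing-domain inclusion, and the case analysis behind $\sigma_m\stepi\sigma_{m+1}$), all of which are accurate.
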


\begin{proof}
The sequence $\langle q_m:m<\omega\rangle$ is well-defined by Claims~\ref{c:gridd}(iii), \ref{c:int} and \eqref{gridunique}$^\bullet$.
We show by induction on $m$ that for all $m<\omega$, 
$\bigl\auf \langle q_0,{\bf c}(0)\rangle,\dots,\langle q_m,{\bf c}(m)\rangle\bigr\zu$
is a run of $M$ starting with all-0 counters. Indeed, $c_i(0)=0$ for $i<N$ by $\coufd$ 
and Claim~\ref{c:seesdd}.
Now suppose the statement holds for some $m<\omega$. By Claim~\ref{c:gridd}, 
$\M,\auf \Ix(x_{m+1}),y_{m+1}\zu\models\state\land \Dv( \diag\land \DhM\state_{q_m})$. So by \eqref{executebwdec}$^\bullet$, there is 
$\langle\iota,q_{m+1}\rangle\in I_{q_m}$ such that 
$\M,\auf x_{m+1},y_{m+1}\zu\models\insbw\land\state_{q_{m+1}}$, and so 
$\M,\auf \Ix(x_{m+1}),y_{m+1}\zu\models\insbw$ by $\iotaint$ and Claim~\ref{c:int}.
Thus, $\M,\auf \Ix(x_{m+1}),y_{m+1}\zu\models\exeibwd$ by \eqref{instbw}$^\bullet$. It follows from Claim~\ref{c:counterd} that 
$\sigma_m\stepi\sigma_{m+1}$ as required.
\end{proof}

For the other direction, suppose that $M$ has an infinite run starting with all-0 counters.
Let $\F=\auf W,R\zu$ be a \preorder in $\CC$ containing an $\auf\omega + 1, >\zu$-type chain
$x_\omega R\dots R x_m R \dots R x_0$.
For every $m<\omega$, we let
\[
[x_{m+1},x_m):=\bigl(\{w\in W : x_{m+1} R w R x_m\}\cup\{x_{m+1}\}\bigr)-
\{w : w=x_{m}\mbox{ or }x_m R w\}.
\]
Take the model $\M^\infty=\bigl\auf\auf\omega + 1, >\zu\mprod\auf\omega,\ne\zu,\mu\bigr\zu$  defined in  \eqref{minftyfirst}--\eqref{minftylast}.
We define a model $\N^\infty=\bigl\auf\F\mprod\auf\omega,\ne\zu,\nu\bigr\zu$ as follows.
We let
\[
\nu( \tick):=\{\auf w,n\zu : w\in[x_{m+1},x_m),\ m,n<\omega,\ m\mbox{ is odd}\},
\]
for all  $\pred\in\{\diag,\state,\state_q,\insbw,\cou\}_{q\in Q,\,\iota\in\textit{Op}_C,\,i<N}$,
\[
\nu( \pred) :=\{\auf w,n\zu :  w\in[x_{m+1},x_m),\ \auf m,n\zu\in\mu(\pred)\ \mbox{for some }m<\omega\},
\]
for all  $\pred\in\{\diag,\state,\state_q,\insbw\}_{q\in Q,\,\iota\in\textit{Op}_C}$,
\[
 \nu(\predp):=\{\auf w,n\zu :  w\in[x_{m},x_{m-1}),\ \auf m,n\zu\in\mu(\pred)\ \mbox{for some }m>0\},
\]
and for all $i<N$,
\begin{align*}
\nu(\minusi) & := \{\auf w,n\zu :  w\in[x_{m+1},x_{m}),\ \auf m,n\zu\notin\mu(\cou),\
\ \auf m-1,n\zu\in\mu(\cou)\ \mbox{for some }m>0\},\\
\nu(\minusip) & := \{\auf w,n\zu :  w\in[x_{m+1},x_{m}),\ \auf m,n\zu\in\mu(\cou),\
\ \auf m+1,n\zu\notin\mu(\cou)\ \mbox{for some }m<\omega\}.
\end{align*}
It is not hard to check that $\N^\infty,\auf x_\omega,0\zu\models\diagbwd\land\fmmbwdecd$.
So by Lemma~\ref{l:runbwd}, CM non-termination is reducible to $\CC\dprod\cdiff$-satisfiability.
This completes the proof of Theorem~\ref{t:main}.

%**************************************************************************************************************

\section{Expanding 2-frames}\label{domain}

In this section we show that satisfiability over classes of expanding 2-frames can be genuinely simpler
than satisfiability over the corresponding product frame classes, but it is still quite complex.

%***

 \subsection{Lower bounds}\label{expprodl}

%***

  \begin{theorem}\label{t:omegael}
  $\{\auf\omega,<\zu\}\eprod\cdiff$-satisfiability is undecidable.
 \end{theorem}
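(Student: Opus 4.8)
The plan is to reduce the \textbf{LCM $\omega$-reachability} problem to $\{\auf\omega,<\zu\}\eprod\cdiff$-satisfiability; since that problem is $\Pi_1^0$-hard, this already yields undecidability (and it dovetails with the co-r.e.\ upper bound coming from Theorem~\ref{t:re} via Prop.~\ref{p:cd}). Fix a counter machine $M$ with distinguished states $q_0,q_r$. I would reuse the \emph{forward} diagonal staircase of Section~\ref{chain} rather than the backward one of Section~\ref{main}: over $\auf\omega,<\zu$ the relation $\Rh$ is $<$, so the conjunct $\Dh(\state\land\Bh\bot)$ of $\diagbw$ is unsatisfiable (no point is $\Rh$-maximal), whereas $\diagfw$, built from \eqref{initfw}--\eqref{sgenfw}, starts at the $\Rh$-minimal point and behaves well. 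So the first step is to re-examine Claim~\ref{c:gridfw} over an \emph{expanding} 2-frame $\Hh_{\F,\overline{\G}}$ with $\F=\auf\omega,<\zu$ and each $\G_m=\auf W_m,\ne\zu$, checking that it still produces diagonal points $y_m$ with $\M,\auf m,y_m\zu\models\state$ and $\M,\auf m-1,y_m\zu\models\diag$. This should go through, because the staircase only ever steps forward along $\Rh$ or stays inside a single column, and expansion guarantees $y_m\in W_{m-1}\subseteq W_m$ so that all the relevant points persist.

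The new phenomenon, and the reason the complexity drops from the $\Sigma_1^1$ of the constant-domain case (Theorem~\ref{t:omega}) to merely co-r.e., is that over expanding domains the counter bookkeeping can only be enforced \emph{one-sidedly}: a vertical point carrying a token in the large column $W_m$ need not survive into a smaller earlier column, and the uniqueness device $\Ev$ together with the persistence conjuncts of $\couffw$ can no longer pin the counts down exactly, since tokens may silently disappear. I would therefore replace the exact transition formulas $\ffix,\finc,\fdec$ and $\exei$ by \emph{lossy} variants $\ffixbwl,\fincbwl,\fdecbwl,\exeil$ asserting only the inequalities that witness a lossy step, and bundle these into $\fmmbwlossy$. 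The combinatorial heart is then the lossy analogue of Claim~\ref{c:counterfw} and Lemma~\ref{l:runfwd}: if $\M,\auf 0,r\zu\models\diagfw\land\fmmbwlossy$, then the sequence read off along the staircase is a run with $\sigma_m\lstep\sigma_{m+1}$ for all $m$, starting with all-$0$ counters. Dually, from a lossy run I would build a witnessing model along the lines of the $\Mrec$/$\Minf$ constructions, now exploiting the ever-larger columns $W_m$ to accommodate the surviving tokens.

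On top of $\fmmbwlossy$ I would add a gadget in the spirit of $\frecbw$ forcing $q_r$ to recur along the staircase, and read off the correspondence with ``for every $n$ there is a lossy run visiting $q_r$ at least $n$ times.'' \textbf{The main obstacle} is precisely this correspondence, and it has two delicate parts. First, one must ensure the expanding structure forces genuinely \emph{lossy} behaviour (tokens only lost, never spuriously created), so that the extracted sequence really is a lossy run of $M$; since there is no backward modality, ruling out stray tokens in the growing columns is the technically demanding point, and it is here that a uniqueness/marker argument analogous to Claim~\ref{c:counting} (using auxiliary variables such as $\minusi$) is needed. Second, one must calibrate the reduction to $\omega$-reachability and \emph{not} to the strictly stronger ``infinite lossy run visiting $q_r$ infinitely often'': for lossy machines these genuinely differ (a machine that first pumps a counter and then drains it while visiting $q_r$ witnesses $\omega$-reachability yet admits no infinite recurrent run), so the gadget must allow the counters to be refreshed from fresh elements of the expanding domain between successive visits to $q_r$ --- which is exactly what an expanding, as opposed to constant, domain provides.
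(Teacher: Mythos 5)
You pick the right target problem (LCM $\omega$-reachability), the right forward staircase, and you even name the two delicate points correctly --- but those two points are precisely the content of the proof, and on both your sketch goes in a direction that cannot be repaired. First, you propose to read off a \emph{single infinite lossy run going forward} along the staircase, with $\sigma_m\lstep\sigma_{m+1}$ for all $m$. Forward reading is unsound in expanding 2-frames: a fresh point $w\in W_{m+1}\setminus W_m$ has exactly the same forward-looking behaviour as a point that already existed in $W_m$, and since the language has no backward modality, no formula can separate ``old'' from ``fresh'' points; hence spurious counter tokens on fresh points can never be excluded, and spurious \emph{increases} are not lossy behaviour (lossiness permits only losses). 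A uniqueness/marker argument in the style of Claim~\ref{c:counting} does not help, because a model is free to place the markers on fresh points as well. The paper's remedy is not a marker but a reversal of direction: each encoded run proceeds \emph{backward} (machine time is decreasing column index), so that passing to a smaller column can only lose tokens --- disappearance \emph{is} lossiness --- and spurious creation is impossible because every point of the smaller column already exists in the larger column where the constraints were imposed; this is exactly what Claims~\ref{c:countingl} and \ref{c:counterl} deliver (note also that the paper must strengthen $\diagfw$ to $\diagfwu$ by adding \eqref{diaguniq}, since vertical uniqueness of $\diag$ is needed to simulate lossy increments).

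Second, even granting a sound lossy-run extraction, a single staircase run on which $q_r$ recurs would reduce the \emph{recurrence} (B\"uchi) problem, not $\omega$-reachability; by your own pumping/draining example these differ, so the completeness direction of your reduction breaks: a positive $\omega$-reachability instance need not admit any infinite lossy run visiting $q_r$ infinitely often, and your formula would then be unsatisfiable. Your closing remark that ``the counters must be refreshed from fresh elements between visits to $q_r$'' gestures at the fix but supplies no mechanism. What the paper actually does is force \emph{infinitely many disjoint finite runs}: $\start$-columns recur forever (\eqref{recinit}, \eqref{startv}), each run is restarted with all-0 counters at its $\start$-column (\eqref{initmmbwl}) and proceeds backward to the previous $\start$-column (Claim~\ref{c:runbwl}); and a separate copying gadget with the variables $\recx,\recp$ (\eqref{startpoints}--\eqref{unipoints}, Claim~\ref{c:frecfw}) forces the $n$th segment to contain at least $n$ columns where $\recp$, hence $\state_{q_r}$, holds. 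Only this segmentation plus the incrementing-count gadget yields exactly ``for every $n$ there is a lossy run visiting $q_r$ at least $n$ times'' (Lemma~\ref{l:lrecpred}), which is the statement of $\omega$-reachability. As it stands, your proposal identifies the obstacles but does not overcome either of them.
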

 
\begin{corollary}\label{co:omegaefol}
\logic-satisfiability is undecidable in expanding domain models over $\auf\omega,<\zu$.
\end{corollary}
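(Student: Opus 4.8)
The plan is to reduce \textsc{Lcm $\omega$-reachability} to $\{\auf\omega,<\zu\}\eprod\cdiff$-satisfiability; Corollary~\ref{co:omegaefol} then follows at once from this together with Prop.~\ref{p:foprod}. Fix a counter machine $M$, the configuration $\auf q_0,{\bf 0}\zu$ and a state $q_r$. Since $\auf\omega,<\zu$ is well-founded in the backward direction, the descending-chain (``going backward'') machinery of Theorems~\ref{t:mainirr} and \ref{t:disc} is unavailable here, so I would instead build on the \emph{forward} apparatus of Section~\ref{chain}: force an infinite forward diagonal staircase with $\diagfw$ (Claim~\ref{c:gridfw}), represent the content of counter $c_i$ at step $m$ by the number of $\cou$-points in the $m$-th column, drive the instructions of $M$ with the column-to-column formulas $\couffw$, $\finc$, $\fdec$, $\ffix$, $\exei$ and $\fmm$, and force $q_r$ to recur by the same device $\Bh\Dh\Bv(\state\to\state_{q_r})$ used in Lemma~\ref{l:runfwd}. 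Over constant domains this is exactly the reliable encoding of Theorem~\ref{t:omega}; the whole point is that over \emph{expanding} 2-frames it degrades in a controlled way.

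The crux is the effect of expansion on the counting. Because $W_m\subseteq W_{m+1}$, a mark placed in a column can be forced to persist into the (larger) future columns, but nothing in the language can forbid the fresh domain elements of $W_{m+1}\setminus W_{m}$ from carrying counter information: an element born at column $m+1$ has no incarnation at column $m$ against which the formulas could constrain it. Consequently the column-to-column correspondence underlying Claim~\ref{c:counterfw} can only be enforced from one side, and the step relation that the formulas faithfully pin down is no longer the reliable $\step$ but the \emph{unreliable} relation $\lstep$: after orienting the increment/decrement rôles so that the uncontrollable expansion errors act as spontaneous (lossy) changes, the formulas force $\sigma_m\lstep\sigma_{m+1}$ at every step. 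Thus a model of the conjoined formula yields a well-defined infinite \emph{lossy} run of $M$ that starts with all-$0$ counters and visits $q_r$ infinitely often --- the analogue of Lemma~\ref{l:runfwd} with $\step$ weakened to $\lstep$.

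For the converse I would start from an infinite lossy run of $M$ with $\auf q_0,{\bf 0}\zu$ visiting $q_r$ infinitely often, and build an expanding model over $\auf\omega,<\zu$ satisfying $\diagfw$, the counter formulas and the recurrence formula: at column $m$ one takes $W_m$ just large enough to carry the marks demanded by $\sigma_m$, and realises each loss by letting the corresponding mark simply fail to reappear in the next (larger) column, which the relaxed, one-sided constraints permit. The two directions are then tied to the $\Pi_1^0$ problem by the standard fact that, for lossy machines, \textsc{Lcm $\omega$-reachability} holds precisely when such an infinite recurrent lossy run exists (a well-quasi-order argument that also underlies the matching co-r.e.\ upper bound of Theorem~\ref{t:omegaeu}). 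I expect the main obstacle to be exactly the bookkeeping in the second paragraph: verifying that the errors forced by domain expansion are genuinely \emph{one-sided} --- that they affect a counter only in the intended direction and never in the opposite one --- since this is where the forward-over-$\omega$ geometry diverges from the backward, descending-chain constructions of Theorems~\ref{t:mainirr}, \ref{t:disc} and \ref{t:main}, and it is precisely this weakening from reliable to lossy steps that accounts for the drop from $\Sigma_1^1$-hardness (the constant-domain case, Theorem~\ref{t:omega}) to mere undecidability.
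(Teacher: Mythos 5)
Your proposal diverges from the paper in its key design choice, and that choice is where it breaks. You keep the \emph{forward} constant-domain encoding of Theorem~\ref{t:omega} ($\diagfw$, $\couffw$, $\ffix$, $\finc$, $\fdec$, $\fmm$) and claim that over expanding 2-frames its failure mode is \emph{lossy}. It is not: it is \emph{gainy}. Since domains grow in the direction your runs travel, the uncontrollable objects are the fresh points of $W_{m+1}\setminus W_m$, and nothing in your formulas constrains them --- the instruction formulas $\ffix,\finc,\fdec$ are evaluated at $\auf m,y_m\zu$ with $\Bv^+$/$\Ev$ ranging only over $W_m$, and the $\Bv^+\Bh^+$-prefixed conjuncts of $\fmm$ anchored at the root never reach points born after time $0$. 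Such a fresh point can carry $\plusi\land\neg\minusi$ at column $m+1$, so counters can spontaneously \emph{increase}. They cannot spontaneously decrease: $\couffw$ forces $\plusi$ and $\minusi$ to persist forward, so a counted point stays counted unless a legitimate decrement marks it with $\minusi$. This also kills your converse direction: ``letting a mark fail to reappear in the next column'' is exactly what $\couffw$ forbids, so an infinite lossy run cannot be turned into a model of your formula. There is no way to ``orient the increment/decrement r\^oles'' to fix this --- the counter is a cardinality of marked points, and added points can only add to it. What your formula would pin down, at best, is runs with \emph{incrementation} errors, and the reduction would then have to start from a gainy recurrence problem, whose $\Pi^0_1$-hardness you cannot cite and which genuinely differs from the lossy one (the monotonicity that drives the well-quasi-order argument points the wrong way for gains).

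The paper resolves exactly this tension by keeping the staircase forward but running the counter machine \emph{backward} along it: then expansion means the columns \emph{shrink} in the direction of travel, points carrying counter marks vanish, and the errors are genuinely one-sided losses (Claims~\ref{c:countingl}--\ref{c:counterl}). Since $\auf\omega,<\zu$ is well-founded backward, no single infinite backward run exists; instead the paper places infinitely many \emph{finite} backward lossy runs end-to-end, separated by $\start$-marks, and uses the $\recx$/$\recp$ machinery of $\frece$ to force the $n$-th segment to visit $q_r$ at least $n$ times (Claims~\ref{c:runbwl}, \ref{c:frecfw}, Lemma~\ref{l:lrecpred}) --- thereby encoding LCM $\omega$-reachability in its literal ``for every $n$'' form, with no need for your reduction to a single infinite recurrent lossy run. (Incidentally, your well-quasi-order claim --- that LCM $\omega$-reachability is equivalent to the existence of an infinite lossy run visiting $q_r$ infinitely often --- is correct, by Dickson's lemma and left-monotonicity of $\lstep$; but it cannot rescue the proposal, because the satisfying models of your formula do not correspond to lossy runs in the first place.)
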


\begin{theorem}\label{t:decefin}
$\clinf\eprod\cdiff$-satisfiability is Ackermann-hard.
\end{theorem}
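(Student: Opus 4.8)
The plan is to reduce the Ackermann-hard {\sc LCM reachability} problem to $\clinf\eprod\cdiff$-satisfiability; the construction will be the finite, expanding-domain analogue of the backward encoding of Theorem~\ref{t:mainirr}, combined with the finite end-marker device of Theorem~\ref{t:finite} and the lossy-counting idea underlying Theorem~\ref{t:omegael}. Given a counter machine $M$, a state $q_r$, and the start $\langle q_0,{\bf 0}\rangle$, I would fix a model $\M$ based on an expanding 2-frame $\Hh_{\F,\overline{\G}}$ with $\F=\auf T,<\zu\in\clinf$ and each $\G_x=\auf W_x,\ne\zu$, and force a backward-going diagonal staircase of state points $u_0,u_1,\dots$ together with columns $\C_m$ carrying the counter contents, just as in Claims~\ref{c:grid}--\ref{c:counting}. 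The decisive structural fact is that expanding 2-frames validate $\Bh\Bv\pred\to\Bv\Bh\pred$ and $\Dh\Bv\pred\to\Bv\Dh\pred$ (confluence), but \emph{not} $\Bv\Bh\pred\to\Bh\Bv\pred$; in particular the commute instance $x\Rh y\Rv z\to\exists u\,(x\Rv u\Rh z)$ may fail. Since the copying of a unit of counter $c_i$ from $\C_m$ to the next column $\C_{m+1}$ (the step producing the half-grid point $x_{m+1,n}$ in Claim~\ref{c:halfgrid}) relies on exactly this commute direction, and since going backward the vertical domains shrink ($W_{u_{m+1}}\subseteq W_{u_m}$), a counter-carrying point may simply be \emph{absent} in the smaller column. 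This disappearance is precisely what turns reliable runs into lossy ones.

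Concretely, I would re-use the staircase formula $\diagbw$, guarding its conjunct \eqref{sgen} by a fresh variable $\pend$ exactly as \eqref{sgenfin} does in the forward case so that the finite chain terminates, and re-use $\fmmbwdec$ together with $\couf$ and the operation formulas $\ffixbw$, $\fincbw$, $\fdecbw$, $\exeibw$. The $\Diamond$-conjuncts of $\diagbw$ still force the diagonal staircase of states and $\diag$-points to exist (the model supplies the required points directly), so Claim~\ref{c:grid} survives. What the expanding structure no longer guarantees is the completeness of the half-grid in shrinking columns, so the counting identity of Claim~\ref{c:counting} weakens to the inequality $|\{x\in\C_{m+1}:\M,x\models\allc\}|\le|\{x\in\C_m:\M,x\models\cou\}|$. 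Re-running Lemma~\ref{l:runbw} with this inequality then yields, for the induced configurations $\sigma_m=\langle q_m,{\bf c}(m)\rangle$, a \emph{lossy} step $\sigma_m\lstepi\sigma_{m+1}$ in place of a reliable one, and the $\pend$-guarded conjunct forcing $\state_{q_r}$ at the terminal column guarantees that the finite lossy run reaches $q_r$. Note that under the lossy semantics a zero-test is automatically satisfiable by losing down to $0$, so no strengthening of the test clause is needed; this is exactly why the resulting problem is decidable (Theorem~\ref{t:decefinu}) rather than undecidable.

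For the converse I would start from a lossy run $\langle\sigma_0,\dots,\sigma_E\rangle$ with $\sigma_0=\langle q_0,{\bf 0}\rangle$ and $q_E=q_r$ and build, along the lines of the model $\Minf$ of Theorem~\ref{t:mainirr}, a model on an expanding 2-frame over a finite linear order with $E+2$ points. The one genuine difference is that I would choose the nested vertical domains so that they shrink as the chain descends, dropping at each step exactly those vertical coordinates that carry the units of each $c_i$ lost when the reliable post-configuration is cut down to the actual ${\bf c}(m+1)$; the monotonicity of a lossy run's losses makes such a nested (hence expanding) choice possible, while fresh increments always land on the $\diag$-points, which are present in every column. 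One then checks that this model validates $\diagbw\land\fmmbwdec$ together with the $\pend$-conjunct, so that {\sc LCM reachability} reduces to $\clinf\eprod\cdiff$-satisfiability, and Ackermann-hardness follows since the reduction is clearly effective.

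The main obstacle will be the soundness direction: I must show that the \emph{only} way the counting can deviate from the reliable operation is by \emph{loss} — the count along a column may drop but can never spuriously increase — so that the forced sequence is a bona fide lossy run. This requires revisiting each appeal to commutativity in Claims~\ref{c:grid}--\ref{c:counting}, retaining those that use only confluence and the surviving commute direction, and verifying that the remaining failures manifest solely as missing half-grid points in shrinking columns (never as extra ones, which would be blocked by the uniqueness-of-coordinate bookkeeping already present in $\couf$ and the decrement clause of $\fdecbw$). Once this monotone control over the columns is in place, the inequality replacing Claim~\ref{c:counting} and the finite end-marker handling of Lemma~\ref{l:runfwdfin} combine to give the equivalence in both directions.
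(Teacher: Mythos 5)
Your reduction target (LCM reachability) and your diagnosis of where lossiness comes from are both right, but the construction has a fatal gap in the soundness direction: Claim~\ref{c:grid} does \emph{not} survive in expanding 2-frames, and your whole extraction of a run stands or falls with it. In the inductive step of Claim~\ref{c:grid}, after the new $\state$-point $u_{m+1}$ is produced by \eqref{sgenbw}, the argument must get back to the \emph{root column}: commutativity is invoked to obtain $y_{m+1}$ with $y_m\Rv y_{m+1}\Rh u_{m+1}$, and only then can \eqref{dgenbw} --- which, being of the form $\Bv^+\Dh\diag$, constrains \emph{only} the root column --- be applied to create the next $\diag$-point $v_{m+1}$ with $v_{m+1}\Rh u_{m+1}$. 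This is exactly the commute instance $x\Rh y\Rv z\to\exists u\,(x\Rv u\Rh z)$ that you yourself observe fails in expanding 2-frames: the vertical coordinate of $u_{m+1}$ is only guaranteed to lie in the domain of its own (later) column, and the root column, sitting at an earlier time, has a smaller domain, so the required point $y_{m+1}$ need not exist in the frame at all. Nothing else in $\diagbw$ forces a $\diag$-point $\Rh$-below $u_{m+1}$: \eqref{dgenbw} is already satisfied once every root-column point sees the first $\diag$-point $v_0$. Hence a satisfying model can stall after one staircase step (take the root column to be a singleton and introduce the coordinate of $u_1$ only from $u_1$'s column onwards); in such a model your $\pend$-guarded conjunct forcing $\state_{q_r}$ is vacuous, and the formula is satisfiable for \emph{every} machine $M$ with $I_{q_0}\neq\emptyset$, independently of whether $M$ has a lossy run reaching $q_r$. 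So the failure of commutativity does not, as you claim, manifest \emph{solely} as missing half-grid points weakening Claim~\ref{c:counting} to an inequality; it destroys the forcing of the grid itself.

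This is precisely why the paper keeps the staircase \emph{forward} and only reads the \emph{run} backward: it uses $\diagfwfinu$ (the conjunction of \eqref{initfw}, \eqref{dgenfw}, \eqref{sgenfin} and \eqref{diaguniq}), whose existential demands are vertical steps and forward horizontal steps, both available in expanding 2-frames because coordinates persist forward; finiteness of the order then forces the $\pend$-marked end to exist (Claim~\ref{c:gridfwfinexp}), with no appeal to commutativity. The run is encoded backward along this staircase by $\fmmbwlossy$, with $\start$ identified with $\pend$ and $\state_{q_r}$ asserted directly at the root, and the counting formulas are redesigned as one-directional implications ($\allcl$, $\ffixbwl$, $\fincbwl$, $\fdecbwl$) so that shrinking columns can only \emph{lose} units (Claims~\ref{c:countingl}, \ref{c:counterl}, \ref{c:runbwl}). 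Note that reusing your biconditional and exactly-one formulas $\ffixbw$, $\fincbw$, $\fdecbw$ also causes trouble in the completeness direction: with $\cou\leftrightarrow\allc$ a loss cannot be modelled by a point merely ceasing to satisfy $\cou$, which is why you are driven to bespoke shrinking domains, whereas the paper's one-directional formulas let the converse be realised in a plain product frame. To repair your proposal you would have to abandon the backward staircase in favour of the forward one and adopt the lossy one-directional machinery --- at which point you have reconstructed the paper's proof.
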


\begin{corollary}\label{co:decefinfol}
\logic-satisfiability is Ackermann-hard in expanding domain models over the class of all 
finite linear orders.
\end{corollary}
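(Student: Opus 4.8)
The plan is to reduce the (Ackermann-hard) {\sc LCM reachability} problem to $\clinf\eprod\cdiff$-satisfiability, adapting the lossy encoding of Theorem~\ref{t:omegael} from $\auf\omega,<\zu$ to finite timelines exactly as the proof of Theorem~\ref{t:finite} adapted that of Theorem~\ref{t:omega}. Given an instance $\auf M,\auf q_0,{\bf 0}\zu,q_r\zu$, I would force a diagonal staircase of $\state$- and $\diag$-points carrying the counter encoding (the instruction formulas built from $\couf$, $\fincbw$, $\fdecbw$, $\ffixbw$ and $\exeibw$), but insert a fresh end marker $\pend$ that truncates the staircase after finitely many columns and pins the target: the final, $\pend$-marked column must satisfy $\state_{q_r}$. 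Writing $\fmmbwlossy$ for the resulting conjunction, the reduction asserts that $\fmmbwlossy$ is $\clinf\eprod\cdiff$-satisfiable iff $M$ has a lossy run from $\auf q_0,{\bf 0}\zu$ reaching $q_r$.

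The conceptual heart is the loss. In an expanding $2$-frame the vertical domains shrink as one moves along the staircase in the direction of the run, so a vertical coordinate carrying counter content in one column need not survive into the next. Consequently the exact counting identities of Claims~\ref{c:counting} and~\ref{c:counter} degrade to the one-sided inequalities $c_i(m+1)\le$ (the reliable value prescribed by the instruction): counter points can only disappear, never appear out of nowhere. For the soundness direction I would prove the finite analogues of the staircase claims and then show that in any model of $\fmmbwlossy$ every consecutive pair of configurations satisfies $\sigma_m\lstepi\sigma_{m+1}$, so that, read off the $\pend$/$\state_{q_r}$ conjunct, $M$ has a lossy run reaching $q_r$. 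For completeness I would start from such a lossy run and build an expanding model whose nested vertical domains realise exactly its losses: reserve one fresh coordinate for each of the finitely many increments, keep every staircase and diagonal coordinate present through the column in which it is used, and delete coordinates precisely at the steps where the run loses counter content; a routine induction then verifies $\fmmbwlossy$ in this model.

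The step I expect to be the main obstacle is reconciling this shrinking-column argument with the weak commutativity of expanding frames. Such frames validate $\Bh\Bv\pred\to\Bv\Bh\pred$ and $\Dh\Bv\pred\to\Bv\Dh\pred$ but \emph{not} $\Bv\Bh\pred\to\Bh\Bv\pred$, and it is precisely this last direction that the abstract backward grid-forcing of Claim~\ref{c:grid} invokes to manufacture the predecessor half-grid points. The remedy I would pursue is to generate the staircase by the forward pattern of Claim~\ref{c:gridfw} — whose $\Dv$-then-$\Dh$ shape needs only the available directions and always lands in the larger neighbouring domain — while forcing the counter transitions in the backward style of Section~\ref{product}, so that each column reads its counter content off the \emph{larger} neighbouring column; it is then the genuinely smaller column that does the counting and thereby loses points. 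Pinning down this interface — checking that an increment can always locate a fresh surviving coordinate, that the loss is strictly one-sided, and that the end marker cooperates with the shrinking domains — is where the real work lies, the remaining verifications being direct analogues of those in Sections~\ref{chain} and~\ref{product}.
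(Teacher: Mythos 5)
Your proposal follows the paper's own route: Corollary~\ref{co:decefinfol} is obtained from Theorem~\ref{t:decefin} via Prop.~\ref{p:foprod}, and Theorem~\ref{t:decefin} is proved exactly as you outline --- reduce {\sc LCM reachability} to $\clinf\eprod\cdiff$-satisfiability by generating the staircase \emph{forward} with the $\pend$-truncation of Theorem~\ref{t:finite} (formula $\diagfwfinu$, Claim~\ref{c:gridfwfinexp}), while encoding the run \emph{backward} along it with the lossy machinery of Theorem~\ref{t:omegael} ($\fmmbwlossy$, Claims~\ref{c:countingl}--\ref{c:runbwl}); your diagnosis of why the backward grid-forcing of Claim~\ref{c:grid} is unavailable (failure of $\Bv\Bh\pred\to\Bh\Bv\pred$ in expanding 2-frames) matches the paper's design. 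Your completeness construction differs only harmlessly: you realise losses by genuinely shrinking domains, whereas the paper simply reuses a constant-domain product model over $\auf T,<\zu\mprod\auf T+1,\ne\zu$ and lets the valuation drop $\cou$-points, which the one-sided formulas permit.

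There is, however, a real error in the reduction as you state it: you put $\state_{q_r}$ on the final, $\pend$-marked column and write the steps as $\sigma_m\lstepi\sigma_{m+1}$, i.e.\ a run proceeding forward in time. This contradicts your own loss mechanism, which needs the run to proceed backward in time (only then do the columns shrink along the run), and the forward orientation is not merely inconvenient but unsound: a fresh coordinate $w\in W_{m+1}\setminus W_m$ has no pair $\auf m,w\zu$, and the language has no past operators, so no formula can tie the $\cou$-status of $\auf m+1,w\zu$ back to column $m$; fresh points may therefore spuriously carry $\cou$, counters may grow by more than any lossy step allows (a lossy step increases a counter by at most one), and the configurations read off a satisfying model need not form a lossy run at all. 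The correct orientation is the paper's (Lemma~\ref{l:runfwdfinexp}): identify $\pend$ with $\start$, so that the $\pend$-marked column carries the \emph{initial} configuration $\auf q_0,{\bf 0}\zu$ (forced by \eqref{initmmbwl}), put the target $\state_{q_r}$ at the root column $0$, and read the run as $\sigma_{E-1}\lstep\cdots\lstep\sigma_0$ --- which is also what your second and third paragraphs implicitly require. One further ingredient you omit: vertical uniqueness of $\diag$ (conjunct \eqref{diaguniq} of $\diagfwfinu$), without which a column executing a lossy increment could contain many $\diag$-points all carrying $\cou$, breaking the bound $c_i(m)\le c_i(m+1)+1$ of Claim~\ref{c:counterl}.
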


We prove Theorem~\ref{t:omegael}
by reducing the `LCM $\omega$-reachability' problem to \mbox{$\{\auf\omega,<\zu\}\eprod\cdiff$}-satisfiability.
The  idea of our reduction is similar to the one used in \cite{KonevWZ05} for a 
more expressive formalism.  It is sketched in Fig.~\ref{f:model}: 
First, we generate an infinite diagonal staircase going forward. Then, still going forward, we place longer and longer finite runs one after the other. However, each individual run proceeds backward.
Also, we can force only lossy runs this way. When going backward horizontally in expanding 2-frames, the vertical columns might become
smaller and smaller, so some of the points carrying the information on the content of the counters might 
disappear as the runs  progress.
\begin{figure}[ht]
\begin{center}
\setlength{\unitlength}{.08cm}
\begin{picture}(130,72)(-5,0)
\thicklines
\put(5,10){\line(0,1){25}}
\put(25,10){\line(0,1){32}}
\put(55,10){\line(0,1){45}}
\put(95,10){\line(0,1){52}}

\thinlines
\put(5,10){\vector(1,0){100}}
\put(5,8){$\underbrace{\hspace*{1.55cm}}_{\leftarrow\ \rho_1}$}
\put(0,70){${}_{\start}$}
\put(4,65){${}_{\downarrow}$}
\put(5,10){\line(2,1){95}}
\multiput(15,15)(10,5){9}{\circle*{1.5}}
\put(25,8){$\underbrace{\hspace*{2.35cm}}_{\leftarrow\ \rho_2}$}
\put(20,70){${}_{\start}$}
\put(24,65){${}_{\downarrow}$}
\put(55,8){$\underbrace{\hspace*{3.15cm}}_{\leftarrow\ \rho_3}$}
\put(50,70){${}_{\start}$}
\put(54,65){${}_{\downarrow}$}
\put(90,70){${}_{\start}$}
\put(94,65){${}_{\downarrow}$}
\put(94,63){${}_{}$}
\put(108,9){$\dots$}
\put(103,59){$\dots$}
\put(98,3){$\dots$}
\put(117,9){$\langle\omega,<\rangle$}

\put(5,15){\line(1,0){10}}
\put(5,15){\circle*{1.5}}
\put(0,16){${}_{\recx}$}
\put(15,25){\line(1,0){20}}
\put(15,25){\line(0,-1){10}}
\put(15,25){\circle*{1.5}}
\put(10,26){${}_{\recx}$}
\put(25,30){\line(1,0){20}}
\put(25,30){\circle*{1.5}}
\put(20,31){${}_{\recx}$}
\put(35,40){\line(1,0){30}}
\put(35,40){\line(0,-1){15}}
\put(35,40){\circle*{1.5}}
\put(30,41){${}_{\recx}$}
\put(45,45){\line(1,0){30}}
\put(45,45){\line(0,-1){15}}
\put(45,45){\circle*{1.5}}
\put(40,46){${}_{\recx}$}
\put(55,50){\line(1,0){30}}
\put(55,50){\circle*{1.5}}
\put(50,51){${}_{\recx}$}

\put(16,13){${}_{\state_{q_r}}$}
\put(26,17){${}_{\state_{q_0}}$}
\put(36,22){${}_{\state_{q_r}}$}
\put(46,27){${}_{\state_{q_r}}$}
\put(56,32){${}_{\state_{q_0}}$}
\put(66,37){${}_{\state_{q_r}}$}
\put(76,42){${}_{\state_{q_r}}$}
\put(86,47){${}_{\state_{q_r}}$}
\put(96,52){${}_{\state_{q_0}}$}
\end{picture}
\end{center}
\caption{Representing longer and longer $n$-recurrent lossy runs $\rho_n$ in 2-frames expanding over $\auf\omega,<\zu$.}\label{f:model}
\end{figure}

To this end,
let $\Hh_{\auf\omega,<\zu,\overline{\G}}$ be an expanding 2-frame for some difference frames
$\G_n=\auf W_n,\ne\zu$, $n<\omega$, and let $\M$ be a model based on $\Hh_{\auf\omega,<\zu,\overline{\G}}$.
First, we generate an infinite diagonal staircase forward in $\M$, similarly how we did in the 
proof of Theorem~\ref{t:omega}. However, this time we use the vertical counting capabilities to
force the \emph{uniqueness} of this staircase. To this end, let $\diagfwu$ be the conjunction of \eqref{initfw}--\eqref{sgenfw} and
\begin{equation}\label{diaguniq}
 \Bh^+\Bv\bigl(\diag\to\Bv\neg\diag).
\end{equation}

The following `expanding generalisation' of Claim~\ref{c:gridfw} can be proved by a straightforward induction on $m$:

\begin{claim}\label{c:gridfwe}
Suppose that $\M,\auf 0,r\zu\models\diagfwu$. Then there exists a sequence 
\mbox{$\auf y_m : m<\omega\zu$} such that for all $m<\omega$,
\begin{enumerate}
\item[{\rm (}i{\rm )}]
$y_0=r$ and if $m>0$ then $y_{m}\in W_{m-1}$,
\item[{\rm (}ii{\rm )}]
for all $n<m$, $y_m\ne y_{n}$, 
\item[{\rm (}iii{\rm )}]
$\M,\auf m,y_m\zu\models\state$,
\item[{\rm (}iv{\rm )}]
for all $w\in W_m$, $\M,\auf m,w\zu\models\diag$ iff $w=y_{m+1}$.
\end{enumerate}
\end{claim}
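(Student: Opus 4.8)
The plan is to prove the claim by induction on $m$, building the diagonal staircase greedily going \emph{forward}. Because we work in an \emph{expanding} frame, the vertical domains only grow along $<$ (so $W_n\subseteq W_{n'}$ for $n<n'$); hence every point produced in column $m$ automatically survives in all later columns, and the domain bookkeeping in (i) is immediate, since the $\diag$-witness found in column $m$ lies in $W_m\subseteq W_{m+1}$. The genuinely new ingredient compared with Claim~\ref{c:gridfw} is clause (iv), i.e.\ \emph{uniqueness} of the $\diag$-point in each column, which is precisely what the extra conjunct \eqref{diaguniq} is meant to deliver.

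For the base case I would set $y_0:=r$, observe $\M,\auf 0,y_0\zu\models\state$ by \eqref{initfw}, and apply \eqref{dgenfw} at $\auf 0,y_0\zu$ to obtain a $\diag$-point $y_1\ne y_0$ in column $0$. I would then read \eqref{diaguniq} at $\auf 0,y_1\zu$ (which is reachable by a $\Dv$-step from $\auf 0,r\zu$ once $y_1\ne r$): since $\auf 0,y_1\zu\models\diag$ it forces $\Bv\neg\diag$ there, so every point of $W_0$ distinct from $y_1$ is $\neg\diag$. This yields (iv) for $m=0$ and, in particular, that $r$ is not a $\diag$-point of column $0$.

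In the inductive step, assuming (i)--(iv) up to $m$, I would first recover (iii) at $m$: the hypothesis (iv) at $m-1$ gives $\M,\auf m-1,y_m\zu\models\diag$ with $y_m\ne r$, so \eqref{sgenfw} applies at that point, and its $\Dh\state\land\Bh\Bh\neg\state$ conjunct pins down $\M,\auf m,y_m\zu\models\state$ together with $\M,\auf k,y_m\zu\models\neg\state$ for all $k\ge m+1$. Applying \eqref{dgenfw} at the $\state$-point $\auf m,y_m\zu$ produces a $\diag$-point $y_{m+1}\ne y_m$ in column $m$, which I take as the next staircase point; (i) at $m+1$ is then automatic. The remaining obligations all hinge on knowing $y_{m+1}\ne r$: once that is secured, (iii) at $m+1$ follows by firing \eqref{sgenfw} at $\auf m,y_{m+1}\zu$ (giving $\M,\auf m+1,y_{m+1}\zu\models\state$); uniqueness (iv) at $m$ follows exactly as in the base case by firing \eqref{diaguniq} at $\auf m,y_{m+1}\zu$; and distinctness $y_{m+1}\ne y_n$ for $1\le n\le m$ follows from the $\Bh\Bh\neg\state$ clause, since (iv) at $n-1$ gives $\M,\auf n-1,y_n\zu\models\diag$, whence $\M,\auf k,y_n\zu\models\neg\state$ for $k\ge n+1$, contradicting $\M,\auf m+1,y_{m+1}\zu\models\state$ if $y_{m+1}=y_n$.

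The step I expect to be the crux is the case $n=0$ of distinctness, namely $y_{m+1}\ne y_0=r$: the \eqref{diaguniq}-unique $\diag$-point of column $m$ must not sit on the track $\{\auf k,r\zu\}_k$. This is delicate because the whole staircase climbs only through \eqref{sgenfw}, whose conjunction $\diag\to(\Dh\state\land\Bh\Bh\neg\state)$ is guaranteed by $\diagfwu$ only at pairs whose \emph{domain} component differs from $r$ (the outer operator in \eqref{sgenfw} is $\Bv$, not $\Bv^+$, so the root track is never revisited). Here \eqref{initfw} must be combined with the inductive structure: $\M,\auf k,r\zu\models\neg\state$ for all $k\ge 1$, so a $\diag$-point of column $m$ lying on the root track would both be uncontrolled by \eqref{sgenfw} and unable to supply the $\state$-point at $\auf m+1,\cdot\zu$ demanded by the next stage. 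Establishing $y_{m+1}\ne r$ is therefore logically prior to (iii) at $m+1$ and to the two uniqueness applications above, and it is the one place where the forward construction needs real care. Once it is in hand, the remaining parts of (i)--(iv) close the induction, and the expanding domains play no adverse role since the construction only ever passes to later, larger columns.
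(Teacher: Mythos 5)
Your induction is set up the right way, and everything you actually prove is sound: the base case, the recovery of (iii) at $m$, the production of the witness $y_{m+1}\in W_m$ via \eqref{dgenfw}, and the derivations of (iii) at $m+1$, of (iv) at $m$, and of $y_{m+1}\ne y_n$ for $1\le n\le m$, all \emph{conditional on} $y_{m+1}\ne r$. But the proposal never discharges that condition: you single it out as ``the one place where the forward construction needs real care'' and then close the induction ``once it is in hand''. That is a genuine gap, and it is not one that more care can fill, because $\diagfwu$ as printed does not entail the Claim. Consider the constant-domain (hence expanding) frame $\auf\omega,<\zu\mprod\auf\{r,a\},\ne\zu$, with $\state$ true exactly at $\auf 0,r\zu$ and $\auf 1,a\zu$, and $\diag$ true exactly at $\auf 0,a\zu$ and $\auf 1,r\zu$. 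Then \eqref{initfw} holds at $\auf 0,r\zu$; \eqref{dgenfw} holds because the $\state$-point $\auf 0,r\zu$ sees the $\diag$-point $\auf 0,a\zu$ and the $\state$-point $\auf 1,a\zu$ sees the $\diag$-point $\auf 1,r\zu$; and \eqref{sgenfw} and \eqref{diaguniq}, whose outer prefix is $\Bh^+\Bv$ rather than $\Bh^+\Bv^+$, constrain only pairs whose second coordinate differs from $r$, hence say nothing about the $\diag$-point $\auf 1,r\zu$ sitting on the root track. This is exactly your ``uncontrolled'' scenario: the staircase dies at column $1$, and the Claim's conclusion fails outright, since (i) and (ii) would demand infinitely many pairwise distinct $y_m$ inside a two-element domain.

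The comparison with the paper is therefore unusual: the paper offers no proof beyond ``a straightforward induction on $m$'', deferring to Claim~\ref{c:gridfw}, and the printed proof of that claim has precisely the same hole (its justification of $y_{m+1}\ne y_0$ ``by \eqref{initfw}, \eqref{sgenfw} and the IH'' would require applying \eqref{sgenfw} at a root-track pair $\auf m,r\zu$, where it is not available). The intended formula evidently has inner prefix $\Bv^+$ in \eqref{sgenfw}: with $\Bh^+\Bv^+$ there, a $\diag$-point at $\auf m,r\zu$ would force $\Dh\state$ along the root track, contradicting the conjunct $\Bh\neg\state$ of \eqref{initfw}, so no $\diag$-point can ever lie on the root track, $y_{m+1}\ne r$ becomes automatic, and your induction closes exactly as you outlined (note that once $y_{m+1}\ne r$ is known, \eqref{diaguniq} fired at $\auf m,y_{m+1}\zu$ already excludes $\diag$ everywhere else in column $m$, including at $r$, which is what (iv) needs). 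The strengthening is harmless for the converse direction of the reduction, since in the intended models ($\Mrec$, $\Ninf$) $\diag$ is never true on the root track. So: you isolated the crux correctly, but as written your proof is incomplete, and no proof of the Claim from the printed $\diagfwu$ exists; the repair must go into \eqref{sgenfw}, not into a cleverer argument.
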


Given a counter machine $M$,
we will encode lossy runs that start with all-0 counters by going backward along the created 
diagonal staircase.
We will adjust the tools developed in the proof of Theorem~\ref{t:mainirr} in order to 
handle lossyness, and also to force not just one run, but
several (finite) runs, placed one after the other.
To this end, we introduce a fresh propositional variable $\start$, intended to mark the start
of each run (see Fig.~\ref{f:model}),  and for each $i<N$ we let
\[
\allcl:=\ \ \Dh\diag\land\Bh\bigl(\diag\lor\Dh\diag\to(\neg\start\land\cou)\bigr).
\]
Then we have the following lossy analogue of Claims~\ref{c:counting} and \ref{c:countingd}:
\begin{claim}\label{c:countingl}
Suppose that $\M,\auf 0,r\zu\models\diagfwu$. Then for all $m<\omega$, $i<N$,
\[
\{w\in W_m : \M,\auf m,w\zu\models\allcl\}\subseteq \{w\in W_{m+1} :\M,\auf m+1,w\zu\models\cou\}.
\]
\end{claim}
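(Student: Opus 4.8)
The plan is to prove the stated inclusion pointwise. I would fix $m<\omega$ and $i<N$, take an arbitrary $w\in W_m$ with $\M,\auf m,w\zu\models\allcl$, and establish two things: that $w\in W_{m+1}$, and that $\M,\auf m+1,w\zu\models\cou$. The first is immediate from the expanding condition: $\auf m,w\zu$ being a point of $\Hh_{\auf\omega,<\zu,\overline{\G}}$ means $w\in W_m$, and $m<m+1$ gives $W_m\subseteq W_{m+1}$, so $w\in W_{m+1}$, the pair $\auf m+1,w\zu$ is a genuine point of the frame, and $\auf m,w\zu\Rh\auf m+1,w\zu$.

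The heart of the argument is unpacking the two conjuncts of $\allcl=\Dh\diag\land\Bh\bigl(\diag\lor\Dh\diag\to(\neg\start\land\cou)\bigr)$ at $\auf m,w\zu$. From the first conjunct I obtain some $m'>m$ with $\M,\auf m',w\zu\models\diag$. I would then split on whether $m'=m+1$ or $m'>m+1$. In the first case $\diag$ holds outright at $\auf m+1,w\zu$. In the second case, since along the fibre over $w$ the relation $\Rh$ is just the order $<$ on $\omega$, which is transitive, and since $w\in W_{m+1}\subseteq W_{m'}$, we have $\auf m+1,w\zu\Rh\auf m',w\zu$, whence $\M,\auf m+1,w\zu\models\Dh\diag$. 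Either way $\M,\auf m+1,w\zu\models\diag\lor\Dh\diag$. Feeding the horizontal successor $\auf m+1,w\zu$ of $\auf m,w\zu$ into the second, $\Bh$-conjunct of $\allcl$ then yields $\M,\auf m+1,w\zu\models\neg\start\land\cou$, and in particular $\M,\auf m+1,w\zu\models\cou$, as required.

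I expect the proof to be short, with the main subtlety being bookkeeping about the expanding domains rather than any real obstacle: at every step I must check that each pair $\auf k,w\zu$ I refer to is actually a point of the frame, i.e.\ $w\in W_k$, which holds precisely because a vertical coordinate persists forward in an expanding 2-frame. Notably, the full staircase structure of Claim~\ref{c:gridfwe} is not really invoked here; the argument uses only the linear, transitive, expanding shape of the horizontal dimension over $\auf\omega,<\zu$, so the hypothesis $\M,\auf 0,r\zu\models\diagfwu$ serves mainly to fix the ambient setting. Finally, I would point out that this is exactly why we obtain the one-sided inclusion $\subseteq$ rather than the equality of the reliable analogues (Claims~\ref{c:counting} and~\ref{c:countingd}): going forward, $W_{m+1}$ may be strictly larger than $W_m$ and may carry $\cou$-points with no counterpart at level $m$, which is the intended encoding of lossiness.
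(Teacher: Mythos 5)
Your proof is correct, and its skeleton is the same as the paper's: unpack the two conjuncts of $\allcl$ at $\auf m,w\zu$, show that $\diag\lor\Dh\diag$ holds at $\auf m+1,w\zu$, and then feed the horizontal successor $\auf m+1,w\zu$ into the $\Bh$-conjunct to conclude $\cou$ (indeed $\neg\start\land\cou$) there. The one real difference is how the middle step is obtained. The paper invokes Claim~\ref{c:gridfwe}(iv): from $\M,\auf m,w\zu\models\Dh\diag$ it identifies the witness as a staircase point, $w=y_n$ for some $n>m+1$, and reads off $\M,\auf n-1,w\zu\models\diag$; you instead perform a direct case split on whether the $\diag$-witness lies at level $m+1$ or strictly later, using only the transitivity of $<$ and the persistence of vertical coordinates in an expanding 2-frame. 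The two arguments are equivalent in substance, but yours is slightly more elementary and more general: it establishes the inclusion in \emph{every} model based on an expanding 2-frame over $\auf\omega,<\zu$, with no appeal to $\diagfwu$, which confirms your remark that the hypothesis mainly fixes the ambient setting (the paper's proof, as written, does route through that hypothesis via Claim~\ref{c:gridfwe}(iv)). Where the staircase, and in particular the vertical uniqueness of $\diag$-points imposed by \eqref{diaguniq}, genuinely matters is not here but in the next step, Claim~\ref{c:counterl}, when the lossy incrementation steps are simulated; and your closing observation about why only a one-sided inclusion (rather than the equalities of Claims~\ref{c:counting} and \ref{c:countingd}) can be expected is exactly the paper's intended reading of lossiness.
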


\begin{proof}
Suppose that $ \M,\auf m,w\zu\models\allcl$. Then  $\M,\auf m,w\zu\models\Dh\diag$ and
so by Claim~\ref{c:gridfwe}(iv), $w=y_n$ for some $n>m+1$, and we have
$\M, \auf n-1,w\zu\models\diag$. Thus, $\M, \auf m+1,w\zu\models\diag\lor\Dh\diag$.
As $\M,\auf m,w\zu\models \Bh(\diag\lor\Dh\diag\to\cou)$, we obtain 
$\M, \auf m+1,w\zu\models\cou$ as required.
\end{proof}

Now, for each $i<N$, we can simulate the possible lossy changes in the counters by the following formulas:
\begin{align*}
\ffixbwl & :=\ \ \Bv^+ (\cou\to\allcl),\\
\fincbwl & :=\ \ \Bv^+\bigl(\cou\to (\diag\lor\allcl)\bigr),\\
\fdecbwl & :=\ \ \Bv^+(\cou\to\allcl)\land\Dv^+(\neg\cou\land\allcl).
\end{align*}
The following lossy analogue of Claims~\ref{c:counter} and \ref{c:counterd} is a straightforward consequence of Claims~\ref{c:gridfwe}(iv) and \ref{c:countingl}.
Note that the vertical uniqueness of $\diag$-points is used in simulating the lossy incrementation steps properly. 

\begin{claim}\label{c:counterl}
Suppose that $\M,\auf 0,r\zu\models\diagfwu$. For all $i<N$, $m<\omega$,
let $c_i(m) :=|\{w\in W_m: \M,\auf m,w\zu\models\cou\}|$.
Then for all $m<\omega$,
\[
c_i(m)\leq \left\{
\begin{array}{ll}
c_i(m+1), & \mbox{ if $\M,\auf m,y_{m}\zu\models\ffixbwl$},\\[3pt]
c_i(m+1)+1,\ & \mbox{ if $\M,\auf m, y_{m}\zu\models\fincbwl$},\\[3pt]
c_i(m+1)-1, & \mbox{ if $\M,\auf m,y_{m}\zu\models\fdecbwl$}.
\end{array}
\right.
\]
\end{claim}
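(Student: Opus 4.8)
The plan is to prove Claim~\ref{c:counterl} by combining the structural facts already established in Claim~\ref{c:gridfwe} (the uniqueness of the diagonal staircase and the characterisation of $\diag$-points) with the subset relation of Claim~\ref{c:countingl}. Recall that $c_i(m):=|\{w\in W_m:\M,\auf m,w\zu\models\cou\}|$, and note that because we are working in an \emph{expanding} $2$-frame, we have $W_m\subseteq W_{m+1}$, so counting sets sitting in $W_m$ can only grow, not shrink, when passing to $W_{m+1}$; this is exactly what produces the inequalities (rather than equalities as in Claims~\ref{c:counter} and \ref{c:counterd}) and explains why only lossy runs can be forced.

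First I would treat the $\ffixbwl$ case. Assume $\M,\auf m,y_m\zu\models\ffixbwl$, i.e.\ $\M,\auf m,w\zu\models\cou\to\allcl$ holds for all $w\in W_m$. Then every $w\in W_m$ with $\M,\auf m,w\zu\models\cou$ also satisfies $\allcl$, and by Claim~\ref{c:countingl} such a $w$ lies in $\{w\in W_{m+1}:\M,\auf m+1,w\zu\models\cou\}$. This gives an injection (in fact an inclusion) of the $\cou$-points at level $m$ into the $\cou$-points at level $m+1$, whence $c_i(m)\leq c_i(m+1)$, as required.

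Next I would do the incrementation case $\fincbwl$. Here every $\cou$-point $w$ at level $m$ satisfies $\diag\lor\allcl$. The $\allcl$-points transfer upward into $\cou$-points at level $m+1$ exactly as above. For the single point carrying $\diag$ — by Claim~\ref{c:gridfwe}(iv) there is a \emph{unique} $w\in W_m$ with $\M,\auf m,w\zu\models\diag$, namely $w=y_{m+1}$, and this point need not satisfy $\cou$ at level $m+1$ — we lose at most one point. Hence the $\cou$-points at level $m$ inject into the $\cou$-points at level $m+1$ together with this at-most-one extra $\diag$-point, giving $c_i(m)\leq c_i(m+1)+1$. The vertical uniqueness of $\diag$ (forced by \eqref{diaguniq}) is precisely what guarantees the discrepancy is bounded by $1$ and not more. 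The decrementation case $\fdecbwl$ is symmetric: the first conjunct of $\fdecbwl$ pushes all $\cou$-points up as before, while the second conjunct $\Dv^+(\neg\cou\land\allcl)$ witnesses at least one point at level $m$ that satisfies $\allcl$ but not $\cou$; this point becomes a $\cou$-point at level $m+1$ (by Claim~\ref{c:countingl}) that was \emph{not} counted at level $m$, yielding $c_i(m)\leq c_i(m+1)-1$.

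The main obstacle is the bookkeeping around the possible \emph{loss} of points due to the expanding domains: unlike the product-frame cases, a point present in $W_{m+1}$ carrying the counter information may simply have been absent from $W_m$, or conversely a $\cou$-point at level $m$ could fail to persist as such. The key insight making the argument go through is that Claim~\ref{c:countingl} only asserts a \emph{one-directional} inclusion, which is exactly aligned with the lossy semantics; I would therefore be careful to phrase everything as inequalities in the correct direction and to invoke Claim~\ref{c:gridfwe}(iv) to pin down the unique $\diag$-point so that incrementation and decrementation each change the count by at most one. Once these inclusions are set up explicitly, the three inequalities follow by straightforward cardinality comparisons, and I would leave the fully routine set-chasing to the reader.
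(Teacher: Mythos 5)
Your proposal is correct and follows exactly the route the paper intends: the paper states Claim~\ref{c:counterl} without a detailed proof, calling it ``a straightforward consequence of Claims~\ref{c:gridfwe}(iv) and \ref{c:countingl}'' and noting that vertical uniqueness of $\diag$-points handles the incrementation case, and your three case analyses (inclusion via Claim~\ref{c:countingl} for $\ffixbwl$, the single $\diag$-point $y_{m+1}$ from Claim~\ref{c:gridfwe}(iv) accounting for the $+1$ in $\fincbwl$, and the extra witness from $\Dv^+(\neg\cou\land\allcl)$ giving the $-1$ in $\fdecbwl$) are precisely the intended set-theoretic bookkeeping.
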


Next, we encode the various counter machine instructions for lossy steps, acting backward. For each $\iota\in\textit{Op}_C$, we define the formula $\exeil$ by taking
\[
\exeil:=\ \ \left\{
\begin{array}{ll}
\displaystyle\fincbwl\land\bigwedge_{i\ne j<N}\ffixjbwl, & \mbox{ if $\iota=\finci$},\\
\displaystyle\fdecbwl\land\bigwedge_{i\ne j<N}\ffixjbwl, & \mbox{ if $\iota=\fdeci$},\\
\displaystyle\Bv^+\neg\cou\land\bigwedge_{j<N}\ffixjbwl, & \mbox{ if $\iota=\ftest$}.\\
\end{array}
\right.
\]
Finally, given a counter machine $M$, we encode lossy runs that start with all-0 counters at
$\start$-marks, and go backward until the next $\start$-mark.
We define $\fmmbwlossy$ to be the conjunction of  the following formulas:
\begin{align}
\label{startv}
& \Bh^+\Bv^+\bigl(\start\to\Bv\start\bigr),\\
 \label{griduniquel}
& \Bh^+\Bv^+\bigl(\state\leftrightarrow\bigvee_{q\in Q-H} \bigl(\state_q\land\bigwedge_{q\ne q'\in Q}\neg\state_{q'})\bigr),\\
\label{initmmbwl}
& \Bh^+\Bv^+\bigl(\state\land\start\to (\state_{q_0}\land\bigwedge_{i<N}\Bv^+\neg \cou)\bigr),\\
\label{executebwl}
& \Bh\Bv\bigwedge_{q\in Q-H}
\bigl[\bigl(\state\land\neg\start\land \Dv( \diag\land \Dh\state_q)\bigr) \to
\bigvee_{\langle\iota,q'\rangle\in I_q}(\exeil\land\state_{q'})\bigr].
\end{align}
Then we have the following lossy analogue of Lemmas~\ref{l:runbw} and \ref{l:runbwd}:

\begin{claim}\label{c:runbwl}
Suppose $\M,\auf 0,r\zu\models\diagfwu\land\fmmbwlossy$, and
for all $m<\omega$, $i<N$ , let 
\[
 s_m  :=q, \mbox{ if }\M,\auf m,y_m\zu\models\state_{q},\ \
 c_i(m)  :=|\{w\in W_m: \M,\auf m,w\zu\models\cou\}|,\ \
 \sigma_m  :=\auf s_m,{\bf c}(m)\zu.
 \]
Then $\auf \sigma_{a},\sigma_{a-1},\dots,\sigma_{b}\zu$
is a well-defined lossy run of $M$ starting with  $\auf q_0,{\bf 0}\zu$,
whenever $b< a<\omega$ is such that $\M,\auf a,r\zu\models\start$,
and $\M,\auf n,r\zu\models\neg\start$, for every $n$ with $b\leq n<a$.
\end{claim}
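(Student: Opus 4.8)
The plan is to mimic the structure of Lemma~\ref{l:runbw}: I will verify (a) that the first configuration of the listed sequence, namely $\sigma_a$, equals $\auf q_0,{\bf 0}\zu$, and (b) that for every $m$ with $b\le m<a$ there is a lossy step $\sigma_{m+1}\lstep\sigma_m$. Together with well-definedness of the states these are exactly what is needed for $\auf\sigma_a,\dots,\sigma_b\zu$ to be a lossy run of $M$ starting with $\auf q_0,{\bf 0}\zu$. Well-definedness of each $s_m$ is immediate from Claim~\ref{c:gridfwe}(iii) (every $y_m$ carries $\state$) together with \eqref{griduniquel}, which forces a unique $\state_q$ at each $\state$-point and also gives $s_m\in Q-H$, so that $I_{s_m}$ is defined.

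For the base case I would first spread the hypothesis $\M,\auf a,r\zu\models\start$ over the whole $a$-th column. Since $r=y_0\in W_0\subseteq W_a$ and \eqref{startv} forces $\start\to\Bv\start$, we obtain $\M,\auf a,w\zu\models\start$ for every $w\in W_a$, in particular $\M,\auf a,y_a\zu\models\state\land\start$. Applying \eqref{initmmbwl} at $\auf a,y_a\zu$ then yields $\M,\auf a,y_a\zu\models\state_{q_0}\land\bigwedge_{i<N}\Bv^+\neg\cou$, whence $s_a=q_0$ and $c_i(a)=0$ for all $i<N$, i.e.\ $\sigma_a=\auf q_0,{\bf 0}\zu$.

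For the inductive step fix $m$ with $b\le m<a$. The geometry from Claim~\ref{c:gridfwe} is that $\auf m,y_m\zu$ is an $\state$-point whose column contains the unique $\diag$-point $\auf m,y_{m+1}\zu$ (part (iv)), and that $\auf m,y_{m+1}\zu\Rh\auf m+1,y_{m+1}\zu$, the $\state$-point of column $m+1$, which carries $\state_{s_{m+1}}$. Using \eqref{startv} together with $y_m\ne r$ when $m>0$ (Claim~\ref{c:gridfwe}(ii)), I derive $\M,\auf m,y_m\zu\models\neg\start$ from $\M,\auf m,r\zu\models\neg\start$: were $\start$ to hold at $\auf m,y_m\zu$ it would, by $\start\to\Bv\start$, also hold at $\auf m,r\zu$. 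Hence $\M,\auf m,y_m\zu\models\state\land\neg\start\land\Dv(\diag\land\Dh\state_{s_{m+1}})$, and since $s_{m+1}\in Q-H$, formula \eqref{executebwl} delivers some $\auf\iota,s_m\zu\in I_{s_{m+1}}$ with $\M,\auf m,y_m\zu\models\exeil\land\state_{s_m}$.

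It then remains to convert $\exeil$ into the lossy step $\sigma_{m+1}\lstepi\sigma_m$, and this is the part carrying the real work. Reading Claim~\ref{c:counterl} at $\auf m,y_m\zu$ turns the conjuncts $\fincbwl$, $\fdecbwl$, $\ffixbwl$ (and $\Bv^+\neg\cou$ for a zero-test) of $\exeil$ into the inequalities $c_i(m)\le c_i(m+1)+1$, $c_i(m)\le c_i(m+1)-1$, $c_j(m)\le c_j(m+1)$, and $c_i(m)=0$, respectively. The main obstacle is to exhibit intermediate configurations $\sigma^1=\auf s_{m+1},{\bf c}^1\zu$ and $\sigma^2=\auf s_m,{\bf c}^2\zu$ with $\sigma^1\stepi\sigma^2$, $c_k(m+1)\ge c_k^1$ and $c_k^2\ge c_k(m)$ for all $k<N$, as demanded by the definition of $\lstepi$. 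For $\iota=\finci$ I would take $c_i^1=\max(0,c_i(m)-1)$, for $\iota=\fdeci$ take $c_i^1=c_i(m)+1$, in the test case take $c_i^1=0$, and in every case take $c_j^1=c_j(m)$ for the untouched counters $j\ne i$; a short check, using each of the above inequalities together with $c_k(\cdot)\ge 0$, shows $0\le c_k^1\le c_k(m+1)$, that the chosen $c_i^1$ permits the reliable operation (in particular $c_i^1>0$ for a decrement), and that the reliable $\iota$-step out of $\sigma^1$ lands in a $\sigma^2$ with $c_k^2\ge c_k(m)$. Carrying out this elementary case analysis for all three operation types establishes $\sigma_{m+1}\lstepi\sigma_m$, completing the induction and hence the claim.
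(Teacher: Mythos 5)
Your proof is correct and takes essentially the same route as the paper's: the same induction, with the base case handled via \eqref{startv}, \eqref{initmmbwl} and Claim~\ref{c:gridfwe}, and the inductive step via \eqref{startv}, \eqref{griduniquel}, \eqref{executebwl} and Claims~\ref{c:gridfwe}, \ref{c:counterl}. The only difference is presentational: you make explicit the intermediate configurations $\sigma^1,\sigma^2$ witnessing each lossy step, a routine verification that the paper compresses into ``it follows from Claims~\ref{c:countingl} and \ref{c:counterl}.''
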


\begin{proof}
The sequence $\auf s_{a},s_{a-1},\dots,s_{b}\zu$ 
is well-defined by Claim~\ref{c:gridfwe}(iii) and \eqref{griduniquel}.
We show by induction on $m$ that for all $m\leq a-b$, 
$\auf \sigma_{a},\sigma_{a-1},\dots,\sigma_{a-m}\zu$
is a lossy run of $M$ starting with $\auf q_0,{\bf 0}\zu$.
Indeed, $\M,\auf a,y_{a}\zu\models\start$ by \eqref{startv}, and so 
$s_{a}=q_0$ and $c_i(a)=0$ for $i<N$ by Claim~\ref{c:gridfwe}(iii) and
\eqref{initmmbwl}.
Now suppose the statement holds for some $m< a-b$.
As $\M,\auf a-m-1,y_{a-m-1}\zu\models\neg\start$ by \eqref{startv}, we have 
\[
\M,\auf a-m-1,y_{a-m-1}\zu\models\state\land\neg\start\land \Dv( \diag\land \Dh\state_{s_{a-m}})
\]
by Claim~\ref{c:gridfwe}.
By \eqref{griduniquel} we have $s_{a-m}\in Q-H$, and so by \eqref{executebwl} there is 
$\langle\iota,s_{a-m-1}\rangle\in I_{s_{a-m}}$ such that 
$\M,\auf a-m-1,y_{a-m-1}\zu\models\exeil$.
It follows from Claims~\ref{c:countingl} and \ref{c:counterl} that 
$\sigma_{a-m}\lstepi\sigma_{a-m-1}$ as required.
\end{proof}

It remains to force that the $n$th run visits $q_r$ at least $n$ times. 
To this end, we introduce two fresh propositional variables $\recx$ and $\recp$, and
define $\frece$ as the conjunction of \eqref{startv} and the following formulas:
\begin{align}
\label{recinit}
& \start\land\Bh^+\Dh\start,\\
\label{startpoints}
& \Bh^+\Bv^+\bigl[\start\to\Dv^+\bigl(\recx\land\Dh(\state\land\neg\start)\land\Bh(\Dh\state\to\neg\start)\bigr)\bigr],\\
\label{qr}
& \Bh^+\Bv^+\big(\recx\to\Bh(\state\to\recp)\bigr),\\
\nonumber
& \Bh\Bv^+\Bigl[\recp\to\Dv\bigl[\recx\land\Dh\bigl(\start\land\Dh(\state\land\neg\start)\bigr)\,\land\\[-3pt]
\label{recpoints}
&\hspace*{3.5cm}
\Bh\bigl(\start\land\Dh\state\to\Bh(\Dh\state\to\neg\start)\bigr)\bigr]\Bigr],\\
\label{sstars}
& \Bh^+\Bv^+(\recp\to\state),\\
\label{svuniq}
& \Bh^+\Bv^+(\state\to\Bv\neg\state),\\
\label{unipoints}
& \Bh^+\Bv^+(\recx\to\Bh\neg\recx).
\end{align}
\begin{claim}\label{c:frecfw}
Suppose that $\M,\auf 0,r\zu\models\diagfwu\land\frece$. 
Then there is an infinite sequence
$\langle k_n : n<\omega\rangle$ such that, for all $n<\omega$,
\begin{itemize}
\item[{\rm (}i{\rm )}]
$\M,\auf k_n,w\zu\models\start$ for all $w\in W_{k_n}$, 
\item[{\rm (}ii{\rm )}]
if $n>0$ then $\M,\auf k,w\zu\models\neg\start$ for all $k$ with $k_{n-1}<k<k_n$ and $w\in W_k$, and
\item[{\rm (}iii{\rm )}]
if $n>0$ then $|\{k : k_{n-1}<k< k_n\mbox{ and }\M,\auf k,y_k\zu\models\recp\}|\geq n$.
\end{itemize}
\end{claim}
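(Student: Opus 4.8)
The plan is to read off the sequence $\langle k_n\rangle$ as the increasing enumeration of the columns carrying $\start$, and then to obtain the counting bound (iii) by organising the $\recp$-points of each interval into $\recx$-$\recp$ chains. First I would record the bookkeeping consequences of the conjuncts. By Claim~\ref{c:gridfwe}(iii) and \eqref{svuniq}, $\state$ holds in column $m$ exactly at $y_m$; since the $y_m$ are pairwise distinct by Claim~\ref{c:gridfwe}(ii), each vertical point carries at most one $\state$-point, and by \eqref{sstars} every $\recp$-point has the form $\auf k,y_k\zu$, so the quantity in (iii) genuinely counts $\recp$-points. By \eqref{startv}, $\start$ is column-uniform (if it holds anywhere in a column it holds everywhere in it), and by \eqref{unipoints} each vertical point carries at most one $\recx$-point. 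Using \eqref{recinit} and the well-ordering of $\omega$, the $\start$-columns form an infinite set, which I enumerate increasingly as $k_0<k_1<\cdots$ with $k_0=0$; column-uniformity of $\start$ yields (i) and the enumeration yields (ii).

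For (iii) I would analyse two localisation operations. \emph{Seeding.} Applying \eqref{startpoints} at the $\start$-column $k_{n-1}$ produces a vertical $w$ with $\M,\auf k_{n-1},w\zu\models\recx$ whose unique $\state$-point $\auf j,w\zu$ has $j>k_{n-1}$ and $\neg\start$; the conjunct $\Bh(\Dh\state\to\neg\start)$ together with uniqueness of $\state$ on vertical $w$ (so that $\Dh\state$ marks exactly the times before $j$) forces every time in $(k_{n-1},j]$ to be $\start$-free, whence $j\in(k_{n-1},k_n)$. Then \eqref{qr} makes $\recp$ hold at $\auf j,y_j\zu$; call this the \emph{seed} of $(k_{n-1},k_n)$. \emph{Succession.} Given a $\recp$-point $\auf u,y_u\zu$ with $u\in(k_{m-1},k_m)$, \eqref{recpoints} yields a vertical $w'\ne y_u$ (alive forward by expansion) with $\M,\auf u,w'\zu\models\recx$; its unique $\state$-point $\auf s,w'\zu$ lies beyond a $\start$-column that is itself $>u$, hence beyond $k_m$, and evaluating the nested $\Bh(\start\land\Dh\state\to\Bh(\Dh\state\to\neg\start))$ at the column $k_m$ forces no $\start$-column strictly between $k_m$ and $s$, so $s\in(k_m,k_{m+1})$. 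Again \eqref{qr} puts $\recp$ at $\auf s,y_s\zu$, the \emph{successor} of $\auf u,y_u\zu$, which therefore sits in the immediately following interval.

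Next I would verify the injectivity that turns these operations into a lower bound. Since each $\recx$-point occupies its own vertical (\eqref{unipoints}) and distinct verticals carry $\state$-points at distinct times (injectivity of $\langle y_m\rangle$), the successor map is injective; moreover the seed of $(k_{n-1},k_n)$ is distinct from every successor of a $\recp$-point of $(k_{n-2},k_{n-1})$, because a coincidence would place two $\recx$-points on one vertical, one at time $k_{n-1}$ and one at the earlier time $u<k_{n-1}$, again contradicting \eqref{unipoints}. Writing $f(n)$ for the number counted in (iii) for $(k_{n-1},k_n)$, succession injects the $\recp$-points of $(k_{n-2},k_{n-1})$ into those of $(k_{n-1},k_n)$ while missing the seed, giving $f(n)\ge f(n-1)+1$; with $f(1)\ge 1$ (the seed from $k_0$), an induction on $n$ yields $f(n)\ge n$, which is (iii).

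The main obstacle I anticipate is precisely the two localisation arguments: extracting from the nested $\Bh(\Dh\state\to\neg\start)$-style conjuncts that the relevant $\state$-points land in exactly the intended interval. This is where the uniqueness of $\state$ along a single vertical (so that $\Dh\state$ pinpoints the times before that vertical's lone $\state$-point) and the expanding nature of the frame (so that the witnessing verticals survive forward to the times where $\state$, and hence $\recp$, is read off) must be combined with care; the subsequent injectivity and induction steps are then routine.
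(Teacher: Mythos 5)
Your proposal is correct and follows essentially the same route as the paper's proof: enumerate the $\start$-columns (the paper defines $k_n$ inductively as the least $\start$-column after $k_{n-1}$, which \eqref{recinit} and \eqref{startv} make equivalent to your upfront enumeration), seed each interval with one $\recp$-point via \eqref{startpoints} and \eqref{qr}, inject the previous interval's $\recp$-points into the next via \eqref{recpoints} and \eqref{qr}, and use \eqref{unipoints} for injectivity and distinctness from the seed. Your localisation of the successor (applying the nested $\Bh\bigl(\start\land\Dh\state\to\Bh(\Dh\state\to\neg\start)\bigr)$ directly at column $k_m$, using $s>k'\geq k_m$) is a mild streamlining of the paper's step of first transferring the witness $\start\land\Dh(\state\land\neg\start)$ to column $k_{n-1}$, but it is the same mechanism.
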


\begin{proof}
By induction on $n$. To begin with, let $k_0=0$.
Now suppose inductively that we have $\langle k_\ell : \ell<n\rangle$ 
as required, for some $0<n<\omega$.
Now let $k_n$ be the smallest $k$ with $k>k_{n-1}$ and $\M,\auf k,r\zu\models\start$
(there is such by \eqref{recinit}). So $k_n>k_{n-1}$, and by \eqref{startv} 
\begin{equation}\label{nextstart}
\M,\auf k_n,w\zu\models\start\mbox{ for all $w\in W_{k_n}$.}
\end{equation}
As by the IH(i) we have $\M,\auf k_{n-1},r\zu\models\start$, by \eqref{startpoints} there is $w\in W_{k_{n-1}}$
such that
\[
\M,\auf k_{n-1},w\zu\models 
\recx\land\Dh(\state\land\neg\start)\land\Bh(\Dh\state\to\neg\start).
\]
By Claim~\ref{c:gridfwe}(iii) and \eqref{svuniq}, $w=y_{i_n}$ for some $k_{n-1}<i_n< k_n$,
and so $\M,\auf i_n,y_{i_n}\zu\models\recp$ follows by \eqref{qr}.
In particular, if $n=1$ then $\M,\auf i_1,y_{i_1}\zu\models\recp$, and so 
\[
|\{k : k_{0}<k< k_1\mbox{ and }\M,\auf k,y_k\zu\models\recp\}|\geq 1.
\]
Now suppose that  $n>1$ and take some $k$ such that $k_{n-2}<k< k_{n-1}$ and $\M,\auf k,y_k\zu\models\recp$.
By \eqref{recpoints}, there is $v\in W_k$ such that
\begin{equation}\label{rpoint}
\M,\auf k,v\zu\models\recx\land\Dh\bigl(\start\land\Dh(\state\land\neg\start)\bigr)
\land\Bh\bigl(\start\land\Dh\state\to\Bh(\Dh\state\to\neg\start)\bigr).
\end{equation}
So there is some $k'>k$ with
$\M,\auf k',v\zu\models\start\land\Dh(\state\land\neg\start)$, and so by the IH we have 
\begin{equation}\label{zzzz}
\M,\auf k_{n-1},v\zu\models\start\land\Dh(\state\land\neg\start).
\end{equation}
Therefore, by \eqref{rpoint} we have
\begin{equation}\label{wwww}
 \M,\auf k_{n-1},v\zu\models\Bh(\Dh\state\to\neg\start).
 \end{equation}
 By \eqref{zzzz}, there is some $k^+>k_{n-1}$ with $\M,\auf k^+,v\zu\models\state\land\neg\start$.
Therefore, $v=y_{k^+}$ by Claim~\ref{c:gridfwe}(iii) and \eqref{svuniq}, 
$\M,\auf k^+,y_{k^+}\zu\models\recp$ by \eqref{qr}, and $k^+\ne k_n$ by \eqref{nextstart}.
Moreover, we have that $k^+< k_n$ because of the following. If $k^+>k_n$ were the case,
then $\M,\auf k_n,v\zu\models\Dh\state$, and so $\M,\auf k_n,v\zu\models\neg\start$ by
\eqref{wwww}, contradicting \eqref{nextstart}.
Further, by \eqref{unipoints} we obtain that $k^+\ne i_n$, and $k^+\ne\ell^+$ 
whenever $k\ne\ell$, $k_{n-1}< k,\ell<k_n$.
Therefore, by \eqref{sstars}, \eqref{svuniq}, and the IH(iii), we have
\begin{multline*}
|\{k : k_{n-1}<k< k_n\mbox{ and }\M,\auf k,y_k\zu\models\recp\}|\geq\\
 |\{k : k_{n-2}<k< k_{n-1}\mbox{ and }\M,\auf k,y_k\zu\models\recp\}|+1\geq n-1+1=n,
\end{multline*}
as required.
\end{proof}

Now the following lemma is a straightforward consequence of Claims~\ref{c:runbwl} and \ref{c:frecfw}:

\begin{lemma}\label{l:lrecpred}
Suppose $\M,\auf 0,r\zu\models\diagfwu\land\fmmbwlossy\land\frece\land\Bh^+\Bv^+(\recp\to\state_{q_r})$.
Then, for every $n<\omega$, $M$ has a lossy run starting with $\auf q_0,{\bf 0}\zu$ 
and visiting $q_r$ at least $n$ times.
\end{lemma}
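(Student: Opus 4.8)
The plan is to combine the two preceding ``forward-generation'' results, Claim~\ref{c:runbwl} and Claim~\ref{c:frecfw}, with the extra hypothesis $\Bh^+\Bv^+(\recp\to\state_{q_r})$, and read off the conclusion directly. Fix $n<\omega$; I want to produce a single lossy run starting with $\auf q_0,{\bf 0}\zu$ that visits $q_r$ at least $n$ times. First I would invoke Claim~\ref{c:frecfw} to obtain the infinite sequence $\auf k_\ell:\ell<\omega\zu$ of $\start$-levels with properties (i)--(iii). I would then select the $(n{+}1)$th block, i.e.\ take $a:=k_{n+1}$ and $b:=k_n$, so that $\M,\auf a,r\zu\models\start$ (by (i)) and $\M,\auf k,r\zu\models\neg\start$ for all $k$ with $b\leq k<a$ (by (ii)).

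With these $a>b$, Claim~\ref{c:runbwl} applies verbatim and yields that the sequence $\auf\sigma_a,\sigma_{a-1},\dots,\sigma_b\zu$ is a well-defined lossy run of $M$ starting with $\auf q_0,{\bf 0}\zu$, where $\sigma_m=\auf s_m,{\bf c}(m)\zu$ is as defined there. It remains to count the occurrences of $q_r$ within this block. By Claim~\ref{c:frecfw}(iii) applied with $k_{n+1}$ in place of $k_n$, there are at least $n+1$ (in fact $\geq n$ suffices, but the index bookkeeping gives $\geq n+1$) levels $k$ with $b=k_n<k<k_{n+1}=a$ at which $\M,\auf k,y_k\zu\models\recp$. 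At each such $k$, the hypothesis $\Bh^+\Bv^+(\recp\to\state_{q_r})$ forces $\M,\auf k,y_k\zu\models\state_{q_r}$, and hence $s_k=q_r$ by the definition of $s_k$ together with \eqref{griduniquel}. Thus the state-component of at least $n$ of the configurations $\sigma_k$ in the run equals $q_r$, i.e.\ the run visits $q_r$ at least $n$ times, as required. Since $n<\omega$ was arbitrary, the lemma follows.

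I expect the only delicate point to be the alignment of the indices: Claim~\ref{c:frecfw}(iii) counts $\recp$-levels strictly \emph{between} two consecutive $\start$-levels, while the run from Claim~\ref{c:runbwl} runs over the closed range $[b,a]$; I would note that the endpoints $a=k_{n+1}$ and $b=k_n$ carry $\start$, and the $\recp$-levels supplied by (iii) lie strictly inside, so they are genuine intermediate states of the one run and are counted without double-counting. No new construction is needed here, so the proof is essentially a two-line assembly of the two claims once the block is chosen.

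\begin{proof}
Fix any $n<\omega$. By Claim~\ref{c:frecfw}, take the sequence $\auf k_\ell:\ell<\omega\zu$, and set $a:=k_{n+1}$, $b:=k_n$. By Claim~\ref{c:frecfw}(i),(ii) we have $\M,\auf a,r\zu\models\start$ and $\M,\auf k,r\zu\models\neg\start$ for all $k$ with $b\leq k<a$. Hence Claim~\ref{c:runbwl} yields that $\auf\sigma_a,\sigma_{a-1},\dots,\sigma_b\zu$ is a well-defined lossy run of $M$ starting with $\auf q_0,{\bf 0}\zu$. By Claim~\ref{c:frecfw}(iii), there are at least $n$ levels $k$ with $b<k<a$ and $\M,\auf k,y_k\zu\models\recp$; for each such $k$, the conjunct $\Bh^+\Bv^+(\recp\to\state_{q_r})$ gives $\M,\auf k,y_k\zu\models\state_{q_r}$, so $s_k=q_r$. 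Thus the run visits $q_r$ at least $n$ times. As $n$ was arbitrary, the claim follows.
\end{proof}
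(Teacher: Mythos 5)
Your overall strategy---assembling Claims~\ref{c:frecfw} and \ref{c:runbwl} block by block and then using $\Bh^+\Bv^+(\recp\to\state_{q_r})$ to convert $\recp$-levels into $q_r$-visits---is exactly what the paper intends (it states the lemma as a straightforward consequence of those two claims). However, your choice of the left endpoint breaks the application of Claim~\ref{c:runbwl}: that claim requires $\M,\auf m,r\zu\models\neg\start$ for \emph{every} $m$ with $b\leq m<a$, \emph{including} $m=b$ itself, whereas with your choice $b:=k_n$, Claim~\ref{c:frecfw}(i) gives $\M,\auf k_n,r\zu\models\start$. So Claim~\ref{c:runbwl} cannot be applied ``verbatim''; its hypothesis is outright violated at the left endpoint. (You even observe that both endpoints carry $\start$ and treat this as a harmless alignment issue, when it is precisely what blocks the application: the run produced by Claim~\ref{c:runbwl} has its unique $\start$-level at the top index $a$, and must consist of $\neg\start$-levels below it.)

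The repair is a one-line index shift: take $a:=k_{n+1}$ and $b:=k_n+1$. Then Claim~\ref{c:frecfw}(ii) yields $\M,\auf m,r\zu\models\neg\start$ for all $m$ with $b\leq m<a$ (these are exactly the $m$ with $k_n<m<k_{n+1}$), and Claim~\ref{c:frecfw}(iii) guarantees at least $n+1$ levels strictly between $k_n$ and $k_{n+1}$, so in particular $k_{n+1}>k_n+1$ and hence $b<a$ as Claim~\ref{c:runbwl} requires. That claim now applies and gives the lossy run $\auf\sigma_a,\sigma_{a-1},\dots,\sigma_b\zu$ starting with $\auf q_0,{\bf 0}\zu$, and the $\recp$-levels supplied by (iii) all lie in the interval $[b,a-1]$, i.e.\ strictly inside the run, so none of them is lost by the shift. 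At each such level $k$ your appeal to $\Bh^+\Bv^+(\recp\to\state_{q_r})$ is correct, since every pair $\auf k,y_k\zu$ is $\Bh^+\Bv^+$-reachable from $\auf 0,r\zu$ in an expanding 2-frame over $\auf\omega,<\zu$, whence $s_k=q_r$ by \eqref{griduniquel}. With this correction the rest of your argument goes through and yields at least $n+1\geq n$ visits to $q_r$.
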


On the other hand, suppose that for every $0<n<\omega$, $M$ has a lossy run 
\[
\rho_n=\bigl \auf\auf q_0^n,{\bf 0}\zu,\dots,\auf q_{m_{n}-1}^n,{\bf c}(m_n-1)\zu\bigr\zu
\]
such that $q_0^n=q_0$ and $\rho_n$ visits $q_r$
at least $n$ times. Let $M_0:=0$ and for each $0<n<\omega$, let $M_n:=\sum_{i=1}^n m_i$, and let
$i_1^n,\dots,i_n^n< m_n$ be such that
$|\{i_1^n,\dots,i_n^n\}|=n$ and $q_i=q_r$ for every $i\in\{i_1^n,\dots,i_n^n\}$.
We define a model $\Ninf=\bigl\auf\auf\omega,<\zu\mprod\auf\omega,\ne\zu,\alpha\bigr\zu$ as follows (cf.\  Fig.~\ref{f:model}):
For all $q\in Q$, we let
\begin{align*}
\alpha(\state_q)& :=\{\auf n,n\zu : M_k\leq n< M_{k+1}\mbox{ and }q_{n-M_k}^{k+1}=q,\mbox{ for some $k<\omega$}\},\\
\alpha(\state)& :=\{\auf n,n\zu : n<\omega\},\\
\alpha(\diag)& :=\{\auf n,n+1\zu : n<\omega\},\\
\alpha(\start) &:=\{\auf n,m\zu: n=M_k\mbox{ for some $k<\omega$, and }m<\omega\}.
\end{align*}
Further, for any finite subset $X=\{n_1,\dots,n_\ell\}$ of $\omega$ with $n_1<\dots < n_\ell$ and any $k\leq |X|$, we let
$\textit{min}_k(X):=\{n_1,\dots,n_k\}$. Now
for all $i<N$, $0<n<\omega$ and $k<m_n$, we define the sets $\alpha_k^n(\cou)$ by
induction on $k$:
We let $\alpha_0^n(\cou):=\emptyset$, and for all $k<m_n-1$,
\[
\alpha_{k+1}^n(\cou):=\left\{
\begin{array}{ll}
\alpha_k^n(\cou)\cup\{M_{n}-k\}, & \mbox{ if $c_i^n(k+1)=c_i^n(k)+1$},\\
\alpha_k^n(\cou)-\textit{min}_\ell\bigl(\alpha_k^n(\cou)\bigr), & \mbox{ if $|c_i^n(k)-c_i^n(k+1)|=\ell$}.
\end{array}
\right.
\]
Then, for each $i<N$, we let
\[
\alpha(\cou):=\{\auf k,m\zu : M_{n-1}\leq k<M_n,\ m\in\alpha_{M_{n}-k-1}^n(\cou)\mbox{ for some $0<n<\omega$}\}.
\]
Also, we define the sequence $\auf r_n:n<\omega\zu$ inductively as follows. Let $r_0:=i_1^1$ and
let
\[
r_{n+1}:=\left\{
\begin{array}{ll}
M_k+i_1^{k+1}, & \mbox{if $r_n=M_{k-1}+i_k^k$ for some $k>0$},\\[3pt]
M_{k-1}+i_{\ell+1}^k, & \mbox{if $r_n=M_{k-1}+i_\ell^k$ for some $k>0,\ \ell<k$}.
\end{array}
\right.
\]
Then let
\begin{align*}
\alpha(\recp) &:=\{\auf r_n,r_n\zu : n<\omega\},\\
\alpha(\recx) &:=\{\auf n,r_n\zu : n<\omega\}.
\end{align*}
It is not hard to check that 
$\Ninf,\auf 0,0\zu\models\diagfwu\land\fmmbwlossy\land\frece\land\Bh^+\Bv^+(\recp\to\state_{q_r})$, and so by Lemma~\ref{l:lrecpred}
LCM \mbox{$\omega$-reachability} can be reduced to
$\{\auf\omega,<\zu\}\eprod\cdiff$-satisfiability. This competes the proof of Theorem~\ref{t:omegael}.
 
 %******
 
 \bigskip
Next, we prove Theorem~\ref{t:decefin}
by reducing the `LCM-reachability' problem to $\clinf\eprod\cdiff$-satisfiability.
We will use the finitary versions of some of the formulas used in the previous proof.
Let $\Hh_{\auf T,<\zu,\overline{\G}}$ be an expanding 2-frame for some finite linear order $\auf T,<\zu$ and for some difference frames $\G_n=\auf W_n,\ne\zu$, $n\in T$, and let $\M$ be a model based on $\Hh_{\auf T,<\zu,\overline{\G}}$.
We may assume that $T=|T|<\omega$.
We consider a version of the formula $\diagfwfin$ defined in the proof of Theorem~\ref{t:finite}.
Let $\diagfwfinu$ be the conjunction of \eqref{initfw}, \eqref{dgenfw}, \eqref{sgenfin} and \eqref{diaguniq}.
The following finitary version of Claim~\ref{c:gridfwe} can be proved by a straightforward induction on $m$:

\begin{claim}\label{c:gridfwfinexp}
Suppose $\M,\auf 0,r\zu\models\diagfwfinu$. Then there exist some $0<E\leq T$ and a sequence
$\auf y_m : m\leq E\zu$ of points such that for all $m\leq E$,
\begin{enumerate}
\item[{\rm (}i{\rm )}]
$y_0=r$ and if $m>0$ then $y_{m}\in W_{m-1}$,
\item[{\rm (}ii{\rm )}]
for all $n<m$, $y_m\ne y_{n}$, 
\item[{\rm (}iii{\rm )}]
if $m< E$ then $\M,\auf m,y_m\zu\models\state$,
\item[{\rm (}iv{\rm )}]
if $m<E$ then for all $w\in W_m$, $\M,\auf m,w\zu\models\diag$ iff  $w=y_{m+1}$,
\item[{\rm (}v{\rm )}]
$\M,\auf E-1,y_E\zu\models\pend$, and if $m<E-1$ then
$\M,\auf m,y_{m+1}\zu\models\neg\pend$.
\end{enumerate}
\end{claim}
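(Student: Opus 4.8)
The plan is to prove the statement by induction on $m$, following exactly the template of Claims~\ref{c:gridfwe} and \ref{c:gridfwfin}: the expanding infinite case (Claim~\ref{c:gridfwe}) supplies the way uniqueness of $\diag$-points in a column is extracted from \eqref{diaguniq}, while the finite product case (Claim~\ref{c:gridfwfin}) supplies the $\pend$-driven stopping rule that determines $E$. The construction is deterministic: at stage $m$ I will have a state-point $\M,\auf m,y_m\zu\models\state$ with $y_m\in W_m$, and I either certify that we have reached the end or produce the next element $y_{m+1}$. For the base case I set $y_0:=r$; by \eqref{initfw} we get $\M,\auf 0,r\zu\models\state$, so (iii) holds at $m=0$, while (i) and (ii) are vacuous.

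For the inductive step, assume $\M,\auf m,y_m\zu\models\state$ with $y_m\in W_m$. By \eqref{dgenfw} there is a vertical $\Rvp$-successor $y_{m+1}\neq y_m$ in $W_m$ with $\M,\auf m,y_{m+1}\zu\models\diag$; this is property (i) for index $m+1$, and by expansion $W_m\subseteq W_{m+1}$ keeps $y_{m+1}$ available in every later column. Property (iv) then follows from \eqref{diaguniq}: since the vertical relation on column $m$ is the difference relation on $W_m$, the conjunct $\diag\to\Bv\neg\diag$ forces $y_{m+1}$ to be the \emph{unique} $\diag$-point of column $m$. I then inspect whether $\M,\auf m,y_{m+1}\zu\models\pend$. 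If so, I stop and set $E:=m+1$, so that $\auf E-1,y_E\zu=\auf m,y_{m+1}\zu$ witnesses the first half of (v). If not, \eqref{sgenfin} applies and yields $\M,\auf m,y_{m+1}\zu\models\Dh\state\land\Bh\Bh\neg\state$; here $\Bh\Bh\neg\state$ forbids states at positions $\geq m+2$ on the line $y_{m+1}$ while $\Dh\state$ forces one at some position $>m$, so it must sit at the immediate successor, giving $\M,\auf m+1,y_{m+1}\zu\models\state$, i.e.\ (iii) for $m+1$.

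The distinctness claim (ii) is where the $\Bh\Bh\neg\state$ conjunct does its real work, exactly as in Claim~\ref{c:gridfw}. Since $y_{m+1}$ is a genuine $\Rvp$-successor of $y_m$ we have $y_{m+1}\neq y_m$; suppose $y_{m+1}=y_n$ for some $n<m$. Having just placed $\M,\auf m+1,y_{m+1}\zu\models\state$, I reach a contradiction with a vertical uniqueness-of-state statement on the line $y_n$: if $n=0$ this is the conjunct $\Bh\neg\state$ of \eqref{initfw} read at $\auf 0,r\zu$, and if $n>0$ it is the $\Bh\Bh\neg\state$ supplied by \eqref{sgenfin} at the $\diag$-point $\auf n-1,y_n\zu$ that generated $y_n$. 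Either way no state may sit at a position $>n$ on the line $y_n=y_{m+1}$, contradicting the state just placed at position $m+1>n$.

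Finally, termination with some $0<E\leq T$ is forced by finiteness of $\auf T,<\zu$: as long as the current $\diag$-point is not $\pend$, \eqref{sgenfin} demands a horizontal state-successor at the next position, so the staircase strictly climbs; since there are only $|T|$ positions, at the latest at position $T-1$ a $\neg\pend$ $\diag$-point would have no room for a horizontal state-successor and violate \eqref{sgenfin}, forcing $\pend$ there and hence $E\leq T$. The second half of (v) is then immediate, since the non-$\pend$ $\diag$-points are exactly the stages at which I chose to continue. I expect the main obstacle to be precisely this terminal bookkeeping: reconciling the index ranges of (iii)--(v) across the off-by-one between the last state-point at position $E-1$ and the terminal $\pend$-marked $\diag$-point (also at position $E-1$), and, in particular, securing distinctness of the terminal $y_E$ from the earlier $y_n$ even though the freshness argument above relies on a state being placed one step further along the line, which does not happen past the $\pend$-marker. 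All remaining verifications are the routine `straightforward' inductive checks already rehearsed in Claims~\ref{c:gridfwe} and \ref{c:gridfwfin}.
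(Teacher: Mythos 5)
Your induction is exactly the ``straightforward'' one the paper intends (the paper itself offers no proof, only the remark that the claim is the finitary version of Claim~\ref{c:gridfwe}), and your handling of (i), (iii), (iv), of termination, and of distinctness at non-terminal stages is essentially right. But the obstacle you flag at the end is not terminal bookkeeping: it is a genuine gap, and it cannot be closed, because item (ii) at $m=E$ simply does not follow from $\diagfwfinu$. No conjunct of $\diagfwfinu$ constrains a $\diag$-point at which $\pend$ holds --- \eqref{sgenfin} is vacuous there --- so the terminal $\diag$-point may coincide with any earlier $y_n$, even with $r$ itself. Concretely, take the product frame $\auf 3,<\zu\mprod\auf\{r,a\},\ne\zu$, viewed as an expanding 2-frame with constant columns $W_0=W_1=W_2=\{r,a\}$, and let $\state$ hold exactly at $\auf 0,r\zu,\auf 1,a\zu$, let $\diag$ hold exactly at $\auf 0,a\zu,\auf 1,r\zu$, and let $\pend$ hold exactly at $\auf 1,r\zu$. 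All four conjuncts of $\diagfwfinu$ hold at $\auf 0,r\zu$: the only $\diag\land\neg\pend$-point is $\auf 0,a\zu$, and it satisfies $\Dh\state\land\Bh\Bh\neg\state$. Yet no $E$ and no sequence satisfy (i)--(v): by (iv) at $m=0$ we must have $y_1=a$; $E=1$ is impossible since $\pend$ fails at $\auf 0,a\zu$; so $E\geq 2$, and (iv) at $m=1$ forces $y_2=r=y_0$, violating (ii). Hence the claim as stated is false. What your construction actually yields --- and all that Lemma~\ref{l:runfwdfinexp} uses downstream --- is (ii) restricted to $m<E$ together with $y_E\ne y_{E-1}$; the statement has to be weakened in this way rather than proved in full.

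A second, subtler soft spot sits in your step ``if not, \eqref{sgenfin} applies'': both \eqref{sgenfin} and \eqref{diaguniq} carry the prefix $\Bh^+\Bv$, not $\Bh^+\Bv^+$, so, evaluated at $\auf 0,r\zu$, they constrain only points off the line of $r$. Invoking \eqref{sgenfin} at $\auf m,y_{m+1}\zu$ therefore presupposes $y_{m+1}\ne r=y_0$, which is part of (ii) and which you derive only afterwards, from the very $\state$-point that this invocation supplies. (For (iv) this is harmless: two $\diag$-points can never share a column, since at least one of them lies off the line of $r$ and its $\Bv\neg\diag$ excludes the other.) Under the literal reading the circle is real: delete $\pend$ from the model above and the unique $\diag$-point $\auf 1,r\zu$ of column~1 escapes \eqref{sgenfin} altogether, so the staircase can neither continue nor terminate, and the claim again fails. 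The intended reading is surely $\Bh^+\Bv^+$; then a $\diag\land\neg\pend$-point on the line of $r$ forces, via $\Dh\state$, a later $\state$-point on that line, contradicting $\Bh\neg\state$ in \eqref{initfw}, so at non-terminal stages you can prove $y_{m+1}\ne y_n$ \emph{before} placing the next $\state$-point and the circularity disappears. The same prefix issue occurs in the paper's own proof of Claim~\ref{c:gridfw}, so this is a defect inherited from the paper; but neither reading repairs the terminal gap of the previous paragraph, which is a flaw of the claim itself rather than of your argument.
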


The following lemma is a straightforward consequence of Claims~\ref{c:runbwl} and \ref{c:gridfwfinexp}:

\begin{lemma}\label{l:runfwdfinexp}
Suppose that $\M,\auf 0,r\zu\models\diagfwfinu\land\fmmbwlossy\land
\state_{q_r}\land\Bh^+\Bv^+(\pend\leftrightarrow\start)$. 
For all  $m< E$ and $i<N$, let 
\[
s_m:=q,\ \mbox{ if }\ \M,\auf m,y_m\zu\models\state_q,\ \
c_i(m):=|\{w\in W : \M,\auf m,w\zu\models\cou\}|,\ \
\sigma_m=\auf s_m,{\bf c}(m)\zu.
\]
Then $\auf \sigma_{E-1},\sigma_{E-2},\dots,\sigma_0\zu$ is a well-defined lossy run
of $M$ starting with $\auf q_0,{\bf 0}\zu$ and reaching $q_r$.
\end{lemma}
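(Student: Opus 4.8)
The plan is to obtain the lemma by combining the finite staircase supplied by Claim~\ref{c:gridfwfinexp} with the run-extraction argument of Claim~\ref{c:runbwl}, using the extra hypothesis $\Bh^+\Bv^+(\pend\leftrightarrow\start)$ to pin down exactly where the single finite run begins and ends. First I would apply Claim~\ref{c:gridfwfinexp} to fix the length $0<E\leq T$ and the staircase $\auf y_m : m\leq E\zu$ together with properties (i)--(v). Since the conjunct $\state_{q_r}$ holds at the evaluation point $\auf 0,r\zu$ and $y_0=r$, this already records $s_0=q_r$, so $\sigma_0=\auf q_r,{\bf c}(0)\zu$ will be the last configuration of the run I build and ``reaching $q_r$'' will be immediate; its well-definedness is guaranteed by \eqref{griduniquel} together with Claim~\ref{c:gridfwfinexp}(iii).

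The key preparatory step is to locate the $\start$-marks. I would first show that $\start$ holds throughout column $E-1$: by Claim~\ref{c:gridfwfinexp}(v) we have $\M,\auf E-1,y_E\zu\models\pend$, hence $\M,\auf E-1,y_E\zu\models\start$ by $\Bh^+\Bv^+(\pend\leftrightarrow\start)$, and since $r,y_{E-1},y_E\in W_{E-1}$ (using expandingness and Claim~\ref{c:gridfwfinexp}(i)) with $y_E\ne r$, the vertical-propagation conjunct \eqref{startv} of $\fmmbwlossy$ carries $\start$ across the whole column, in particular to $\auf E-1,r\zu$ and to $\auf E-1,y_{E-1}\zu$. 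Next I would show $\M,\auf n,r\zu\models\neg\start$ for every $n<E-1$: were $\start$ to hold at $\auf n,r\zu$, then \eqref{startv} would force $\start$ at $\auf n,y_{n+1}\zu$ (here $y_{n+1}\in W_n$ by (i) and $y_{n+1}\ne r$ by (ii)), whence $\pend$ there by $\pend\leftrightarrow\start$, contradicting Claim~\ref{c:gridfwfinexp}(v).

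With this $\start$-pattern established, I would invoke Claim~\ref{c:runbwl} with $a:=E-1$ and $b:=0$. The backward induction proving Claim~\ref{c:runbwl} uses only the staircase and the local step-formulas in the range $b\leq m\leq a$, which in the present finite setting are supplied verbatim by Claim~\ref{c:gridfwfinexp}(iii),(iv), the conjuncts of $\fmmbwlossy$, and the lossy counting Claims~\ref{c:countingl} and~\ref{c:counterl}. Its initial step uses \eqref{initmmbwl} at $\auf E-1,y_{E-1}\zu$, which is licensed by the $\start$-mark just isolated, so the same induction yields that $\auf\sigma_{E-1},\sigma_{E-2},\dots,\sigma_0\zu$ is a well-defined lossy run of $M$ with $\sigma_{E-1}=\auf q_0,{\bf 0}\zu$. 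Together with $s_0=q_r$ from the first paragraph, this run starts with $\auf q_0,{\bf 0}\zu$ and reaches $q_r$, as required.

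The only genuinely delicate point, which I would flag as the main obstacle, is the passage from the infinite formulation of Claim~\ref{c:runbwl} (stated under $\diagfwu$) to the finite one needed here (under $\diagfwfinu$): one must verify that the induction never appeals to columns outside $0\leq m\leq E-1$, and that the boundary behaviour at column $E-1$ matches the $\start$-mark we extracted, rather than running off the top of the staircase. Once this bookkeeping is confirmed the argument is routine, and the degenerate case $E=1$ (where the run collapses to the single configuration $\sigma_0$) is handled trivially.
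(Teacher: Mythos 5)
Your proposal is correct and takes essentially the same route as the paper, which dismisses this lemma in one line as ``a straightforward consequence of Claims~\ref{c:runbwl} and \ref{c:gridfwfinexp}'' --- precisely the combination you carry out. Your extra work (using $\Bh^+\Bv^+(\pend\leftrightarrow\start)$ with \eqref{startv} to place $\start$ exactly on column $E-1$ and nowhere below, checking that the backward induction of Claim~\ref{c:runbwl} only touches columns $0\leq m\leq E-1$, and the degenerate case $E=1$) is just the bookkeeping the paper leaves implicit.
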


On the other hand,  if $M$ has a run $\bigl\auf \auf q_m,{\bf c}(m)\zu: m<T\bigr\zu$ for some $T<\omega$ such that it starts with all-0 counters and $q_{T-1}=q_r$, then it is not hard to define a model 
based on $\auf T,<\zu\mprod\auf T+1,\ne\zu$ satisfying 
$\diagfwfinu\land\fmmbwlossy\land
\state_{q_r}\land\Bh^+\Bv^+(\pend\leftrightarrow\start)$ (cf.\ how the finite runs in the model $\Ninf$ are defined in the proof of 
Theorem~\ref{t:omegael}).
So by Lemma~\ref{l:runfwdfinexp} the proof of Theorem~\ref{t:decefin} is completed.

%***************

 \subsection{Upper bounds}\label{expprodu}
 
 To begin with, as a consequence of Theorems~\ref{t:re} and Props.~\ref{p:cd}, \ref{p:foprod} we obtain:

\begin{corollary}\label{co:linefou}
\logic-satisfiability is co-r.e.\ in expanding domain models over the class of all linear orders.
\end{corollary}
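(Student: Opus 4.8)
The plan is to apply the transfer machinery that is already in place rather than to do any direct model-theoretic work. Corollary~\ref{co:linefou} concerns \logic-satisfiability in expanding domain models over the class $\clin$ of all linear orders, and the goal is to show this problem is co-r.e. The natural route is to chain together three results stated earlier in the excerpt: Proposition~\ref{p:foprod}, which translates the expanding-domain \logic-satisfiability into $\clin\eprod\cdiff$-satisfiability in the bimodal setting; Proposition~\ref{p:cd}, which reduces expanding 2-frame satisfiability to full product-frame satisfiability; and Theorem~\ref{t:re}, the Gabbay--Shehtman recursive-enumerability transfer theorem, which gives co-r.e.-ness of product satisfiability when both component frame classes are recursively first-order definable.

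Concretely, first I would invoke Proposition~\ref{p:foprod}: for any \logic-formula $\phi$, we have that $\phi$ is \logic-satisfiable in expanding domain models over $\clin$ if and only if $\phi^\star$ is $\clin\eprod\cdiff$-satisfiable, where $\phi^\star$ is the effectively computable bimodal translation. Since co-r.e.-ness is preserved under the computable map $\phi\mapsto\phi^\star$, it suffices to prove that $\clin\eprod\cdiff$-satisfiability is co-r.e. Next I would apply the second item of Proposition~\ref{p:cd}: for the expanding case, $\phi^\star$ is $\clin\eprod\cdiff$-satisfiable if and only if the formula $\fodompred\land\Bh^{\leq n}\Bv^{\leq n}\bigl(\fodompred\to\Bh\fodompred\bigr)\land(\phi^\star)^{\sf D}$ is $\clin\mprod\cdiff$-satisfiable, where $n$ is the modal nesting depth of $\phi^\star$ and the relativisation $(\cdot)^{\sf D}$ is again effectively computable. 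This second reduction is also a computable transformation, so co-r.e.-ness of the resulting problem transfers back. Thus the whole matter is reduced to showing that $\clin\mprod\cdiff$-satisfiability is co-r.e.

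For this final step I would appeal directly to Theorem~\ref{t:re}: it guarantees that $\CC_0\mprod\CC_1$-satisfiability is co-r.e.\ whenever both $\CC_0$ and $\CC_1$ are recursively first-order definable in the language with a single binary predicate symbol. Here $\CC_0=\clin$, the class of all linear orders, which is first-order definable by the conjunction of irreflexivity, transitivity, and trichotomy (a finite, hence recursive, set of axioms); and $\CC_1=\cdiff$, the class of difference frames. The only genuinely delicate point is the definability of $\cdiff$: being \emph{exactly} the inequality relation on a set is not first-order expressible as such, but the relevant observation (already implicit in the paper's identification $\Diff=\Log\,\cdiff$ and the fact that frames for $\Diff$ are pseudo-equivalence relations) is that for the purposes of $\Log$ and of satisfiability transfer it is the first-order definable class of \emph{pseudo-equivalence frames} that matters, and that class is recursively first-order axiomatisable by symmetry and pseudo-transitivity. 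I would therefore verify carefully that Theorem~\ref{t:re} applies either to $\cdiff$ directly under the appropriate reading or to $\Fr\Diff$, so that $\clin\mprod\cdiff$-satisfiability comes out co-r.e.

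The main obstacle I anticipate is precisely this bookkeeping around the difference-frame component: making sure that the class fed into Theorem~\ref{t:re} is the one that is recursively first-order definable and that this does not alter the satisfiability problem one is trying to bound. Everything else is a routine composition of effective reductions, each of which preserves co-r.e.-ness because it is given by a total computable function on formulas. Once the definability of both component classes is pinned down, the corollary follows immediately by combining Theorem~\ref{t:re} with Propositions~\ref{p:foprod} and~\ref{p:cd}, with no new constructions required.
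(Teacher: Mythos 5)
Your proposal is correct and takes essentially the same route as the paper: Corollary~\ref{co:linefou} is stated there precisely as a consequence of Theorem~\ref{t:re} together with Propositions~\ref{p:cd} and~\ref{p:foprod}, chained exactly as you describe. The one point you flag as delicate is in fact unproblematic: $\cdiff$ is recursively first-order definable once equality is available in the first-order language (as it is in the many-sorted consequence relation underlying Theorem~\ref{t:re}), so no detour through pseudo-equivalence frames is needed.
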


Unlike in the constant domain case, in the expanding domain case
the same holds for $\auf\omega,<\zu$ as timeline:

   \begin{theorem}\label{t:omegaeu}
  $\{\auf\omega,<\zu\}\eprod\cdiff$-satisfiability is co-r.e.
 \end{theorem}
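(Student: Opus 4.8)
The plan is to give a type–elimination / quasimodel characterisation of satisfiability over $\{\langle\omega,<\rangle\}\eprod\cdiff$ and read the complexity bound off it. Equivalently (since $\phi$ is unsatisfiable over a class $\mathcal K$ iff $\neg\phi$ is valid in $\mathcal K$), it suffices to show that $\Log(\{\langle\omega,<\rangle\}\eprod\cdiff)$ is recursively enumerable. First I would fix the finite set $\Sigma$ obtained by closing $\subf$ under subformulas and single negations, and call a Boolean-saturated subset $t\subseteq\Sigma$ a \emph{type}. Since the vertical component of every frame in the class is a difference frame, the truth of any $\Dv\psi$ or $\Ev\psi$ at a point depends only on how many points of the current `column' realise $\psi$, \emph{counted only up to two}. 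Hence a whole column can be abstracted by a \emph{vertical state}: a finite assignment of multiplicities from $\{1,{\geq}2\}$ to a finite set of types, subject to the evident $\cdiff$-coherence conditions for the $\Dv$- and $\Ev$-subformulas in $\Sigma$. Crucially, there are only finitely many vertical states.

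Next I would set up the horizontal layer. Because the timeline is $\langle\omega,<\rangle$ and the domains are expanding, every domain element, once present, persists at all later moments, so it traces an infinite forward thread; along such a thread the $\Dh$- and $\Bh$-subformulas must obey the usual `suitable successor' conditions for the strict transitive future (e.g.\ $\Dh\psi\in t_n$ iff $\psi\in t_{n+1}$ or $\Dh\psi\in t_{n+1}$, and dually for $\Bh$). A \emph{quasimodel} is then an $\omega$-sequence of vertical states together with a forward coherence relation between consecutive states realising these thread conditions and the expanding constraint (every occurrence at moment $n$ has a coherent successor occurrence at $n+1$, while fresh occurrences may appear), in which $\phi$ is realised somewhere and which is \emph{fair}: every eventuality $\Dh\psi$ demanded along a thread is fulfilled at some later moment. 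I would prove the two matching lemmas: a genuine expanding model over $\langle\omega,<\rangle$ yields a quasimodel by abstraction, and conversely a fair quasimodel is realised by a genuine model, built by unravelling the states into threads — here the multiplicity value `${\geq}2$' supplies the freedom to introduce as many persistent copies as the forward matching requires, and truth is recovered by induction on $\Sigma$.

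Finally, to extract the bound I would encode fair quasimodels as the infinite branches of a tree $T_\phi$ whose nodes are the finite coherent prefixes, augmented with a bounded bookkeeping component that schedules the pending eventualities (oldest-first), so that an infinite branch is automatically fair and, conversely, every fair quasimodel induces an infinite branch. Since there are finitely many vertical states and finitely many bookkeeping records, $T_\phi$ is finitely branching with a decidable node relation, hence a recursive finitely branching tree. By König's Lemma, $\phi$ is $\{\langle\omega,<\rangle\}\eprod\cdiff$-satisfiable iff $T_\phi$ has an infinite branch iff $T_\phi$ is infinite; hence $\phi$ is \emph{unsatisfiable} iff $T_\phi$ is finite, which is semi-decidable by searching for a level containing no node. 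Thus the complement of satisfiability is recursively enumerable, as required. The main obstacle — and the place where the expanding hypothesis is used essentially — is the soundness direction of the characterisation: one must check that the column-abstraction up to count two, together with the purely \emph{forward} persistence of threads, suffices to rebuild a genuine expanding product model, so that fair coherent state-sequences correspond exactly to models. This is precisely what fails for constant domains, where every element must already be present at the first moment and the resulting backward obligations reintroduce the well-foundedness of $\omega$ that pushes the constant-domain problem up to $\Sigma_1^1$ (Theorem~\ref{t:omega}).
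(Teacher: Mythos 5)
Your quasimodel construction has a genuine gap, and it is not one that can be patched: if your characterisation were correct it would prove much more than co-r.e.\ --- it would prove decidability, contradicting Theorem~\ref{t:omegael} of this same paper. Your vertical states form a finite set, your coherence conditions relate only consecutive states, and your fairness bookkeeping is bounded and updated locally; hence the tree $T_\phi$ you build is the unfolding of a finite directed graph on (state, bookkeeping) pairs, and infinitude of such a tree is \emph{decidable} (compute the reachable part of the graph and check for a reachable cycle). Since $\{\auf\omega,<\zu\}\eprod\cdiff$-satisfiability is undecidable (Theorem~\ref{t:omegael}, by reduction of LCM $\omega$-reachability), no sound-and-complete characterisation of this finite-state, local form can exist. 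The step that actually fails is your realisation lemma (``a fair quasimodel is realised by a genuine model''), which you yourself flag as the main obstacle but then claim goes through. Counting each column only up to two discards exactly the information that the undecidability proof exploits: with $\Dv$ one can constrain the \emph{exact, unbounded} cardinalities of definable subsets of consecutive columns --- this is what the formulas $\ffixbwl$, $\fincbwl$, $\fdecbwl$ and Claim~\ref{c:counterl} do when they encode counter values as the number of $\cou$-points per column. A locally coherent, fair sequence of vertical states can assert such constraints step by step while no global assignment of actual cardinalities (and of persistent threads realising them) exists; realisability imposes arithmetical conditions spanning the whole timeline, not checkable on consecutive pairs. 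In particular the value ``${\geq}2$'' does not supply the freedom you claim, because in an expanding model over $\auf\omega,<\zu$ threads never die, so every choice of multiplicities constrains all later columns.

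For contrast, the paper does not attempt a direct model-theoretic characterisation at all. It first proves a selection result (Claim~\ref{c:fmpe}): any satisfiable formula is satisfiable in an expanding 2-frame whose columns are finite difference frames (finite, but of unbounded size --- which is consistent with undecidability). It then eliminates the difference operator syntactically (Claim~\ref{c:difftolin}): each $\Dv\psi$ is translated to $\psivar\lor\Dv\psi^\dagger$ and the vertical difference frames are replaced by finite linear orders, lifting the classical ``difference-to-linear'' reduction to the 2D expanding setting. This reduces the problem to $\{\auf\omega,<\zu\}\eprod\clinf$-satisfiability, which is co-r.e.\ by \cite[Thm.1]{KonevWZ05}. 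The heavy lifting is thereby delegated to that external result, whose proof is itself not a finite-state argument --- as, by the above, it cannot be.
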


\begin{corollary}\label{co:omegaefou}
\logic-satisfiability is co-r.e.\ in expanding domain models over $\auf\omega,<\zu$.
\end{corollary}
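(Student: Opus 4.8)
The plan is to reduce $\{\auf\omega,<\zu\}\eprod\cdiff$-satisfiability to the existence of an infinite ``quasimodel'' over a \emph{finite} alphabet of abstract moment-descriptions, and then to show that the non-existence of such a quasimodel is recursively enumerable. The starting observation is that, since our bimodal language can only count vertically up to two, the truth value of any $\psi\in\subf$ at a pair $\auf n,a\zu$ depends on the moment $n$ only through the \emph{count-abstraction} of the difference frame there: for each Hintikka set $t\subseteq\subf$ it records whether $t$ is realised by $0$, exactly $1$, or at least $2$ domain elements (precisely the information needed to evaluate $\Dv$ and the derived $\E$). There are only finitely many such abstract states, so a model over $\auf\omega,<\zu$ induces an $\omega$-sequence of states together with a system of \emph{runs}, one per domain element, recording how the type of each element evolves; the expanding condition means every run, once born, persists to all later moments.

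First I would make the quasimodel conditions precise: (i) \emph{local coherence}, that each state is the count-abstraction of a genuine difference frame meeting the Hintikka conditions; (ii) \emph{temporal coherence} of runs with respect to $\Dt$ and $\Bt$ together with persistence from expandingness; and (iii) \emph{eventuality fulfilment}, that every $\Dt\psi$ holding along a run is witnessed by $\psi$ at a strictly later moment of that run. That satisfiability is equivalent to the existence of such a quasimodel is a routine unravelling in both directions, analogous to the model constructions already carried out (e.g.\ the model $\Ninf$ in the proof of Theorem~\ref{t:omegael}).

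The co-r.e.\ bound I would then get by the approximation method: show that $\phi$ is satisfiable iff for \emph{every} $n<\omega$ a suitable \emph{finite} approximation $A_n$ exists, where $A_n$ is a finite expanding model over an initial segment in which $\phi$ is satisfied and all eventualities that have become ``due'' are already fulfilled. The forward direction is easy — given an $\omega$-model, truncate it at a level large enough to fulfil the finitely many eventualities introduced up to step $n$. Since the existence of a given finite $A_n$ is \emph{decidable} by brute-force search over finitely many finite models, the predicate ``$\phi$ is satisfiable'' takes the $\Pi_1^0$ form ``for every $n$ the approximation $A_n$ exists'', whose negation is r.e.; hence satisfiability is co-r.e.

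The hard part will be the converse gluing step: assembling the $\{A_n\}$ into a single $\omega$-quasimodel in which \emph{all} eventualities are genuinely fulfilled. Because the state alphabet is finite but the number of runs — and hence the multiplicities encoding counter values, as in Fig.~\ref{f:model} — is unbounded, a naive K\"onig's lemma over a finitely branching tree is unavailable, and eventualities may legitimately be fulfilled only after unboundedly many steps, exactly the phenomenon that makes the problem $\Pi_1^0$-hard via {\sc LCM} $\omega$-reachability (Theorem~\ref{t:omegael}). I expect this to require a well-quasi-order / well-structured-transition-system style compactness: order abstract configurations by multiplicity-wise domination, use expandingness to nest the approximations coherently, and exploit the fact that over $\auf\omega,<\zu$ one never needs a single recurrent run (only ever longer finite ones) to schedule eventuality fulfilment fairly. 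This also explains the contrast with the constant-domain case, where genuine recurrence is forced and the problem jumps to $\Sigma_1^1$ (Theorem~\ref{t:omega}); for all linear orders the same co-r.e.\ bound already follows from Theorem~\ref{t:re} (Corollary~\ref{co:linefou}), but that argument cannot be used here, since $\{\auf\omega,<\zu\}$ is not first-order definable.
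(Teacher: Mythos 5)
Your overall strategy --- quasimodels over a finite count-abstraction alphabet together with an approximation scheme whose failure is r.e.\ --- is the right kind of idea, but it has a genuine gap exactly where you yourself flag ``the hard part'': the converse gluing step, from ``for every $n$ a finite approximation $A_n$ exists'' to ``$\phi$ has an expanding model over $\auf\omega,<\zu$'', is not proved but only conjectured to follow from a well-quasi-order argument. That step is the entire mathematical content of the result: the approximations for different $n$ need not be nested, the domains (hence the multisets of realised types) grow without bound, so neither a union nor K\"onig's lemma applies, and the $\Pi^0_1$-hardness you correctly cite (Theorem~\ref{t:omegael}) shows no bound on fulfilment times can exist; writing ``I expect this to require'' wqo/WSTS machinery is a research plan, not a proof. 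There are two smaller but real problems in the accounting as well. First, for ``for all $n$, $A_n$ exists'' to be $\Pi^0_1$ you need existence of $A_n$ to be \emph{decidable}; a brute-force search is not over ``finitely many finite models'' unless you first prove a computable bound on the domain sizes needed, e.g.\ a linear bound obtained by the closure argument of Claims~\ref{c:fmpfin} and \ref{c:fmpe} --- without such a bound your condition is only $\Pi^0_2$. Second, even the ``easy'' forward direction (truncation) tacitly assumes only finitely many eventualities are due by moment $n$, i.e.\ it already requires passing to finite second components first.

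For comparison, the paper never proves the compactness step itself: it establishes the finite-second-components property (Claim~\ref{c:fmpe}), then lifts the standard `difference-to-linear' reduction to expanding 2-frames (Claim~\ref{c:difftolin}, via the fresh variables $\psivar$ and the formula $\expform$), thereby reducing $\{\auf\omega,<\zu\}\eprod\cdiff$-satisfiability to $\{\auf\omega,<\zu\}\eprod\clinf$-satisfiability, which is co-r.e.\ by Theorem~1 of \cite{KonevWZ05}; Corollary~\ref{co:omegaefou} then follows through the translation of Prop.~\ref{p:foprod}. The wqo-style compactness you anticipate is, in effect, precisely what is hidden inside that cited theorem; your proposal amounts to re-proving it directly for the difference component, which may well be feasible but is not carried out here.
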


\begin{theorem}\label{t:decefinu}
$\clinf\eprod\cdiff$-satisfiability is decidable.
\end{theorem}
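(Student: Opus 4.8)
The plan is to decide $\clinf\eprod\cdiff$-satisfiability by abstracting each vertical `column' of an expanding model into a finite control together with a vector of unbounded counters, and then recognising the existence of a suitable finite sequence of columns as a \emph{coverability} question in a well-structured transition system, which is decidable. Throughout I fix the input formula $\phi$, put $\Sigma=\subf$, and call a \emph{local type} any Boolean-saturated subset $\tau\subseteq\Sigma$. A model based on $\Hh_{\auf T,<\zu,\overline{\G}}$ with $\auf T,<\zu\in\clinf$ and each $\G_n=\auf W_n,\ne\zu\in\cdiff$ is determined, as far as the truth of all subformulas is concerned, by the sequence of its columns: for each moment $n\in T$, the multiset $C_n$ of local types $\tau_n(w)$ realised by the elements $w\in W_n$.

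First I would isolate the two kinds of constraints a column sequence must satisfy. Because $\cdiff$-frames are symmetric, the only data about $C_n$ relevant to $\Dv$ is, for each $\Dv\psi\in\Sigma$, whether $\psi$ is realised in $W_n$ by $0$, exactly $1$, or at least $2$ elements; I record this `capped count' as the \emph{vertical control} of the column. The $\Dv$-clause of the truth definition then imposes an \emph{exact} (hence, in general, non-monotone) consistency condition that depends only on this control. The $\Dh$-operator, by contrast, links a column to the \emph{next} one along the fixed second coordinate: since $\Rh$ is transitive and $w\in W_n\subseteq W_{n+1}$ in an expanding $2$-frame, for a persisting element one has $\Dh\psi\in\tau_n(w)$ iff $\psi\in\tau_{n+1}(w)$ or $\Dh\psi\in\tau_{n+1}(w)$, while $\Dh\psi$ is false at the last moment. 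Thus the $\Dh$-part of a type at moment $n$ is \emph{forced} by its type at moment $n+1$.

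Next I would build the abstract transition system whose states are pairs (vertical control, counter vector), where the counter vector records, only for those local types whose capped count is already $\ge 2$, the actual unbounded number of realising elements; types of capped count $0$ or $1$ carry no counter. A backward step from $C_{n+1}$ to $C_n$ deletes some elements (the ones `born' at $n+1$, since domains shrink going backward), moves every surviving element to a $\Dh$-compatible earlier type, and must land in a vertically consistent column. The crucial observation is that vertical consistency is a property of the finite control \emph{alone}, and is preserved under increasing any counter that is already $\ge 2$; hence all exact, non-monotone requirements live in the finite control, while the genuinely unbounded counters undergo only transfers along the compatibility relation and lossy decrements caused by non-persistence. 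Consequently the transition relation is monotone with respect to the well-quasi-order given by equality on the finite control together with the componentwise order on counter vectors (a well-quasi-order by Dickson's Lemma), so the system is well-structured.

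Finally I would phrase satisfiability as coverability: $\phi$ is $\clinf\eprod\cdiff$-satisfiable iff, starting from any vertically consistent column in which every $\Dh$-subformula is absent (the admissible last moment), finitely many backward steps reach a vertically consistent column in which some realised local type contains $\phi$. As the target set is upward closed and coverability in well-structured transition systems with effective predecessor bases is decidable, this settles decidability, while Theorem~\ref{t:decefin} shows the complexity cannot be pushed below Ackermann. The main obstacle is proving \emph{faithfulness} in the reconstruction direction: from an accepting abstract run one must recover an actual expanding $2$-frame over a finite linear order by threading each element through the columns as a horizontal trajectory whose successive types obey the $\Dh$-recurrence, realising the prescribed capped or unbounded counts at each moment while respecting $W_n\subseteq W_{n+1}$. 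Verifying that localising the exact $\Dv$-constraints to the finite control neither admits a spurious run nor discards a genuine model is precisely the delicate point of the argument.
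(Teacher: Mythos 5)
Your approach is essentially viable, but it takes a genuinely different route from the paper's. The paper builds no decision procedure at all: it lifts the one-dimensional ``difference-to-linear'' reduction to the 2D expanding setting (Claim~\ref{c:difftolin}: $\phi$ is $\clinf\eprod\cdiff$-satisfiable iff $\expform\land\phi^\dagger$ is $\clinf\eprod\clin$-satisfiable, where $(\Dv\psi)^\dagger:=\psivar\lor\Dv\psi^\dagger$ and $\expform$ forces the fresh variable $\psivar$ to record that $\psi$ has already been realised strictly below in a well-order placed on the column), and then invokes the cited theorem of Gabelaia, Kurucz, Wolter and Zakharyaschev that $\clinf\eprod\clin$-satisfiability is decidable. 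Your coverability argument is, in substance, the wqo machinery underlying that cited theorem, transplanted directly to the difference component: the backward-in-time lossiness you exploit (elements born later may simply be deleted) is exactly the phenomenon behind the paper's matching lower bound (Theorem~\ref{t:decefin}, Ackermann-hardness via lossy counter machines), so a non-primitive-recursive procedure is precisely what one should expect. The paper's route buys brevity and modularity; yours buys a self-contained proof needing neither the $\dagger$-translation nor the external theorem. As to the obstacle you flag, it is less delicate than you fear: your abstraction is in fact lossless, since the control (capped type counts) together with the counter vector determines the exact multiset of realised types, so both directions of faithfulness reduce to a routine truth lemma --- downward induction on moments for the $\Dh$-recurrence, anchored at the last moment where every $\Dh$-subformula must be false, plus vertical consistency of each column for $\Dv$; and your refinement of the ordering to require equal control is exactly what neutralises the non-monotone exact $\Dv$-constraints. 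The genuine remaining work lies elsewhere: (i) an effective pred-basis for your deletion-plus-retyping transitions, needed by the backward coverability algorithm (note that a minimal predecessor may have to contain finitely many consistency-completing elements beyond the retyped image of the target column), and (ii) the bookkeeping that the initial states (vertically consistent, $\Dh$-free, nonempty columns with arbitrary counters) form an effectively based upward-closed set whose intersection with the backward-reachable set of the target can be tested. Neither step can fail; both are finite combinatorics, so the plan does go through once these are written out.
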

 
 \begin{corollary}\label{co:decefinfou}
\logic-satisfiability is decidable in expanding domain models over the class of all 
finite linear orders.
\end{corollary}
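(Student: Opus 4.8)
The plan is to reduce $\clinf\eprod\cdiff$-satisfiability to the reachability problem for lossy counter machines, which is decidable (and, with Theorem~\ref{t:decefin}, this would in fact pin the complexity at Ackermann-complete); for decidability only the reduction \emph{to} LCM reachability is needed. First I would fix the input formula $\phi$ and work with the finitely many \emph{types}: maximal subsets $t\subseteq\subf$ that are propositionally consistent and closed under the expected sign conditions for $\Dh$, $\Dv$ and the Booleans. The guiding observation is that an expanding $2$-frame over a finite linear order decomposes into \emph{world lines}: identifying the timeline with $\{0,\dots,T-1\}$, every domain element present at a moment $n$ persists to every later moment (expansion), so the frame is a family of lines; the `vertical' diamond $\Dv$ at $\langle n,w\rangle$ ranges over all \emph{other} lines present at moment $n$ (the difference relation on the column $W_n$), while the `horizontal' diamond $\Dh$ ranges forward along the single line $w$. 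Accordingly I abstract the column at moment $n$ by the map $C_n$ sending each type $t$ to the number of lines realising $t$ at moment $n$.

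Two local consistency conditions then characterise genuine models. The \emph{vertical} condition is internal to a single column: for every line of type $t$ and every $\Dv\psi\in\subf$, membership $\Dv\psi\in t$ must agree with the existence of \emph{another} line in the same column whose type contains $\psi$; since $\Dv$ is interpreted by $\ne$, this depends only on the multiplicities $C_n(\cdot)$ capped at $2$ (one must distinguish ``$\psi$ occurs only at this very line'' from ``$\psi$ occurs elsewhere''), and it is exactly here that the counting and uniqueness features $\E$ of the language are accounted for. The \emph{horizontal} condition links consecutive columns: because $\Dh$ is transitive and the timeline is discrete and finite, $\Dh\psi$ holds at moment $n$ iff $(\psi\lor\Dh\psi)$ holds at $n+1$ on the \emph{same} line, so a persisting line of type $t$ at $n$ and type $t'$ at $n+1$ must satisfy, for every $\Dh\psi\in\subf$, that $\Dh\psi\in t$ iff $\Diamond_0^+\psi\in t'$; the expansion constraint is simply that the lines of $C_n$ inject into those of $C_{n+1}$ under this relabelling, with $C_{n+1}$ allowed extra freshly appearing lines, and $\Dh\psi\notin t$ for the last column. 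A routine truth-lemma (quasimodel) argument then shows that $\phi$ is $\clinf\eprod\cdiff$-satisfiable iff there is a finite sequence $C_0,\dots,C_{T-1}$ meeting both conditions with $\phi\in t$ for some type $t$ realised in some $C_n$.

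I would then read this sequence condition, traversed \emph{backward} in time, as a run of a lossy counter machine built from $\phi$: the counters are indexed by types and store the multiplicities $C_n(t)$; a backward step from column $n+1$ to column $n$ relabels each surviving line's type according to the $\Dh$-recursion (a token moved from one counter to another along a $\Diamond_0^+$-compatible pair) while \emph{losing} precisely the lines freshly created at $n+1$, and the vertical consistency of the newly formed column is imposed as a guard assembled from threshold ($=0$, $=1$, $\ge 2$) tests on the counters. The loss of freshly-created lines when stepping backward is exactly the lossiness of the machine, mirroring how domains shrink as one moves back along an expanding frame. A distinguished control state records that a column realising $\phi$ has been produced, and the question becomes whether this state is reachable from the ``last-column'' configurations (those in which every $\Dh\psi$ is false). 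As reachability for lossy counter machines is decidable, so is $\clinf\eprod\cdiff$-satisfiability, and Corollary~\ref{co:decefinfou} follows via Prop.~\ref{p:foprod}.

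The main obstacle is the faithfulness of the abstraction in the unraveling direction: from an accepting run one must build a genuine expanding model and verify the truth lemma, and here the tension between the \emph{capped} multiplicities needed for the difference-consistency guards and the \emph{exact} multiplicities tracked by the counters must be handled with care, so that uniqueness ($\E$) and two-counting constraints are met simultaneously at \emph{every} moment along each persisting line. One must also confirm that the constructed machine is genuinely lossy and well-structured — that its transition relation is monotone with respect to the componentwise order on counter values, so that the decidability of LCM reachability actually applies — rather than an arbitrary counter machine, for which reachability would be undecidable. Aligning the backward/lossy viewpoint exactly with expansion, and checking that no information beyond capped column-multiplicities and exact token counts is required, is the delicate part; the forward direction, abstracting a given model to a run, is routine.
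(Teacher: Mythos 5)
Your proposal is correct in outline, but it takes a genuinely different route from the paper. The paper does not build a decision procedure at all: it proves Theorem~\ref{t:decefinu} by a purely syntactic reduction from $\clinf\eprod\cdiff$-satisfiability to $\clinf\eprod\clin$-satisfiability (Claim~\ref{c:difftolin}: each $\Dv\psi$ is translated as $\psivar\lor\Dv\psi^\dagger$, where the marker $\psivar$ says ``$\psi$ already occurred strictly below'' in a well-ordered column, and $\expform$ axiomatises the markers), and then cites the known decidability of $\clinf\eprod\clin$-satisfiability \cite[Thm.1]{gkwz06}; Corollary~\ref{co:decefinfou} then follows by Prop.~\ref{p:foprod}. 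Your argument instead inlines, for the difference component, essentially the kind of quasimodel-plus-well-quasi-order machinery that underlies the cited result: columns abstracted as multisets of types, vertical consistency depending only on multiplicities capped at $2$, the $\Dh$-recursion along world lines, and a backward traversal in which freshly created lines become losses of a lossy counter machine, so that satisfiability reduces to LCM reachability. What your route buys is a self-contained procedure whose complexity visibly matches the Ackermann-hardness of Theorem~\ref{t:decefin}; what the paper's route buys is brevity and a reusable 2D lifting of the ``difference-to-linear'' reduction, which it also exploits for Theorem~\ref{t:omegaeu}.

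Two remarks on the points you flag as delicate. First, your worry about the machine being ``well-structured/monotone'' is misplaced: lossy reachability is decidable for \emph{arbitrary} Minsky machines run under the lossy semantics $\lstep$ \cite{sch1}, so nothing special about the transition table needs to be verified. The genuine care point is the one you half-identify: the vertical-consistency guard is not a single atomic test, and losses interleaved between its threshold tests could let the machine certify a ``column'' whose constraints never hold simultaneously (e.g.\ a witness counted by an earlier $\geq 1$ test may be lost before a later one), which would break soundness of the reduction. This is fixable by a standard ordering trick: perform all upper-bound tests ($=0$, and the zero-test part of $=1$) first --- such certificates persist forward under losses --- and then perform all lower-bound tests as decrements (with restoring increments afterwards); since counter values are non-increasing between the phase boundary and each successful decrement, every certified constraint holds at the single configuration beginning the decrement phase, and that configuration is the column you put into the model. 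Elsewhere, spurious losses are harmless precisely because a lost token just means the corresponding line was created at a later moment of the expanding frame, which is exactly what your correspondence requires.
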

 
 In order to prove both Theorems~\ref{t:omegaeu} and \ref{t:decefinu},
we begin with showing
 that there is a reduction from $\cdiff$-satisfiability to $\clin$-satisfiability that can be `lifted to the 2D level'.
As we will use this reduction to obtain upper bounds on satisfiability in expanding 2-frames,
we formulate it in this setting only. 
To this end, fix some bimodal formula $\phi$.
For every $\psi\in\subf$, we introduce a fresh
propositional variable $\psivar$ not occurring in $\phi$, and define inductively a translation $\psi^\dagger$ by taking
\begin{align*}
\pred^\dagger & := \ \pred, \mbox{ for each propositional variable $\pred\in\subf$},\\
(\neg\psi)^\dagger & :=\  \neg\psi^\dagger,\\
(\psi_1\land\psi_2)^\dagger & :=\ \psi_1^\dagger\land\psi_2^\dagger,\\
(\Dh\psi)^\dagger & := \ \Dh\psi^\dagger,\\
(\Dv\psi)^\dagger & := \ \psivar\lor\Dv\psi^\dagger.
\end{align*}
Further, we let 
\[
\expform:=\ \
\Bh^+\!\!\bigwedge_{\psi\in\subf}\!\!
\neg\psivar\land
\Bv^+(\psi^\dagger\to\Bv\psivar)\land
\bigl(\Dv\psivar\to\Dv^+(\neg\psivar\land\psi^\dagger)\bigr).
\]
\begin{claim}\label{c:difftolin}
For any formula $\phi$, and any class $\CC$ of transitive frames,
\begin{itemize}
\item
$\phi$ is $\CC\eprod\cdiff$-satisfiable iff 
$\expform\land\phi^\dagger$ is $\CC\eprod\clin$-satisfiable.
\item
$\phi$ is $\CC\eprod\cdifff$-satisfiable iff 
$\expform\land\phi^\dagger$ is $\CC\eprod\clinf$-satisfiable.
\end{itemize}
\end{claim}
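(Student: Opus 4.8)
The plan is to read $\psivar$ as a bookkeeping variable asserting ``some strictly $\Rv$-smaller point of the current column satisfies $\psi^\dagger$'', so that in a linearly ordered column the clause $(\Dv\psi)^\dagger=\psivar\lor\Dv\psi^\dagger$ recovers the full ``elsewhere'' quantifier of a difference frame: by trichotomy a point $b\ne a$ is either $\Rv$-above $a$ (caught by $\Dv\psi^\dagger$) or $\Rv$-below $a$ (caught by $\psivar$). Both bullets of the claim then reduce to the same statement: for the relevant model, every $\psi\in\subf$, and every point $\auf u,a\zu$, $\M',\auf u,a\zu\models\psi^\dagger$ iff the associated difference model satisfies $\psi$ at $\auf u,a\zu$. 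I would prove this by induction on $\psi$, the only interesting case being $\psi=\Dv\theta$, which is exactly the trichotomy argument above once $\psivar$ is known to carry its intended meaning; the Boolean and $\Dh$-cases are immediate, since ${}^\dagger$ commutes with them and leaves $\Dh$ untouched.

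For the left-to-right (``$\cdiff\Rightarrow\clin$'') direction I would start from a difference model $\M$ on some $\Hh_{\F,\overline{\G}}$ with $\F\in\CC$, $\G_u=\auf W_u,\ne\zu$, and $\M,\auf r_0,r_1\zu\models\phi$. I replace each vertical component by a \emph{well-order} $\auf W_u,<_u\zu$, chosen as the restriction of one fixed well-order of $\bigcup_u W_u$ in which $r_1$ is the least element. Restricting a single order keeps the column orders compatible, so the result is still an expanding $2$-frame and lies in $\CC\eprod\clin$ (and in $\CC\eprod\clinf$ when the $W_u$ are finite, which covers the second bullet). I then declare $\psivar$ to hold at $\auf u,a\zu$ exactly when some $<_u$-smaller point satisfies $\psi^\dagger$. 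Checking $\M'\models\expform$ is routine for the first two conjuncts ($\neg\psivar$ at the $<_u$-least point $r_1$, and upward persistence of $\psivar$); the third conjunct $\Dv\psivar\to\Dv^+(\neg\psivar\land\psi^\dagger)$ is where well-foundedness enters, as it asks for a \emph{first} $\psi^\dagger$-point, which a well-order supplies. The induction then yields $\M',\auf r_0,r_1\zu\models\expform\land\phi^\dagger$.

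For the converse I would take a model $\M'$ on some $\Hh_{\F,\overline{\G}'}$ with $\F\in\CC$ and each $\G_u'\in\clin$, satisfying $\expform\land\phi^\dagger$ at $\auf r_0,r_1\zu$, and first pass to the submodel generated by $\auf r_0,r_1\zu$; this preserves satisfaction, makes $r_1$ the $<_u$-least element of every reachable column, and has base a point-generated subframe $\F_0$ of $\F$. I then replace every $<_u$ by $\ne$, obtaining an expanding frame in $\CC\eprod\cdiff$ (here I use that the classes to which the reduction is applied are closed under point-generated subframes, so $\F_0\in\CC$). Now $\expform$ is available with $r_1$ as the column minimum: its second conjunct gives ``$\exists b<_u a$ with $\psi^\dagger$'' $\Rightarrow\psivar$ at $a$, supplying one direction of the $\Dv\theta$-step, while the third conjunct produces, from $\psivar$ at $a$, a point carrying $\neg\psivar\land\theta^\dagger$ and hence \emph{distinct} from $a$ — which is all the difference relation needs, since it already reaches every other point of the column. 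The induction then gives $\phi$ at $\auf r_0,r_1\zu$.

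I expect the main obstacle to be the third conjunct of $\expform$ together with the requirement that the evaluation point be the vertical minimum: these two things force the model surgeries above (well-ordering on the way in, generated submodel on the way out) and are precisely what makes $\psivar$ a faithful encoding of ``a strictly smaller point satisfies $\psi^\dagger$''. The only genuinely delicate bookkeeping is ensuring the target frame stays in $\CC\eprod\clin$ / $\CC\eprod\cdiff$; this is immediate on the way in (a subclass of linear orders is used) and on the way out rests on $\CC$ being closed under the point-generated subframes that arise here.
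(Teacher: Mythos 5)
Your proposal is correct and essentially coincides with the paper's own proof: the same reading of $\psivar$, the same induction whose only nontrivial case is $\Dv$, settled by trichotomy; coherent well-orders with $r_1$ least (justified exactly as you do, by restricting one global well-order of $\bigcup_u W_u$) so that the third conjunct of $\expform$ can be verified; and, in the converse direction, normalisation so that $r_1$ becomes the least element of every relevant column. The one caveat you flag is not a real point of divergence: the paper's proof makes the identical move silently (``we may assume that $r_0$ is a root in $\F$''), and it is harmless for the two instances where the claim is actually applied, since $\{\auf\omega,<\zu\}$ and $\clinf$ are closed under point-generated subframes up to isomorphism. If one wants the claim literally for an arbitrary class $\CC$ of transitive frames, the fix is to leave $\F$ untouched and trim only the columns --- e.g.\ replace each $W_x$ by $\{r_1\}\cup\{w\in W_x : r_1\in W_x \mbox{ and } r_1<_x w\}$, so that columns not containing $r_1$ collapse to the singleton $\{r_1\}$ --- which preserves the expanding property, keeps $\F$ itself as the horizontal component, and makes the part $\Rhp$-/$\Rvp$-reachable from $\auf r_0,r_1\zu$ exactly the generated submodel your argument works with, so your induction goes through verbatim.
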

\begin{proof}
$\Rightarrow$:
Suppose that
$\M,\auf r_0,r_1\zu\models\phi$ in some model $\M=\auf \Hh_{\F,\overline{G}},\mu\zu$ based on an expanding 2-frame
$\Hh_{\F,\overline{\G}}$ where $\F=\auf W,R\zu$ is transitive and for every $x\in W$, $\G_x=\auf W_x,\ne\zu$.
Then $W_x\subseteq W_y$ whenever $xRy$, $x,y\in W$. Also,
we may assume that $r_0$ is a root in $\F$, and so $r_1\in W_x$ for all $x\in W$.
So for every $x\in W$ we may take a well-order $<_x$ on $W_x$ with least
element $r_1$ and such that $<_x\subseteq <_y$ whenever $xRy$. Let $\Sigma_x'=\auf W_x,<_x\zu$, for $x\in W$.
Then clearly $\Hh_{\F,\overline{G}'}\in \CC\eprod\clin$. We define a model $\M'=\auf \Hh_{\F,\overline{G}'},\mu'\zu$ 
by taking
\begin{align*}
\mu'(\pred)& :=\mu(\pred), \mbox{ for $\pred\in\subf$},\\
\mu'(\psivar) & := \{\auf x,w\zu : x\in W\mbox{ and } \M,\auf x,u\zu\models\psi\mbox{ for some $u\in W_x$ with $u<_x w$}\}.
\end{align*}

First, we show by induction on $\psi$ that for all $\psi\in\subf$, $x\in W$, $u\in W_x$,
\begin{equation}\label{expih}
\M,\auf x,u\zu\models\psi\qquad\mbox{iff}\qquad\M',\auf x,u\zu\models\psi^\dagger.
\end{equation}
Indeed, the only non-straightforward case is that of $\Dv$. So suppose first that $\M,\auf x,u\zu\models\Dv\psi$.
Then there is $v\in W_x$, $v\ne u$ with $\M,\auf x,v\zu\models\psi$.
If $u<_x v$ then $\M',\auf x,u\zu\models\Dv\psi^\dagger$ by the IH. If $v<_x u$, then 
$\M',\auf x,u\zu\models\psivar$ by the definition of $\M'$. So in both cases we have $\M',\auf x,u\zu\models(\Dv\psi)^\dagger$.
Conversely, suppose that $\M',\auf x,u\zu\models(\Dv\psi)^\dagger$.
If $\M',\auf x,u\zu\models\psivar$ then there is $v\in W_x$, $v<_x u$ with $\M,\auf x,v\zu\models\psi$.
Therefore, there is $v\in W_x$, $v\ne u$ with $\M,\auf x,v\zu\models\psi$.
If $\M',\auf x,u\zu\models\Dv\psi^\dagger$ then there is $v\in W_x$, $v<_x u$ with $\M',\auf x,v\zu\models\psi^\dagger$,
and so there is $v\in W_x$, $v\ne u$ with $\M,\auf x,v\zu\models\psi$ by the IH. 
So in both cases $\M,\auf x,u\zu\models\Dv\psi$ follows.

Second, we claim that $\M',\auf r_0,r_1\zu\models\expform$. 
Indeed, take any $x\in W$. As $r_1$ is $<_x$-least in $W_x$, we have $\M',\auf x,r_1\zu\models\neg\psivar$.
Now take any $y\in W_x$ with $\M',\auf x,y\zu\models\psi^\dagger$ and suppose that $y<_x z$ for some $z\in W_x$.
By \eqref{expih}, we have $\M,\auf x,y\zu\models\psi$ and so $\M',\auf x,z\zu\models\psivar$ by the definition of $\M'$.
Finally, suppose that $\M',\auf x,r_1\zu\models\Dv\psivar$. Therefore, the set 
$\{w\in W_x : \auf x,w\zu\in\mu'(\psivar)\}$ is non-empty. Let $y$ be its $<_x$-least element.
So there is $z\in W_x$, $z<_x y$ such that $\M,\auf x,z\zu\models\psi$ and $\auf x,z\zu\notin\mu'(\psivar)$.
Thus $\M',\auf x,z\zu\models\neg\psivar\land\psi^\dagger$ by \eqref{expih}.
As either $r_1=y$ or $r_1<_x y$, we have $\M',\auf x,r_1\zu\models\Dv^+(\neg\psivar\land\psi^\dagger)$. 
as required.

\smallskip
$\Leftarrow$:
Suppose that
$\M,\auf r_0,r_1\zu\models\expform\land\phi^\dagger$ in some model $\M=\auf \Hh_{\F,\overline{G}},\mu\zu$ based on an expanding 2-frame
$\Hh_{\F,\overline{\G}}$ where $\F=\auf W,R\zu$ is transitive and for every $x\in W$, $\G_x=\auf W_x,<_x\zu$ is
a linear order.
Then $W_x\subseteq W_y$ and $<_x\subseteq <_y$ whenever $xRy$, $x,y\in W$. 
We may assume that $r_0$ is a root in $\F$, and so $r_1\in W_x$ for all $x\in W$.
Moreover, we may also assume that $r_1$ is a root in $\auf W_x,<_x\zu$ for every $x\in W$.
Let $\Sigma_x'=\auf W_x,\ne\zu$, for $x\in W$.
Then clearly $\Hh_{\F,\overline{G}'}\in \CC\eprod\cdiff$. We define a model $\M'=\auf \Hh_{\F,\overline{G}'},\mu'\zu$ 
by taking $\mu'(\pred):=\mu(\pred)$ for all $\pred\in\subf$.

We show by induction on $\psi$ that for all $\psi\in\subf$, $x\in W$, $u\in W_x$,
\begin{equation}\label{expihtwo}
\M,\auf x,u\zu\models\psi^\dagger\qquad\mbox{iff}\qquad\M',\auf x,u\zu\models\psi.
\end{equation}
Again, the only interesting case is that of $\Dv$. Suppose first that $\M,\auf x,u\zu\models(\Dv\psi)^\dagger$.
If 
\begin{equation}\label{psivar}
\M,\auf x,u\zu\models\psivar,
\end{equation}
then $r_1<_x u$ by the first conjunct of $\expform$, and so
$\M,\auf x,r_1\zu\models\Dv\psivar$. So
$\M,\auf x,r_1\zu\models\Dv^+(\neg\psivar\land\psi^\dagger)$
follows by the third conjunct of $\expform$. So there is $v\in W_x$ with $\M,\auf x,v\zu\models\neg\psivar\land\psi^\dagger$,
and so $v\ne u$ by \eqref{psivar}. Also, by the IH, we have $\M',\auf x,v\zu\models\psi$, and so
$\M',\auf x,v\zu\models\Dv\psi$ follows as required.
The other case when $\M,\auf x,u\zu\models\Dv\psi^\dagger$ is straightforward.

Conversely, suppose that $\M',\auf x,u\zu\models\Dv\psi$. Then there is $v\in W_x$, $v\ne u$ with $\M',\auf x,v\zu\models\psi$,
and so by the IH, $\M,\auf x,v\zu\models\psi^\dagger$. If $u<_x v$ then $\M,\auf x,u\zu\models\Dv\psi^\dagger$ follows.
If $v<_x u$ then by the second conjunct of $\expform$, we have $\M,\auf x,v\zu\models\Bv\psivar$, and so
$\M,\auf x,u\zu\models\psivar$ follows.
\end{proof}

Next, we show that $\{\auf\omega,<\zu\}\eprod\cdiff$-satisfiability has the `finite expanding second components property':

\begin{claim}\label{c:fmpe}
For any formula $\phi$,
if $\phi$ is $\{\auf\omega,<\zu\}\eprod\cdiff$-satisfiable, then $\phi$ is 
\mbox{$\{\auf\omega,<\zu\}\eprod\cdifff$}-satisfiable.
\end{claim}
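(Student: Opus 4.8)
The plan is to prove Claim~\ref{c:fmpe} by a selective filtration argument on the second components, taking a cue from the proof of Claim~\ref{c:fmpfin} but adapting it to the case of an infinite, expanding timeline $\auf\omega,<\zu$. Suppose $\M,\auf r_0,r_1\zu\models\phi$ in some model $\M=\auf\Hh_{\auf\omega,<\zu,\overline{\G}},\mu\zu$ based on an expanding 2-frame over $\auf\omega,<\zu$, where $\G_n=\auf W_n,\ne\zu$ with $W_n\subseteq W_m$ whenever $n<m$. We may assume $r_0=0$, so that $r_1\in W_n$ for all $n$. The aim is to cut each fibre $W_n$ down to a finite subset $W_n'$, while preserving the truth of every subformula of $\phi$ at the surviving points, and while maintaining the expanding condition $W_n'\subseteq W_m'$ for $n<m$.

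The construction I would use is a fibrewise closure, exactly as in Claim~\ref{c:fmpfin}. For each $n<\omega$ and each finite $X\subseteq W_n$, define $\cl_n(X)$ to be the smallest $Y$ with $X\subseteq Y\subseteq W_n$ such that, whenever $x\in Y$ and $\M,\auf n,x\zu\models\Dv\psi$ for some $\psi\in\subf$, there is $y\in Y$, $y\ne x$, with $\M,\auf n,y\zu\models\psi$. As noted there, $|\cl_n(X)|\leq|X|+2|\subf|$, so closures of finite sets stay finite. The new wrinkle over the finite-timeline case is that the expanding condition forces us to thread the closures through \emph{all} of $\omega$ rather than stopping at a largest index. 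I would define $W_n'$ so that the chain $W_0'\subseteq W_1'\subseteq\cdots$ is monotone: set $V_0:=\{r_1\}$ and $V_{n+1}:=V_n$ viewed as a subset of $W_{n+1}$ (legitimate since $W_n\subseteq W_{n+1}$), then put $W_n':=\cl_n\bigl(V_n\cup W_{n-1}'\bigr)$ for $n>0$ and $W_0':=\cl_0(\{r_1\})$. Monotonicity $W_{n-1}'\subseteq W_n'$ holds by construction since $W_{n-1}'$ is fed into the closure defining $W_n'$, and each $W_n'$ is finite by the size bound. This yields an expanding 2-frame $\Hh_{\auf\omega,<\zu,\overline{\G}'}$ with $\G_n'=\auf W_n',\ne\zu$, a member of $\{\auf\omega,<\zu\}\eprod\cdifff$.

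Let $\M'$ be the restriction of $\M$ to this new frame. The heart of the argument is a truth-preservation lemma: for all $\psi\in\subf$, all $n<\omega$, and all $w\in W_n'$, we have $\M,\auf n,w\zu\models\psi$ iff $\M',\auf n,w\zu\models\psi$. I would prove this by induction on $\psi$. The Boolean cases are routine. The $\Dv$-case is exactly what the closure operation was designed to handle: if $\M,\auf n,w\zu\models\Dv\psi$, then by definition of $\cl_n$ there is a witness $y\in W_n'$, $y\ne w$, with $\M,\auf n,y\zu\models\psi$, hence $\M',\auf n,y\zu\models\psi$ by the IH; the converse is immediate since $W_n'\subseteq W_n$. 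The $\Dh$-case is the one needing care in the expanding setting, and this is where I expect the main (if mild) obstacle to lie. Since $\Rh$ moves \emph{forward} in time and the fibres \emph{grow}, a horizontal witness for $\M,\auf n,w\zu\models\Dh\psi$ sits at some $\auf m,w\zu$ with $m>n$ and $w\in W_m$; the point is that the \emph{same} element $w$ is the witness (the horizontal relation fixes the vertical coordinate), and since $w\in W_n'\subseteq W_m'$ by monotonicity, $w$ still belongs to the restricted fibre $W_m'$, so the witness survives and the IH applies. Conversely any $\M'$-witness is an $\M$-witness. Thus truth is preserved, $\M',\auf 0,r_1\zu\models\phi$, and $\phi$ is $\{\auf\omega,<\zu\}\eprod\cdifff$-satisfiable, completing the proof.
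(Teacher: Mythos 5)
Your proposal is correct and follows essentially the same route as the paper's own proof: the same fibrewise closure operator $\cl_n$ with the bound $|\cl_n(X)|\leq|X|+2|\subf|$, the same threaded construction $W_0'=\cl_0(\{r_1\})$, $W_{n+1}'=\cl_{n+1}(W_n')$ (your auxiliary sets $V_n$ are redundant, since $r_1\in W_{n-1}'$ already), and the same truth-preservation induction, where the closure handles $\Dv$ and monotonicity of the restricted fibres handles $\Dh$. The paper leaves the induction as "straightforward"; you have merely spelled out the two key cases, which is a faithful elaboration rather than a different argument.
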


\begin{proof}
Suppose $\M,\auf 0,r\zu\models\phi$ for some model $\M$ based on an expanding 2-frame
$\Hh_{\auf\omega,<\zu,\overline{\G}}$ where $\G_n=\auf W_n,\ne\zu$ are difference frames, for $n<\omega$.
For all $n<\omega$, $X\subseteq W_n$, we define $\cl_n(X)$ as the smallest set $Y$ such that $X\subseteq Y\subseteq W_n$ and having the following property:
If $x\in Y$ and $\M,\auf n,x\zu\models\Dv\psi$ for some $\psi\in\subf$, then there is $y\in Y$ such 
that $y\ne x$ and $\M,\auf n,y\zu\models\psi$.
It is not hard to see that if $X$ is finite then 
$|\cl_n(X)|\leq |X|+ 2|\subf|$.
Now define $\G_n':=\auf W_n',\ne\zu$ by taking $W_0':=\cl_0(\{r\})$ and  $W_{n+1}':=\cl_{n+1}(W_n')$ for
$n<\omega$.
Let $\M'$ be the restriction of $\M$ to the expanding 2-frame $\Hh_{\auf\omega,<\zu,\overline{\G}'}$.
A straightforward induction shows that for all $\psi\in\subf$, $n<\omega$, $w\in W_n'$, we have
$\M,\auf n,w\zu\models\psi$ iff \mbox{$\M',\auf n,w\zu\models\psi$.}
\end{proof}

Now Theorems~\ref{t:omegaeu} and \ref{t:decefinu}, respectively, follow from Claims~\ref{c:difftolin}, \ref{c:fmpe} and the following results:
\begin{itemize}\itemsep=3pt
\item
\cite[Thm.1]{KonevWZ05}
$\{\auf\omega,<\zu\}\eprod\clinf$-satisfiability is co-r.e.
\item
\cite[Thm.1]{gkwz06} 
$\clinf\eprod\,\clin$-satisfiability is decidable.
\end{itemize}

%**********************************************************************************************************

\section{Open problems}\label{disc}

Our results identify a limit beyond which the one-variable fragment of first-order linear temporal logic
is no longer decidable.
We have shown that ---unlike in the case of the two-variable fragment of classical first-order logic---
the addition of even limited counting capabilities ruins decidability in most cases: The resulting logic
\logic\ is very complex over various classes of linear orders, whenever the models
have constant, decreasing, or expanding domains. By generalising our techniques to the propositional bimodal setting, we have shown that the bimodal logic $\lindiffcomm$
of commuting \preorder and pseudo-equivalence relations is undecidable.
Here are some related unanswered questions:
\begin{enumerate}
\item
Is the bimodal logic $[\Kfour,\Diff]$ of commuting transitive and pseudo-equivalence relations decidable?
Is the product logic $\Kfour\mprod\Diff$ decidable?
As $\Kfour$ can be seen as a notational variant of the fragment of branching time logic $CTL$ that allows only two temporal operators  $E\D_F$ and its dual $A\B_F$, there is another reformulation of the second question: Is the one-variable fragment of first-order $CTL$ decidable when extended with counting
and when only $E\D_F$ and $A\B_F$ are allowed as temporal operators?
Note that without counting this coincides with $\Kfour\mprod\Sfive=[\Kfour,\Sfive]$-satisfiability,
and that is shown to be decidable by Gabbay and Shehtman \cite{Gabbay&Shehtman98}.

\item
Is \logic-satisfiability recursively enumerable in expanding domain models over the class of all linear orders?
The bimodal reformulation of this question: Is $\clin\eprod\cdiff$-satisfiability recursively enumerable?
By Cor.~\ref{co:linefou}, a positive answer would imply decidability of these.
Is \logic-satisfiability decidable in expanding domain models over $\auf\mathbb Q,<\zu$ or
$\auf\mathbb R,<\zu$?

\item
In decreasing 2-frames only `half' of commutativity ($\Bv\Bh \pred\to \Bh\Bv \pred$) is valid.
While in Theorem~\ref{t:main} we generalised Theorem~\ref{t:mainirr} to classes of decreasing 
\mbox{2-frames} and showed that
$\clin\dprod\cdiff$-satisfiability is undecidable, it is not clear whether the same can be done in the `abstract' setting: Is satisfiability undecidable in the class of 2-frames having
half-commuting \preorder and pseudo-equivalence relations?
\end{enumerate}
In our lower bound proofs we used reductions of counter machine problems.
Other lower bound results about bimodal logics with grid-like models use reductions of tiling or Turing 
machine problems \cite{Reynolds&Z01,gkwz03,gkwz05a}.
On the one hand, it is not hard to re-prove the same results using counter machine reductions. 
On the other, it seems tiling and Turing machine techniques require more control over the 
$\omega\times\omega$-grid than the limited expressivity that \logic\ provides. 
In order to understand the boundary of each technique, it would be interesting to find tiling or Turing 
machine reductions for the results of this paper.

%\paragraph{Acknowledgments.}
%\begin{acks}
%We are grateful to the anonymous referees for their suggestions on improving the presentation.
%\end{acks}

%**********************************************************************************************************

% Bibliography
%\bibliographystyle{ACM-Reference-Format-Journals}
%\bibliography{lindiff}
\bibliographystyle{plain}

\end{document}